\newtheorem{theorem}{Theorem}[section]
\newtheorem{lemma}[theorem]{Lemma}
\newtheorem{claim}{Claim}
\newtheorem{prop}[theorem]{Proposition}
\newtheorem{cor}[theorem]{Corollary}
\theoremstyle{definition}
\newtheorem{definition}[theorem]{Definition}
\theoremstyle{plain}
\newtheorem{remark}[theorem]{Remark}
\newcommand{\DP}{DP\xspace}
\newcommand{\DPBase}{\mathrm{DP}}
\newcommand{\typeOfDP}[1]{\DPBase_{\mathrm{ #1}}}
\newcommand{\localDP}{\typeOfDP{local}}
\newcommand{\shuffledDP}{\typeOfDP{shuffle}}
\newcommand{\twopartyDP}{\typeOfDP{two\text{-}party}}
\newenvironment{proofof}[1]{\begin{proof}[Proof of #1]}{\end{proof}}
\newenvironment{proofsketch}{\begin{proof}[Proof Sketch]}{\end{proof}}
\newenvironment{reminder}[1]{\bigskip
	\noindent {\bf #1 (restated) }\em}{\smallskip}
\renewcommand{\epsilon}{\varepsilon}
\newcommand{\Ex}{\operatornamewithlimits{\mathbb{E}}}
\newcommand{\F}{\mathbb{F}}
\newcommand{\poly}{\mathop{\mathrm{poly}}}
\newcommand{\polylog}{\mathop{\mathrm{polylog}}}
\newcommand{\eps}{\epsilon}
\newcommand{\R}{\mathbb{R}}
\newcommand{\N}{\mathbb{N}}
\newcommand{\Z}{\mathbb{Z}}
\renewcommand{\subset}{\subseteq}
\def \ro {\text{r.o.-}}
\newcommand{\cro}[1]{$c$\text{-}\ro}
\newcommand{\Hist}{\mathsf{Hist}}
\newcommand{\selection}{\textsf{\small Selection}\xspace}
\newcommand{\paritylearning}{\textsf{\small ParityLearning}\xspace}
\def\ShowAuthNotes{1}
\newcommand{\authnote}[2]{\ \\ \textcolor{red}{\parbox{0.9\linewidth}{[{\footnotesize {\bf #1:} { {#2}}}]}}\newline}
\newcommand{\authnote}[2]{}
\def\ShowAuthNotes{1}
\newcommand{\lnote}[1]{\textcolor{red}{[Lijie: #1]}}
\newcommand{\badih}[1]{\textcolor{red}{[Badih: #1]}}
\newcommand{\pasin}[1]{\textcolor{red}{[Pasin: #1]}}
\newcommand{\ravi}[1]{\textcolor{red}{[Ravi: #1]}}
\newcommand{\lnote}[1]{}
\newcommand{\badih}[1]{}
\newcommand{\pasin}[1]{}
\newcommand{\ravi}[1]{}
\newcommand{\VIEW}{\textsf{VIEW}}
\newcommand{\highlight}[1]{\medskip \noindent \textbf{#1}}
\title{On Distributed Differential Privacy\\ and Counting Distinct Elements}
\author{}
\newcommand{\calT}{\mathcal{T}}
\newcommand{\calX}{\mathcal{X}}
\newcommand{\calM}{\mathcal{M}}
\newcommand{\calU}{\mathcal{U}}
\newcommand{\calD}{\mathcal{D}}
\newcommand{\calB}{\mathcal{B}}
\newcommand{\calF}{\mathcal{F}}
\newcommand{\calE}{\mathcal{E}}
\newcommand{\calS}{\mathcal{S}}
\newcommand{\calZ}{\mathcal{Z}}
\newcommand{\calY}{\mathcal{Y}}
\newcommand{\calC}{\mathcal{C}}
\newcommand{\Rrand}{R^{\sf rand}}
\newcommand{\qD}{q^{\calD}}
\newcommand{\calI}{\mathcal{I}}
\newcommand{\bfW}{\mathbf{W}}
\newcommand{\supp}{\mathrm{supp}}
\newcommand{\DE}{\textsf{\small CountDistinct}\xspace}
\newcommand{\Ber}{\mathsf{Ber}}
\newcommand{\Bin}{\mathsf{Bin}}
\newcommand{\Poi}{\mathsf{Poi}}
\newcommand{\NB}{\mathsf{NB}}
\newcommand{\vPoi}{\vec{\mathsf{Poi}}}
\newcommand{\WT}{\widetilde}
\newcommand{\argmin}{\mathop{\textrm{argmin}}}
\newcommand{\vlambda}{\vec{\lambda}}
\newcommand{\vtheta}{\vec{\theta}}
\newcommand{\vmu}{\vec{\mu}}
\newcommand{\vnu}{\vec{\nu}}
\newcommand{\vm}{\vec{m}}
\newcommand{\vU}{\vec{U}}
\newcommand{\vV}{\vec{V}}
\newcommand{\vz}{\vec{z}}
\newcommand{\vu}{\vec{u}}
\newcommand{\valpha}{\vec{\alpha}}
\newcommand{\vbeta}{\vec{\beta}}
\newcommand{\vmx}{\vec{m}^{*}}
\newcommand{\hist}{\mathrm{hist}}
\newcommand{\calDpb}{\calD_{\sf pub}}
\author{
	Lijie Chen\thanks{Email: {\tt lijieche@mit.edu}.  Most of this work was done at Google Research, Mountain View, CA.}\\
	Massachusetts Institute of Technology \\
	Cambridge, MA.
	\and
	Badih Ghazi\thanks{Email: {\tt badihghazi@gmail.com}}\\
	Google Research \\
	Mountain View, CA.
	\and
	Ravi Kumar\thanks{Email: {\tt ravi.k53@gmail.com}}\\
	Google Research \\
	Mountain View, CA.
	\and
	Pasin Manurangsi\thanks{Email: {\tt pasin@google.com}}  \\
	Google Research \\
	Mountain View, CA.
}
\date{}
\begin{document}
	\maketitle
	\setcounter{page}{1} 
	\begin{abstract}

	We study the setup where each of $n$ users holds an element from a discrete set, and the goal is to count the number of distinct elements across all users, under the constraint of $(\epsilon,\delta)$-differentially privacy:
	\begin{itemize}
	    \item In the non-interactive \emph{local} setting, we prove that the additive error of any  protocol is $\Omega(n)$ for any constant $\epsilon$ and for any $\delta$ inverse polynomial in $n$. 
	    \item In the \emph{single-message shuffle} setting, we prove a lower bound of $\tilde{\Omega}(n)$ on the error for any constant $\epsilon$ and for some $\delta$ inverse quasi-polynomial in $n$. We do so by building on the moment-matching method from the literature on distribution estimation.
	    \item In the \emph{multi-message shuffle} setting, we give a protocol with at most one message per user in expectation and with an error of $\tilde{O}(\sqrt{n})$ for any constant $\epsilon$ and for any $\delta$ inverse polynomial in $n$. Our protocol is also robustly shuffle private, and our error of $\sqrt{n}$ matches a known lower bound for such protocols.
	\end{itemize}
	Our proof technique relies on a new notion, that we call \emph{dominated protocols}, and which can also be used to obtain the first non-trivial lower bounds against multi-message shuffle protocols for the well-studied problems of selection and learning parity.
	
	Our first lower bound for estimating the number of distinct elements provides the first $\omega(\sqrt{n})$ separation between global sensitivity and error in local differential privacy, thus answering an open question of Vadhan (2017). We also provide a simple construction that gives $\tilde{\Omega}(n)$ separation between global sensitivity and error in \emph{two-party} differential privacy, thereby answering an open question of McGregor et al. (2011). 
	\end{abstract}
	\thispagestyle{empty}
	\addtocounter{page}{-1}
	\newpage    
	
	\newpage
	\tableofcontents
	\newpage
	
\section{Introduction}\label{sec:intro}

Differential privacy (DP) \cite{DworkMNS06,DworkKMMN06} has become a leading framework for private-data analysis, with several recent practical deployments \cite{erlingsson2014rappor,CNET2014Google, greenberg2016apple, dp2017learning, ding2017collecting, abowd2018us}. The most commonly studied DP setting is the so-called central (aka curator) model whereby a single authority (sometimes referred to as the analyst) is trusted with running an algorithm on the raw data of the users and the privacy guarantee applies to the algorithm's output.

The absence, in many scenarios, of a clear trusted authority has motivated the study of \emph{distributed} DP models. The most well-studied such setting is the \emph{local} model \cite{kasiviswanathan2011can} (also \cite{warner1965randomized}), denoted henceforth by $\localDP$, where the privacy guarantee is enforced at each user's output (i.e., the protocol transcript). While an advantage of the local model is its very strong privacy guarantees and minimal trust assumptions, the noise that has to be added can sometimes be quite large. This has stimulated the study of ``intermediate'' models that seek to achieve accuracy close to the central model while relying on more distributed trust assumptions. One such middle-ground is the so-called \emph{shuffle} (aka \emph{anonymous}) model \cite{ishai2006cryptography,bittau2017prochlo,Cheu18,erlingsson2019amplification}, where the users send messages to a \emph{shuffler} who randomly shuffles these messages before sending them to the analyzer; the privacy guarantee is enforced on the shuffled messages (i.e., the input to the analyzer). We study both the local and the shuffle models in this work.

\subsection{Counting Distinct Elements}

A basic function in data analytics is estimating the number of distinct elements in a domain of size $D$ held by a collection of $n$ users, which we denote by $\DE_{n,D}$ (and simply by $\DE_n$ if there is no restriction on the universe size).  Beside its use in
database management systems, it is a well-studied problem in sketching, streaming, and communication complexity (e.g., 
\cite{kane2010optimal,
brody2014beyond} and the references therein). In central DP, it can be easily solved with constant error using the Laplace mechanism~\cite{DworkMNS06}; see 
also~\cite{mir2011pan, desfontaines2019cardinality, pagh2020efficient, choi2020differentially}. 

We obtain new results on $(\epsilon, \delta)$-DP protocols for $\DE$ in the local and shuffle settings\footnote{For formal definitions, please refer to Section~\ref{sec:preliminaries}. We remark that, throughout this work, we consider the \emph{non-interactive} local model where all users apply the \emph{same} randomizer (see Definition~\ref{def:dp_local}). We briefly discuss in Section~\ref{sec:conc} possible extensions to interactive local models.}.

\subsubsection{Lower Bounds for Local DP Protocols}

Our first result is a lower bound on the additive error of $\localDP$ protocols\footnote{See Section~\ref{sec:preliminaries} for the the formal (standard) definition of public-coin DP protocols. Note that private-coin protocols are a sub-class of public-coin protocols, so all of our lower bounds apply to private-coin protocols as well.} for counting distinct elements.

\begin{theorem}\label{theo:lowb-distinct-elements}
	For any $\eps = O(1) $, 
	no public-coin $(\eps,o(1/n))$-$\localDP$ protocol can solve\footnote{Throughout this work, we say that a randomized algorithm solves a problem with error $e$ if with probability 0.99 it incurs error at most $e$.} $\DE_{n,n}$ with error $o(n)$.
\end{theorem}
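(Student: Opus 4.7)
The plan is to apply Le Cam's two-point hypothesis-testing method with a pair of priors $\calD_0, \calD_1$ on input sequences whose expected distinct counts differ by $\Omega(n)$ but whose $\epsilon$-LDP output distributions are close in total variation. Since $\delta = o(1/n)$, a standard union bound over the $n$ users contributes only $n\delta = o(1)$ in TV, so I reduce the analysis to pure $\epsilon$-LDP without loss of generality.

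The priors would be constructed by adapting the Wu--Yang moment-matching technique from support-size estimation. Both $\calD_0, \calD_1$ are doubly exchangeable (symmetric in users and in universe labels), and their ``$k$-way coincidence probabilities'' $c_k := \Pr[X_1 = X_2 = \cdots = X_k]$ coincide for $k = 1, \ldots, K$ with $K = \Theta(\log n)$, yet the expected distinct count $\mathbb{E}[\DE] = \sum_{k \ge 1} (-1)^{k+1} \binom{n}{k} c_k$ differs by $\Omega(n)$. Writing $\mathbb{E}[\DE]$ as a formal linear combination of the $c_k$'s with the large binomial coefficients, matching the first $K$ forces the $\Omega(n)$ gap to live in the $k > K$ tail, and a Chebyshev-polynomial best-approximation argument produces such a pair.

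For the indistinguishability claim I would invoke the paper's notion of \emph{dominated protocols}: every $\epsilon$-LDP randomizer admits a decomposition $R(x) = (1-p)\,\zeta + p\,\eta_x$ where $\zeta$ is a fixed reference distribution independent of $x$ and $p = 1 - e^{-\epsilon}$. Expanding the joint output distribution of $n$ users across this mixture splits it into a sum indexed by the subset $T \subseteq [n]$ of ``informative'' users. For a doubly exchangeable prior, the term indexed by $T$ depends on the prior only through its $|T|$-way coincidence statistic $c_{|T|}$; matching the first $K$ moments therefore annihilates all $|T| \le K$ contributions, reducing the TV bound to a $|T| > K$ residual that is controlled using the same polynomial estimates that underlie the prior construction.

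The main obstacle is making the tail bound sufficiently tight at $\epsilon = \Theta(1)$: in this regime $p = 1 - e^{-\epsilon}$ is a constant and $\Pr[\Bin(n, p) > K] = 1 - o(1)$, so the dominated decomposition supplies no multiplicative damping for large $|T|$. All the smallness has to come from the magnitudes of the unmatched differences $c_k - c_k'$ for $k > K$, which must simultaneously yield an $\Omega(n)$ distinct-count gap while contributing only $o(1)$ to the output TV. The Wu--Yang construction is calibrated exactly for this tradeoff, and verifying it uniformly over all $\epsilon$-LDP randomizers---with the dominated-protocol framework isolating the $|T|$-dependence and reducing it to coincidence moments---is the technical heart of the argument.
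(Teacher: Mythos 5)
There is a genuine gap, and it sits exactly where you flag it. Your plan is essentially the moment-matching route that the paper uses for the \emph{low-privacy} bound (Theorem~\ref{theo:LDP-LB-strong}), not for Theorem~\ref{theo:lowb-distinct-elements}, and two of its load-bearing claims do not hold. First, in the decomposition $R(x)=(1-p)\zeta+p\,\eta_x$, the term indexed by a set $T$ of informative users is $\Ex\bigl[\bigotimes_{i\in T}\eta_{X_i}\bigr]\otimes\zeta^{\otimes(n-|T|)}$, and this depends on the \emph{full} $|T|$-wise marginal of the prior --- i.e., the probabilities of every coincidence pattern (set partition) among $(X_i)_{i\in T}$, pushed through the arbitrary, randomizer-dependent kernels $\eta_x$ --- not only on the all-equal statistic $c_{|T|}$. (For instance $\Pr[X_1=X_2,\,X_3=X_4]$ is not determined by $c_2,c_3,c_4$.) So matching $c_1,\dots,c_K$ does not annihilate the $|T|\le K$ contributions; the Poissonized Wu--Yang construction works because independence across symbols matches the whole profile up to the relevant order, which is strictly stronger than matching coincidence probabilities. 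Second, $p$ is not fixed at $1-e^{-\eps}$: it is randomizer-dependent and can be as small as $\Theta(e^{\eps}/n)$ with $\eta_x=\delta_x$ (randomized response over $[n]$), in which case the few informative users reveal their inputs exactly; at the other extreme $p=\Theta(1)$ and the mixture weight concentrates on $|T|\approx pn$, where the $|T|$-marginals of your two priors are far apart in total variation (the raw inputs determine the distinct count, so only the channel can hide it). A prior-side moment-matching argument with no channel damping therefore cannot close the bound uniformly over $\eps$-$\localDP$ randomizers, and your inclusion-exclusion identity $\Ex[\DE]=\sum_k(-1)^{k+1}\binom{n}{k}c_k$, with its astronomically large coefficients near $k=\Theta(\log n)$, gives no term-by-term handle on ``putting the gap in the tail.''

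Even if these issues were repaired (e.g., by Poissonizing and matching full profiles, as the paper does in Section~\ref{sec:lb_single_msg_shuffle_count_distinct}), the calibration inherently loses polylogarithmic factors: the moment degree $L=\Theta(\log n)$ forces $\Lambda=\Theta(\log^2 n)$ and a noise-domination factor $\Theta(\Lambda^2)$, so the ``signal'' universe has size $D=n/\polylog(n)$ and the distinct-count gap is only $\Theta(D)$ --- yielding an $\Omega(n/\polylog n)$ error bound, which is Theorem~\ref{theo:LDP-LB-strong}/\ref{theo:single-message-LB}, not the $\Omega(n)$ bound claimed in Theorem~\ref{theo:lowb-distinct-elements}. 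The paper proves the constant-$\eps$ statement by a different argument: hard distributions $\calD^{\alpha}_{\ell,s}$ (parity-constrained distributions mixed with $\calU_D$, $\alpha^2=\Theta(e^{-\eps})$) versus $\calU_D^{\otimes n}$, an average-KL bound for $(\eps,\delta)$-dominated randomizers obtained by truncating the likelihood ratios at level $\sim e^{\eps}$ (Theorem~\ref{theo:many-vs-one}) combined with Parseval's identity, and then Pinsker; the $\delta$ term is absorbed directly in that bound (your ``union bound to reduce to pure DP'' would itself need a structural lemma), and public coins are handled by conditioning on the shared randomness via Markov's inequality.
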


The lower bound in Theorem~\ref{theo:lowb-distinct-elements} is asymptotically tight\footnote{The trivial algorithm that always outputs $0$ incurs an error $n$.}.
Furthermore, it answers a question of Vadhan \cite[Open Problem 9.6]{vadhan2017complexity}, who asked if there is a function with a gap of $\omega(\sqrt{n})$ between its (global) sensitivity and the smallest achievable error by any $\localDP$ protocol.%
\footnote{
To the best of our knowledge, the largest previously known gap between global sensitivity and error was $O(\sqrt{n})$, which is achieved, e.g., by binary summation~\cite{chan2012optimal}.} 
As the global sensitivity of the number of distinct elements is $1$, Theorem~\ref{theo:lowb-distinct-elements} exhibits a (natural) function for which this gap is as large as $\Omega(n)$. While Theorem~\ref{theo:lowb-distinct-elements} applies to the constant $\epsilon$ regime, it turns out we can prove a lower bound for much less private protocols (i.e., having a much larger $\epsilon$ value) at the cost of polylogarithmic factors in the error:

\begin{theorem}\label{theo:LDP-LB-strong}
	For some $\eps = \ln(n) - O(\ln \ln n)$ 
	and $D = \Theta(n/\polylog(n))$, 
	no public-coin $(\eps, n^{-\omega(1)})$-$\localDP$ protocol can solve $\DE_{n,D}$ with error $o(D)$.
\end{theorem}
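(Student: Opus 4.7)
The plan is to leverage the notion of dominated protocols introduced elsewhere in the paper. Any $(\epsilon,\delta)$-$\localDP$ randomizer is automatically $C$-dominated, in the sense that every single-user output distribution $R(x)$ lies pointwise within a factor of $C = e^\epsilon$ of a fixed blanket distribution $M$ (for the $(\epsilon,0)$ case one can take $M$ to be the uniform mixture $\tfrac{1}{D}\sum_{x \in [D]} R(x)$; the $\delta = n^{-\omega(1)}$ slack in approximate DP costs only a negligible additive loss via a standard coupling). For $\epsilon = \ln n - O(\ln \ln n)$ this gives $C = n/\polylog(n)$, which is exactly the range in which the dominated-protocols framework is designed to remain nontrivial. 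So the first step is to reduce the proof of Theorem~\ref{theo:LDP-LB-strong} to an analogous lower bound for $C$-dominated protocols, with $C = n/\polylog(n)$, on the universe $[D]$ of size $D = \Theta(n/\polylog(n))$.

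For this dominated-protocol lower bound, the plan is to construct a hard instance via fingerprint moment matching in the spirit of Valiant. Concretely, I would build two input distributions $P_0, P_1$ on $[D]$ whose profiles (histograms of per-element probabilities) match in the first $K = \polylog(n)$ power sums $\sum_x P_i(x)^k$, but whose expected distinct counts under $n$ i.i.d.\ draws differ by $\Omega(D)$. A natural family is to take $P_0, P_1$ to be small mixtures of uniform distributions on carefully chosen subsets of $[D]$, with weights and sizes tuned so that low-order power sums cancel while the effective support sizes differ by a constant factor. Symmetrizing over a random permutation of the universe (which leaves the protocol's performance unchanged because distinct-element counting is symmetric) reduces the analysis to the profile alone, so that only the power sums of $P_i$ matter.

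The key technical step is then a closeness bound: for any $C$-dominated randomizer with $C \le D/\polylog(n)$, the induced $n$-user output distributions $M_{P_0}^{\otimes n}$ and $M_{P_1}^{\otimes n}$ are $o(1)$-close in total variation. I would prove this by Poissonizing the number of users, whereby for each $x \in [D]$ the outputs turn into independent Poisson counts $c_{x,m} \sim \Poi(n P_i(x) R(x)(m))$, and by controlling the chi-squared divergence between the Poissonized output distributions through a Taylor expansion in the fingerprint moments of $P_i$. The matched-moments condition kills every contribution of order $\le K$, and the pointwise bound $R(x)(m) \le C \cdot M(m)$ tames the residual higher-order terms.

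The main obstacle I anticipate is the quantitative balance among the three parameters $D$, $K$, and $C$: the hard instance must simultaneously yield an $\Omega(D)$ gap in distinct counts (which limits how large $K$ can be relative to $\log D$) and leave enough slack for the closeness bound to absorb the $\polylog(n)$ blow-up coming from $C = n/\polylog(n)$. Choosing $K$ polylogarithmic and matching the polylog factors in $D = \Theta(n/\polylog(n))$ and $\epsilon = \ln n - O(\ln \ln n)$ compatibly is the most delicate part of the argument. Finally, depoissonizing back to the fixed-$n$ setting and absorbing the quasi-polynomially small $\delta$ via a standard hybrid argument converts the dominated-protocol lower bound into the claimed $\localDP$ lower bound.
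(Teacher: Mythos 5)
There is a genuine gap, and it is in your very first step: reducing to blanket\nobreakdash-dominated protocols discards exactly the structure that the proof needs in this low-privacy regime. Domination by a single distribution $M$ only upper-bounds each $R(x)$; it gives no lower bound on $R(j)_z$ for any particular input $j$, and that lower bound is what makes the moment-matching machinery applicable here. Concretely, the paper's argument must manufacture, for every $i$ and (almost) every message $z$, a Poisson ``noise floor'' $\vlambda_z \ge 2\Lambda^2\, R(i)_z$ before the Charlier/moment-matching bound (Lemma~\ref{lm:key-bound-2}) can be invoked; this floor is produced by honest users whose inputs lie in a \emph{small} random subset $E$ (small so that the $\Omega(D)$ gap in distinct counts survives), and the fact that their outputs flood the outputs of all other inputs is proved using the two-sided local-DP guarantee between \emph{pairs} of inputs (Proposition~\ref{prop:simple-prop} together with the light/heavy case analysis and Hoeffding in Lemma~\ref{lm:E-is-good-whp}). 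A merely $(\eps,\delta)$-dominated randomizer can defeat this entirely: e.g.\ ``output your input with probability $q=1/\log^2 n$, else output $\perp$'' is $(\ln(qD)+O(1))$-dominated but not $\localDP$ for any parameter, and its outputs on any small $E$ put zero mass on messages $z\notin E$, so no flooding is possible. Your sketch contains no substitute mechanism; the claim that the pointwise bound $R(x)(m)\le C\cdot M(m)$ ``tames the residual higher-order terms'' is precisely the step that fails, because the $\chi^2$/Taylor expansion needs a reference measure with large per-coordinate Poisson intensities, and domination alone does not provide one. This is also consistent with the paper's own dominated-protocol tool (Theorem~\ref{theo:many-vs-one-toy}), which at $e^{\eps}=n/\polylog(n)$ yields only an $\Omega(n/e^{\eps})=\Omega(\polylog n)$ error bound --- which is exactly why the paper proves Theorem~\ref{theo:LDP-LB-strong} by a separate, genuinely local-DP argument rather than via dominated protocols.

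A secondary but related problem is that a ``dominated-only'' version of the theorem is borderline false unless the polylog factors are coupled very carefully: if $e^{\eps}\ge D$ (allowed by the loose statement $C=n/\polylog(n)$, $D=\Theta(n/\polylog(n))$), the identity randomizer is $(\eps,0)$-dominated by the uniform blanket and computes $\DE_{n,D}$ exactly, so no $\Omega(D)$ lower bound can hold. The paper avoids this by working directly with $\localDP$ and choosing $e^{\eps}=n/\Theta(\log^6 n)$ strictly below $D=\Theta(n/\log^4 n)$, with the gap $\log^2 n \approx \Lambda$ consumed by the flooding/Hoeffding step. Your other ingredients (Poissonization, a moment-matched pair of multiplicity laws in the spirit of Wu--Yang with an $\Omega(D)$ gap in expected distinct counts, de-Poissonization, and the public-coin extension) do track the paper's construction, but without replacing the blanket-domination reduction by an argument that actually uses the DP condition between pairs of inputs, the central closeness bound does not go through.
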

To prove Theorem~\ref{theo:LDP-LB-strong}, we build on the \emph{moment matching} method from the literature on (non-private) distribution estimation, namely \cite{ValiantV17,WY2019chebyshev},
and tailor it to \DE in the $\localDP$ setting (see Section~\ref{sec:overview-moment-matching} for more details on this connection). The bound on the privacy parameter $\epsilon$ in Theorem~\ref{theo:LDP-LB-strong} turns out to be very close to tight: the error drops quadratically when $\epsilon$ exceeds $\ln{n}$. 
This is shown in the next theorem:
\begin{theorem}\label{th:weak_privacy_local_prot_DE}
	There is a $(\ln(n) + O(1))$-$\localDP$ protocol solving $\DE_{n,n}$ with error $O(\sqrt{n})$.
\end{theorem}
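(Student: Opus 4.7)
\begin{proofsketch}
The plan is to use the $n$-ary randomized response mechanism: each user $i$ with input $x_i \in [n]$ independently outputs $y_i$, where $y_i = x_i$ with probability $p$ and $y_i$ is uniform over $[n]$ with probability $1-p$. For any $z \in [n]$ and any inputs $x \neq x'$, the ratio of output likelihoods is at most $(p + (1-p)/n)/((1-p)/n) = 1 + pn/(1-p)$, so setting $p = 1/2$ gives $\eps = \ln(n+1) = \ln n + O(1)$, verifying the privacy claim.

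On the analyzer's side, let $M_j := |\{i : y_i = j\}|$ and $N_j := |\{i : x_i = j\}|$. Each $M_j$ is a sum of $n$ independent Bernoullis with $\Ex[M_j] = p N_j + (1-p)$ and $\mathrm{Var}[M_j] \leq \Ex[M_j]$. A standard Poissonization argument lets us treat the $M_j$'s as independent $\Poi(\mu_j)$ with $\mu_j = p N_j + (1-p)$, at the cost of an $O(\sqrt{n})$ additive term in the final bound. The estimator takes the form $\hat{d} = \sum_{j \in [n]} f(M_j)$ for a function $f : \N \to \R$ chosen so that $h(\mu) := \Ex_{X \sim \Poi(\mu)}[f(X)]$ approximates $\mathbf{1}[\mu \geq 1]$ on the lattice $\{pk + (1-p) : k = 0, 1, \ldots, n\}$. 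Writing $T_k := |\{j : N_j = k\}|$, this yields $\Ex[\hat{d}] = \sum_k T_k h(\mu_k)$, so the desired properties are $h(\mu_0) \approx 0$ and $h(\mu_k) \approx 1$ for each $k \geq 1$.

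The main obstacle is constructing such an $f$ with both small bias and bounded variance. The natural parametrization $h(\mu) = 1 - e^{-\mu} P(\mu)$ reduces the task to choosing a polynomial $P$ of degree $K = \Theta(\sqrt{n})$ which vanishes at the first $K$ positive lattice points $\mu_1, \ldots, \mu_K$ and satisfies $P(\mu_0) = e^{\mu_0}$; since at most $n/K = O(\sqrt{n})$ elements have $N_j > K$, the truncation bias from the unmatched tail of the lattice is $O(\sqrt{n})$ provided $e^{-\mu} P(\mu)$ remains $O(1)$ there. To avoid the Runge-phenomenon blowups inherent to naive Lagrange interpolation, which would make both the tail bias and the variance explode, we use a Chebyshev-type construction in the spirit of \cite{WY2019chebyshev} to obtain a $P$ whose coefficients decay appropriately. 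One then verifies that the corresponding $f$ satisfies $|f(k)| = O(1)$ on all relevant $k$, so Poissonized independence gives $\mathrm{Var}[\hat{d}] = \sum_j \mathrm{Var}[f(M_j)] = O(n)$, and Chebyshev's inequality finally yields $|\hat{d} - d| = O(\sqrt{n})$ with probability at least $0.99$.
\end{proofsketch}
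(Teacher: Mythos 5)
Your privacy calculation for $n$-ary randomized response is fine; the gap is in the estimator, and it is fatal as written. Once you demand that the degree-$K$ polynomial $P$ vanish \emph{exactly} at the first $K$ positive lattice points $\mu_1,\dots,\mu_K$ and satisfy $P(\mu_0)=e^{\mu_0}$, the polynomial is uniquely determined, namely $P(\mu)=e^{1/2}\prod_{i=1}^{K}\frac{\mu-\mu_i}{\mu_0-\mu_i}$; there is no remaining freedom for a ``Chebyshev-type construction,'' so the Runge-type blow-up you hope to avoid is forced on you. Concretely, with $\mu_0=1/2$ and $\mu_i=(i+1)/2$ the product telescopes to give $P(0)=e^{1/2}(K+1)$, hence $f(0)=1-P(0)=-\Theta(K)=-\Theta(\sqrt{n})$ (and $|f(x)|$ only grows over the next several values of $x$), directly contradicting the claimed bound $|f(k)|=O(1)$. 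Since an absent element has $M_j=0$ with constant probability while $\Ex[f(M_j)]\approx h(1/2)\approx 0$, each of the up to $n$ coordinates with $N_j=0$ contributes variance $\Omega(K^2)=\Omega(n)$, so $\mathrm{Var}[\hat d]=\Omega(n^2)$ rather than $O(n)$ and Chebyshev's inequality yields only the trivial $\Omega(n)$ error; the tail requirement that $e^{-\mu}P(\mu)$ stay $O(1)$ beyond $\mu_K$ also fails for this $P$. Relaxing exact vanishing to approximate vanishing does not obviously rescue the plan: because a single $T_k$ (e.g.\ $T_1$) can be as large as $n$, you need bias $O(1/\sqrt{n})$ at $\mu_1=1$ while $h(1/2)\approx 0$ and near-correctness at $\Theta(\sqrt{n})$ further equispaced points with coefficients small enough to keep $f$ bounded; the machinery of \cite{WY2019chebyshev} is used there with logarithmic degree on a rescaled $O(1)$-length interval and gives nothing of this strength at degree $\Theta(\sqrt{n})$. (The Poissonization step is also only asserted, but that is minor by comparison.)

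For contrast, the paper sidesteps this bias--variance trap by changing the randomizer rather than the analyzer: each user outputs, as its single local report, a random subset of $[n]$ containing its own element with probability $1/2$ and each other element independently with probability $q'=\Theta(1/n)$, tuned so that the parity of $\Bin(n,q')$ is exactly $\Ber\left(\frac{1}{2e}\right)$. The analyzer sums the parities of the per-element counts; since the coordinates are perturbed independently, a present element's parity is exactly uniform (its owner contributes an independent fair coin) and an absent element's parity is exactly $\Ber\left(\frac{1}{2e}\right)$, so a linear inversion is unbiased with $O(1)$ variance per coordinate and Hoeffding gives error $O(\sqrt{n})$; the likelihood ratio of any report under two inputs is at most $\frac{1/2}{q'}\cdot\frac{1-q'}{1/2}=O(n)$, i.e.\ pure $(\ln n+O(1))$-DP. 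If you insist on one-message randomized response you would need a genuinely different analyzer (for instance a parity-style statistic, plus an argument handling the dependence among the counts $M_j$); the interpolation estimator as specified does not give $O(\sqrt{n})$.
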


\subsubsection{Lower Bounds for Single-Message Shuffle DP Protocols}

In light of the negative result in Theorem~\ref{theo:LDP-LB-strong}, a natural question is whether $\DE$ can be solved in a weaker distributed DP setting such as the shuffle model. It turns out that this is not possible using any shuffle protocol where each user sends no more than $1$ message (for brevity, we will henceforth denote this class by $\shuffledDP^1$, and more generally denote by $\shuffledDP^k$ the variant where each user can send up to $k$ messages). Note that the class $\shuffledDP^1$ includes any method obtained by taking a $\localDP$ protocol and applying the so-called \emph{amplification by shuffling} results of \cite{erlingsson2019amplification,BalleBGN19}.

In the case where $\epsilon$ is any constant and $\delta$ is inverse quasi-polynomial in $n$, the improvement in the error for $\shuffledDP^1$ protocols compared to $\localDP$ is at most polylogarithmic factors:
\begin{theorem} \label{theo:single-message-LB}
	For all $\eps = O(1)$, there are $\delta = 2^{-\polylog(n)}$ and $D = n/\polylog(n)$ such that no public-coin $(\eps,\delta)$-$\shuffledDP^1$ protocol can solve $\DE_{n,D}$ with 
	error $o(D)$.
\end{theorem}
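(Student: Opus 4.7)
The plan is to introduce a new abstraction, which we call \emph{$(\eps_0,\delta_0)$-dominated randomizers}, designed to capture the privacy structure common to both local DP and single-message shuffle DP, and then run a moment-matching lower bound against it in the spirit of Theorem~\ref{theo:LDP-LB-strong}. We call a local randomizer $R:\mathcal{X}\to\mathcal{Y}$ $(\eps_0,\delta_0)$-dominated if there is a reference distribution $\nu$ on $\mathcal{Y}$ such that $\Pr[R(x)\in E]\le e^{\eps_0}\nu(E)+\delta_0$ for every input $x$ and every event $E$. Any $(\eps_0,\delta_0)$-local DP randomizer is trivially $(\eps_0,\delta_0)$-dominated (take $\nu:=R(x_0)$ for an arbitrary fixed $x_0$), so domination is strictly weaker than local DP.

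The first step is a \emph{shuffle-to-domination} lemma: if the $n$-fold single-message shuffle of $R$ is $(\eps,\delta)$-DP, then $R$ is $(\ln n+O(\eps),\,\poly(n)\cdot\delta)$-dominated. The key idea is to examine, for each potential output $y$ of $R$, the statistic ``number of copies of $y$ in the shuffled transcript'' on the two neighboring datasets $(x,x_\star,\ldots,x_\star)$ and $(x',x_\star,\ldots,x_\star)$. The background of $n-1$ copies of $R(x_\star)$ dilutes the shuffle-DP ratio on this count statistic by a factor of roughly $n$, which after unpacking yields the domination bound with reference $\nu(y):=\Pr[R(x_\star)=y]$ and an extra $\ln n$ of slack in the $\eps$ direction.

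The second, more technical step is to prove a moment-matching lower bound against $(\ln n+O(1),\,2^{-\polylog(n)})$-dominated protocols: no such $n$-user protocol can solve $\DE_{n,D}$ with error $o(D)$ for $D=n/\polylog(n)$. We construct two priors $\mu_0,\mu_1$ on input distributions over $[D]$ such that (a) the number of distinct elements concentrates around well-separated values $D\cdot T_0$ and $D\cdot T_1$, and (b) the pushforwards $R_*\mu_0$ and $R_*\mu_1$, viewed as measures bounded above by $e^{\eps_0}\nu$, agree on all moments up to degree $k=\polylog(n)$. A Poissonization/coupling argument against the reference product measure $\nu^{\otimes n}$ then converts the matched moments into $o(1)$-total-variation closeness of the two $n$-sample shuffled transcripts, precluding any protocol from distinguishing $\mu_0$ from $\mu_1$.

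The main obstacle is this second step. In the local DP setting of Theorem~\ref{theo:LDP-LB-strong}, the per-user output distribution is pinned up to a factor of $e^{\pm\eps_0}$ by the \emph{same} reference $\nu$, so matched moments of the prior translate cleanly into matched moments of the output. Under the \emph{one-sided} domination bound, however, the ``excess mass'' of $R(x)$ over $\nu$ must be controlled asymmetrically between $\mu_0$ and $\mu_1$, forcing a more delicate tuning of the Chebyshev-type matching polynomials of~\cite{WY2019chebyshev,ValiantV17} calibrated to the regime $\eps_0=\ln n+O(1)$ and $\delta_0=2^{-\polylog(n)}$ rather than the $\eps_0=\ln n-O(\ln\ln n)$, $\delta_0=n^{-\omega(1)}$ used in the proof of Theorem~\ref{theo:LDP-LB-strong}.
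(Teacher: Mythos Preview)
There is a genuine gap in your second step: the statement you want to prove there is \emph{false}. Theorem~\ref{th:weak_privacy_local_prot_DE} in the paper exhibits an explicit $(\ln n + O(1))$-local-DP randomizer (hence in particular $(\ln n + O(1),0)$-dominated, by taking $\nu = R(x_0)$) that solves $\DE_{n,D}$ with error $O(\sqrt{D}) = o(D)$. So no lower bound can hold against $(\ln n + O(1),\,2^{-\polylog(n)})$-dominated protocols, and the ``delicate tuning'' you anticipate cannot succeed: the one-sidedness of the domination bound is not merely an obstacle, it is fatal at this value of $\eps_0$. Your first step (the shuffle-to-domination lemma) is correct and is essentially the paper's Lemma~\ref{lm:SDP-to-dominated-Toy} specialized to $k=1$, but it only yields $\eps_0 = \ln n + O(1)$, which is on the wrong side of the threshold.

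The paper's fix is precisely to sharpen the $\eps_0$ by $O(\ln\ln n)$, and this is done not via domination but via an \emph{improved two-sided} shuffle-to-local connection (Lemma~\ref{lm:SDP-to-LDP}): an $(\eps,\delta)$-$\shuffledDP^1$ randomizer is in fact $\bigl(\ln n - \ln(\Theta_\eps(\log\delta_0^{-1}/\log\delta^{-1})),\,\delta_0\bigr)$-local DP. The key point is that the shuffle parameter $\delta$ can be traded for additional gain in $\eps_0$: taking $\delta = 2^{-\log^c n}$ for large $c$ and $\delta_0 = n^{-\omega(1)}$ brings $\eps_0$ down to $\ln n - \Theta(\ln\ln n)$, exactly the regime of Theorem~\ref{theo:LDP-LB-strong}. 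The proof of this refined connection analyzes the hockey-stick divergence between $\Ber(\alpha)+\Bin(n-1,\beta)$ and $\Ber(\beta)+\Bin(n-1,\beta)$ and shows it is at least $\delta$ unless $\alpha \le \beta \cdot n/\Theta(\log\delta^{-1}/\log\delta_0^{-1}) + \delta_0$; this is where the extra $\log\delta^{-1}$ factor comes from. The domination abstraction you introduced is indeed in the paper (Definition~\ref{defi:dominated-algo}), but it is used for a different purpose---the multi-message lower bounds of Theorems~\ref{theo:lowb-selection} and~\ref{theo:lowb-learning-parity}---not for Theorem~\ref{theo:single-message-LB}.
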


We note that Theorem~\ref{theo:single-message-LB} essentially answers a more general variant of Vadhan's question: it shows that even for $\shuffledDP^1$ protocols (which include $\localDP$ protocols as a sub-class) the gap between sensitivity and the error can be as large as $\tilde{\Omega}(n)$ .

The proof of Theorem~\ref{theo:single-message-LB} follows by combining Theorem~\ref{theo:LDP-LB-strong} with the following connection between $\localDP$ and $\shuffledDP^1$:

\begin{lemma}\label{lm:SDP-to-LDP}
	For any $\eps = O(1)$ and $\delta \le \delta_0 \le 1/n$, if the randomizer $R$ is $(\eps,\delta)$-$\shuffledDP^1$ on $n$ users, then $R$ is 
	$\left(\ln n - \ln(\Theta_{\eps}(\log \delta_0^{-1}/\log \delta^{-1})) ,\delta_0\right)$-$\localDP$.
\end{lemma}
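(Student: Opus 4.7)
The plan is to instantiate the shuffle DP guarantee of $R$ on a carefully chosen pair of neighboring databases and to track how it constrains the single-user distributions $R(x), R(x')$. Fix a public coin, two inputs $x, x' \in \calX$, and a measurable event $E$ in the output space of $R$, and write $p := \Pr[R(x) \in E]$ and $p' := \Pr[R(x') \in E]$. The goal is to prove $p \le e^{\epsilon_L} p' + \delta_0$; the reverse direction then follows by swapping the roles of $x$ and $x'$.

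I would apply shuffle DP to the neighboring $n$-user databases $\vec{x}_B = (x, x', \ldots, x')$ and $\vec{x}_A = (x', x', \ldots, x')$, and let $N$ denote the number of shuffled messages that land in $E$. Since $N$ is a function of the output multiset, shuffle DP directly applies to events of the form $\{N \ge k^*\}$. Under $\vec{x}_A$ we have $N \sim \Bin(n, p')$, while under $\vec{x}_B$, $N$ is distributed as $\Ber(p) + \Bin(n-1, p')$ (the Bernoulli accounting for user $1$'s message, independent of the rest). Expanding this Bernoulli yields the key identity
\[
\Pr_{\vec{x}_B}[N \ge k^*] - \Pr_{\vec{x}_A}[N \ge k^*] = (p - p')\,\Pr[\Bin(n-1, p') = k^*-1],
\]
which, combined with the $(\epsilon,\delta)$-indistinguishability of the two shuffled outputs, gives
\[
(p - p')\,\Pr[\Bin(n-1, p') = k^*-1] \le (e^\epsilon - 1)\Pr[\Bin(n, p') \ge k^*] + \delta.
\]

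To extract the desired inequality, I would choose $k^* \asymp \log\delta^{-1}/\log\delta_0^{-1}$ and split into cases based on the size of $p'$. In the main regime $np' \lesssim k^*$, the elementary ratio $\Pr[\Bin(n, p') = j]/\Pr[\Bin(n-1, p') = j-1] = np'/j$ (together with a geometric-series bound on the tail of $\Bin(n-1, p')$) gives $\Pr[\Bin(n, p') \ge k^*] \le O(np'/k^*)\cdot \Pr[\Bin(n-1, p') = k^*-1]$. Dividing by $\Pr[\Bin(n-1, p') = k^*-1]$ on both sides of the displayed inequality then yields a multiplicative factor of $\Theta_\epsilon(n/k^*)$ on $p'$, matching the target $e^{\epsilon_L}$. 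The specific choice of $k^*$ is calibrated so that $\Pr[\Bin(n-1, p') = k^*-1] \gtrsim \delta/\delta_0$ in this regime, which absorbs the additive $\delta$ into $\delta_0$.

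The main obstacle will be handling the boundary cases. When $p'$ is so small that $\Pr[\Bin(n-1, p') = k^*-1]$ cannot absorb $\delta$ into $\delta_0$, I would fall back on the simpler argument with $k^* = 1$ (the event ``at least one shuffled message lies in $E$''), which gives $p \le e^\epsilon n p' + \delta$ and already implies $p \le \delta_0$ in this extreme; when $p'$ is large enough that $e^{\epsilon_L} p' \ge 1$, the bound $p \le 1 \le e^{\epsilon_L} p'$ is immediate. Gluing these cases together and applying the argument symmetrically to both directions of the LDP inequality completes the proof.
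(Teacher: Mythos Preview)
Your reduction matches the paper's: compare the shuffled outputs on $(x,x',\dots,x')$ versus $(x',\dots,x')$, project to the count $N$ of messages landing in $E$, and contradict the shuffle-DP bound on a tail event. The paper carries out this last step by lower-bounding the full hockey-stick divergence $d_\eps\bigl(\Ber(p)+\Bin(n{-}1,p')\,\|\,\Ber(p')+\Bin(n{-}1,p')\bigr)$ (their Lemma~\ref{lm:key-HS-lowb}), proved via the identity with $\Delta\cdot d_{\ln\tau}(1{+}\Bin\,\|\,\Bin)$ and then binomial anti-concentration. Your single-threshold identity plus the geometric tail bound is a lighter, more elementary substitute for that lemma.

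The place where your sketch is not yet complete is the gluing of the two $p'$-regimes. With $k^*=\log\delta^{-1}/\log\delta_0^{-1}$, the absorption condition $\Pr[\Bin(n{-}1,p')=k^*{-}1]\ge\delta/\delta_0=\delta_0^{\,k^*-1}$ forces roughly $np'\gtrsim k^*\delta_0$; but in the complementary range your $k^*{=}1$ fallback yields only $p\le O_\eps(np')+\delta\le O_\eps(k^*\delta_0)$, not $p\le\delta_0$, so the two cases miss each other by a factor of $k^*$. The fix is to take $k^*$ a constant factor smaller (e.g., half of $\log\delta^{-1}/\log\delta_0^{-1}$): then $\delta/\delta_0=\delta_0^{\,2k^*-1}$, the main-regime threshold drops to $np'\gtrsim k^*\delta_0^{2}\le\delta_0$ (using $k^*\delta_0\le 1$), and in the complement the fallback does give $p\le O(\delta_0)$; the leftover constant is absorbed into the $\Theta_\eps$. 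The paper's anti-concentration route sidesteps this entire calibration because the exponent there is controlled uniformly by the two consequences $\Delta>\delta_0$ and $\Delta\gtrsim(n/k^*)\,p'$ of the contradiction hypothesis.
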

We remark that Lemma~\ref{lm:SDP-to-LDP} provides a stronger quantitative bound than the qualitatively similar connections in \cite{Cheu18,GhaziGKPV19}; specifically, we obtain the term $\ln(\Theta_{\eps}(\log \delta_0^{-1}/\log \delta^{-1}))$, which was not present in the aforementioned works. This turns out to be crucial for our purposes, as this term gives the $O(\ln \ln n)$ term necessary to apply Theorem~\ref{theo:LDP-LB-strong}. 

\subsubsection{A Communication-Efficient Shuffle DP Protocol}

In contrast with Theorem~\ref{theo:single-message-LB}, Balcer et al. \cite{BCJM20} recently gave a $\shuffledDP$ protocol for $\DE_{n,D}$ with error $O(\sqrt{D})$. Their protocol sends $\Omega(D)$ messages per user. We instead show that an error of $\tilde{O}(\sqrt{D})$ can still be guaranteed with each user sending \emph{in expectation} at most one message each of length $O(\log D)$ bits.

\begin{theorem}\label{theo:upper-bound-DE}
	For all $\eps  \le O(1)$ and $\delta \le 1/n$, there is a  public-coin $(\eps,\delta)$-$\shuffledDP$ protocol that solves $\DE_{n}$ with error $\sqrt{\min(n,D)}\cdot \poly(\log(1/\delta) / \eps)$ 
	where the expected number of messages sent by each user is at most one.
\end{theorem}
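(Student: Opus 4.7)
The plan is to split the argument by whether $D \le n$ or $D > n$. For $D > n$, I would reduce to the first regime via a public-coin random hash $h:[D]\to[T]$ with $T=\Theta(n)$, applied to each user's input before running the $D\le n$ sub-protocol; since the number of distinct hashes $d_H$ satisfies $\mathbb{E}[d_H] = T\left(1-(1-1/T)^d\right)$ (a monotone function of $d$ with bounded derivative on $d\in[0,n]$), inverting it turns a $\tilde{O}(\sqrt{n})$-error estimate of $d_H$ into a $\tilde{O}(\sqrt{n})$-error estimate of $d$.

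For the $D\le n$ case, the protocol is a Poisson-mixture randomizer: each user $i$ independently (a) emits one message equal to $x_i$ with probability $\alpha:=1-\lambda$, and (b) emits $\Poi(\lambda)$ uniformly-random noise messages on $[D]$, where $\lambda:=\Theta(D\cdot \log(1/\delta)/(n\epsilon^2))$ is calibrated so that the expected number of messages per user equals one and each coordinate receives Poisson noise of mean $\gamma := n\lambda/D = \Theta(\log(1/\delta)/\epsilon^2)$. Privacy is analyzed by Poissonizing step (a), i.e., replacing the $\Ber(\alpha)$ indicator by $\Poi(\alpha)$ at total-variation cost $O(\alpha^2)$ absorbable in $\delta$. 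After Poissonization the received counts $(m_j)_{j\in[D]}$ are independent $\Poi(\alpha c_j + \gamma)$; swapping one user's input from $a$ to $b$ shifts only the Poisson means at coordinates $a,b$ by $\pm\alpha$, and the standard Poisson-DP analysis of $\Poi(\mu)$ vs.\ $\Poi(\mu+\alpha)$ gives $(\epsilon/2,\delta/2)$-DP at each affected coordinate because $\gamma\gtrsim \alpha^2\log(1/\delta)/\epsilon^2$. Basic composition over the at most two affected coordinates then yields the claimed $(\epsilon,\delta)$-DP.

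For accuracy, the server outputs $\hat{d}=\sum_{j\in[D]} g(m_j)$, where $g:\mathbb{Z}_{\ge 0}\to\mathbb{R}$ is an unbiased estimator for a polynomial approximation of the step function $c\mapsto \mathbf{1}[c>0]$ on $\{0,1,\ldots,n\}$. Using the factorial-moment identity $\mathbb{E}_{m\sim\Poi(\mu)}[m^{\underline{r}}]=\mu^r$, any degree-$k$ polynomial in $\mu$ admits an unbiased estimator that is a degree-$k$ polynomial in $m$. I would take the minimax degree-$O(\log D)$ polynomial approximation of $\mathbf{1}[c>0]$ on the grid $\{0,1,\ldots,n\}$, following the moment-matching constructions of Valiant-Valiant and Wu-Yang; this yields per-coordinate bias $o(1/D)$ and per-coordinate variance $\poly(\log D/\epsilon)$, so that $\mathrm{Var}[\hat d]=\tilde{O}(D)$ and Chebyshev's inequality gives $|\hat d-d|=\tilde{O}(\sqrt{D})$ with probability $\ge 0.99$.

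The main obstacle is controlling the variance of this polynomial estimator: a naive Chebyshev polynomial of degree $k$ has coefficients of magnitude up to $2^k$, which would blow up the variance by $\mathrm{poly}(D)$. The fix is to work on a rescaled/centered interval adapted to the Poisson noise scale $\gamma$, ensuring that the factorial moments $\mu^r$ appearing in $g$ are bounded by $\poly(\log D)^r$ on the relevant range; this mirrors the rescaling used in the prior moment-matching literature and applies cleanly here because by our parameter choices $\alpha=\Theta(1)$ and $\gamma=\polylog(n)$.
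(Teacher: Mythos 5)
Your hashing reduction for large domains matches the paper's public-coin step (the paper hashes to $[n]$, subsamples to a smaller effective domain, and inverts $f_n(m)=n(1-(1-1/n)^m)$), but the heart of your proposal --- the per-coordinate Poisson-deconvolution estimator --- has a genuine gap in the accuracy analysis. You need a function $g$ with $\Ex_{m \sim \Poi(\gamma+\alpha c)}[g(m)] \approx \mathbb{1}[c>0]$ uniformly over $c \in \{0,\dots,n\}$ with per-coordinate bias small enough to sum to $\tilde{O}(\sqrt{D})$ and per-coordinate variance $\polylog$. The specific claim that a degree-$O(\log D)$ polynomial approximates $\mathbb{1}[c>0]$ on $\{0,1,\dots,n\}$ to accuracy $o(1/D)$ is false: this is the approximate degree of $\OR_n$, which is $\Omega(\sqrt{n})$ even for constant accuracy. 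One can truncate (deterministically declare ``occupied'' when $m_j \gg \gamma + \sqrt{\gamma \log D}$) so that the polynomial only needs to work on a $\polylog$-length range of means, but then the degree is still $\polylog$ and, more importantly, the variance bound does not follow from the Valiant--Valiant/Wu--Yang rescaling you invoke: that machinery lives at Poisson means $O(\log^2 n)$, degree $O(\log n)$, and tolerates per-symbol bias far larger than $1/\poly(D)$ (their target is total error $\epsilon D$, not $\sqrt{D}$). Driving the bias down to $O(1/\sqrt{D})$ forces Chebyshev-type coefficients (hence $\Ex[g^2]$) of size $2^{\Theta(\mathrm{degree})}$, i.e.\ quasi-polylogarithmic, which destroys the $\tilde{O}(\sqrt{D})$ total error; and a Cram\'er--Rao-type bound already forces $\mathrm{Var}(g) = \Omega(\gamma)$ merely to realize a unit jump of $\Ex[g]$ across a mean-shift of $\alpha = \Theta(1)$, so you are working at the edge of what is possible with no quantitative control of the bias--variance tradeoff. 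The paper sidesteps this estimation-theoretic difficulty entirely with a parity trick: counts are read modulo~2 and the privacy noise is even ($2\cdot\NB$), so it never flips parities; an occupied coordinate has parity exactly $\Ber(1/2)$ and an unoccupied one exactly $\Ber(1/(2e^{\eps_0}))$, whence a linear debiasing plus Hoeffding gives error $O(\sqrt{D}/\eps)$ with no polynomial approximation at all, while privacy follows from negative-binomial shift-hiding.

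Two smaller points. First, your privacy route via ``Poissonizing'' the $\Ber(\alpha)$ signal at total-variation cost $O(\alpha^2)$ is incorrect, since $\alpha = \Theta(1)$ makes that cost constant per user; fortunately it is also unnecessary --- the uniform noise is exactly $\Poi(\gamma)$ per coordinate by Poisson splitting, coordinates are independent, only the two affected coordinates change, and $d_{\eps/2}(\Poi(\gamma)+\Ber(\alpha)\,\|\,\Poi(\gamma)) \le \alpha \cdot d_{\eps/2}(\Poi(\gamma)+1\,\|\,\Poi(\gamma)) \le \delta/2$ for $\gamma = \Theta(\log(1/\delta)/\eps^2)$, so the privacy claim is repairable. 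Second, your case split at $D \le n$ is mis-calibrated: with per-coordinate noise mean $\gamma = \polylog$, the noise alone costs $\lambda = \gamma D/n$ messages per user, which exceeds the one-message budget once $D \gtrsim n\eps^2/\log(1/\delta)$; the hashing reduction must therefore be applied already for $D \ge n/\polylog$ (as the paper does by choosing the effective domain size $D = \Theta(n\eps/(\log(1/\delta)^2\ln(2/\eps)))$), which is fixable but not what your split states.
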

In the special case where $D = o(n/\poly(\epsilon^{-1}\log(\delta^{-1})))$, we moreover obtain a \emph{private-coin} $\shuffledDP$ protocol achieving the same guarantees as in Theorem~\ref{theo:upper-bound-DE} (see Theorem~\ref{theo:upper-bound-DE-base} for a formal statement).
Note that Theorem~\ref{theo:upper-bound-DE} is in sharp contrast with the lower bound shown in Theorem~\ref{theo:single-message-LB} for $\shuffledDP^1$ protocols. Indeed, for $\delta$ inverse quasi-polynomial in $n$, the former implies a public-coin protocol with less than one message per-user \emph{in expectation} having error $\tilde{O}(\sqrt{n})$ whereas the latter proves that no such protocol exists, even with error as large as $\tilde{\Omega}(n)$, if we restrict each user to send one message \emph{in the worst case}.

A strengthening of $\shuffledDP$ protocols called \emph{robust $\shuffledDP$ protocols}\footnote{Roughly speaking, they are $\shuffledDP$ protocols whose transcript remains private even if a constant fraction of users drop out from the protocol.} was studied by~\cite{BCJM20}, who proved an $\Omega\left(\sqrt{\min(D,n)}\right)$ lower bound on the error of any protocol solving $\DE_{n,D}$. Our protocols are robust $\shuffledDP$ and, therefore, achieve the optimal error (up to polylogarithmic factors) among all robust $\shuffledDP$ protocols, while only sending at most one message per user in expectation.

\subsection{Dominated Protocols and  Multi-Message Shuffle DP Protocols}

The technique underlying the proof of Theorem~\ref{theo:lowb-distinct-elements} can be extended beyond $\localDP$ protocols for $\DE$. It applies to a broader category of protocols that we call \emph{dominated}, defined as follows: 

\begin{definition}\label{defi:dominated-algo}
	We say that a randomizer $R\colon \calX \to \calM$ is \emph{$(\eps,\delta)$-dominated}, if there exists a distribution $\calD$ on $\calM$ 
	such that for all $x \in \calX$ and all $E \subseteq \calM$, 
	\[
	\Pr[R(x) \in E] \le e^{\eps} \cdot \Pr_{\calD}[E] + \delta.
	\]
	In this case, we also say $R$ is $(\eps,\delta)$-dominated by $\calD$. We define $(\eps,\delta)$-dominated protocols in the same way as $(\eps,\delta)$-$\localDP$, except that we require the randomizer to be $(\eps,\delta)$-dominated instead of being  $(\eps,\delta)$-DP.
\end{definition}
Note that an $(\eps,\delta)$-$\localDP$ randomizer $R$ is $(\eps,\delta)$-dominated: we can fix a $y^* \in \calX$ and take $\calD = R(y^*)$. Therefore, our new definition is a relaxation of $\localDP$.

We show that \emph{multi-message} $\shuffledDP$ protocols are dominated, which allows us to prove the first non-trivial lower bounds against $\shuffledDP^{O(1)}$ protocols.

Before formally stating this connection, we recall why known lower bounds against $\shuffledDP^1$ protocols \cite{Cheu18,GhaziGKPV19,balcer2019separating} do not extend to $\shuffledDP^{O(1)}$ protocols.\footnote{We remark that~\cite{pure-dp-shuffled} developed a technique for proving lower bounds on the \emph{communication complexity} (i.e., the number of bits sent per user) for multi-message protocols. Their techniques do not apply to our setting as our lower bounds are in terms of the number of messages, and do not put any restriction on the message length. Furthermore, their technique only applies to \emph{pure}-DP where $\delta = 0$, whereas ours applies also to \emph{approximate}-DP where $\delta > 0$.} These prior works use the connection stating that any $(\eps,\delta)$-$\shuffledDP^1$ protocol is also $(\eps + \ln n, \delta)$-$\localDP$~\cite[Theorem~6.2]{Cheu18}. It thus suffices for them to prove lower bounds for $\localDP$ protocols with low privacy requirement (i.e., $(\eps + \ln n,\delta)$-$\localDP$), for which lower bound techniques are known or developed. For $\eps$-$\shuffledDP^1$ protocols,~\cite{balcer2019separating} showed that they are also $\eps$-$\localDP$; therefore, lower bounds on $\localDP$ protocols automatically translate to lower bounds on pure-$\shuffledDP^1$ protocols.
To apply this proof framework to $\shuffledDP^{O(1)}$ protocols, a natural first step would be to connect $\shuffledDP^{O(1)}$ protocols to $\localDP$ protocols. However, as observed by~\cite[Section~4.1]{balcer2019separating}, there exists an $\eps$-$\shuffledDP^{O(1)}$ protocol that is not $\localDP$ for any privacy parameter. That is, there is no analogous connection between $\localDP$ protocols and multi-message $\shuffledDP$ protocols, even if the latter can only send $O(1)$ messages per user.


In contrast, the next lemma captures the connection between multi-message $\shuffledDP$ and dominated protocols.

\begin{lemma}\label{lm:SDP-to-dominated-Toy}
	If $R$ is $(\eps,\delta)$-$\shuffledDP^{k}$ on $n$ users, then it is $(\eps + k(1 + \ln n),\delta)$-dominated.
\end{lemma}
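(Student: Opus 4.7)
The plan is to build an explicit dominating distribution $\calD$ via the shuffled transcript of an all-same-input dataset, and to define a deterministic event-lift $E \mapsto E^\uparrow$ so that shuffle DP can be invoked to relate $R(x)$ to $\calD$. First I would reduce to the case where $R$ always outputs exactly $k$ messages by padding with special ``null'' symbols; this is post-processing, so it preserves $(\eps,\delta)$-$\shuffledDP^k$. Then, fixing any $y_0 \in \calX$, I would set $\calD$ to be the distribution of a uniformly random size-$k$ sub-multiset of $\bigsqcup_{i=1}^n R_i$, where $R_1,\ldots,R_n \sim R(y_0)$ are i.i.d.\ --- equivalently, $\calD$ picks $k$ of the $nk$ positions in the shuffled transcript on the dataset $(y_0, \ldots, y_0)$ uniformly at random and returns the corresponding messages.

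Given any event $E$ on size-$k$ multisets, define the inflated event $E^\uparrow := \{M : \exists S \subseteq M,\ |S| = k,\ S \in E\}$. Taking the neighboring datasets $X = (x, y_0, \ldots, y_0)$ and $X' = (y_0, \ldots, y_0)$ and writing $F$ for the shuffled protocol, observe that $R(x)$ is itself a size-$k$ sub-multiset of $F(X)$, so $\{R(x) \in E\} \subseteq \{F(X) \in E^\uparrow\}$; shuffle DP therefore yields
\[
\Pr[R(x) \in E] \le \Pr[F(X) \in E^\uparrow] \le e^\eps \Pr[F(X') \in E^\uparrow] + \delta.
\]
To bound the right-hand side, I would union-bound over the $\binom{nk}{k}$ position-subsets $P \subseteq [nk]$ of size $k$: by exchangeability of the shuffled transcript on the all-$y_0$ dataset, the sub-multiset at positions $P$ is distributed exactly as $\calD$ for every such $P$, hence $\Pr[F(X') \in E^\uparrow] \le \binom{nk}{k} \Pr[\calD \in E] \le (en)^k \Pr[\calD \in E]$, using $\binom{nk}{k} \le (en)^k$. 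Chaining the inequalities yields $\Pr[R(x) \in E] \le e^{\eps + k(1+\ln n)} \Pr[\calD \in E] + \delta$, as required.

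The main conceptual step is choosing $\calD$ to be a random \emph{sub-multiset} of the shuffled transcript rather than, say, $R(y_0)$ itself; this exploits exchangeability so that the union bound costs only a $\binom{nk}{k}$ factor and gives the claimed $\eps$-blowup of $k(1 + \ln n)$. The main technical subtlety to watch is keeping $E^\uparrow$ deterministic (instead of, say, applying a randomized sub-sampling test to $F$'s output), because this is what prevents the additive $\delta$ from being amplified by the same combinatorial factor, matching the lemma's promise that only $\eps$ (not $\delta$) worsens.
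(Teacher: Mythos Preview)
Your proposal is correct and is essentially the same argument as the paper's: the paper also takes $\calD$ to be a uniformly random size-$k$ selection from the $nk$ messages produced by $n$ i.i.d.\ copies of $R(y_0)$ (packaged there as the ``merged randomizer'' $R^{\calF}$ with $\calF \sim \calD_{n,k}$), defines the same lifted event $\calE = E^\uparrow$ (``some $k$ messages in the transcript constitute $E$''), and applies shuffle DP to the pair $(xy_0^{n-1}, y_0^n)$ followed by the identical $\binom{nk}{k} \le (en)^k$ union bound. Your remark that keeping $E^\uparrow$ deterministic is what prevents the $\delta$ from blowing up is exactly the point, and matches the paper's treatment.
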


By considering dominated protocols and using Lemma~\ref{lm:SDP-to-dominated-Toy}, we obtain the first lower bounds for \emph{multi-message} $\shuffledDP$ protocols for two well-studied problems: \selection and \paritylearning. 

\subsubsection{Lower Bounds for Selection}

The \selection problem on $n$ users is defined as follows. The $i$th user has an input $x_i \in \{0,1\}^D$ and the goal is to output an index $j \in [D]$ such that $\sum_{i=1}^{n} x_{i,j} \ge \left(\max_{j^*} \sum_{i=1}^{n} x_{i,j^*} \right) - n/10$.
%
%
\selection is well-studied in DP (e.g., \cite{duchi2013local, steinke2017tight,ullman2018tight}) and its variants are useful primitives for several statistical and algorithmic problems including feature selection, hypothesis testing and clustering.  In central DP, the exponential mechanism of \cite{mcsherry2007mechanism} yields an $\eps$-DP algorithm for \selection when $n = O_\eps(\log{D})$. On the other hand, it is known that any $(\epsilon, \delta)$-$\localDP$ protocol for \selection with $\epsilon = O(1)$ and $\delta = O(1/n^{1.01})$ requires $n = \Omega(D \log{D})$ users \cite{ullman2018tight}.
Moreover, \cite{Cheu18} obtained a $(\eps, 1/n^{O(1)})$-$\shuffledDP^D$ protocol for $n = \tilde{O}_{\eps}(\sqrt{D})$. By contrast, for $\shuffledDP^1$ protocols, a lower bound of $\Omega(D^{1/17})$ was obtained in \cite{Cheu18} and improved to $\Omega(D)$ in \cite{GhaziGKPV19}. 

The next theorem give a lower bounds for \selection that holds against approximate-$\shuffledDP^{k}$ protocols. To the best of our knowledge, this is the first lower bound even for $k=2$ (and even for the special case of pure protocols, where $\delta = 0$).

\begin{theorem}\label{theo:lowb-selection}
	For any $\eps = O(1)$, 
	any public-coin $(\eps,o(1/D))$-$\shuffledDP^{k}$ protocol that solves \selection requires $n \ge \Omega\left(\frac{D}{k}\right)$.
\end{theorem}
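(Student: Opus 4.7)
The plan is to combine Lemma~\ref{lm:SDP-to-dominated-Toy} with a lower bound for dominated protocols solving \selection via the standard Assouad/chi-squared hypothesis-testing machinery. First, applying Lemma~\ref{lm:SDP-to-dominated-Toy} shows that any $(\eps, o(1/D))$-$\shuffledDP^{k}$ protocol is $(\eps', o(1/D))$-dominated with $\eps' = \eps + k(1+\ln n) = O(k \log n)$, so it suffices to establish the matching lower bound in the dominated model.

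Next, I would use the standard distributional hard instance for \selection: draw $j^* \in [D]$ uniformly at random and have each user independently sample $x_i \in \{0,1\}^D$ whose entries are independent Bernoulli($1/2$) except that the $j^*$-th coordinate has some small constant bias. Correctly solving \selection then implies recovering $j^*$ with constant probability. By Assouad's method, the $D$-way recovery requires the transcript to accumulate at least $\Omega(D)$ chi-squared distance (decomposed across $\log_2 D$ correlated binary tests), summed over all users and messages. The dominated structure controls each such chi-squared term: each per-user output is a bounded multiplicative perturbation of the dominating distribution $\calD$ (up to $\delta$-error), yielding a bound on how much any single user's transcript can shift under a change in $j^*$.

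The crucial quantitative step is obtaining the tight $\Omega(D/k)$ bound rather than $\Omega(D/(k \log n))$: the naive per-user application of domination with $\eps' = O(k \log n)$ would introduce a stray $\log n$ factor. However, the $\log n$ contribution to $\eps'$ in Lemma~\ref{lm:SDP-to-dominated-Toy} comes from the ambiguity of which user sent which message, and this is irrelevant to the permutation-symmetric \selection task. A per-message (rather than per-user) analysis, treating the $\le nk$ total messages and exploiting the user-permutation symmetry of the hard distribution, allows the $\log n$ to be absorbed and yields a chi-squared budget of $O(nk \eps^2) = O(nk)$ against the required $\Omega(D)$, producing $n \ge \Omega(D/k)$. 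The main obstacle will be making this per-message absorption rigorous, likely through a strengthened per-message variant of Lemma~\ref{lm:SDP-to-dominated-Toy} that charges only $O(\eps)$ (rather than $O(\eps + \log n)$) of dominated cost per individual message in symmetric tasks, coupled with a careful handling of the $\delta$-error term across the $nk$ messages so that the $o(1/D)$ approximate-DP slack does not spoil the Assouad summation.
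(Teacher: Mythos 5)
Your first step (reducing to dominated protocols via Lemma~\ref{lm:SDP-to-dominated-Toy}, with $\eps' = \eps + k(1+\ln n)$) matches the paper, but the core of your argument has two genuine gaps. First, you never explain how the dominated structure actually controls the per-user information term. For an $(\eps',\delta)$-dominated randomizer the naive chi-squared/"bounded multiplicative perturbation" bound scales with $e^{\eps'} = n^{O(k)}$, which is useless here; writing "yields a bound on how much any single user's transcript can shift" hides exactly the step where the argument would collapse. The paper's fix is Theorem~\ref{theo:many-vs-one} combined with the Level-1 Inequality (Lemma~\ref{lm:level-1}): because the hard family $\{\calD_{\ell,j}\}$ only perturbs level-1 Fourier coefficients, the truncation level $L = 2e^{\eps'}$ enters the bound only through $\ln L$, giving per-user KL of order $\eps'/D = O(k\log n/D)$ (Lemma~\ref{lm:KL-bound-sel}); some structural input of this kind (a log-type, not linear-in-$e^{\eps'}$, dependence) is indispensable and is absent from your sketch.

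Second, your proposed route for removing the stray $\log n$ --- a strengthened "per-message" variant of Lemma~\ref{lm:SDP-to-dominated-Toy} charging only $O(\eps)$ per message --- is both unproven and dubious: the $k(1+\ln n)$ term comes from the $\binom{kn}{k} \le (en)^k$ counting over which $k$ of the pooled messages realize the bad event, not merely from user/message attribution ambiguity, and a per-message $O(\eps)$ charge would amount to a local-DP-like guarantee per message, which~\cite{balcer2019separating} shows multi-message shuffle protocols need not satisfy at all. The paper needs no such strengthening: Fano's inequality over the $\Theta(D)$ hypotheses gives $n \ge \Omega(D\log D/\eps')$ (Lemma~\ref{lm:dominated-to-lowerbound}), and since one may assume $n \le \poly(D)$ (otherwise the claimed bound is vacuous), $\ln n = O(\log D)$ and the logarithms cancel, yielding $n \ge \Omega(D/k)$ directly. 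So the $\log n$ issue you flag as the "main obstacle" is handled by elementary bookkeeping, while the genuinely hard step (the log-type information bound for dominated randomizers) is the one your proposal leaves unaddressed.
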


We remark that combining the advanced composition theorem for DP and known $\shuffledDP$ aggregation algorithms, one can obtain a $(\eps,1/\poly(n))$-$\shuffledDP^{k}$ protocol for \selection with $\tilde{O}(D / \sqrt{k})$ samples for any $k \leq D$ (see Appendix~\ref{app:up-selection} for details).

\subsubsection{Lower Bounds for Parity Learning}

In \paritylearning, there is a hidden random vector $s \in \{0,1\}^D$, each user gets a random vector $x \in \{0,1\}^D$ together with the inner product $\langle s,x\rangle$ over $\mathbb{F}_2$, and the goal is to recover $s$. 
This problem is well-known for separating PAC learning from the Statistical Query (SQ) learning model \cite{kearns1998efficient}. In DP, it was studied by \cite{kasiviswanathan2011can} who gave a central DP protocol (also based on the exponential mechanism) computing it for $n = O(D)$, and moreover proved a lower bound of $n = 2^{\Omega(D)}$ for any $\localDP$ protocol, thus obtaining the first exponential separation between the central and local settings.

We give a lower bound for \paritylearning that hold against approximate-$\shuffledDP^{k}$ protocols:

\begin{theorem}\label{theo:lowb-learning-parity}
	For any $\eps = O(1)$, if $P$ is a public-coin $(\eps,o(1/n))$-$\shuffledDP^{k}$ protocol that solves \paritylearning with probability at least $0.99$, then $n \ge \Omega(2^{D/(k+1)})$.
\end{theorem}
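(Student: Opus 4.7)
My plan is a two-step reduction. First, apply Lemma~\ref{lm:SDP-to-dominated-Toy} to turn the $(\eps,\delta)$-$\shuffledDP^{k}$ randomizer $R$ into an $(\eps',\delta)$-dominated randomizer with reference distribution $\calD$ and parameter $\eps' = \eps + k(1+\ln n)$. The technical core is then to show that any $(\eps', o(1/n))$-dominated protocol solving \paritylearning with probability $\ge 0.99$ must satisfy $n\cdot e^{\eps'} \ge \Omega(2^D)$. Plugging in $\eps' = \eps + k(1+\ln n)$ with $\eps = O(1)$ gives $n\cdot n^{k} \ge \Omega(2^D)$, i.e.\ $n^{k+1} \ge \Omega(2^D)$, hence $n \ge \Omega(2^{D/(k+1)})$.

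To prove the dominated lower bound for parity, I would use a Fourier argument over the secret $s\in\{0,1\}^D$ coupled with the dominated property. Let $p_s(m) := \mathbb{E}_x[R(x,\langle s,x\rangle)(m)]$ be the single-user marginal under $s$, and define $g_1(x,m) := (R(x,0)(m)-R(x,1)(m))/2$ and $h_0(m) := \mathbb{E}_x[(R(x,0)(m)+R(x,1)(m))/2]$. Then $p_s(m) = h_0(m) + \hat h_1(s;m)$ with $\hat h_1(s;m) := \mathbb{E}_x[g_1(x,m)(-1)^{\langle s,x\rangle}]$, and Parseval over uniform $s$ gives $\mathbb{E}_s[\hat h_1(s;m)^2] = 2^{-D}\,\mathbb{E}_x[g_1(x,m)^2]$. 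The dominated bound $R(x,b)(m)\le e^{\eps'}\calD(m)$, combined with the elementary identity $(a-b)^2 \le \max(a,b)(a+b)$ for $a,b \ge 0$, yields $g_1(x,m)^2 \le \tfrac14 e^{\eps'}\calD(m)(R(x,0)(m)+R(x,1)(m))$, and averaging over $x$ gives $\mathbb{E}_x[g_1(x,m)^2] \le \tfrac12 e^{\eps'}\calD(m)\,h_0(m)$. Substituting back and summing over $m$ against the denominator $p_* := \mathbb{E}_s[p_s] \approx h_0$ gives $\mathbb{E}_s[\chi^2(p_s\|p_*)] \le O(e^{\eps'}/2^D)$.

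The remainder is a standard information-theoretic wrap-up. Since $\mathrm{KL}\le \chi^2$, the per-user mutual information satisfies $I(s;m_1) = \mathbb{E}_s[\mathrm{KL}(p_s\|p_*)] \le O(e^{\eps'}/2^D)$. The messages $m_1,\dots,m_n$ are conditionally independent given $s$, so by the chain rule $I(s;\tau)\le n\,I(s;m_1)\le O(n\,e^{\eps'}/2^D)$. Fano's inequality, applied to the assumption that $s$ is uniformly random and is recovered with probability $\ge 0.99$, forces $I(s;\tau) \ge 0.99D - 1 = \Omega(D)$. Combining yields $n\cdot e^{\eps'} \ge \Omega(D\cdot 2^D) \ge \Omega(2^D)$, completing the argument.

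The hardest part will be the Fourier--dominated step. The dominated property is a \emph{one-sided} pointwise upper bound on $p_s/\calD$, and therefore offers no direct lower bound on the denominator that normally appears in $\chi^2$. The key workaround is to (i) compare $p_s$ against the mixture $p_*\approx h_0$ (rather than an arbitrary $p_{s'}$), and (ii) use the combinatorial bound $(R(x,0)(m)-R(x,1)(m))^2 \le e^{\eps'}\calD(m)(R(x,0)(m)+R(x,1)(m))$ so that an $h_0$-factor arises in the numerator and cancels the $p_* \approx h_0$ in the denominator. Secondary care is needed to absorb the $\delta = o(1/n)$ slack (via a standard coupling replacing the $(\eps',\delta)$-dominated randomizer by an $(\eps',0)$-dominated one at total-variation cost $O(n\delta) = o(1)$) and to handle the $O(2^{-D})$-scale offset between $p_*$ and $h_0$ coming from the zero Fourier coefficient $g_1(0,m)/2^D$.
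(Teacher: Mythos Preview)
Your proposal is correct and follows the same high-level arc as the paper (shuffle $\to$ dominated via Lemma~\ref{lm:SDP-to-dominated-Toy}, then Fourier/Parseval over the secret $s$ to bound an average $\chi^2$-divergence by $O(e^{\eps'}/2^D)$, then an information-theoretic wrap-up), but the technical execution differs in two places worth noting.

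First, where the paper invokes the general truncation machinery of Theorem~\ref{theo:many-vs-one} (splitting $f_z=g_z+h_z$ at level $L_z=2e^{\eps'}\qD_z/q_z$ and bounding the two pieces separately, which is what cleanly absorbs the additive $\delta$), you bypass truncation entirely via the elementary pointwise inequality $(a-b)^2\le\max(a,b)(a+b)$ applied to $a=R(x,0)(m)$, $b=R(x,1)(m)$. This is more direct for parity specifically; the price is that you must handle $\delta>0$ by a separate coupling, and you lose the uniformity that lets the paper reuse the same framework for \selection (where $W(L)\asymp(\log L)/D$ rather than $L/2^D$). Second, your endgame is Fano's inequality on $I(s;m_1,\dots,m_n)\le n\cdot I(s;m_1)$, whereas the paper (Lemma~\ref{lm:dominated-to-lowerbound-par}) instead uses Pinsker to show $R(\calD_{0,(s,1)})^{\otimes n}$ and $R(\calU_{D+1})^{\otimes n}$ are $o(1)$-close in total variation for some $s$, directly contradicting $0.99$ success. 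Both are standard; your Fano route even yields the extra $\log D$-type factor you note.

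Two small points to tighten. The theorem is stated for \emph{public-coin} protocols, so you should cite Lemma~\ref{lm:SDP-to-dominated-public-coin} rather than Lemma~\ref{lm:SDP-to-dominated-Toy} and fix a good $\alpha$ by Markov (exactly as the paper does in Lemma~\ref{lm:dominated-to-lowerbound-par}); this is routine but should be said. And your worry about $p_*$ versus $h_0$ can be sidestepped: since $I(s;m_1)=\Ex_s[\KL(p_s\|p_*)]\le\Ex_s[\KL(p_s\|h_0)]\le\Ex_s[\chi^2(p_s\|h_0)]$, you may take $h_0$ as the reference from the start, after which the $h_0$-factor in your numerator cancels the denominator exactly and the $2^{-D}$ offset never appears.
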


Our lower bounds for \paritylearning can be generalized to the Statistical Query (SQ) learning framework of \cite{kearns1998efficient} (see Section~\ref{sec:SQ_connection} for more details).

\paragraph{Independent Work.} In a recent concurrent work, Cheu and Ullman~\cite{cheu2020limits} proved that robust $\shuffledDP$ protocols solving \selection and \paritylearning require $\Omega(\sqrt{D})$ and $\Omega(2^{\sqrt{D}})$ samples, respectively. Their results have no restriction on the number of messages sent by each user, but they only hold against the special case of \emph{robust} protocols. Our results provide stronger lower bounds when the number of messages per user is less than $\sqrt{D}$, and apply to the most general $\shuffledDP$ model without the robustness restriction.

\subsection{Lower Bounds for Two-Party DP Protocols}

Finally, we consider another model of distributed DP, called the \emph{two-party} model~\cite{mcgregor2010limits}, denoted $\twopartyDP$. In this model, there are two parties, each holding part of the dataset. The DP guarantee is enforced on the view of each party (i.e., the transcript, its private randomness, and its input). See Section~\ref{sec:two-party} for a  formal treatment.

McGregor et al.~\cite{mcgregor2010limits} studied the $\twopartyDP$ and proved an interesting separation of $\Omega_{\eps}(n)$ between the global sensitivity and $\eps$-DP protocol in this model. However, this lower bound does not extend to the approximate-DP case (where $\delta > 0$); in this case, the largest known gap (also proved in~\cite{mcgregor2010limits}) is only $\tilde{\Omega}_{\eps}(\sqrt{n})$, and it was left as an open question if this can be improved\footnote{The conference version of the paper~\cite{mcgregor2010limits} actually claimed to also have a lower bound $\Omega_{\eps}(n)$ for the approximate-DP case as well. However, it was later found to be incorrect; see~\cite{McGregorMPRTV11} for more discussions.}. We answer this question by showing that the gap of $\tilde{\Omega}_{\eps}(n)$ holds even against approximate-DP protocols:

\begin{theorem} \label{thm:two-party-lb}
For any $\eps = O(1)$ and any sufficiently large $n \in \N$, there is a function $f\colon \{0, 1\}^{2n} \to \R$ whose global sensitivity is one and such that no $(\eps, o(1/n))$-$\twopartyDP$ protocol can compute $f$ to within an error of $o(n/\log n)$.
\end{theorem}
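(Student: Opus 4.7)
The plan is to take $f$ to be the counting-distinct-elements function. We split each party's $n$-bit input into $m = \Theta(n/\log n)$ blocks of $\lceil \log_2 D \rceil$ bits for $D = \Theta(n)$, interpret each block as an element of $[D]$, and set $f(x_A, x_B) := |S_A \cup S_B|$ for the resulting multisets $S_A, S_B$. A single bit flip changes at most one element of one of the multisets, so $f$ has global sensitivity exactly $1$.

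The lower bound proceeds by reduction to Theorem~\ref{theo:lowb-distinct-elements}. Suppose toward a contradiction that $\Pi$ is an $(\eps, o(1/n))$-$\twopartyDP$ protocol computing $f$ with error $o(n/\log n) = o(m)$. I would construct a non-interactive public-coin $(O(\eps), o(1/m))$-$\localDP$ protocol $\Pi'$ for $\DE_{m, m}$ on $m$ users with error $o(m)$, contradicting Theorem~\ref{theo:lowb-distinct-elements}. In $\Pi'$, the $m$ local users encode Party A's input (user $i$ holds the $i$th element of $x_A$), while the analyzer plays Party B with a publicly fixed dummy input $x_B^\star$ chosen so that $f(x_A, x_B^\star) = |S_A| + c$ for a publicly known constant $c$ (e.g.\ $x_B^\star$ encoding the empty multiset). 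The analyzer simulates Party B's side of $\Pi$ by promoting Party B's random coins to the public randomness of $\Pi'$, so that Party B's messages become a deterministic function of public randomness.

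The technical core of the reduction is decomposing Party A's side of $\Pi$ into $m$ per-user local randomizers. My plan is to first symmetrize $\Pi$ using public coins so that Party A's side is permutation-invariant in her input (which does not affect correctness or privacy), and then use public-coin slot assignments to attribute each round of Party A's outgoing messages to a single user's local contribution. The two-party DP guarantee of $\Pi$, viewed from Party B's side with respect to $x_A$, together with a round-by-round composition over the $T = O(\log n)$ rounds of $\Pi$ (which can be assumed WLOG by folding any longer interaction into fewer rounds), implies that each user's randomizer is $(O(\eps), o(1/m))$-$\localDP$ with only polylogarithmic blow-up in the error. This blow-up is absorbed by the choice $m = \Theta(n/\log n)$; if the effective privacy parameter degrades further in the process, I would invoke Theorem~\ref{theo:LDP-LB-strong} in place of Theorem~\ref{theo:lowb-distinct-elements} to accommodate the larger $\eps$.

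The main obstacle is making the decomposition step formal without losing more than polylogarithmic factors: a priori, Party A's messages in $\Pi$ may depend jointly on all $m$ of her elements, so attributing each outgoing message to a single user's local contribution is nontrivial. Resolving this --- ideally with an explicit reduction that exploits the permutation-invariance of the symmetrized protocol together with the $(\eps, \delta)$-DP structure of Party A's transcript to produce randomizers that genuinely take only one element as input --- is the key technical step of the proof.
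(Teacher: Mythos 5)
There is a genuine gap, and it is exactly the step you flag as the ``main obstacle'': decomposing Party $A$'s side of an arbitrary $(\eps,\delta)$-$\twopartyDP$ protocol into $m$ per-element local randomizers is not merely technically delicate --- it is impossible in general, and no amount of symmetrization or public-coin slot assignment can fix it. The two-party model is strictly stronger than the (even fully interactive) local model precisely because a single party may compute a low-sensitivity \emph{aggregate} of her entire input and release it with central-model noise. Concretely, Party $A$ can send $|S_A| + \mathrm{Lap}(1/\eps)$ (or a noisy histogram of all her blocks): changing one bit of $x_A$ changes this statistic by at most $1$, so $B$'s view is $(\eps,0)$-DP, yet this message depends jointly on all $m$ of $A$'s elements and cannot be simulated by independent randomizers each seeing one element --- if it could, the same argument would show that two-party DP protocols for binary summation are no better than $\localDP$ ones, contradicting the fact that two parties can compute a sum with $O(1/\eps)$ error while $\localDP$ requires error $\Omega(\sqrt{n})$. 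So a lower bound against $\localDP$ (Theorem~\ref{theo:lowb-distinct-elements} or Theorem~\ref{theo:LDP-LB-strong}) simply does not transfer to $\twopartyDP$ by a reduction of the kind you propose; the reduction goes in the wrong direction relative to the power of the models. (Separately, it is not even established that your candidate function $|S_A \cup S_B|$ is hard in the two-party model; that would require its own argument.)

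The paper avoids this entirely by not using \DE at all for the two-party bound. It starts from the result of McGregor et al.\ (Theorem~\ref{thm:two-party-extractor} / Corollary~\ref{cor:dot-prod-dist}) that no $(\eps,\delta)$-$\twopartyDP$ protocol can distinguish, from $B$'s view, random inputs with inner product $0 \bmod 2$ from those with inner product $1 \bmod 2$ on $D = \Theta(\log n)$ bits, up to error $O(D\delta) + e^{-\Omega_\eps(D)}$. It then defines $f$ as the sum over $T = \Theta(n/\log n)$ disjoint blocks of the per-block inner product mod $2$ (sensitivity $1$), and a hybrid argument (Lemma~\ref{lem:two-party-hybrid}) shows the all-zero-blocks and all-one-blocks distributions are indistinguishable, although $f$ differs by $T$ between them; hence error $\Omega(T) = \Omega(n/\log n)$ is unavoidable. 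If you want to salvage your write-up, you would need either a direct two-party hardness proof for your $f$ (not via $\localDP$) or to switch to a reduction from an already-known two-party-hard primitive, as the paper does.
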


The above bound is tight up to a logarithmic factors in $n$, as it is trivial to achieve an error of $n$. 

The proof of Theorem~\ref{thm:two-party-lb} is unlike others in the paper; in fact, we only employ simple reductions starting from the hardness of inner product function already shown in~\cite{mcgregor2010limits}. Specifically, our function is a sum of blocks of inner product modulo 2. While this function is not symmetric, we show (Theorem~\ref{thm:two-party-lb-symmetric}) that it can be easily symmetrized.

\subsection{Discussions and Open Questions}\label{sec:conc}

In this work, we study DP in distributed models, including the local and shuffle settings. By building on the moment matching method and using the newly defined notion of dominated protocols, we give novel lower bounds in both models for three fundamental problems: \DE, \selection, and \paritylearning. While our lower bounds are (nearly) tight in a large setting of parameters, there are still many interesting open questions, three of which we highlight below:

\begin{itemize}

\item \textbf{$\shuffledDP$ Lower Bounds for Protocols with Unbounded Number of Messages.} Our connection between $\shuffledDP$ and dominated protocols becomes weaker as $k \to \infty$ (Lemma~\ref{lm:SDP-to-dominated-Toy}). As a result, it cannot be used to establish lower bounds against $\shuffledDP$ protocols with a possibly unbounded number of messages. In fact, we are not aware of any separation between central DP and $\shuffledDP$ without a restriction on the number of messages and without the robustness restriction. This remains a fundamental open question. (In contrast, separations between central DP and $\localDP$ are well-known, even for basic functions such as binary summation~\cite{chan2012optimal}.)

\item \textbf{Lower Bounds against Interactive Local/Shuffle Model.} Our lower bounds hold in the \emph{non-interactive} local and shuffle DP models, where all users send their messages simultaneously in a single round. While it seems plausible that our lower bounds can be extended to the \emph{sequentially interactive} local DP model~\cite{duchi2013local} (where each user speaks once but not simultaneously), it is unclear how to extend them to the fully interactive local DP model.

The situation for $\shuffledDP$ however is more complicated. Specifically, we are not aware of a formal treatment of an interactive setting for the shuffle model, which would be the first step in providing either upper or lower bounds. We remark that certain definitions could lead to the model being as powerful as the central model (in terms of achievable accuracy and putting aside communication constraints); see e.g.,~\cite{ishai2006cryptography} on how to perform secure computations under a certain definition of the shuffle model.

\item \textbf{$\shuffledDP^1$ Lower Bounds for $\DE$ with Larger $\delta$.} All but one of our lower bounds hold as long as $\delta = n^{-\omega(1)}$, which is a standard assumption in the DP literature. The only exception is that of Theorem~\ref{theo:single-message-LB}, which requires $\delta = 2^{-\Omega(\log^c n)}$ for some constant $c > 0$. It is interesting whether this can be relaxed to $\delta = n^{-\omega(1)}$. 

\end{itemize}

\subsection{Organization}
We describe in Section~\ref{sec:tech_overv} the techniques underlying our results.
Some basic definitions and notation are given in Section~\ref{sec:preliminaries}. We prove our main lower bounds for $\DE$ (Theorems~\ref{theo:LDP-LB-strong} and~\ref{theo:single-message-LB}) in Section~\ref{sec:lb_single_msg_shuffle_count_distinct}. In Section~\ref{sec:dominated_protocols}, we define dominated protocols and prove Lemma~\ref{lm:SDP-to-dominated-Toy}. Our lower bounds for \selection and \paritylearning are then proved in Section~\ref{sec:selection_parity_learning}. Theorem~\ref{theo:lowb-distinct-elements} is then proved in Section~\ref{sec:lb_count_distinct_local}. Our $\shuffledDP^{\geq 1}$ protocol for \DE is presented and analyzed in Section~\ref{sec:mult_msg_shuffle_prot_count_dist}. Finally, our lower bounds in the two-party model (in particular, Theorem~\ref{thm:two-party-lb}) are proved in Section~\ref{sec:two-party}. Some deferred proofs appear in Appendices~\ref{app:TV-Poi} and~\ref{app:missing-proof-HS}. The connection to the SQ model is presented in Appendix~\ref{sec:SQ_connection}. Finally, in Appendix~\ref{app:up-selection}, we describe the $\shuffledDP^k$ protocol for \selection with sample complexity $\Tilde{O}(D/\sqrt{k})$.

	\section{Overview of Techniques}\label{sec:tech_overv}
\everymath{}

\newcommand{\calW}{\mathcal{W}}
\newcommand{\KL}{\mathrm{KL}}

In this section, we describe the main intuition behind our lower bounds. As alluded to in Section~\ref{sec:intro}, we give two different proofs of the lower bounds for $\DE$ in the $\localDP$ and $\shuffledDP$ settings, each with its own advantages:

\begin{itemize}
	\item \textbf{Proof via Moment Matching.} Our first proof is   technically the hardest in our work. It applies to the much more challenging low-privacy setting (i.e., $(\ln n - O(\ln \ln n),\delta)$-$\localDP$), and shows an $\Omega(n/\polylog(n))$ lower bound on the additive error (Theorem~\ref{theo:LDP-LB-strong}). Together with our new improved connection between $\shuffledDP^1$ and $\localDP$ (Lemma~\ref{lm:SDP-to-LDP}), it also implies the same lower bound for protocols in the $\shuffledDP^1$ model. The key ideas behind the first proof will be discussed in Section~\ref{sec:overview-moment-matching}.
	
	\item \textbf{Proof via Dominated Protocols.} Our second proof has the advantage of giving the optimal $\Omega(n)$ lower bound on the additive error (Theorem~\ref{theo:lowb-distinct-elements}), but only in the constant privacy regime (i.e., $(O(1),\delta)$-$\localDP$), and it is relatively simple compared to the first proof.
	
	Moreover, the second proof technique is very general and is a conceptual contribution: it can be applied to show lower bounds for other fundamental problems (i.e., \selection and \paritylearning; Theorems~\ref{theo:lowb-selection} and~\ref{theo:lowb-learning-parity}) against multi-message $\shuffledDP$ protocols. We will highlight the intuition behind the second proof in Section~\ref{sec:overview-dominated}.
\end{itemize}

While our lower bounds also work for the public-coin $\shuffledDP$ models, throughout this section, we focus on private-coin models in order to simplify the presentation. The full proofs extending to public-coin protocols are given in later sections.

\subsection{Lower Bounds for $\DE$ via Moment Matching}\label{sec:overview-moment-matching}

To clearly illustrate the key ideas behind the first proof, we will focus on the pure-DP case where each user can only send $O(\log n)$ bits. In Section~\ref{sec:lb_single_msg_shuffle_count_distinct}, we generalize the proof to approximate-DP and remove the restriction on communication complexity.

\begin{theorem}[A Weaker Version of Theorem~\ref{theo:LDP-LB-strong}]\label{theo:DE-LB-lowprivacy-toy}
	For $\eps = \ln(n / \log^7 n)$ and $D = n/\log^5 n$, 
	no $\eps$-$\localDP$ protocol where each user sends $O(\log n)$ bits can solve $\DE_{n,D}$ with error $o(D)$.
\end{theorem}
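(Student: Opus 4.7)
The plan is to cast the claim as a two-point hypothesis-testing problem and carry out a Wu--Yang-style moment matching argument tailored to the $\localDP$ output model. Specifically, I will exhibit two priors $H_1, H_2$ on the $n$-user input vector whose true numbers of distinct elements differ by $\Omega(D)$, and show that for any $\eps$-$\localDP$ randomizer $R$ whose output alphabet has size $M \le \poly(n)$ (which is what the ``$O(\log n)$ bits'' restriction means), the two transcript distributions are $o(1)$-close in total variation. Any protocol achieving additive error $o(D)$ would distinguish $H_1, H_2$ with advantage $1-o(1)$, contradicting that closeness.

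I will work in the Poissonized model: under $H_j$ the rates $\lambda_x \sim \nu_j$ are drawn i.i.d.\ for $x \in [D]$, and independently $n_x \sim \Poi(\lambda_x)$ users have input $x$. I calibrate $\Lambda = \Theta(\log^5 n)$ so that the target mean $\Ex_{\nu_j}[\lambda] = n/D = \log^5 n$ is realizable, and use Hahn--Banach duality with the degree-$L$ best polynomial approximation problem for $f(\lambda) = 1 - e^{-\lambda}$ on $[0,\Lambda]$ to construct probability measures $\nu_1,\nu_2$ on $[0,\Lambda]$ whose moments of degrees $0, 1, \ldots, L$ all agree, whose means equal $\log^5 n$, and with $|\Ex_{\nu_1}[f] - \Ex_{\nu_2}[f]| = \Omega(1)$. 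The last item is achievable with $L$ as large as $\Theta(\log^5 n)$ because the Taylor remainder $\Lambda^{L+1}/(L+1)!$ is $\Omega(1)$ whenever $L < \Lambda/e^2$. Under $H_j$ the number of distinct elements $\sum_x \mathbf{1}\{n_x \ge 1\}$ has mean $D \cdot \Ex_{\nu_j}[f]$ and variance $O(D)$, so it concentrates and the true answers under $H_1$ and $H_2$ differ by $\Omega(D)$.

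For transcript indistinguishability, let $f_m(x) = \Pr[R(x)=m]$, $q_m = \min_x f_m(x)$, and $g_m = f_m/q_m \in [1, e^\eps]$. By Poisson thinning, conditional on $(\lambda_x)_x$ the transcript counts $N_m = |\{i : R(X_i)=m\}|$ are independent Poissons with rates $\tilde\lambda_m = \sum_x \lambda_x f_m(x)$, so the joint falling-factorial moments satisfy $\Ex\bigl[\prod_m (N_m)_{k_m}\bigr] = \Ex\bigl[\prod_m \tilde\lambda_m^{k_m}\bigr]$. Each such right-hand side expands as a polynomial of total degree $\sum_m k_m$ in the $\lambda_x$'s; since the $\lambda_x$ are i.i.d., the expectation is a polynomial in the moments of $\nu_j$ of that same degree. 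Hence matching moments of $\nu_j$ up to degree $L$ forces the joint factorial moments of $(N_m)_m$ of total degree $\le L$ to agree exactly between $H_1$ and $H_2$. I will then convert this into a TV bound via a multivariate mixed-Poisson $\chi^2$ estimate, exploiting the structural constraints $\sum_m \tilde\lambda_m = \sum_x \lambda_x = \Theta(n)$ w.h.p.\ (Chernoff on the Poisson sum) and $\sum_m q_m \le 1$ to absorb the $e^\eps$ blow-up. In the chosen regime $M = \poly(n)$, $e^\eps = n/\polylog(n)$, $L = \Theta(\log^5 n)$, the resulting TV will be $n^{-\omega(1)}$.

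The hard part will be the last step: carrying out a sharp multivariate mixed-Poisson $\chi^2$ estimate for $M = \poly(n)$ correlated rates, each of which, after factoring out $q_m$, can be inflated by the $e^\eps \approx n/\polylog(n)$ factor. The two savings that must be exploited are the conservation law $\sum_m \tilde\lambda_m \approx n$ (so the rates cannot all simultaneously be large) and the super-exponential $L!$ suppression in the tail of the factorial-moment generating function, once $L = \Theta(\log^5 n)$ sits a constant factor below $\Lambda$. Standard dePoissonization (costing $O(1/\sqrt n)$ in TV) wraps up the reduction back to the fixed-$n$ model.
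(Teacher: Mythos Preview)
Your plan has a genuine gap precisely at the step you flag as ``the hard part,'' and the paper's argument is organized quite differently in a way that avoids it.

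\textbf{Why your route stalls.} Under your i.i.d.\ prior $\lambda_x\sim\nu_j$, a per--$x$ hybrid is hopeless: for each fixed $x$ the block $(N_{x,m})_m$ determines $n_x=\sum_m N_{x,m}\sim\Poi(\lambda_x)$, and $\Pr[n_x=0]=\Ex_{\nu_j}[e^{-\lambda}]$ is exactly the functional you force to differ by $\Omega(1)$; so the per--$x$ TV is $\Omega(1)$ and summing over $x$ gives nothing. That leaves the global $\chi^2$ route. Centering at $\lambda^*_m=(n/D)\sum_x f_m(x)$ and writing $U_m=\sum_x(\lambda_x-n/D)f_m(x)$, the deterministic bound $|U_m|\le(\Lambda D/n)\lambda^*_m=\Theta(1)\cdot\lambda^*_m$ feeds into the multivariate Charlier tail as
\[
\sum_{|\vm|>L}\frac{(\Ex[U^{\vm}])^2}{\vm!\,(\lambda^*)^{\vm}}
\;\le\;\sum_{z>L}\bigl(\tfrac{\Lambda D}{n}\bigr)^{2z}\sum_{|\vm|=z}\frac{(\lambda^*)^{\vm}}{\vm!}
\;=\;\sum_{z>L}\frac{(\Lambda^2D^2/n)^z}{z!}
\;=\;\sum_{z>L}\frac{\Theta(n)^z}{z!},
\]
using $\sum_m\lambda^*_m=n$; this needs $L\gg n$, not $L=\Theta(\log^5 n)$. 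The savings you list ($\sum_m q_m\le1$, $\sum_m\tilde\lambda_m\approx n$) are already baked into that computation. To do better you would have to extract serious cancellation from the mean-zero structure of the $\zeta_x$ across $M=\poly(n)$ correlated coordinates, and you have not indicated a mechanism. Separately, the claim that $L$ ``a constant factor below $\Lambda$'' yields $\Omega(1)$ separation for $f(\lambda)=1-e^{-\lambda}$ on $[0,\Lambda]$ is not justified: the Taylor remainder upper-bounds only the Taylor error, not the best polynomial error, and for a smooth entire function like $e^{-\lambda}$ the latter is already $o(1)$ once $L$ is mildly large relative to $\sqrt{\Lambda}$. The Wu--Yang construction gets $\Omega(1)$ separation only by working on the \emph{gapped} support $\{0\}\cup[1,\Lambda]$ with $\Lambda=\Theta(L^2)$.

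\textbf{What the paper does instead.} The paper does \emph{not} put a single i.i.d.\ prior on all $\lambda_x$. It uses a \emph{signal/noise decomposition}: the signal part draws $\lambda_i\sim U$ or $V$ with mean $1$ (so only $D$ signal users in expectation), where $U,V$ are supported on $\{0\}\cup[1,\Lambda]$ with $\Lambda=\Theta(\log^2 n)$, match moments up to $L=\log n$, and have $U_0-V_0>0.9$. The remaining $n-D$ users are pure noise, all with inputs from a small random subset $E\subset[D]$; their contribution to the transcript histogram is a \emph{deterministic} Poisson vector $\vPoi(\vnu)$. This $\vnu$ is then split as $\vnu=\vnu^{(0)}+\sum_{i}\vnu^{(i)}$ with $\vnu^{(i)}_z\ge 2\Lambda^2\,R(i)_z$, and one swaps $U\to V$ one $i$ at a time, each swap controlled by a single-coordinate lemma giving $\|\Ex[\vPoi(U\cdot R(i)+\vnu^{(i)})]-\Ex[\vPoi(V\cdot R(i)+\vnu^{(i)})]\|_{TV}\le1/n^2$. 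The $\eps$-DP bound and the hypothesis $|\calM|\le\poly(n)$ enter exactly in proving (via a union bound over $\calM$) that a random $E$ of density $0.01$ makes every coordinate of $\vnu$ large enough. The large deterministic baseline $\vnu^{(i)}$ is the missing ingredient in your plan; without it there is no way to make the per-coordinate moment-matching lemma bite.
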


Throughout our discussion, we use $R: [D] \to \calM$ to denote a $\ln(n / \log^7 n)$-$\localDP$ randomizer. By the communication complexity condition of Theorem~\ref{theo:DE-LB-lowprivacy-toy}, we have that $|\calM| \le \poly(n)$.

Our proof is inspired by the lower bounds for estimating distinct elements in the property testing model, e.g.,~\cite{ValiantV17,WY2019chebyshev}. In particular, we use the so-called \emph{Poissonization} trick. To discuss this trick, we start with some notation. For a vector $\vlambda \in \R^{D}$, we use $\vPoi(\vlambda)$ to denote the joint distribution of $D$ independent Poisson random variables:
\[
\vPoi(\vlambda) := (\Poi(\vlambda_1),\Poi(\vlambda_2),\dotsc,\Poi(\vlambda_n)).
\]

For a distribution $\vU$ on $\R^{D}$, we define the corresponding mixture of multi-dimensional Poisson distributions as follows:
\[
\Ex[\vPoi(\vU)] := \Ex_{\vlambda \leftarrow \vU} \vPoi(\vlambda).
\]

For two random variables $X$ and $Y$ supported on $\R^{\calM}$, we use $X + Y$ to denote the random variable distributed as a sum of two independent samples from $X$ and $Y$.

\highlight{Shuffling the Outputs of the Local Protocol.} Our first observation is that the analyzer for any local protocol computing $\DE$ should achieve the same accuracy if it only sees the histogram of the randomizers' outputs. This holds because only seeing the histogram of the outputs is equivalent to shuffling the outputs by a uniformly random permutation, which is in turn equivalent to shuffling the users in the dataset uniformly at random. Since shuffling the users in a dataset does not affect the number of distinct elements, it follows that only seeing the histogram does not affect the accuracy. Therefore, we only have to consider the histogram of the outputs of the local protocol computing $\DE$. For a dataset $W$, we use $\Hist_{R}(W)$ to denote the distribution of the histogram with randomizer $R$.

\highlight{Poissonization Trick.} Given a distribution $\calD$ on $\calM$, suppose we draw a sample $m \leftarrow \Poi(\lambda)$, and then draw $m$ samples from $\calD$. If we use $N$ to denote the random variable corresponding to the histogram of these $m$ samples, it follows that each coordinate of $N$ is independent, and $N$ is distributed as $\vPoi( \lambda \vmu)$, where $\vmu_i = \calD_i$ for each $i \in \calM$.

We can now apply the above trick to the context of local protocols (recall that by our first observation, we can focus on the histogram of the outputs). Suppose we build a dataset by drawing a sample $m \leftarrow \Poi(\lambda)$ and then adding $m$ users with input $z$. By the above discussion, the corresponding histogram of the outputs with randomizer $R$ is distributed as $\vPoi(\lambda \cdot R(z))$, where $R(z)$ is treated as an $|\calM|$-dimensional vector corresponding to its probability distribution.

\highlight{Moment-Matching Random Variables.} Our next ingredient is the following construction of two moment-matching random variables used in~\cite{WY2019chebyshev}. Let $L \in \mathbb{N}$ and $\Lambda = \Theta(L^2)$. There are two random variables $U$ and $V$ supported on $\{0\} \cup [1,\Lambda]$, such that $\Ex[U] = \Ex[V] = 1$ and $\Ex[U^j] = \Ex[V^j]$ for every $j \in [L]$. Moreover $U_0 - V_0 > 0.9$. That is, $U$ and $V$ have the same moments up to degree $L$, while the probabilities of them being zero differs significantly. We will set $L = \log n$ and hence $\Lambda = \Theta(\log^2 n)$.

\highlight{Construction of Hard Distribution via Signal/Noise Decomposition.} Recalling that $D = n / \log^5 n$, we will construct two input distributions for $\DE_{n,D}$.\footnote{In fact, in our presentation the number of inputs in each dataset from our hard distributions will not be exactly $n$, but only concentrated around $n$. This issue can be easily resolved by throwing ``extra'' users in the dataset; we refer the reader to Section~\ref{sec:low-privacy-lowb} for the details.} A sample from both distributions consists of two parts: a signal part with $D$ many users in expectation, and a noise part with $n - D$ many users in expectation. 

\newcommand{\calDsignal}{\calD_{\sf signal}}
\newcommand{\calDnoise}{\calD_{\sf noise}}

Formally, for a distribution $W$ over $\mathbb{R}^{\ge 0}$ and a subset $E \subset [D]$, the dataset distributions $\calDsignal^W$ and $\calDnoise^{E}$ are constructed as follows: 

\begin{itemize}
	\item \textbf{$\calDsignal^W$:} for each $i \in [D]$, we independently draw $\lambda_i \leftarrow W$, and $n_i \leftarrow \Poi(\lambda_i)$, and add $n_i$ many users with input $i$. 
	
	\item \text{$\calDnoise^{E}$:} for each $i \in E$, we independently draw $n_i \leftarrow \Poi((n-D) / |E|)$, and add $n_i$ many users with input $i$. 
\end{itemize}

We are going to fix a ``good'' subset $E$ of $[D]$ such that $|E| \le 0.02 \cdot D$ (we will later specify the other conditions for being ``good''). Therefore, when it is clear from the context, we will use $\calDnoise$ instead of $\calDnoise^{E}$. 


Our two hard distributions are then constructed as $\calD^{U} := \calDsignal^{U} + \calDnoise$ and $\calD^{V} := \calDsignal^{V} + \calDnoise$. Using the fact that $\Ex[U] = \Ex[V] = 1$, one can verify that there are $D$ users in each of $\calDsignal^{U}$ and $\calDsignal^{V}$ in expectation. Similarly, one can also verify there are $n - D$ users in $\calDnoise$ in expectation. Hence, both $\calD^{U}$ and $\calD^{V}$ have $n$ users in expectation. In fact, the number of users from both distributions concentrates around $n$.

We now justify our naming of the signal/noise distributions. First, note that the number of distinct elements in the signal parts $\calDsignal^{U}$ and $\calDsignal^{V}$ concentrates around $(1 - \Ex[e^{-U}]) \cdot D$ and $(1 - \Ex[e^{-V}]) \cdot D$ respectively. By our condition that $U_0 - V_0 > 0.9$, it follows that the signal parts of $\calD^U$ and $\calD^V$ separates their numbers of distinct elements by at least $0.4 D$. Second, note that although $\calDnoise$ has $n - D \gg D$ many users in expectation, they are from the subset $E$ of size less than $0.02 \cdot n$. Therefore, these users collectively cannot change the number of distinct elements by more than $0.02 \cdot n$, and the numbers of distinct elements in $\calD^U$ and $\calD^V$ are still separated by $\Omega(D)$.

\highlight{Decomposition of Noise Part.} To establish the desired lower bound, it now suffices to show for the local randomizer $R$, it holds that $\Hist_{R}(\calD^{U})$ and $\Hist_{R}(\calD^{V})$ are very close in statistical distance. For $W \in \{U,V\}$, we can decompose $\Hist_{R}(\calD^{W})$ as
\[
\Hist_{R}(\calD^{W}) = \sum_{i \in [D]} \vPoi(W \cdot R(i)) + \sum_{i \in [E]} \vPoi((n-D)/|E| \cdot R(i)).
\] 
By the additive property of Poisson distributions, letting $\vnu = (n-D) / |E| \cdot \sum_{i \in [E]} R(i)$, we have that $\sum_{i \in [E]} \vPoi((n-D)/|E| \cdot R(i)) = \vPoi(\vnu)$. 

Our key idea is to decompose $\vnu$ carefully into $D+1$ nonnegative vectors $\vnu^{(0)},\vnu^{(1)},\dotsc,\vnu^{(D)}$, such that $\vnu = \sum_{i=0}^{D} \vnu^{(i)}$. Then, for $W \in \{U,V\}$, we have
\[
\Hist_{R}(\calD^{W}) = \vPoi(\vnu^{(0)}) + \sum_{i \in [D]} \vPoi(W \cdot R(i) + \vnu^{(i)}).
\]
To show that $\Hist_{R}(\calD^{U})$ and $\Hist_{R}(\calD^{V})$ are close, it suffices to show that for each $i \in [D]$, it is the case that $\vPoi(U \cdot R(i) + \vnu^{(i)})$ and $\vPoi(V \cdot R(i) + \vnu^{(i)})$ are close. We show that they are close when $\vnu^{(i)}$ is sufficiently large on every coordinate compared to $R(i)$.

\begin{lemma}[Simplification of Lemma~\ref{lm:key-bound-2}]\label{lm:moment-matching-toy}
	For each $i \in [D]$, and every $\vlambda \in (\R^{\ge 0})^{\calM}$, if $\vlambda_z \ge 2 \Lambda^2 \cdot R(i)_z$ for every $z \in \calM$, then\footnote{We use $\|\calD_1 - \calD_2\|_{TV}$ to denote the total variation (aka statistical) distance between two distributions $\calD_1, \calD_2$.}
	\[
	\| \Ex[\vPoi(U \cdot R(i) + \vlambda)] - \Ex[\vPoi(V \cdot R(i) + \vlambda)] \|_{TV} \le \frac{1}{n^2}.
	\]
\end{lemma}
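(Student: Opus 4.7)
The plan is to control the target TV distance by a Cauchy--Schwarz reduction to a chi-squared divergence against the ``Poisson background'' $Q:=\vPoi(\vlambda)$, and then to exploit the moment-matching property of $U$ and $V$ through a clean moment-generating calculation.

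First, writing $\vmu := R(i)$ (so that $\sum_z \mu_z = 1$) and $\alpha_z := \mu_z/\lambda_z$, the elementary identity $\Poi(\lambda_z + W\mu_z)(k_z)/\Poi(\lambda_z)(k_z) = (1+W\alpha_z)^{k_z}e^{-W\mu_z}$ combined with $\sum_z \mu_z = 1$ gives the likelihood ratio
\[
f_W(\vk) \;:=\; \frac{\Ex[\vPoi(W\vmu+\vlambda)](\vk)}{Q(\vk)} \;=\; \Ex_W\!\left[e^{-W}\prod_{z\in\calM}(1+W\alpha_z)^{k_z}\right].
\]
Cauchy--Schwarz then bounds
\[
\bigl\|\Ex[\vPoi(U\vmu+\vlambda)]-\Ex[\vPoi(V\vmu+\vlambda)]\bigr\|_{TV} \;\le\; \tfrac{1}{2}\sqrt{\Delta},\qquad \Delta \;:=\; \Ex_Q\!\left[(f_U-f_V)^2\right],
\]
so it suffices to show $\Delta \le 1/n^4$.

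Next, for deterministic $w_1,w_2 \in \{0\}\cup[1,\Lambda]$, the cross-expectation $\Ex_Q\bigl[e^{-(w_1+w_2)}\prod_z(1+w_1\alpha_z)^{k_z}(1+w_2\alpha_z)^{k_z}\bigr]$ factors across coordinates (which are independent Poissons under $Q$), and the Poisson MGF $\Ex_{k\sim\Poi(\lambda)}[t^k]=e^{\lambda(t-1)}$ collapses each factor, yielding
\[
e^{-(w_1+w_2)}\prod_z e^{\lambda_z((1+w_1\alpha_z)(1+w_2\alpha_z)-1)} \;=\; e^{-(w_1+w_2)+(w_1+w_2)\sum_z\mu_z + w_1 w_2 \sum_z\mu_z\alpha_z} \;=\; e^{w_1 w_2 S},
\]
where $S:=\sum_z\mu_z^2/\lambda_z$ and we used $\lambda_z\alpha_z=\mu_z$ and $\sum_z\mu_z=1$. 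Expanding $e^{w_1 w_2 S}=\sum_\ell (w_1 w_2)^\ell S^\ell/\ell!$ and taking expectations over independent copies $W_1,W_2$ produces
\[
\Delta \;=\; \sum_{\ell\ge 0}\frac{S^\ell}{\ell!}\bigl(\Ex[U^\ell]-\Ex[V^\ell]\bigr)^2,
\]
and the hypothesis $\Ex[U^\ell]=\Ex[V^\ell]$ for all $\ell\le L$ annihilates every low-order term.

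Finally, the standing assumption $\lambda_z\ge 2\Lambda^2\mu_z$ forces $S\le 1/(2\Lambda^2)$, while the support condition $U,V\in\{0\}\cup[1,\Lambda]$ yields $|\Ex[U^\ell]-\Ex[V^\ell]|\le 2\Lambda^\ell$, so
\[
\Delta \;\le\; 4\sum_{\ell>L}\frac{(S\Lambda^2)^\ell}{\ell!} \;\le\; 4\sum_{\ell>L}\frac{2^{-\ell}}{\ell!} \;\le\; \frac{8}{2^{L}(L+1)!},
\]
which for $L=\Theta(\log n)$ is comfortably below $1/n^4$; taking square roots delivers the claimed TV bound. The main obstacle is securing the clean identity that the cross-expectation collapses exactly to $e^{w_1 w_2 S}$: the cancellation $e^{-(w_1+w_2)}\cdot e^{(w_1+w_2)\sum_z\mu_z}=1$ (which crucially uses $\sum_z\mu_z=1$) strips away the non-polynomial prefactor $e^{-W}$, allowing $\Delta$ to be written purely in terms of the monomials $\Ex[U^\ell]-\Ex[V^\ell]$; only then can moment matching together with the smallness of $S$ beat down the otherwise crushing crude bound $\Lambda^\ell$.
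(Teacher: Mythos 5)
Your proof is correct, and it reaches the toy lemma by a more elementary route than the paper. The paper proves the general Lemma~\ref{lm:key-bound-2} via an orthogonal-polynomial argument: it first establishes the multi-dimensional moment-difference bound of Lemma~\ref{lm:multi-moment-matching} using Charlier polynomials and their orthogonality under the Poisson measure, and then applies it after shifting the reference measure to $\vPoi(\Lambda\vtheta+\vlambda)$ (i.e., with mixing variables $U-\Lambda$, $V-\Lambda$), which is what lets it tolerate the weaker hypothesis that only a $1-\tfrac{1}{2\Lambda}$ fraction of the mass (under $\calD_{\vtheta}$) is dominated. You instead apply Cauchy--Schwarz against $Q=\vPoi(\vlambda)$ and compute the resulting second moment exactly via the Poisson MGF, exploiting that the mixture is one-parameter ($W\cdot R(i)$) and that $\|R(i)\|_1=1$ so the $e^{-W}$ prefactor cancels, collapsing everything to the univariate series $\sum_{\ell} S^{\ell}(\Ex[U^{\ell}]-\Ex[V^{\ell}])^2/\ell!$ with $S=\sum_z R(i)_z^2/\vlambda_z$; moment matching up to degree $L$ and $S\Lambda^2\le 1/2$ then finish the job, and with $L=\log n$, $\Lambda=\Theta(L^2)$ the bound is well below $n^{-2}$. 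What your route buys is self-containedness (no Charlier expansion, no multi-index bookkeeping, no shift) and an essentially tight $\chi^2$-type computation for this special case; what the paper's route buys is generality, since your clean collapse to $e^{w_1w_2S}$ uses domination on \emph{every} coordinate (so that $S$ is small and the likelihood ratio against $\vPoi(\vlambda)$ is well behaved), and it does not directly give the full Lemma~\ref{lm:key-bound-2} needed elsewhere, where some coordinates may have tiny or zero $\vlambda_z$ and the reference measure must be shifted. The only (minor) points to make explicit are the degenerate coordinates with $R(i)_z=0$ (where $\alpha_z$ should be read as $0$, and absolute continuity w.r.t.\ $Q$ holds since $R(i)_z>0$ forces $\vlambda_z>0$) and the routine Fubini justification for expanding the exponential series inside the expectations.
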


To apply Lemma~\ref{lm:moment-matching-toy}, we simply set $\vnu^{(i)} = (2 \Lambda^2) \cdot R(i)$ and $\vnu^{(0)} = \vnu - \sum_{i \in [D]} \vnu^{(i)}$. Letting $\vmu = \sum_{i \in [D]} R(i)$, the requirement that $\vnu^{(0)}$ has to be nonnegative translates to $\vnu_z \ge 2 \Lambda^2 \cdot \vmu_z$, for each $z \in \calM$.

\highlight{Construction of a Good Subset $E$.} So we want to pick a subset $E \subset [D]$ of size at most $0.02 \cdot D$ such that the corresponding $\vnu^E = (n-D) / |E| \cdot \sum_{i \in [E]} R(i)$ satisfies $\vnu^E_z \ge 2 \Lambda^2 \cdot \vmu_z$ for each $z \in \calM$. We will show that a simple random construction works with high probability: i.e., one can simply add each element of $[D]$ to $E$ independently with probability $0.01$.

More specifically, for each $z \in \calM$, we will show that with high probability $\vnu^E_z \ge 2 \Lambda^2 \cdot \vmu_z$. Then the correctness of our construction follows from a union bound (and this step crucially uses the fact that $|\calM| \le \poly(n)$). 

Now, let us fix a $z \in \calM$. Let $m^* = \max_{i \in [D]} R(i)_z$. Since $R$ is $\ln(n/\log^7 n)$-DP, it follows that $\vnu_z \ge \frac{n-D}{n/\log^7 n} \cdot m^* \ge \frac{\log^7 n}{2} \cdot m^*$. We consider the following two cases:
\begin{enumerate}
\item If $m^* \ge \vmu_z / \log^2 n$, we immediately get that $\vnu_z \ge \log^5 n /2 \cdot \vmu_z \ge 2 \Lambda^2 \cdot \vmu_z$ (which uses the fact that $\Lambda = \Theta(\log^2 n)$).
\item If $m^* < \vmu_z / \log^2 n$, then in this case, the mass $\vmu_z$ is distributed over at least $\log^2 n$ many components $R(i)_z$. Applying Hoeffding's inequality shows that with high probability over $E$, it is the case that $\vnu^E_z \ge \Theta(n/D) \cdot \vmu_z \ge \Lambda^2 \cdot \vmu_z$ (which uses the fact that $D = n/\log^5 n$).
\end{enumerate}

See the proof of Lemma~\ref{lm:E-is-good-whp} for a formal argument and how to get rid of the assumption that $|\calM| \le \poly(n)$.

\highlight{The Lower Bound.} From the above discussions, we get that
\[
\|\Hist_{R}(\calD^{U}) - \Hist_R(\calD^{V})\|_{TV} \le \sum_{i=1}^{D} \| \Ex[\vPoi(U \cdot R(i) + \vnu^{(i)})] - \Ex[\vPoi(V \cdot R(i) + \vnu^{(i)})] \|_{TV} \le 1/n.
\]

Hence, the analyzer of the local protocol with randomizer $R$ cannot distinguish $\calD^U$ and $\calD^V$, and thus it cannot solve $\DE_{n,D}$ with error $o(D)$ and $0.99$ probability. See the proof of Theorem~\ref{theo:LDP-LB-strong-private-coin} for a formal argument and how to deal with the fact that there may not be exactly $n$ users in dataset from $\calD^{U}$ or $\calD^{V}$.

\highlight{Single-Message $\shuffledDP$ Lower Bound.} To apply the above lower bound to $\shuffledDP^1$ protocols, the natural idea is to resort to the connection between the $\shuffledDP^1$ and $\localDP$ models. In particular,~\cite{Cheu18} showed that $(\eps,\delta)$-$\shuffledDP^1$ protocols are also $(\eps + \ln n,\delta)$-$\localDP$.

It may seem that the $\ln n$ privacy guarantee is very close to the $\ln n - O(\ln \ln n)$ one in Theorem~\ref{theo:LDP-LB-strong}. But surprisingly, it turns out (as was stated in Theorem~\ref{th:weak_privacy_local_prot_DE}) that there is a $\left(\ln n + O(1)\right)$-$\localDP$ protocol solving $\DE_{n,n}$ (hence also $\DE_{n,D}$) with error $O(\sqrt{n})$. Hence, to establish the $\shuffledDP^1$ lower bound (Theorem~\ref{theo:single-message-LB}), we rely on the following stronger connection between $\shuffledDP^1$ and $\localDP$ protocols.

\begin{lemma}[Simplification of Lemma~\ref{lm:SDP-to-LDP}]\label{lm:SDP-to-LDP-toy}
	For every $\delta \le 1/n^{\omega(1)}$, if the randomizer $R$ is $(O(1),\delta)$-$\shuffledDP^1$ on $n$ users, then $R$ is $\left(\ln(n\log^2n /\log\delta^{-1}),n^{-\omega(1)}\right)$-$\localDP$.
\end{lemma}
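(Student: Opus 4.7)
Fix inputs $x, x' \in \calX$ and an event $A \subseteq \calM$, and let $p := \Pr[R(x) \in A]$ and $p' := \Pr[R(x') \in A]$. The goal is to show $p \leq e^{\eps_L} p' + \delta_L$, where $\eps_L = \ln(n \log^2 n / \log \delta^{-1})$ and $\delta_L = n^{-\omega(1)}$. The plan is to instantiate shuffle DP on the two neighboring datasets $X_0 = (x', x', \ldots, x')$ and $X_1 = (x, x', \ldots, x')$ and extract a tighter local DP guarantee than the Cheu et al.\ $(\eps + \ln n, \delta)$ bound by applying shuffle DP to a carefully chosen \emph{tail} event rather than the event ``at least one message in $A$''.

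\textbf{Key idea and derivation.} The Cheu et al.\ bound loses a factor of $n$ because the event ``at least one message in $A$'' has baseline probability roughly $n p'$, so the additive $\delta$ in shuffle DP is negligible compared to it and only the multiplicative factor $e^\eps$ is effectively used. My plan instead uses the event $E_t = \{\text{at least } t \text{ messages lie in } A\}$ for a strategically chosen threshold $t \approx \log \delta^{-1} / \log \log \delta^{-1}$, so that the baseline $\Pr[\calS(R^n(X_0)) \in E_t] = \Pr[\Bin(n, p') \ge t]$ is on the order of $\delta$ itself. With $\Pr[\calS(R^n(X_1)) \in E_t] = p \Pr[\Bin(n-1, p') \ge t - 1] + (1-p) \Pr[\Bin(n-1, p') \ge t]$, shuffle DP applied to $E_t$ together with the identity $\Pr[\Bin(n, p') \ge t] = \Pr[\Bin(n-1, p') \ge t] + p' \Pr[\Bin(n-1, p') = t-1]$ rearranges to
\begin{equation*}
p \;\leq\; e^\eps p' \;+\; \frac{(e^\eps - 1)\,\Pr[\Bin(n-1, p') \geq t]}{\Pr[\Bin(n-1, p') = t-1]} \;+\; \frac{\delta}{\Pr[\Bin(n-1, p') = t-1]}.
\end{equation*}

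\textbf{Tail estimates and the final bound.} In the upper tail $t \gg (n-1)p'$, the binomial PMF decays geometrically, so $\Pr[\Bin(n-1, p') \ge t] / \Pr[\Bin(n-1, p') = t-1]$ is $O(np'/t)$. This makes the middle term $O(e^\eps n p'/t)$ which, combined with $e^\eps p'$, gives a multiplicative factor of $O(n/t)$ on $p'$. Choosing $t$ such that $\Pr[\Bin(n-1, p') = t-1]$ is at least $\delta / \delta_L$ bounds the additive term by $\delta_L$; standard Chernoff/Poisson tail estimates show this is achieved for $t \approx \log \delta^{-1} / \log \log \delta^{-1}$ when $np'$ is small, which is the only interesting regime (the $np'$ moderate/large regime makes $e^{\eps_L} p' \ge 1$ trivially). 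The resulting multiplier $O(n \log \log \delta^{-1} / \log \delta^{-1})$ is absorbed into the stated $n \log^2 n / \log \delta^{-1}$, since $\log \delta^{-1} \le n^{O(1)}$ for any usable $\delta$.

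\textbf{Main obstacle.} The delicate step is the regime analysis: one must verify that a single choice of $t$ (depending on $p'$) simultaneously (i) lies in the upper tail so that the ratio $\Pr[\Bin \ge t]/\Pr[\Bin = t-1]$ is tightly controlled, and (ii) keeps $\Pr[\Bin = t-1]$ large enough that $\delta/\Pr[\Bin = t-1] \le \delta_L = n^{-\omega(1)}$. For very small $p'$ (including $p' = 0$), Poisson approximations to the binomial degenerate, and one verifies the bound directly: the $p' = 0$ case reduces to applying shuffle DP at $t = 1$, yielding $p \le \delta \le \delta_L$. Patching these regimes together, together with using $\delta = n^{-\omega(1)}$ to ensure the additive error remains $n^{-\omega(1)}$, completes the proof.
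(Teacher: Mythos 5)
Your skeleton is sound, and it is essentially the same mechanism as the paper's proof of the full lemma (Lemma~\ref{lm:single-msg-SDP-to-LDP}): fix $x,x'$, look at the count of messages landing in $A$ among the $n$ shuffled reports, which is $\Ber(p)+\Bin(n-1,p')$ versus $\Ber(p')+\Bin(n-1,p')$, and extract the improved local guarantee from the \emph{upper tail} of this count, at a threshold where the baseline probability is comparable to $\delta$. Your rearrangement $p \le e^{\eps}p' + (e^{\eps}-1)\Pr[\Bin(n-1,p')\ge t]/\Pr[\Bin(n-1,p')=t-1] + \delta/\Pr[\Bin(n-1,p')=t-1]$ is correct, and applying the shuffle-DP constraint to a single tail event is just an evaluation of the hockey-stick divergence that the paper lower-bounds via Lemmas~\ref{lm:beralpha-and-berbeta}--\ref{lm:binp1-and-bin} (there the lower bound is, in the end, a binomial anti-concentration estimate at a threshold proportional to $e^{\eps}\beta n/\Delta$, i.e.\ a threshold calibrated to $p'$).

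The genuine gap is in your regime analysis, which is exactly the step you flag as delicate. A fixed $t-1 \approx \log\delta^{-1}/\log\log\delta^{-1}$ does \emph{not} satisfy $\Pr[\Bin(n-1,p')=t-1]\ge \delta/\delta_L$ throughout the ``small $np'$'' regime: since $\Pr[\Bin(n-1,p')=t-1]\approx e^{-np'}(np')^{t-1}/(t-1)!$, this requires roughly $\ln\frac{t}{np'} \lesssim \log\log\delta^{-1}$, i.e.\ $np'=\Omega(1/\mathrm{polyloglog})$; already for $np'=1/\log n$, and certainly for $p'=n^{-2}$, with $\delta = 2^{-\Theta(\log^8 n)}$ as needed for Theorem~\ref{theo:single-message-LB}, one gets $\Pr[\Bin(n-1,p')=t-1]$ far below $\delta$ and the additive term explodes. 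Your fallback at $t=1$ only gives $p\le e^{\eps}np'+\delta$, which meets the target $e^{\eps_L}p'+\delta_L$ with $e^{\eps_L}=n\log^2 n/\log\delta^{-1}$ only when $np'\le n^{-\omega(1)}$, because $\log\delta^{-1}\gg\log^2 n$ in the relevant setting; so the whole band $n^{-O(1)}\lesssim np' \lesssim 1/\log n$ (for instance) is covered by neither prescription. To close this you must let $t$ genuinely track $p'$ (e.g.\ the largest $t$ with $\Pr[\Bin(n-1,p')=t-1]\ge\sqrt{\delta}$) and prove the dichotomy: either this $t\gtrsim \log\delta^{-1}/\log^2 n$, so the multiplier $n/t$ fits the budget, or else $np'\le t\,e^{-\Omega(\log^2 n)}$ and the middle term is itself $n^{-\omega(1)}$ and can be pushed into the additive error. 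The paper gets this uniformity for free from its contradiction setup: assuming $\alpha > D\beta+\delta_0$ with $\ln\delta_0^{-1}=\log^2 n$ forces $\Delta=\alpha-e^{\eps}\beta\ge\delta_0$ (capping the $\log(\Delta^{-1})$ factor in Lemma~\ref{lm:key-HS-lowb} by $\log^2 n$) and $\beta/\Delta\le 2/D$ (calibrating the tail threshold to $\beta$ for every $\beta$ simultaneously). Without a step of this kind, your argument as written does not go through for all $p'$.
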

Setting $\delta = 2^{-\log^k n}$ for a sufficiently large $k$ and combining Lemma~\ref{lm:SDP-to-LDP-toy} and Theorem~\ref{theo:LDP-LB-strong} gives us the desired lower bound against $\shuffledDP^1$ protocols. 


\subsection{Lower Bounds for \DE and \selection via Dominated Protocols}\label{sec:overview-dominated}

\newcommand{\calWuniform}{\calW^{\sf uniform}}


We will first describe the proof ideas behind Theorem~\ref{theo:lowb-distinct-elements}, which is restated below. Then, we apply the same proof technique to obtain lower bounds for \selection (the lower bound for \paritylearning is established similarly; see Section~\ref{sec:lowb-learning-parity} for details).

\begin{lemma}[Detailed Version of Theorem~\ref{theo:lowb-distinct-elements}]\label{lm:dominated-DE-lowb-toy}
	For $\eps = o(\ln n) $, no $(\eps,o(1/n))$-dominated protocol can solve $\DE$ with error $o(n/e^{\eps})$.
\end{lemma}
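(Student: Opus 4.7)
My plan is to apply Le Cam's two-point method. Given any $(\eps,\delta)$-dominated randomizer $R$ with dominating distribution $\calD$, I will construct two (Poissonized) dataset distributions $\mu_0,\mu_1$ whose expected $\DE$ values differ by $\Omega(n/e^\eps)$ but whose induced transcript histograms are close in total variation distance, from which the claimed error lower bound follows immediately.

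Concretely, let $k=\Theta(n/e^\eps)$. I pick a background input $b\in\calX$ and a signal distribution $\calP_1$ on $\calX$ supported on $\Omega(k)$ distinct inputs with nearly uniform weights, chosen so that $\int R\,d\calP_1\approx R(b)$. Define $\mu_0$ to consist of $\Poi(n)$ users all with input $b$ (so expected $\DE\le 1$), and $\mu_1$ to consist of $\Poi(n-k)$ users with input $b$ plus $\Poi(k)$ signal users drawn i.i.d.\ from $\calP_1$ (so expected $\DE=\Omega(k)=\Omega(n/e^\eps)$). By Poissonization, the transcript histograms are product Poissons with means $\vlambda=n\,R(b)$ and $\vmu=(n-k)R(b)+k\int R\,d\calP_1$; exact moment matching makes them identical, while approximate matching keeps them close.

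To bound the TV distance in the approximate case, I will use the exact identity
\[
\chi^2\bigl(\vPoi(\vmu)\,\|\,\vPoi(\vlambda)\bigr)=\exp\Bigl(\sum_{m\in\calM}\tfrac{(\vmu_m-\vlambda_m)^2}{\vlambda_m}\Bigr)-1,
\]
together with Pinsker's inequality. The dominance $R(x)\le e^\eps\calD+\delta$ controls $|\vmu_m-\vlambda_m|$ coordinate-wise, and $\delta=o(1/n)$ ensures that the $\delta$-slack contribution to the chi-square is $o(1)$.

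The main obstacle is constructing $\calP_1$ with both properties at once: approximate moment matching $\int R\,d\calP_1\approx R(b)$ and support on $\Omega(n/e^\eps)$ distinct inputs with roughly equal mass. I expect this to follow from a Caratheodory/averaging argument inside the restricted convex set $\{P:P\le e^\eps\calD+\delta\}$ in which all $R(w)$ live: since $R(b)$ is itself in this set, it can be realized as a nearly uniform convex combination of $\Omega(n/e^\eps)$ other $R(w)$'s, and any residual mismatch is absorbed by the chi-square bound above. Le Cam's method then yields the claimed error lower bound of $\Omega(n/e^\eps)$.
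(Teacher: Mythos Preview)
Your approach diverges substantially from the paper's, and the main obstacle you identify is a genuine gap rather than a technicality.

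The Caratheodory/averaging step is not justified by dominance and appears to be false as stated. The condition $R(w)\le e^{\eps}\calD+\delta$ is a \emph{one-sided} upper bound; it places no lower bound on any $R(w)_m$ and does not say that $R(b)$ lies in (or near) the convex hull of $\{R(w):w\ne b\}$. Caratheodory's theorem goes the wrong way (few points, not many), and ``$R(b)$ is in the same convex set'' says nothing about expressibility as a combination of the other specific $R(w)$'s. Concretely, take $\calM=\{0,1\}$, $R(b)=\Ber(1/2)$, and $R(w)=\Ber(1/2+1/(2e^{\eps}))$ for every $w\ne b$; this is $\eps$-dominated, yet every convex combination of the $R(w)$'s equals $\Ber(1/2+1/(2e^{\eps}))$, which is bounded away from $R(b)$. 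Your chi-square bound then fails because the denominator $\vlambda_m=nR(b)_m$ has no lower bound from dominance, so even a small mismatch $\int R_m\,d\calP_1-R(b)_m$ can make the sum blow up. You might hope to fix this by choosing $b$ adaptively to $R$, but you have not given a construction or an existence argument, and it is far from clear one exists uniformly over all dominated $R$.

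The paper avoids this entirely by \emph{not} using a two-point comparison against a single $R(b)$. Instead it compares $R(\calU_D)^{\otimes n}$ (all inputs uniform over $\{0,1\}^{\log n}$) against an \emph{average over a large structured family} $\Ex_{(\ell,s)}(\calD^{\alpha}_{\ell,s})^{\otimes n}$, where $\calD^{\alpha}_{\ell,s}$ is uniform biased by a random parity. The key bound is an average KL inequality (Theorem~\ref{theo:many-vs-one}): write $f_z(x)=p_{x,z}/q_z$ with $q=R(\calU_D)$, truncate $f_z$ at level $L_z=2e^{\eps}\calD_z/q_z$ (this is exactly where dominance enters), and bound $\Ex_{(\ell,s)}(f_z(\calD_{\ell,s}^{\alpha})-f_z(\calU_D))^2$ via Parseval's identity on $\{0,1\}^{\log n}$. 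Parseval gives the concave function $W(L)=\alpha^2 L/2^D$, and Jensen over $z$ yields $\Ex_{(\ell,s)}\KL(R(\calD^{\alpha}_{\ell,s})\|R(\calU_D))\le O(\alpha^2 e^{\eps}/n+\delta)$. Setting $\alpha^2=\Theta(e^{-\eps})$ gives both indistinguishability and a $\Theta(n/e^{\eps})$ gap in expected $\DE$ (via $\cosh(\alpha)-1\approx\alpha^2/2$). The Fourier structure of the family is what replaces your missing moment-matching step.
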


\highlight{Hard Distributions for $\DE_{n,n}$.} We now construct our hard instances for $\DE_{n,n}$. For simplicity, we assume $n = 2^D$ for an integer $D$, and identify the input space $[n]$ with $\{0,1\}^D$ by a fixed bijection. Let $\calU_{D}$ be the the uniform distribution over $\{0,1\}^{D}$. For $(\ell,s) \in [2] \times \{0,1\}^D$, we let $\calD_{\ell,s}$ be the uniform distribution on $\{x \in \{0,1\}^D : \langle x,s\rangle = \ell \}$.

We also use $\calD_{\ell,s}^{\alpha}$ to denote the mixture of $\calD_{\ell,s}$ and $\calU_{D}$ which outputs a sample from $\calD_{\ell,s}$ with probability $\alpha$ and a sample from $\calU_{D}$ with probability $1-\alpha$.

For a parameter $\alpha > 0$, we consider the following two dataset distributions with $n$ users:

\begin{itemize}
	\item $\calWuniform$: each user gets an i.i.d. input from $\calU_{D}$. That is, $\calWuniform := \calU_{D}^{\otimes n}$.
	
	\item $\calW^{\alpha}$: to sample a dataset from $\calW^{\alpha}$, we first draw $(\ell,s)$ from $[2] \times \{0,1\}^D$ uniformly at random, then each user gets an i.i.d. input from $\calD_{\ell,s}^\alpha$. Formally, $\calW^\alpha := \Ex_{(\ell,s) \leftarrow [2] \times \{0,1\}^D} (\calD_{\ell,s}^{\alpha})^{\otimes n}$.
\end{itemize}

Since for every $\ell,s$, it holds that $|\supp(\calD_{\ell,s}^{1})| \le n/2$, the number of distinct elements from any dataset in $\calW^{1}$ is at most $n/2$. On the other hand, since $\calU_{D}$ is a uniform distribution over $n$ elements, a random dataset from $\calWuniform = \calW^{0}$ has roughly $(1-e^{-1}) \cdot n > n/2$ distinct elements with high probability. Hence, the expected number of distinct elements of datasets from $\calW^{\alpha}$ is controlled by the parameter $\alpha$. A simple but tedious calculation shows that it is approximately $(1-e^{-1} \cdot \cosh(\alpha)) \cdot n$, which can be approximated by $(1 - e^{-1} \cdot (1+\alpha^2)) \cdot n$ for $n^{-0.1} < \alpha < 0.01$ (see Proposition~\ref{prop:DE-bound} for more details). Hence, any protocol solving $\DE$ with error $o(\alpha^2 n)$ should be able to distinguish between the above two distributions. Our goal is to show that this is impossible for $(\eps,o(1/n))$-dominated protocols.

\highlight{Bounding KL Divergence for Dominated Protocols.} Our next step is to upper-bound the statistical distance $\|\Hist_{R}(\calWuniform) - \Hist_R(\calW^{\alpha})\|_{TV}$. As in previous work~\cite{ullman2018tight,GhaziGKPV19,EdmondsNU20}, we may upper-bound the KL divergence instead. By the convexity and chain-rule properties of KL divergence, it follows that
\begin{align}
\KL(\Hist_R(\calW^\alpha) || \Hist_{R}(\calWuniform)) &\le \Ex_{(\ell,s) \leftarrow [2] \times \{0,1\}^D} \KL(R(\calD_{\ell,s}^{\alpha})^{\otimes n} || R(\calU_{D})^{\otimes n}) \notag\\
&=  n \cdot \Ex_{(\ell,s) \leftarrow [2] \times \{0,1\}^D} \KL(R(\calD_{\ell,s}^{\alpha}) || R(\calU_{D})). \label{eq:avg-KL-DE}
\end{align}

\highlight{Bounding the Average KL Divergence between a Family and a Single Distribution.} We are now ready to introduce our general tool for bounding average KL divergence quantities like~\eqref{eq:avg-KL-DE}. We first set up some notation. Let $\calI$ be an index set and $\{ \lambda_{v} \}_{v \in \calI}$ be a family of distributions on $\calX$, let $\pi$ be a distribution on $\calI$, and $\mu$ be a distribution on $\calX$. For simplicity, we assume that for every $x \in \calX$ and $v \in \calI$, it holds that $(\lambda_v)_x \le 2 \cdot \mu_x$ (which is true for $\{\calD_{\ell,s}^{\alpha}\}_{(\ell,s) \in [2] \times \{0,1\}^D}$ and $\calU_D$).

\begin{theorem}\label{theo:many-vs-one-toy}
	Let $W \colon \mathbb{R} \to \mathbb{R}$ be a concave function such that for all functions $\psi \colon \calX \to \R^{\ge 0}$ satisfying $\psi(\mu) \le 1$, it holds that 
		\[
		\Ex_{v \leftarrow \pi} \left[(\psi(\lambda_v) - \psi(\mu))^2\right] \le W(\|\psi\|_{\infty}).
		\] 
	Then for an $(\eps,\delta)$-dominated randomizer $R$, it follows that
	\[
	\Ex_{v \leftarrow \pi} [\KL(R(\lambda_v)||R(\mu))] \le O\left( W(2e^{\eps}) + \delta \right).
	\]
\end{theorem}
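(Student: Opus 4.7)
The plan is to pass from KL to chi-squared via the standard inequality $\KL(P\|Q) \le \chi^2(P\|Q)$ and then control
\[
\Ex_v[\chi^2(R(\lambda_v)\|R(\mu))] \;=\; \sum_y \frac{\Ex_v[(P_v(y) - Q(y))^2]}{Q(y)},
\]
where $P_v := R(\lambda_v)$ and $Q := R(\mu)$. The core idea is that the $(\eps,\delta)$-domination of $R$ by $\calD$ lets one split each mass $R(x)(y)$ into a ``bulk'' piece bounded pointwise by $2e^\eps \calD(y)$, which feeds cleanly into the hypothesis on $W$ after normalization, and a ``residual'' piece whose $y$-sum is at most $\delta$, which we will show contributes only $O(\delta)$ in total.

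Concretely, I would set
\[
\phi_y(x) := \min\bigl(R(x)(y),\, 2e^\eps \calD(y)\bigr), \qquad \rho_y(x) := R(x)(y) - \phi_y(x),
\]
and apply the dominated property to the event $E_x := \{y : R(x)(y) > 2e^\eps \calD(y)\}$ to obtain $\sum_y \rho_y(x) = R(x)(E_x) - 2e^\eps \calD(E_x) \le \delta - e^\eps \calD(E_x) \le \delta$ for every $x$. Setting $A_v(y) := \Ex_{\lambda_v}[\phi_y]$, $B_v(y) := \Ex_{\lambda_v}[\rho_y]$, $C(y) := \Ex_\mu[\phi_y]$, $D(y) := \Ex_\mu[\rho_y]$, so that $P_v = A_v + B_v$ and $Q = C + D$, the bound $(P_v - Q)^2 \le 2(A_v - C)^2 + 2(B_v - D)^2$ reduces the analysis to controlling a ``bulk'' sum and a ``residual'' sum separately.

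For the bulk, I would apply the hypothesis to $\psi := \phi_y/Q(y)$ at each $y$ with $Q(y) > 0$ (the case $Q(y) = 0$ is vacuous, since $\lambda_v \le 2\mu$ forces $A_v(y) \le 2C(y) = 0$). This $\psi$ is nonnegative, satisfies $\psi(\mu) = C(y)/Q(y) \le 1$, and has $\|\psi\|_\infty \le 2e^\eps \calD(y)/Q(y)$ by construction of $\phi_y$. Assuming $W$ is non-decreasing without loss of generality, the hypothesis yields $\Ex_v[(A_v(y) - C(y))^2] \le Q(y)^2\, W\bigl(2e^\eps \calD(y)/Q(y)\bigr)$; dividing by $Q(y)$, summing over $y$, and applying Jensen's inequality to the concave $W$ with the probability weights $Q(y)$ gives
\[
\sum_y Q(y)\, W\!\left(\frac{2 e^\eps \calD(y)}{Q(y)}\right) \;\le\; W\!\left(\sum_y 2 e^\eps \calD(y)\right) \;=\; W(2e^\eps).
\]

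For the residual, the crucial ingredient is that $\lambda_v \le 2\mu$ forces $P_v \le 2Q$ pointwise, so $B_v(y) \le P_v(y) \le 2Q(y)$ and $D(y) \le Q(y)$, giving $|B_v(y) - D(y)| \le 2 Q(y)$. Combining this with the triangle bound $|B_v(y) - D(y)| \le B_v(y) + D(y)$ yields $(B_v(y) - D(y))^2/Q(y) \le 2(B_v(y) + D(y))$, which sums (using $\sum_y B_v(y), \sum_y D(y) \le \delta$) to $O(\delta)$. Assembling the two contributions gives $\Ex_v[\KL(R(\lambda_v)\|R(\mu))] \le 2 W(2e^\eps) + O(\delta)$, as claimed. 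The part I expect to be subtlest is the residual term: a naive pointwise use of domination $R(x)(y) \le e^\eps \calD(y) + \delta$ would contribute a $\delta$ per output symbol and blow up by a factor of $|\calM|$ after summing in $y$; avoiding this requires precisely the interplay between the hypothesis $\lambda_v \le 2\mu$ and the chi-squared denominator $Q(y)$ sketched above.
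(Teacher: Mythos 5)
Your proposal is correct and takes essentially the same route as the paper's proof of Theorem~\ref{theo:many-vs-one}: bound KL by $\chi^2$, split each mass at the threshold $2e^{\eps}$ times the dominating distribution's probability, handle the bulk via the $W$-hypothesis plus Jensen's inequality to get $W(2e^{\eps})$, and handle the residual using $\lambda_v \le 2\mu$ together with the fact that each input's residual mass is at most $\delta$. The only cosmetic differences are that you cap the bulk at $2e^{\eps}\calD(y)$ instead of zeroing it out and work with unnormalized masses rather than the likelihood ratios $f_z(x) = p_{x,z}/q_z$ used in the paper.
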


\highlight{Bounding~\eqref{eq:avg-KL-DE} via Fourier Analysis.} To apply Theorem~\ref{theo:many-vs-one-toy}, for $f\colon \calX \to \R^{\ge 0}$ with $f(\calU_{D}) = \Ex_{x \in \{0,1\}^D} [f(x)] \le 1$, we want to bound
\[
\Ex_{(\ell,s) \leftarrow [2] \times \{0,1\}^D} [(f(\calD_{\ell,s}^\alpha) - f(\calU_{D}))^2] = \Ex_{s \in \{0,1\}^D} \alpha^2 \cdot \hat{f}(s)^2.
\]
By Parseval's Identity (see Lemma~\ref{lm:parseval}), 
\[
\sum_{s \in \{0,1\}^D} \hat{f}(s)^2 = \Ex_{x \in \{0,1\}^D} f(x)^2 \le f(\calU_D) \cdot \|f\|_{\infty} \le \|f\|_{\infty}.
\]
Therefore, we can set $W(L) := \alpha^2 \cdot \frac{L}{2^D}$, and apply Theorem~\ref{theo:many-vs-one-toy} to obtain
\[
\Ex_{(\ell,s) \leftarrow [2] \times \{0,1\}^D} \KL(R(\calD_{\ell,s}^{\alpha}) || R(\calU_{D})) \le O(\alpha^2 \cdot e^{\eps} / n + \delta).
\]

We set $\alpha$ such that $\alpha^2 = c / e^{\eps}$ for a sufficiently small constant $c$ and note that $\delta = o(1/n)$. It follows that $
\KL(\Hist_R(\calW^\alpha) || \Hist_{R}(\calWuniform)) \le 0.01$, and therefore $\| \Hist_R(\calW^\alpha) - \Hist_{R}(\calWuniform) \|_{TV} \le 0.1$ by Pinsker's inequality. Hence, we conclude that $(\eps,o(1/n))$-dominated protocols cannot solve $\DE_{n,n}$ with error $o(n/e^{\eps})$, completing the proof of Lemma~\ref{lm:dominated-DE-lowb-toy}. Now Theorem~\ref{theo:lowb-distinct-elements} follows from Lemma~\ref{lm:dominated-DE-lowb-toy} and the fact that $(\eps,\delta)$-$\localDP$ protocols are also $(\eps,\delta)$-dominated.

\highlight{Lower Bounds for \selection against Multi-Message $\shuffledDP$ Protocols.} Now we show how to apply Theorem~\ref{theo:many-vs-one-toy} and Lemma~\ref{lm:SDP-to-LDP-toy} to prove lower bounds for \selection. For $(\ell,j) \in [2] \times [D]$, let $\calD_{\ell,j}$ be the uniform distribution on all length-$D$ binary strings with $j$th bit being $\ell$. Recall that $\calU_D$ is the uniform distribution on $\{0,1\}^D$. Again we aim to upper-bound the average-case KL divergence
$
\Ex_{(\ell,j) \leftarrow [2] \times [D]} \KL(R(\calD_{\ell,j}) || R(\calU_{D})).
$

To apply Theorem~\ref{theo:many-vs-one-toy}, for $f\colon \calX \to \R^{\ge 0}$ with $f(\calU_{D}) = \Ex_{x \in \{0,1\}^D} [f(x)] \le 1$, we want to bound
\[
\Ex_{(\ell,j) \leftarrow [2] \times [D]} [(f(\calD_{\ell,j}^\alpha) - f(\calU_{D}))^2] = \Ex_{j \in [D]} \hat{f}(\{j\})^2.
\]
By the Level-1 Inequality (see Lemma~\ref{lm:level-1}), it is the case that
\[
\sum_{j \in [D]} \hat{f}(\{j\})^2 \le O(\log \|f\|_{\infty}).
\]
Therefore, we can set $W(L) := c_1 \cdot \frac{\log L}{D}$ for an appropriate constant $c_1$, and apply Theorem~\ref{theo:many-vs-one-toy} to obtain
\[
\Ex_{(\ell,j) \leftarrow [2] \times [D]} \KL(R(\calD_{\ell,j}) || R(\calU_{D})) \le O\left(\frac{\eps}{D} + \delta\right).
\]
Combining this with Lemma~\ref{lm:SDP-to-LDP-toy} completes the proof (see the proofs of Lemma~\ref{lm:dominated-to-lowerbound} and Theorem~\ref{theo:lowb-selection} for the details).

\everymath{\displaystyle}

	\section{Preliminaries}\label{sec:preliminaries}

\subsection{Notation}

For a function $f \colon \calX \to \R$, a distribution $\calD$ on $\calX$, and an element $z \in \calX$, we use $f(\calD)$ to denote $\Ex_{x \leftarrow \calD}[f(x)]$ and $\calD_z$ to denote $\Pr_{x \leftarrow \calD}[x = z]$.  For a subset $E \subset \calX$, we use $\calD_E$ to denote $\sum_{z \in E} \calD_z = \Pr_{x \leftarrow \calD}[x \in E]$. We also use $\calU_D$ to denote the uniform distribution over $\{0,1\}^D$.

For two distributions $\calD_1$ and $\calD_2$ on sets $\calX$ and $\calY$ respectively, we use $\calD_1 \otimes \calD_2$ to denote their product distribution over $\calX \times \calY$.  For two random variables $X$ and $Y$ supported on $\R^{D}$ for $D \in \mathbb{N}$, we use $X + Y$ to denote the random variable distributed as a sum of two independent samples from $X$ and $Y$. For any set $\calS$, we denote by $\calS^*$ the set consisting of sequences on $\calS$, i.e., $\calS^* = \cup_{n \ge 0} \calS^n$.  For $x \in \R$, let $[x]_+$ denote
$\max(x, 0)$. For a predicate $P$, we use $\mathbb{1}[P]$ to denote the corresponding Boolean value of $P$, that is, $\mathbb{1}[P] = 1$ if $P$ is true, and $0$ otherwise.

For a distribution $\calD$ on a finite set $\calX$ and an event $\calE \subset \calX$ such that $\Pr_{z \leftarrow \calD}[z \in \calE] > 0$, we use $\calD | \calE$ to denote the conditional distribution such that 
\[
(\calD | \calE)_z = \begin{cases}
\frac{\calD_z}{\Pr_{z \leftarrow \calD}[z \in \calE]} \quad &\text{if $z \in \calE$,}\\
0 \quad &\text{otherwise.}
\end{cases}
\]

Slightly overloading the notation, we also use $\alpha \cdot \calD_1 + (1-\alpha) \cdot \calD_2$ to denote the mixture of distributions $\calD_1$ and $\calD_2$ with mixing weights $\alpha$ and $(1-\alpha)$ respectively. Whether $+$ means mixture or convolution will be clear from the context unless explicitly stated.

\subsection{Differential Privacy}

We now recall the basics of differential privacy that we will need.  Fix a finite set $\calX$, the space of user reports. A {\it dataset} $X$ is an element of $\calX^*$, namely a tuple consisting of elements of $\calX$.  Let $\hist(X) \in \mathbb{N}^{|\calX|}$ be the \emph{histogram} of $X$: for any $x \in \calX$, the $x$th component of $\hist(X)$ is the number of occurrences of $x$ in the dataset $X$. We will consider datasets $X, X'$ to be \emph{equivalent} if they have the same histogram (i.e., the ordering of the elements $x_1, \ldots, x_n$ does not matter). For a multiset $\calS$ whose elements are in $\calX$, we will also write $\hist(\calS)$ to denote the histogram of $\calS$ (so that the $x$th component is the number of copies of $x$ in $\calS$).

Let $n \in \mathbb{N}$, and consider a dataset $X = (x_1, \ldots, x_n) \in \calX^n$.  For an element $x \in \calX$, let $f_X(x) = \frac{\hist(X)_x}{n}$ be the {\it frequency} of $x$ in $X$, namely the fraction of elements of $X$ that are equal to $x$.
Two datasets $X, X'$ are said to be {\it neighboring} if they differ in a single element, meaning that we can write (up to equivalence) $X = (x_1, x_2, \ldots, x_n)$ and $X' = (x_1', x_2,  \ldots, x_n)$. In this case, we write $X \sim X'$. Let $\calZ$ be a set; we now define the differential privacy of a randomized function $P \colon \calX^n \rightarrow \calZ$ as follows.
\begin{definition}[Differential privacy (DP)~\cite{DworkMNS06,DworkKMMN06}]
	\label{def:dp}
	A randomized algorithm $P \colon \calX^n \rightarrow \calZ$ is {\it $(\eps, \delta)$-DP} if for every pair of neighboring datasets $X \sim X'$ and for every set $\calS \subset \calZ$, we have
	$$
	\Pr[P(X) \in \calS] \leq e^\eps \cdot \Pr[P(X') \in \calS] + \delta,
	$$
	where the probabilities are taken over the randomness in $P$.  Here, $\eps \geq 0$ and $\delta \in [0,1]$.
\end{definition}

If $\delta = 0$, then we use $\epsilon$-DP for brevity and informally refer to it as \emph{pure-DP}; if $\delta > 0$, we refer to it as \emph{approximate-DP}.  We will use the following post-processing property of DP.
\begin{lemma}[Post-processing, e.g.,~\cite{DworkR14}]
	\label{lem:post_process}
	If $P$ is $(\eps, \delta)$-DP, then for every randomized function $A$, the composed function $A \circ P$ is $(\eps, \delta)$-DP.
\end{lemma}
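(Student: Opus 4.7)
The plan is to reduce the post-processing claim for a randomized $A$ to the deterministic case, which then follows immediately from the DP guarantee of $P$. First I would represent $A$ via its internal randomness: write $A(z)$ as $A_r(z)$ where $r$ is drawn from some distribution $\mu$ on an auxiliary space $\Omega$, independently of $P$, and each $A_r \colon \calZ \to \calW$ is deterministic. This is the standard ``randomness extraction'' description of a randomized map, and after it we only have to reason about deterministic post-processing conditioned on $r$.

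Next I would fix neighboring datasets $X \sim X'$ and a target set $T \subseteq \calW$, and consider the preimages $S_r := A_r^{-1}(T) \subseteq \calZ$ for each $r \in \Omega$. Because each $S_r$ is a fixed subset of $\calZ$, the $(\eps,\delta)$-DP property of $P$ directly yields
\[
\Pr[P(X) \in S_r] \;\le\; e^{\eps} \cdot \Pr[P(X') \in S_r] + \delta.
\]
Now I would integrate out $r$. Since the internal randomness of $A$ is independent of $P$,
\[
\Pr[A(P(X)) \in T] \;=\; \Ex_{r \sim \mu}\!\left[\Pr[P(X) \in S_r]\right],
\]
and similarly with $X'$ in place of $X$. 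Taking expectation of the per-$r$ inequality preserves it, giving the desired bound
\[
\Pr[(A \circ P)(X) \in T] \;\le\; e^{\eps} \cdot \Pr[(A \circ P)(X') \in T] + \delta.
\]

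There is no real obstacle here: the only thing to be a bit careful about is the measure-theoretic bookkeeping (that $S_r$ is measurable in $r$ and that Fubini applies to justify conditioning on $r$ and then integrating). Since the paper works throughout with finite user-report spaces and finite/discrete output spaces (see the setup in Section~\ref{sec:preliminaries}), this is automatic and the proof is essentially a one-line conditioning argument on top of Definition~\ref{def:dp}.
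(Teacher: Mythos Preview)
Your proof is correct and is the standard argument for post-processing invariance of $(\eps,\delta)$-DP. Note that the paper does not actually supply a proof of this lemma; it is stated with a citation to~\cite{DworkR14} and used as a black box, so there is no paper-proof to compare against beyond observing that your argument is essentially the one found in that reference.
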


DP is nicely characterized by the following divergence between distributions, which will be used throughout the paper.

\begin{definition}[Hockey Stick Divergence]\label{defi:hockey-stick-divergence}
	For any $\eps > 0$, the \emph{$e^{\eps}$-hockey stick divergence} between distributions $\calD$ and $\calD'$ is defined as $d_{\eps}(\calD||\calD') := \sum_{x \in \supp(\calD)}[\calD_x - e^{\eps}\cdot \calD'_x]_{+}$.
\end{definition}

We next list two useful facts about the hockey stick divergence between distributions.

\begin{prop}\label{prop:key-facts-hockey-stick}
	Let $\calD$ and $\calD'$ be any  distributions.  Then, the following hold:
	
	\begin{enumerate}
		\item Let $\calD_{com}$ be another distribution.  Then, for any function $f$, it holds that
		\[
		d_{\eps}(f(\calD \otimes \calD_{com})||f(\calD' \otimes \calD_{com})) \le d_{\eps}(\calD||\calD').
		\]
		
		\item Suppose we can decompose $\calD = \sum_{i \in \calI} \alpha_i \calD_i $ and $\calD' = \sum_{i \in \calI} \beta_i \calD'_i$, where $\alpha_i$'s and $\beta_i$'s are tuples of positives reals summing up to $1$ and $\calD_i$'s and $\calD'_i$'s are distributions,
		then
		\[
		d_{\eps}(\calD||\calD') \le \sum_{i \in \calI} \alpha_i \cdot d_{\eps + \ln(\beta_i / \alpha_i)}(\calD_i||\calD'_i).
		\]
	\end{enumerate}
	
\end{prop}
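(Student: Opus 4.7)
The plan is to rely on the variational characterization
\[
d_{\eps}(\calD\|\calD') \;=\; \sup_{S}\bigl[\calD(S) - e^{\eps}\calD'(S)\bigr],
\]
with the supremum attained at $S^* = \{x : \calD_x > e^{\eps}\calD'_x\}$. This immediately gives the pointwise identity $d_\eps(\calD\|\calD') = \sum_x [\calD_x - e^{\eps}\calD'_x]_+$ on the discrete support and is the version most convenient for both parts.

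For Part 1, I would first handle deterministic post-processing: if $g\colon \calY \to \calZ$ is deterministic, then for any $T\subseteq\calZ$,
\[
\calD(g^{-1}(T)) - e^{\eps}\calD'(g^{-1}(T)) \;\le\; \sup_S\bigl[\calD(S) - e^{\eps}\calD'(S)\bigr] \;=\; d_\eps(\calD\|\calD'),
\]
so taking the sup over $T$ yields $d_\eps(g(\calD)\|g(\calD')) \le d_\eps(\calD\|\calD')$. For the tensorization step I would fix any measurable $A\subseteq\calX\times\calY$ and slice it as $A_y = \{x : (x,y)\in A\}$; then
\[
(\calD\otimes\calD_{com})(A) - e^{\eps}(\calD'\otimes\calD_{com})(A)
\;=\; \Ex_{y\sim\calD_{com}}\bigl[\calD(A_y) - e^{\eps}\calD'(A_y)\bigr],
\]
and each integrand is at most $d_\eps(\calD\|\calD')$, so the expectation is too. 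Taking the sup over $A$ gives $d_\eps(\calD\otimes\calD_{com}\|\calD'\otimes\calD_{com}) \le d_\eps(\calD\|\calD')$. Composing the two inequalities (writing the randomness of any randomized $f$ into $\calD_{com}$ if needed) proves Part 1.

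For Part 2, I would work pointwise using the elementary inequality $[\sum_i c_i]_+ \le \sum_i [c_i]_+$. Writing $e^{\eps + \ln(\beta_i/\alpha_i)} = (\beta_i/\alpha_i)\,e^{\eps}$ (with the convention that terms with $\alpha_i = 0$ are dropped, since they contribute nothing to $\calD$), for each $x$,
\[
\calD_x - e^{\eps}\calD'_x \;=\; \sum_i \alpha_i\Bigl((\calD_i)_x - e^{\eps+\ln(\beta_i/\alpha_i)}(\calD'_i)_x\Bigr),
\]
so
\[
\bigl[\calD_x - e^{\eps}\calD'_x\bigr]_+ \;\le\; \sum_i \alpha_i\bigl[(\calD_i)_x - e^{\eps+\ln(\beta_i/\alpha_i)}(\calD'_i)_x\bigr]_+.
\]
Summing over $x$ and swapping the order of summation gives
\[
d_{\eps}(\calD\|\calD') \;\le\; \sum_i \alpha_i\, d_{\eps+\ln(\beta_i/\alpha_i)}(\calD_i\|\calD'_i),
\]
as claimed.

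Neither part presents a real obstacle; the only minor subtlety is the corner-case handling in Part 2 (terms with $\alpha_i = 0$ or $\beta_i = 0$), which is resolved by the convention that $0\cdot\infty = 0$ and by observing that $d_{-\infty}(\calD_i\|\calD'_i) = 1$ handles the $\beta_i = 0$ case consistently.
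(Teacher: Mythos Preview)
Your proposal is correct and follows essentially the same approach as the paper. For Part~2 your argument is line-for-line identical to the paper's (pointwise subadditivity of $[\,\cdot\,]_+$ followed by pulling out $\alpha_i$); for Part~1 the paper simply invokes the post-processing property of DP together with the definition of $d_\eps$, whereas you spell out that post-processing step explicitly via the variational form and slicing over the common coordinate, which amounts to the same argument made explicit.
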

\begin{proof}
	Item (1) follows from the post-processing property of DP, together with the definition of the hockey stick divergence.
	
	To prove Item (2), we note that
	\begin{align*}
	d_{\eps}(\calD||\calD') &= \sum_{x \in \supp(\calD)}[\calD_x - e^{\eps}\cdot \calD'_x]_{+}\\
	&= \sum_{x \in \supp(\calD)}\left[\sum_{i \in \calI} \alpha_i (\calD_i)_x - e^{\eps}\cdot \left(\sum_{i \in \calI} \beta_i (\calD'_i)_x\right)\right]_{+}\\
	&\le \sum_{i \in \calI} \sum_{x \in \supp(\calD_i)} \left[\alpha_i (\calD_i)_x - e^{\eps}\cdot \beta_i (\calD'_i)_x\right]_{+}\\
	&\le \sum_{i \in \calI} \alpha_i \cdot \sum_{x \in \supp(\calD_i)} \left[(\calD_i)_x - e^{\eps}\cdot \beta_i/\alpha_i (\calD'_i)_x\right]_{+}\\
	&\le \sum_{i \in \calI} \alpha_i \cdot d_{\eps + \ln(\beta_i / \alpha_i)}(\calD_i||\calD'_i).
	\qedhere
	\end{align*}
\end{proof}

\subsection{Shuffle Model}\label{sec:model}
We briefly review the \emph{shuffle model} of DP~\cite{bittau2017prochlo, erlingsson2019amplification, Cheu18}. 
The input to the model is a dataset $(x_1, \ldots, x_n) \in \calX^n$, where item $x_i \in \calX$ is held by user $i$. A protocol
$P\colon \calX \rightarrow \calZ$ in the shuffle model consists of three algorithms:
\begin{itemize}
	\item The {\it local randomizer} $R\colon \calX \rightarrow \calM^*$ takes as input the data of one user, $x_i \in \calX$, and outputs a sequence $(y_{i,1}, \ldots, y_{i,m_i})$ of {\it messages}; here $m_i$ is a positive integer.
	
	To ease discussions in the paper, we will further assume that the randomizer $R$ pre-shuffles its messages. That is, it applies a random permutation $\pi\colon [m_i] \to [m_i]$ to the sequence $(y_{i,1}, \ldots, y_{i,m_i})$ before outputting it.\footnote{Therefore, for every $x \in \calX$ and any two tuples $z_1,z_2 \in \calM^*$ that are equivalent up to a permutation, $R(x)$ outputs them with the same probability.} 
	
	\item The {\it shuffler} $S\colon \calM^* \rightarrow \calM^*$ takes as input a sequence of elements of $\calM$, say $(y_1, \ldots, y_m)$, and outputs a random permutation, i.e., the sequence $(y_{\pi(1)}, \ldots, y_{\pi(m)})$, where $\pi \in S_m$ is a uniformly random permutation on $[m]$. The input to the shuffler will be the concatenation of the outputs of the local randomizers.
	\item The {\it analyzer} $A\colon \calM^* \rightarrow \calZ$ takes as input a sequence of elements of $\calM$ (which will be taken to be the output of the shuffler) and outputs an answer in $\calZ$ that is taken to be the output of the protocol~$P$.
\end{itemize}
We will write $P = (R, S, A)$ to denote the protocol whose components are given by $R$, $S$, and $A$.  The main distinction between the shuffle and local model is the introduction of the shuffler $S$ between the local randomizer and the analyzer. As in the local model, the analyzer is untrusted in the shuffle model; hence privacy must be guaranteed with respect to the input to the analyzer, i.e., the output of the shuffler. Formally, we have: 
\begin{definition}[DP in the Shuffle Model,~\cite{erlingsson2019amplification, Cheu18}]
	\label{def:dp_shuffled}
	A protocol $P = (R, S, A)$ is {\it $(\eps, \delta)$-DP} if, for any dataset $X = (x_1, \ldots, x_n)$, the algorithm 
	$$
	(x_1, \ldots, x_n) \mapsto S(R(x_1), \ldots, R(x_n))
	$$
	is $(\eps, \delta)$-DP. 
\end{definition}
Notice that the output of $S(R(x_1), \ldots, R(x_n))$ can be simulated by an algorithm that takes as input the {\it multiset} consisting of the union of the elements of $R(x_1), \ldots, R(x_n)$ (which we denote as $\bigcup_i R(x_i)$, with a slight abuse of notation) and outputs a uniformly random permutation of them. Thus, by Lemma~\ref{lem:post_process}, it can be assumed without loss of generality for privacy analyses that the shuffler simply outputs the multiset $\bigcup_i R(x_i)$. 
For the purpose of analyzing the accuracy of the protocol $P = (R, S, A)$, we define its \emph{output} on the dataset $X = (x_1, \ldots, x_n)$ to be $P(X) := A(S(R(x_1), \ldots, R(x_n)))$. We also remark that the case of {\it local DP}, formalized in Definition~\ref{def:dp_local}, is a special case of the shuffle model where the shuffler $S$ is replaced by the identity function:

\begin{definition}[Local DP~\cite{kasiviswanathan2011can}]
	\label{def:dp_local}
	A protocol $P = (R,A)$ is {\it $(\eps, \delta)$-DP in the local model} (or {\it $(\eps, \delta)$-locally DP}) if the function $x \mapsto R(x)$ is $(\eps, \delta)$-DP. 
\end{definition}
We say that the {\it output} of the protocol $P$ on an input dataset $X = (x_1, \ldots, x_n)$ is $P(X) := A(R(x_1), \ldots, R(x_n))$.

We denote DP in the shuffle model by $\shuffledDP$, and the special case where each user can send at most\footnote{We may assume w.l.o.g. that each user sends \emph{exactly} $k$ messages; otherwise, we may define a new symbol $\perp$ and make each user sends $\perp$ messages so that the number of messages becomes exactly $k$.\label{fn:equal-msg}} $k$ messages by $\shuffledDP^k$. We denote DP in the local model by $\localDP$.

\paragraph*{Public-Coin DP.} 

The default setting for local and shuffle models is private-coin, i.e., there is no randomness shared between the randomizers and the analyzer.   We will also study the public-coin variants of the local and shuffle models. In the public-coin setting, each local randomizer also takes a public random string $\alpha \leftarrow \{0,1\}^{*}$ as input. The analyzer is also given the public random string $\alpha$. We use $R_{\alpha}(x)$ to denote the local randomizer with public random string being fixed to $\alpha$. At the start of the protocol, all users jointly sample a public random string from a publicly known distribution $\calDpb$.

Now, we say that a protocol $P = (R,A)$ is \emph{$(\eps,\delta)$-DP in the public-coin local model}, if the function 
$$
x \underset{\alpha \leftarrow \calDpb}{\mapsto} (\alpha,R_{\alpha}(x))
$$ 
is $(\eps,\delta)$-DP. 

Similarly, we say that a protocol $P = (R, S, A)$ is \emph{$(\eps, \delta)$-DP in the public-coin shuffle model}, if for any dataset $X = (x_1, \ldots, x_n)$, the algorithm 
$$
(x_1, \ldots, x_n) \underset{\alpha \leftarrow \calDpb}{\mapsto} (\alpha, S(R_{\alpha}(x_1), \ldots, R_{\alpha}(x_n)) )
$$
is $(\eps, \delta)$-DP. 


\subsection{Useful Divergences}
\label{sec:divergences}

We will make use of two important divergences between distributions, the KL-divergence and the $\chi^2$-divergence, defined as
\[
KL(P||Q) = \Ex_{z \leftarrow P} \log\left(\frac{P_z}{Q_z}\right)\quad\text{and}\quad
\chi^2(P||Q) = \Ex_{z \leftarrow Q} \left[ \frac{P_z - Q_z}{Q_z} \right]^2.
\]
We rely on the key fact that $\chi^2$-divergence upper-bounds KL-divergence~\cite{GS02}, that is,
\[
\KL(P||Q) \le \chi^2(P||Q).
\]
We will also use Pinsker's inequality, whereby the total variation distance lower-bounds the KL-divergence:
\[
\KL(P||Q) \geq \frac{2}{\ln 2} \| P - Q \|_{TV}^2.
\]


\subsection{Fourier Analysis}

We now review some basic Fourier analysis and then introduce two inequalities that will be heavily used in our proofs. For a function $f\colon\{0,1\}^{D} \to \R$, its Fourier transform is given by the function $\hat{f}(S) := \Ex_{x \leftarrow \calU_D} [f(x) \cdot (-1)^{\sum_{i \in S} x_i}]$. We also define $\|f\|_2^2 = \Ex_{x \leftarrow \calU_D} [f(x)^2]$. For $k \in \mathbb{N}$, we define the \emph{level-$k$ Fourier weight} as $\bfW^{k}[f] := \sum_{S\subseteq [D], |S| = k} \hat{f}(S)^2$. For convenience, for $s \in \{0,1\}^D$, we will also write $\hat{f}(s)$ to denote $f(\chi_s)$, where $\chi_s$ is the set $\{ i : i \in [D] \wedge s_i = 1 \}$. One key technical lemma is the Level-1 Inequality from~\cite{ODonnell14}, which was also used in~\cite{GhaziGKPV19}.

\begin{lemma}[Level-1 Inequality]\label{lm:level-1}
	Suppose $f\colon\{0,1\}^D \to \R_{\ge 0}$ is a non-negative-valued function with $f(x) \in [0,L]$ for all $x \in \{0,1\}^D$, and $\Ex_{x\sim\calU_D}[f(x)] \le 1$. Then, $\bfW^1[f] \le 6 \ln (L+1)$.
\end{lemma}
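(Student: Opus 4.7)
Let $g := \sum_{i \in [D]} \hat{f}(\{i\})\, \chi_{\{i\}}$ be the degree-$1$ Fourier part of $f$, viewed as a function on $\{0,1\}^D$. By Plancherel one has $\|g\|_2^2 = \bfW^1[f]$ and $\langle f, g\rangle = \sum_S \hat{f}(S)\hat{g}(S) = \bfW^1[f]$, so the task reduces to bounding the inner product $\langle f, g\rangle$. I would apply H\"older's inequality with conjugate exponents $(p,q)$ satisfying $1/p + 1/q = 1$ and $q \ge 2$, obtaining
\[
\bfW^1[f] \;=\; \langle f, g\rangle \;\le\; \|f\|_p \cdot \|g\|_q.
\]

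The two factors are controlled by separate ingredients. For $\|f\|_p$, since $f$ is non-negative, bounded by $L$, and has mean at most $1$, one gets $\|f\|_p^p = \Ex[f^p] \le L^{p-1} \cdot \Ex[f] \le L^{p-1}$, hence $\|f\|_p \le L^{1-1/p} = L^{1/q}$. For $\|g\|_q$, I would invoke the Bonami--Beckner $(2,q)$-hypercontractive inequality for the noise operator $T_\rho$ with $\rho = 1/\sqrt{q-1}$: this gives $\|T_\rho g\|_q \le \|g\|_2$, and since $g$ is a degree-$1$ multilinear polynomial with no constant term, $T_\rho g = \rho \cdot g$, yielding $\|g\|_q \le \sqrt{q-1}\cdot \|g\|_2$. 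Plugging both bounds into H\"older and dividing through by $\|g\|_2$ gives
\[
\bfW^1[f] \;=\; \|g\|_2^2 \;\le\; (q-1)\cdot L^{2/q}.
\]

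Finally, I would optimize the exponent. Choosing $q = 1 + 2\ln(L+1)$ makes $L^{2/q} \le e$, and provided $L$ is not too small this choice satisfies $q \ge 2$; the resulting bound is $\bfW^1[f] \le 2e \cdot \ln(L+1) \le 6\ln(L+1)$ since $2e < 6$. In the residual regime where $L$ is so small that the choice $q = 1+2\ln(L+1)$ violates $q \ge 2$, the trivial estimate $\bfW^1[f] \le \|f\|_2^2 \le L \cdot \Ex[f] \le L$ is easily seen to be dominated by $6\ln(L+1)$, completing the argument.

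\textbf{Expected obstacle.} The one non-mechanical part is pinning down the constant $6$: one must verify that the optimized choice of $q$ together with the endpoint analysis for small $L$ yields a single clean bound. The hypercontractivity input itself is standard and applies exactly because $g$ is degree $1$, which is precisely why the Level-$1$ inequality is cheaper than general-degree Level-$k$ bounds.
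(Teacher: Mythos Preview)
The paper does not prove this lemma; it is quoted as a known result from \cite{ODonnell14}. Your argument is correct and is essentially the standard proof found there: project onto the degree-$1$ part, apply H\"older together with $(2,q)$-hypercontractivity for degree-$1$ functions, and optimize over $q$. The constant analysis (including the small-$L$ endpoint) checks out.
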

We also need the standard Parseval's identity.
\begin{lemma}[Parseval's Identity]\label{lm:parseval}
	For all functions $f\colon\{0,1\}^D \to \R$, 
	\[
	\|f\|_2^2 = \sum_{S \subseteq [D]} \hat{f}(S)^2.
	\]
\end{lemma}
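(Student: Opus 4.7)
The plan is to derive Parseval's Identity from the orthonormality of the characters $\chi_S(x) := (-1)^{\sum_{i \in S} x_i}$ for $S \subseteq [D]$ with respect to the inner product $\langle f, g\rangle := \Ex_{x \leftarrow \calU_D}[f(x)g(x)]$. Once orthonormality is established, the identity is a one-line bilinear expansion, so the whole proof reduces to that orthonormality calculation plus a routine substitution.

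First I would verify orthonormality. For any $S, T \subseteq [D]$, we have $\chi_S(x)\chi_T(x) = \chi_{S \triangle T}(x)$, since $(-1)^{a+b} = (-1)^{a}(-1)^{b}$ and the exponents add modulo $2$. For $S = T$ this product is the constant $1$, giving $\langle \chi_S, \chi_T\rangle = 1$. For $S \neq T$, independence of the coordinates under $\calU_D$ yields
\[
\langle \chi_S, \chi_T\rangle = \Ex_{x \leftarrow \calU_D}\!\left[\prod_{i \in S \triangle T}(-1)^{x_i}\right] = \prod_{i \in S \triangle T}\Ex_{x_i}\!\left[(-1)^{x_i}\right] = 0,
\]
since each factor equals $\tfrac{1}{2} - \tfrac{1}{2} = 0$.

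Next I would use orthonormality to expand any $f\colon \{0,1\}^D \to \R$ as $f = \sum_{S \subseteq [D]} \hat{f}(S)\,\chi_S$. The ambient space of real-valued functions on $\{0,1\}^D$ has dimension $2^D$, and we have produced $2^D$ orthonormal (hence linearly independent) vectors, so they form an orthonormal basis. The coefficient of $f$ along $\chi_S$ in this basis is exactly $\langle f, \chi_S\rangle = \Ex_{x \leftarrow \calU_D}[f(x)(-1)^{\sum_{i \in S}x_i}] = \hat{f}(S)$, matching the definition from the preliminaries. Substituting into the squared norm and using orthonormality gives
\[
\|f\|_2^2 = \langle f, f\rangle = \sum_{S,T \subseteq [D]} \hat{f}(S)\hat{f}(T)\,\langle \chi_S, \chi_T\rangle = \sum_{S \subseteq [D]} \hat{f}(S)^2,
\]
which is precisely the claim.

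I do not foresee any real obstacle, as Parseval's Identity on the Boolean cube is a standard textbook fact and each step above is routine. The only non-trivial ingredient is the character orthonormality computation, which is itself immediate from independence of the coordinates under the uniform distribution on $\{0,1\}^D$.
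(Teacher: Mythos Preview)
Your proof is correct and entirely standard. The paper itself does not give a proof of this lemma; it simply states Parseval's Identity as a well-known fact in the Fourier analysis preliminaries, so there is nothing to compare against beyond noting that your orthonormality-of-characters argument is exactly the textbook derivation one would expect.
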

	\section{Low-Privacy $\localDP$ and $\shuffledDP^1$ Lower Bounds for \DE}\label{sec:lb_single_msg_shuffle_count_distinct}

In this section, we prove Theorem~\ref{theo:LDP-LB-strong} and Theorem~\ref{theo:single-message-LB}. In Section~\ref{sec:poisson-prelim}, we introduce some necessary definitions and notation.
In Section~\ref{sec:low-privacy-lowb}, we prove our lower bound for low-privacy (private-coin) $\localDP$ protocols computing $\DE$.
In Section~\ref{sec:SDP-to-LDP-stronger}, we show the improved connection between $\shuffledDP^1$ and $\localDP$, which implies our lower bounds for $\shuffledDP^1$ protocols for $\DE$.
In Section~\ref{sec:adaption-public-coin}, we describe how to adapt the proof to public-coin protocols.

\subsection{Preliminaries}\label{sec:poisson-prelim}

Recall that we use the notations $\vPoi(\vlambda)$ and $\Ex[\vPoi(\vU)]$ to denote multi-dimensional Poisson distributions and their mixtures, respectively (see Section~\ref{sec:overview-moment-matching} for the precise definitions). 

We also recall the key additive property of multi-dimensional Poisson distributions: for $\valpha,\vbeta \in \R^{D}$, we have that
\[
\vPoi(\valpha) + \vPoi(\vbeta) = \vPoi(\valpha + \vbeta).
\]

\subsection{Low-Privacy $\localDP$ Lower Bounds for $\DE$}\label{sec:low-privacy-lowb}


We will first prove the low-privacy $\localDP$ lower bounds in the private-coin setting, which is captured by the following theorem.

\begin{theorem}[The Private-Coin Case of Theorem~\ref{theo:LDP-LB-strong}]\label{theo:LDP-LB-strong-private-coin}
	For some $\eps = \ln(n / \Theta(\log^6 n))$ and $D = \Theta(n/\log^4 n)$, if $P$ is a private-coin $(\eps, n^{-\omega(1)})$-$\localDP$ protocol, then it cannot solve $\DE_{n,D}$ with error $o(D)$ and probability at least $0.99$.
\end{theorem}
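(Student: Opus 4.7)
The plan is to follow the moment-matching road-map laid out in Section~\ref{sec:overview-moment-matching}, promoting each step from the pure-DP, bounded-communication toy version (Theorem~\ref{theo:DE-LB-lowprivacy-toy}) to the approximate-DP, unrestricted-alphabet setting. First, since $\DE$ is invariant under permuting the dataset, I may assume the analyzer only sees the histogram $\Hist_R(W)$ of the randomizer outputs. I will then produce two input distributions $\calD^U,\calD^V$ via Poissonization: sample Poisson multiplicities for each signal element $i\in[D]$ governed by moment-matching random variables $U,V$ on $\{0\}\cup[1,\Lambda]$ (with $\Lambda=\Theta(\log^2 n)$, $L=\log n$ agreeing moments, and $U_0-V_0>0.9$), and add a noise component $\calDnoise^E$ supported on a carefully chosen $E\subset[D]$ with $|E|\le 0.02D$ and Poisson multiplicities of mean $(n-D)/|E|$. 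By the moment-matching property, the expected number of distinct elements contributed by the signal parts differs by $\Omega(D)$, while $\calDnoise^E$ can shift the count by at most $|E|\le 0.02D$; hence distinguishing $\calD^U$ from $\calD^V$ at error $o(D)$ is necessary to solve $\DE_{n,D}$. The random dataset size being Poissonized rather than exactly $n$ is handled by padding with ``dummy'' users, as indicated in the overview.

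The core indistinguishability argument decomposes the noise mass $\vnu = ((n-D)/|E|)\sum_{i\in E}R(i)$ as $\vnu=\sum_{i=0}^D \vnu^{(i)}$, setting $\vnu^{(i)}:=2\Lambda^2 R(i)$ for $i\in[D]$ and letting $\vnu^{(0)}$ absorb the remainder. Using the additivity $\vPoi(\valpha)+\vPoi(\vbeta)=\vPoi(\valpha+\vbeta)$, this rewrites $\Hist_R(\calD^W)$ as $\vPoi(\vnu^{(0)})+\sum_{i\in[D]}\vPoi(W\cdot R(i)+\vnu^{(i)})$ for $W\in\{U,V\}$. Then I invoke the pair-wise moment-matching bound (Lemma~\ref{lm:moment-matching-toy}, whose full form is Lemma~\ref{lm:key-bound-2}): whenever $\vnu^{(i)}$ coordinate-wise dominates $2\Lambda^2\cdot R(i)$, the two corresponding Poisson mixtures are within $1/n^2$ in total variation, and a union bound over $i\in[D]$ yields $\|\Hist_R(\calD^U)-\Hist_R(\calD^V)\|_{TV}\le 1/n$, enough to rule out a $0.99$-correct distinguisher.

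The main obstacle, and the place where the DP hypothesis is used, is proving that a random $E$ (each element of $[D]$ included independently with probability $0.01$) makes $\vnu^{E}$ dominate $\vmu:=\sum_{i\in[D]}R(i)$ by the factor $2\Lambda^2$ on every coordinate $z\in\calM$. The argument splits on whether $m^*_z:=\max_iR(i)_z$ is at least $\vmu_z/\log^2 n$: in the first case the DP bound $R(j)_z\le e^{\eps}\cdot \text{(something small)}$ (strengthened to handle approximate DP via the standard trick of conditioning on the non-$\delta$ event, giving up only an additive $n^{-\omega(1)}$ in the final TV bound) gives $\vnu_z \ge (n-D)/e^\eps \cdot m^*_z \gtrsim \log^6 n\cdot m^*_z$, dominating $\vmu_z$; in the second case the mass is spread across $\gtrsim\log^2 n$ terms and Hoeffding/Bernstein applied to the sum $\sum_{i\in E}R(i)_z$ gives concentration around its expectation $\Theta(|E|/D)\cdot\vmu_z$, which after rescaling by $(n-D)/|E|=\Theta(n/D)=\Theta(\log^4 n)$ again beats $2\Lambda^2\vmu_z$. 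Removing the $|\calM|\le\poly(n)$ assumption (needed for the naive union bound) is the other delicate step; I plan to discretize the outputs by bucketing messages according to the dyadic value of $R(i^*)_z$ for a fixed reference $i^*$, so that only $\polylog(n)$ effective ``super-coordinates'' need to be controlled, then take a union bound over these. Combining everything yields the theorem for the stated parameters; the extension to public-coin protocols (handled in Section~\ref{sec:adaption-public-coin}) then comes by fixing the public randomness and applying the private-coin bound, and the $\shuffledDP^1$ corollary follows by plugging the improved Lemma~\ref{lm:SDP-to-LDP} into this private-coin statement.
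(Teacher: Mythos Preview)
Your high-level plan matches the paper's, and everything through the noise decomposition and the light/heavy case split is correct. The gap is in how you remove the $|\calM| \le \poly(n)$ assumption. Bucketing messages by the dyadic value of $R(i^*)_z$ for a fixed $i^*$ does not work as stated: within a bucket you only control $R(i^*)_z$, but both the Hoeffding concentration for heavy $z$ and the nonnegativity of $\vnu^{(0)}$ depend on the full profile $(R(i)_z)_{i\in[D]}$, which can vary arbitrarily across $z$'s in the same bucket. So per-bucket concentration does not transfer to every $z$ in the bucket, and you are back to a union bound over all of $\calM$.

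The paper sidesteps this entirely by \emph{relaxing the good-$E$ condition to be probabilistic}: instead of demanding $\vnu^E_z \ge 2\Lambda^2 \vmu_z$ for every $z$, it only asks that for each $i\in[D]$ the domination holds with $R(i)$-probability at least $1 - 1/(2\Lambda)$ over $z$. This is precisely the hypothesis of the full Lemma~\ref{lm:key-bound-2} (you cite it, but then revert to the coordinate-wise version). Correspondingly, in the decomposition one sets $\vnu^{(i)}_z = 2\Lambda^2 R(i)_z$ only on coordinates where $\vnu_z \ge 2\Lambda^2 \vmu_z$ and zero elsewhere, so $\vnu^{(0)}$ is automatically nonnegative. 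For the light case, approximate DP is handled via Proposition~\ref{prop:simple-prop}, which bounds $\Pr_{z\sim R(i)}[R(i)_z \ge 2e^\eps R(j)_z] \le 2\delta$, followed by a union bound over $j\in[D]$ (not over $\calM$); for the heavy case, one takes the per-$z$ Hoeffding failure probability, \emph{averages} it over $z\sim R(i)$, and applies Markov to the random $E$. This single relaxation simultaneously subsumes your ``condition on the non-$\delta$ event'' trick and removes any need for bucketing.
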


\subsubsection{Technical Lemmas}

Now we need the following construction from~\cite{WY2019chebyshev} (which uses a classical result from \cite[2.11.1]{timan2014theory}).

\begin{lemma}[\cite{WY2019chebyshev}]\label{lm:construction-two-RV}
	There is a constant $c$ such that, for all $L \in \mathbb{N}$, there are two distributions $U$ and $V$ supported on $\{0\} \cup [1,\Lambda]$ for $\Lambda = c \cdot L^2$, such that $\Ex[U] = \Ex[V] = 1$, $U_0 - V_0 > 0.9$, and $\Ex[U^j] = \Ex[V^j]$ for every $j \in [L]$.
\end{lemma}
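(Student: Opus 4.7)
The plan is to obtain $U$ and $V$ via the classical duality between truncated moment problems on a compact interval and best polynomial approximation, then instantiate the latter using shifted Chebyshev polynomials, as in \cite{WY2019chebyshev}.

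First, the key structural reduction: I would show that finding the pair $(U, V)$ is equivalent to finding a signed measure $\tau$ on $\{0\} \cup [1, \Lambda]$ satisfying three conditions: (i) the moment-killing conditions $\int x^j \, d\tau = 0$ for $j = 0, 1, \ldots, L$, corresponding to matching masses and matching moments; (ii) $\tau(\{0\}) \geq 0.9$; and (iii) a bound on $\|\tau\|_{TV}$ together with an auxiliary bound on the first moment of $\tau_+$ that allows one to complete the Jordan decomposition $\tau = \tau_+ - \tau_-$ by adding a common positive measure $\rho$ on $[1, \Lambda]$ with $\int x \, d\rho = 1 - \int x \, d\tau_+$, so that $U = \tau_+ + \rho$ and $V = \tau_- + \rho$ are probability measures with $\Ex[U] = \Ex[V] = 1$.

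Second, by the classical Chebyshev--Markov duality (see Timan, as referenced via \cite{WY2019chebyshev}), the feasibility of such a $\tau$ with $\|\tau\|_{TV}$ bounded by a constant is governed by the quantity $M_L := \sup\{|p(0)| : p \text{ polynomial of degree} \leq L,\ \max_{x \in [1, \Lambda]}|p(x)| \leq 1\}$. The extremizer is the shifted Chebyshev polynomial $\widetilde T_L(x) := T_L\bigl(\tfrac{2x-(\Lambda+1)}{\Lambda-1}\bigr)$, giving $M_L = \cosh\bigl(L \cdot \operatorname{arccosh}\bigl(\tfrac{\Lambda+1}{\Lambda-1}\bigr)\bigr)$. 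Taking $\Lambda = cL^2$ and using $\operatorname{arccosh}\bigl(1 + 2/(\Lambda-1)\bigr) = \tfrac{2}{\sqrt{\Lambda}}(1+o(1))$, we get $M_L \leq \cosh\bigl(2/\sqrt{c}(1+o(1))\bigr) \leq 11/9$ for $c$ sufficiently large. The LP-optimal $\tau$ is atomic, supported at the alternation points of $\widetilde T_L$, which cluster near the endpoint $1$ on the $\sqrt{x}$-scale; this clustering is what will allow us to verify the auxiliary first-moment bound $\int x \, d\tau_+ \leq 1$ needed for the normalization step above.

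The hardest step will be handling the LP duality together with the first-moment normalization constraint $\int x \, d\tau_+ \leq 1$. This bound is not an immediate consequence of the basic Chebyshev--Markov duality; it requires exploiting the specific structure of the extremal measure, namely that its support is concentrated near $1$ rather than spread evenly across $[1, \Lambda]$. Once this auxiliary bound is verified, the Chebyshev polynomial estimate for $M_L$ is a routine calculation, and the final construction of $U$ and $V$ from $\tau$ is mechanical.
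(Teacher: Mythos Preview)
The paper does not prove this lemma; it is quoted directly from \cite{WY2019chebyshev}, with the parenthetical remark that the construction there uses a classical result from Timan \cite[2.11.1]{timan2014theory}. Your proposal reconstructs exactly that argument: the LP duality between moment-matching signed measures and best uniform polynomial approximation, instantiated via the shifted Chebyshev polynomial on $[1,\Lambda]$, is precisely the mechanism in \cite{WY2019chebyshev}, and your target bound $M_L\le 11/9$ is the correct threshold corresponding to $U_0-V_0>0.9$.

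One small caution on the step you flag as hardest. Your justification that $\int x\,d\tau_+\le 1$ follows because the extremal measure is ``concentrated near $1$'' is not quite right as stated: the Chebyshev alternation points on $[1,\Lambda]$ are arcsine-distributed and cluster near \emph{both} endpoints, so the support alone does not give the bound. What actually makes the normalization work is either (i) the freedom to rescale $x\mapsto x/m$ where $m=\Ex[U]=\Ex[V]$, which preserves matched moments and $U_0-V_0$ and only shifts the interval to $[1/m,\Lambda/m]$ (still of the required form after absorbing constants into $c$, provided $m$ is bounded away from $0$), or (ii) a direct calculation of the weights of the optimal dual measure at the alternation points, which does decay fast enough. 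Either route closes the gap; just be aware that the ``concentration near $1$'' heuristic by itself does not.
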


The following lemma is crucial for our proof. Its proof uses the moment matching technique~\cite{WY16polyapprox,JiaoHW18,WY2019chebyshev,HanLec19}, and can be found in Appendix~\ref{app:TV-Poi}.

\begin{lemma}\label{lm:key-bound-2}
	Let $U,V$ be two random variables supported on $[0,\Lambda]$ such that $\Ex[U^j] = \Ex[V^j]$ for all $j \in \{1,2,\dotsc,L\}$, where $L \ge 1$. Let $D \in \mathbb{N}$ and $\vtheta,\vlambda \in (\R^{\ge 0})^{D}$ such that $\|\vtheta\|_1 = 1$. Let $\calD_{\vtheta}$ be the distribution over $[D]$ corresponding to $\vtheta$. Suppose that
	\[
	\Pr_{i \leftarrow \calD_{\vtheta}} [\vlambda_i \ge 2 \Lambda^2 \cdot \vtheta_i] \ge 1 -  \frac{1}{2\Lambda}.
	\]
	Then,
	\[
	\|\Ex[\vPoi(U \vtheta + \vlambda)] - \Ex[\vPoi(V \vtheta + \vlambda)] \|_{TV}^2 \le \frac{1}{L!}.
	\]
\end{lemma}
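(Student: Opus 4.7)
The plan is to bound the squared TV distance by a $\chi^2$-type quantity, evaluate it in closed form via the probability generating function of independent Poissons, and then exploit the moment-matching hypothesis to kill all low-order contributions. Write $f_X := \Ex[\vPoi(X\vtheta+\vlambda)]$ and fix a reference $Q := \vPoi(\vec{\rho})$ with $\vec{\rho} := \vlambda + c\vtheta$ for a scalar $c\ge 0$ to be chosen shortly. Cauchy--Schwarz gives
\[
\|f_U - f_V\|_{TV}^2 \;\le\; \tfrac14 \sum_{\vk} \frac{(f_U(\vk)-f_V(\vk))^2}{Q(\vk)}.
\]

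The key computation is a closed form for the right-hand side. Setting $Y := c-X$ and $\alpha_i := \vtheta_i/\vec{\rho}_i$, a direct manipulation using $\|\vtheta\|_1 = 1$ and $\vec{\rho}_i = \vlambda_i + c\vtheta_i$ gives $\vPoi(X\vtheta+\vlambda)(\vk)/Q(\vk) = e^{Y}\prod_i(1 - Y\alpha_i)^{k_i}$. Combining the one-dimensional identity $\sum_k \Poi(\vec{\rho}_i)(k)\,z^k = e^{\vec{\rho}_i(z-1)}$ with $\vec{\rho}_i\alpha_i = \vtheta_i$ makes the mixed expectation telescope: one obtains $\Ex_Q[(\vPoi(X_1\vtheta+\vlambda)/Q)(\vPoi(X_2\vtheta+\vlambda)/Q)] = e^{Y_1 Y_2 A}$ with $A := \sum_i \vtheta_i^2/\vec{\rho}_i$, the potential $(Y_1+Y_2)$-terms cancelling exactly because $\vec{\rho}$ is aligned with $\vtheta$. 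Expanding the exponential as $\sum_m Y_1^m Y_2^m A^m/m!$ and taking the appropriate sums over independent $(U,U'),(U,V),(V,V')$ yields
\[
\sum_{\vk}\frac{(f_U(\vk) - f_V(\vk))^2}{Q(\vk)} \;=\; \sum_{m\ge 0} \frac{A^m}{m!}\,\bigl(\Ex[(c-U)^m] - \Ex[(c-V)^m]\bigr)^2.
\]
Since $(c-x)^m$ has degree $m$ in $x$ and the first $L$ moments of $U$ and $V$ agree, every term with $m\le L$ vanishes, leaving only the tail $m > L$.

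The main obstacle is choosing $c$ to balance the size of $A$ (which wants $\vec{\rho}$ large, hence $c$ large) against the magnitude of $\gamma_m := |\Ex[(c-U)^m] - \Ex[(c-V)^m]|$ (which wants $c$ close to the support of $U,V$). I plan to take $c := \Lambda$, so that $c - X \in [0,\Lambda]$ and hence $\gamma_m \le \Lambda^m$. To estimate $A$ I split along the good set $G = \{i : \vlambda_i \ge 2\Lambda^2\vtheta_i\}$ and the complementary bad set $B$: on $G$, $\vec{\rho}_i \ge 2\Lambda^2\vtheta_i$ gives $\vtheta_i^2/\vec{\rho}_i \le \vtheta_i/(2\Lambda^2)$; on $B$, $\vec{\rho}_i \ge \Lambda\vtheta_i$ gives $\vtheta_i^2/\vec{\rho}_i \le \vtheta_i/\Lambda$. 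The hypothesis $\sum_{i\in B}\vtheta_i \le 1/(2\Lambda)$ is exactly what I need to sum these contributions into $A \le 1/(2\Lambda^2) + 1/\Lambda \cdot 1/(2\Lambda) = 1/\Lambda^2$. Neither $c=0$ (which blows up $A$ when some $\vlambda_i \to 0$ on $B$) nor $c = 2\Lambda^2$ (which inflates $|c-X|$ to scale $\Lambda^2$, wrecking $\gamma_m$) would survive; the choice $c = \Lambda$ is the sweet spot, and the assumed bound $\vtheta_B \le 1/(2\Lambda)$ is tailored precisely to make it work.

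Plugging back, the tail is bounded by $\sum_{m > L}(\Lambda^{-2})^m\cdot \Lambda^{2m}/m! = \sum_{m > L}1/m! \le e/(L+1)!$, which gives $\|f_U - f_V\|_{TV}^2 \le e/(4(L+1)!) \le 1/L!$ for every $L \ge 1$, completing the proof.
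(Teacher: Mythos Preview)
Your proof is correct and, at heart, follows the same strategy as the paper: shift the mixing variable by $\Lambda$, bound the squared total variation by a $\chi^2$-type quantity against the shifted Poisson reference, collapse everything to a one-parameter series $\sum_{m>L} A^m\Lambda^{2m}/m!$ with $A=\sum_i\vtheta_i^2/(\vlambda_i+\Lambda\vtheta_i)$, and then split the sum for $A$ along the good/bad coordinates exactly as you do, arriving at $A\le 1/\Lambda^2$. The paper's proof (Lemmas~\ref{lm:multi-moment-matching} and~\ref{lm:key-bound-1}) reaches the identical series via Charlier polynomial orthogonality: it first proves the general multi-index bound $\Delta^2\le\sum_{\vm}(\Ex[\vU^{\vm}]-\Ex[\vV^{\vm}])^2/(\vm!\,\vlambda^{\vm})$ and then specializes using a multinomial identity. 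Your generating-function computation $\Ex_Q[(P_{X_1}/Q)(P_{X_2}/Q)]=e^{Y_1Y_2A}$ is exactly the same calculation in disguise---Charlier polynomials are the orthogonal family that diagonalizes the Poisson likelihood ratio, so the paper's expansion in $c_m$ and your Taylor expansion of $e^{Y_1Y_2A}$ yield the same coefficients. The advantage of your route is that it is more self-contained (no need to name or cite Charlier polynomials) and lands directly on the single-variable series; the paper's route isolates a slightly more general intermediate lemma (valid for arbitrary $\vU,\vV$, not necessarily of the form $X\vtheta$), which may be useful elsewhere but is not needed here.
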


Finally, we need an observation that for a $\localDP$ protocol $P$ solving $\DE$, we can assume without loss of generality that the analyzer of $P$ only sees the histogram of the messages.

\begin{lemma}\label{lm:histogram-enough}
	For any $\localDP$ protocol $P = (R,A)$ for $\DE$, there exists an analyzer $A'$ which only sees the histogram of the messages, and achieves the same accuracy and error as that of $A$.
\end{lemma}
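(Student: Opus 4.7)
The plan is to exploit the permutation-invariance of $\DE$ together with the fact that in a local protocol each randomizer is applied independently to each user. Given the original analyzer $A$, I would define $A'$ as follows: on input a histogram $H \in \mathbb{N}^{\calM}$, sample a uniformly random ordering $(y_1, \ldots, y_N)$ of the multiset described by $H$ (where $N$ is the total message count) and output $A(y_1, \ldots, y_N)$. This analyzer clearly sees only the histogram of the messages.

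Next I would verify that the distribution of $A'(\hist(R(x_1), \ldots, R(x_n)))$ equals the distribution of $A(R(x_{\pi(1)}), \ldots, R(x_{\pi(n)}))$ for a uniformly random permutation $\pi$ on $[n]$. This follows because the histogram of $(R(x_1), \ldots, R(x_n))$ is determined by the multiset of outputs, and sampling a uniformly random ordering from this multiset is equivalent (in distribution) to first sampling $(R(x_1), \ldots, R(x_n))$ and then applying a uniformly random permutation of $[n]$ to the tuple; by independence of the $R(x_i)$'s (and since each $x_i$ contributes independently through $R$), this coincides with the distribution of $(R(x_{\pi(1)}), \ldots, R(x_{\pi(n)}))$ for $\pi$ uniform.

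Finally, I would conclude using symmetry: for any permutation $\pi$, the dataset $(x_{\pi(1)}, \ldots, x_{\pi(n)})$ has the same number of distinct elements as $X$, i.e., $\DE(X) = \DE(\pi(X))$. Therefore
\[
\Pr\bigl[\,|A'(\hist(R(X))) - \DE(X)| \le e\,\bigr] = \Ex_{\pi}\!\left[\Pr\bigl[\,|A(R(\pi(X))) - \DE(\pi(X))| \le e\,\bigr]\right],
\]
and if $A$ solves $\DE$ with error $e$ on every input (with probability $\ge 0.99$), each term on the right is at least $0.99$, so $A'$ does too. I do not anticipate a real obstacle here; the only minor care is to note the equivalence between post-hoc permutation of independent outputs and pre-hoc permutation of inputs, which is a routine consequence of the i.i.d. structure of local randomizers.
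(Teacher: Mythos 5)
Your proposal is correct and follows essentially the same route as the paper: reconstruct an ordering of the multiset from the histogram, randomly permute it before feeding it to $A$, observe that this is distributionally equivalent to running the original protocol on a randomly permuted dataset, and conclude via the permutation-invariance of the number of distinct elements.
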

\begin{proof}
	Let $n$ be the number of users. Given the histogram, $A'$ first constructs a sequence of messages $S \in \calM^n$ consistent with the histogram. Then, it applies a random permutation $\pi\colon [n] \to [n]$ to $S$ to obtain a new sequence $\pi(S)$. Finally, it simply outputs $A(\pi(S))$.
	
	Note that applying a random permutation on the messages is equivalent to applying a random permutation on the user inputs in the dataset. Hence, the new protocol $P' = (R,A')$ is equivalent to running $P$ on a random permutation of the dataset. The lemma follows from the fact that a random permutation does not change the number of distinct elements.
\end{proof}

\subsubsection{Construction of the Hard Dataset Distributions} 

In the rest of the section, we use $n$ to denote a parameter controlling the number of users. The actual number of users $\bar{n}$ will be later set to a number in the interval $[n,2n]$. In the following, we fix a randomizer $R\colon \calX \to \calM$ which is $(\eps_R,\delta)$-$\localDP$ on $\bar{n}$ users,
for some $\eps_R = \Theta(\bar{n}/\log^6 \bar{n})$ to be specified later. Before constructing our two hard distributions over datasets, we set some parameters that will be used in the construction:

\begin{itemize}
	\item We set $L = \log n$ and note that $\frac{1}{L!} \le 1/n^4$ for large enough $n$. 
	
	\item Applying Lemma~\ref{lm:construction-two-RV}, for $\Lambda = \Theta(L^2) = \Theta(\log^2 n)$, we obtain two random variables $U$ and $V$ supported on $\{0\} \cup [1,\Lambda]$, such that $\Ex[U] = \Ex[V] = 1$, $U_0 - V_0 > 0.9$, and $\Ex[U^j] = \Ex[V^j]$ for every $j \in [L]$.
	
	\item We set $\Gamma = 8 \Lambda^2 = \Theta(\log^4 n)$ and $D = n / \Gamma = \Theta(n/\log^4 n)$.  We are going to construct instances where inputs are from the universe $\calX = [D]$. 
	
	\item We set $\bar{n} = n + D - n^{0.99}$.
	
	\item Let $W = (\log^2 n) \cdot 4 \Lambda^2$. We set $\eps_R$ so that $n / 2^{\eps_R} = W = \Theta(\log^6 n)$. Hence, $R$ is $(\ln(n/W),n^{-\omega(1)})$-$\localDP$.
\end{itemize}



Now, for a distribution $U$ over $\mathbb{R}^{\ge 0}$ and a non-empty subset $E$ of $[D]$, the dataset distribution $\calD^{U,E}$ is constructed as follows:

\begin{enumerate}
	\item For each $i \in [D]$, we draw $\lambda_i \leftarrow U$, and $n_i \leftarrow \Poi(\lambda_i)$, and add $n_i$ many users with input $i$. 
	
	\item For each $j \in E$, we draw $m_j \leftarrow \Poi(n/|E|)$, and add $m_j$ many users with input $j$.\footnote{Note that we here use a Poisson distribution slightly differently from the construction in Section~\ref{sec:tech_overv} (namely, $\Poi(n/|E|)$ instead of $\Poi((n-D)/|E|)$) in order to simply the later calculations.}
\end{enumerate}

For clarity of exposition, we will use the histogram of the protocol to denote the histogram of the messages in the transcript of the protocol. Our goal is to show that for some ``good'' subset $E \subseteq [D]$, the following hold:

\begin{enumerate}
	\item The distributions of the histogram of the protocol under $\calD^{U,E}$ and $\calD^{V,E}$ are very close.
	
	\item With high probability, the number of distinct elements in datasets from $\calD^{U,E}$ is $\Omega(D)$ smaller than in datasets from $\calD^{V,E}$.
\end{enumerate}

Clearly, given the above two conditions and Lemma~\ref{lm:histogram-enough}, no protocol with randomizer $R$ can estimate the number of distinct elements within $o(D)$ error and with constant probability.

\subsubsection{Conditions on a Good Subset $E$}

Given a subset $E$, we let $\vnu^E = \sum_{i \in E} R(i) \cdot \frac{n}{|E|}$. We also set $\vmu = \sum_{i \in [D]} R(i)$. We now specify our conditions on a subset $E \subset [D]$ being good. Let $\eps_1 = 0.01$. We say $E$ is good if the following two conditions hold:

\begin{enumerate}
	\item $0 < |E| < 2\eps_1 \cdot |D|$.
	
	\item For each $i \in [D]$,
	\[
	\Pr_{z \leftarrow R(i)} [\vnu^E_z \ge 2\Lambda^2 \cdot \vmu_z] \ge 1 - 1/2\Lambda.
	\]
\end{enumerate}


We claim that a good subset $E$ exists. In fact, we give a probabilistic construction of $E$ that succeeds with high probability:

\begin{lemma}\label{lm:E-is-good-whp}
	If we include each element of $i \in [D]$ in $E$ independently with probability $\eps_1$, then $E$ is good with probability at least $1 - n^{-\omega(1)}$.
\end{lemma}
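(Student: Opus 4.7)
I would verify the two conditions defining ``good'' separately, then combine them via a union bound. Condition~1 involves only $|E| \sim \Bin(D,\eps_1)$: a standard Chernoff bound yields $\Pr\left[|E| \notin \left[\eps_1 D/2,\, 2\eps_1 D\right]\right] \le 2 e^{-\Omega(\eps_1 D)} = n^{-\omega(1)}$, since $D = \Theta(n/\log^4 n)$. I tacitly condition on the event $|E| \in [\eps_1 D/2,\, 2\eps_1 D]$ below.

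For Condition~2, fix $i \in [D]$ and set $B_E := \{z \in \calM : \vnu^E_z < 2\Lambda^2 \vmu_z\}$. The goal is $\Pr_E[R(i)(B_E) > 1/(2\Lambda)] \le n^{-\omega(1)}$. By Markov's inequality together with Fubini, this probability is at most $2\Lambda \cdot \Ex_E[R(i)(B_E)] = 2\Lambda \cdot \sum_{z \in \calM} R(i)_z \cdot \Pr_E[z \in B_E]$. I then bound $\Pr_E[z \in B_E]$ pointwise by splitting on whether $m^*_z := \max_{j \in [D]} R(j)_z$ is above or below $\vmu_z / \log^2 n$.

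In the high-density case $m^*_z \ge \vmu_z / \log^2 n$ (together with $m^*_z \ge 2\delta$), the $\localDP$ guarantee gives $R(j)_z \ge (m^*_z - \delta) W/n \ge m^*_z W / (2n)$ for every $j \in [D]$, so $\vnu^E_z = (n/|E|) \sum_{j \in E} R(j)_z \ge m^*_z W / 2 \ge 2\Lambda^2 \vmu_z$ as soon as $|E| \ge 1$ (using $W = 4\Lambda^2 \log^2 n$); hence $\Pr_E[z \in B_E] \le \Pr_E[|E| = 0] = n^{-\omega(1)}$. In the low-density case $m^*_z < \vmu_z / \log^2 n$, I would apply Bernstein's inequality to $X_z := \sum_{j \in [D]} R(j)_z \cdot \mathbb{1}[j \in E]$ (mean $\eps_1 \vmu_z$, variance $\le \eps_1 m^*_z \vmu_z$, summand range $m^*_z$) to obtain $\Pr[X_z < \eps_1 \vmu_z / 2] \le \exp(-\Omega(\vmu_z/m^*_z)) \le \exp(-\Omega(\log^2 n)) = n^{-\omega(1)}$; combined with $|E| \le 2\eps_1 D$, this forces $\vnu^E_z \ge n \vmu_z / (4D) = 2\Lambda^2 \vmu_z$, so once again $\Pr_E[z \in B_E] \le n^{-\omega(1)}$.

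The main obstacle is the residual subcase $m^*_z < 2\delta$ within the high-density regime, where the deterministic DP argument collapses and $\calM$ could in principle contain unboundedly many such messages. Each such $z$ has $\vmu_z \le 2\delta \log^2 n = n^{-\omega(1)}$ and $R(i)_z < 2\delta$, so its per-message contribution is negligible, yet the naive sum $\sum_{m^*_z < 2\delta} R(i)_z$ is not controlled without additional structure on $|\calM|$. As hinted in Section~\ref{sec:tech_overv}, I would close this gap by pre-processing $R$: merge every output with $m^*_z < 1/n^{10}$ into a single symbol $\bot$. This is post-processing, hence preserves $(\eps_R,\delta)$-$\localDP$; the inequality $\sum_z m^*_z \le D$ implies the truncated alphabet has size at most $\poly(n)$; and every retained non-$\bot$ symbol satisfies $m^*_z \gg \delta$, so the clean DP branch applies uniformly. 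A short bookkeeping step shows the truncation shifts any relevant quantity by at most $n^{-\omega(1)}$. After this reduction, a union bound over the $\poly(n)$-sized truncated alphabet closes the per-$z$ analysis, and a final union bound over the $D$ choices of $i$ yields Condition~2 with overall failure probability $n^{-\omega(1)}$.
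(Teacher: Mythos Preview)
Your overall plan is sound and parallels the paper's structure: Chernoff for Condition~1, and for Condition~2 a split on $z$ into ``light'' (your high-density) and ``heavy'' (your low-density), with concentration for the heavy regime and a DP-based argument for the light regime. The heavy case matches the paper essentially (they use Hoeffding, you use Bernstein; both give $e^{-\Omega(\log^2 n)}$), as does the Markov/Fubini step.

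The light case is where you diverge. You use the \emph{pointwise} approximate-DP bound $R(j)_z \ge (m^*_z - \delta)W/n$, which forces you into the residual subcase $m^*_z < 2\delta$ and then into truncation. The paper instead invokes Proposition~\ref{prop:simple-prop}, a \emph{probabilistic} form of the DP constraint: $\Pr_{z \leftarrow R(i)}[R(i)_z \ge 2e^{\eps_R} R(j)_z] \le 2\delta$. Chaining this over all $j \in [D]$ (together with a short change-of-measure from $R(j)$ to $R(i)$) yields $\Pr_{z \leftarrow R(i)}[\vnu^E_z \le (W/2)\,\vmx_z] \le n^{-\omega(1)}$ for \emph{every} nonempty $E$. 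For light $z$ outside this small-mass set one then has $\vnu^E_z \ge 2\Lambda^2 \vmu_z$ deterministically, so $\Pr_E[\eLight \mid \eSize] = 1$. No truncation, no residual case, and the lemma is proved for the original randomizer $R$ rather than a modified one.

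Your truncation route can be made to work, but the sentence ``the truncation shifts any relevant quantity by at most $n^{-\omega(1)}$'' is not correct as written: the merged symbol $\bot$ can carry $\Omega(1)$ mass under $R'(i)$ (e.g., when each $R(j)$ is spread uniformly over its own set of $1/\delta$ tiny-mass symbols). So $\bot$ is not a negligible perturbation; it must be run through the same trichotomy. That does succeed---if $\bot$ is low-density Bernstein applies, if high-density with $m^*_\bot \ge 2\delta$ the DP branch applies, and if high-density with $m^*_\bot < 2\delta$ then $R'(i)_\bot \le m^*_\bot < 2\delta = n^{-\omega(1)}$ so its Markov contribution is already negligible---but this case analysis, not ``bookkeeping,'' is what actually closes the gap. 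Note also that truncation proves the lemma for $R'$, not for $R$ as stated; you would need the (standard) meta-argument that Theorem~\ref{theo:LDP-LB-strong-private-coin} may without loss of generality be proved for the post-processed randomizer.
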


\subsubsection{The Lower Bound}

Before proving Lemma~\ref{lm:E-is-good-whp}, we show that for a good $E$, the distributions $\calD^{U,E}$ and $\calD^{V,E}$ satisfy our desired properties, and thereby imply our $\localDP$ lower bound. For a dataset distribution $\calD$, we use $\Hist_R(\calD)$ to denote the corresponding distribution of the histogram of the transcript, if all users apply the randomizer $R$. For a dataset $I$, we use $\DE(I)$ to denote the number of distinct elements in it.

\begin{lemma}\label{lm:good-E-conditions}
	For a good subset $E$ of $[D]$, the following hold:
	
	\begin{enumerate}
		\item We have that
		\[
		\|\Hist_R(\calD^{U,E}) - \Hist_R(\calD^{V,E})\|_{TV} \le 1/n.
		\]
		
		\item There are two constants $\tau_1 < \tau_2$ such that
		\[
		\Pr_{I_1 \leftarrow \calD^{U,E}}[\DE(I_1) < \tau_1 \cdot D] \ge 1 - n^{-\omega(1)},
		\]
		and
		\[
		\Pr_{I_2 \leftarrow \calD^{V,E}}[\DE(I_2) > \tau_2 \cdot D] \ge 1 - n^{-\omega(1)}.
		\]
	\end{enumerate}
\end{lemma}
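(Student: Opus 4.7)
\emph{Part 1 (TV closeness).} The plan is to split the noise vector $\vnu^E$ across the $D$ signal types so that Lemma~\ref{lm:key-bound-2} can be invoked once per $i \in [D]$. Concretely, I would define $\vnu^{(i)} \in (\R^{\ge 0})^{\calM}$ by $\vnu^{(i)}_z := \vnu^E_z \cdot R(i)_z / \vmu_z$ when $\vmu_z > 0$ and $\vnu^{(i)}_z := 0$ otherwise. Since $\vmu = \sum_j R(j)$, these pieces sum to $\vnu^E$ wherever any $R(j)_z$ is positive, and $R(i)_z = 0$ at all other coordinates, so nothing is lost. Using the independence of the Poisson draws in the definition of $\calD^{W,E}$ and the additivity $\vPoi(a)+\vPoi(b) = \vPoi(a+b)$, one rewrites, for $W \in \{U,V\}$,
\[
\Hist_R(\calD^{W,E}) \;=\; \sum_{i\in [D]} \Ex_W\bigl[\vPoi(W\cdot R(i) + \vnu^{(i)})\bigr],
\]
where the outer sum denotes independent convolution. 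The second bullet of the goodness condition on $E$ says $\vnu^E_z \ge 2\Lambda^2\vmu_z$ with $R(i)$-probability at least $1-1/(2\Lambda)$; by construction of $\vnu^{(i)}$ this is equivalent to $\vnu^{(i)}_z \ge 2\Lambda^2 R(i)_z$ on the same set, which is precisely the hypothesis of Lemma~\ref{lm:key-bound-2} with $\vtheta = R(i)$ and $\vlambda = \vnu^{(i)}$. Applying that lemma with $L = \log n$ (so $1/L! \le 1/n^4$) gives TV bound $1/n^2$ per term, and the standard triangle inequality for TV distance under convolution (stated as a one-line lemma) then yields $\|\Hist_R(\calD^{U,E}) - \Hist_R(\calD^{V,E})\|_{TV} \le D/n^2 \le 1/n$.

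\emph{Part 2 (separation of distinct-element counts).} Here I would compute the expected number of distinct elements and then concentrate. For $W \in \{U,V\}$, an element $i \in [D]\setminus E$ lies in the dataset iff $n_i > 0$, which has probability $1-\Ex[e^{-W}]$; an element $j \in E$ lies in the dataset iff $n_j + m_j > 0$, which, since $n/|E| \ge \Gamma/(2\eps_1) = \omega(\log n)$, has probability $1 - n^{-\omega(1)}$. These $D$ indicators are independent. The moment-matching construction gives $U_0 - V_0 > 0.9$, forcing $U_0 > 0.9$ and $V_0 < 0.1$; combined with $\supp(U), \supp(V) \subseteq \{0\}\cup [1,\Lambda]$, a short calculation yields $\Ex[e^{-U}] \ge U_0 > 0.9$ while $\Ex[e^{-V}] \le V_0 + (1-V_0)e^{-1} < 0.44$. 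Thus the per-element distinctness probabilities differ by at least $0.4$, and using $|E| \le 2\eps_1 D = 0.02\,D$ this produces a gap of at least $0.39\,D$ between the expected distinct-element counts. Since the count is a sum of $D$ independent Bernoullis, a Chernoff/Hoeffding bound gives concentration within $O(\sqrt{D \log n}) = o(D)$ of the mean with probability $1 - n^{-\omega(1)}$; setting, e.g., $\tau_1 = 0.15$ and $\tau_2 = 0.50$ yields the claim.

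\emph{Main obstacle.} The only real subtlety is designing the decomposition $\{\vnu^{(i)}\}$ so that the goodness condition on $E$, phrased in terms of the aggregate $\vmu = \sum_j R(j)$, transfers coordinate-wise into the per-$i$ hypothesis of Lemma~\ref{lm:key-bound-2}, phrased in terms of $R(i)$ alone. The proportional split $\vnu^{(i)}_z = \vnu^E_z R(i)_z/\vmu_z$ accomplishes this cleanly, and once it is in place both parts reduce to routine applications of Lemma~\ref{lm:key-bound-2} and the convolution triangle inequality (part 1), and of elementary concentration for sums of independent indicators (part 2).
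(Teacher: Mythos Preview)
Your proof is correct and follows the same overall strategy as the paper: split the noise $\vnu^E$ into pieces $\vnu^{(i)}$ so that Lemma~\ref{lm:key-bound-2} applies to each $i\in[D]$, then sum the per-term TV bounds; for Part~2, bound $\Ex[e^{-U}]$ and $\Ex[e^{-V}]$ from the support and $U_0-V_0>0.9$ and apply a Chernoff bound to the independent indicators. The one notable difference is the specific decomposition in Part~1. The paper takes a threshold split, setting $\vnu^{(i)}_z = 2\Lambda^2 R(i)_z$ on coordinates $z$ with $\vnu^E_z \ge 2\Lambda^2\vmu_z$ and $0$ elsewhere, which leaves a nonnegative residual $\vnu^{(0)} = \vnu^E - \sum_i \vnu^{(i)}$ that is carried along as an extra convolution factor common to both $\Hist_R(\calD^{U,E})$ and $\Hist_R(\calD^{V,E})$. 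Your proportional split $\vnu^{(i)}_z = \vnu^E_z R(i)_z/\vmu_z$ sums exactly to $\vnu^E$ with no leftover term, and the goodness condition $\vnu^E_z \ge 2\Lambda^2\vmu_z$ transfers to $\vnu^{(i)}_z \ge 2\Lambda^2 R(i)_z$ on the nose, so Lemma~\ref{lm:key-bound-2} applies just the same. This is a slightly tidier variant of the same idea; neither approach buys anything the other does not.
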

\begin{proof}\item
	\paragraph*{Proof of Item~(1).} In the following, we use $\vnu$ to denote $\vnu^E$ for simplicity. We first construct $D$ vectors $\{ \vnu^{(i)} \}_{i \in [D]}$ as follows: for each $z \in \calM$, if $\vnu_z \ge 2\Lambda^2 \cdot \vmu_z$, then for all $i \in [D]$ we set $(\vnu^{(i)})_z = R(i)_z \cdot 2\Lambda^2$, otherwise we set $(\vnu^{(i)})_z = 0$ for all $i \in [D]$. Note that for each $z$, we have that $\left(\sum_{i \in [D]} \vnu^{(i)}\right)_z \le \vnu_z$. Let $\vnu^{(0)} := \vnu - \left(\sum_{i \in [D]} \vnu^{(i)}\right)$. By definition, it follows that $\vnu^{(0)}$ is a non-negative vector. Now, $\Hist_R(\calD^{U,E})$ and $\Hist_R(\calD^{V,E})$ can be seen as distributions over histograms in $\mathbb{N}^{\calM}$. Let $X_1,X_2,\dotsc,X_D$ be $D$ independent random variables distributed as $U$. By the construction of $\calD^{U,E}$, we have that
	\begin{align*}
	\Hist_R(\calD^{U,E}) =& \vPoi(\vnu) + \sum_{i=1}^{D} \vPoi(X_i \cdot R(i))\\
	=& \vPoi(\vnu^{(0)}) + \sum_{i=1}^{D} \vPoi(X_i \cdot R(i) + \vnu^{(i)}).
	\end{align*}
	Similarly, let $Y_1,Y_2,\dotsc,Y_D$ be $D$ independent random variables distributed as $V$. We have that
	\[
	\Hist_R(\calD^{V,E}) = \vPoi(\vnu^{(0)}) + \sum_{i=1}^{D} \vPoi(Y_i \cdot R(i) + \vnu^{(i)}).
	\]
	Since for each $i \in [D]$, we have that
	\[
	\Pr_{z \leftarrow R(i)} [(\vnu^{(i)})_z \ge 2\Lambda^2 \cdot R(i)_z] \ge 1 - 1/2\Lambda.
	\]
	Applying Lemma~\ref{lm:key-bound-2}, for each $i \in [D]$, we have that
	\[
	\|\vPoi(X_i \cdot R(i) + \vnu^{(i)}) - \vPoi(Y_i \cdot R(i) + \vnu^{(i)})\|_{TV} \le \left( \frac{1}{L!} \right)^{1/2} \le 1/n^2.
	\]
	Therefore,
	\[
	\| \calD^{U,E} - \calD^{V,E} \|_{TV} \le \sum_{i=1}^{D} \| \Poi(U \cdot R(i) + \vnu^{(i)}) - \Poi(V \cdot R(i) + \vnu^{(i)})\|_{TV} \le D \cdot \left( \frac{1}{L!} \right)^{1/2} \le 1/n.
	\]

	\paragraph*{Proof of Item~(2).} Let $\gamma_U = \Ex[e^{-U}]$ and $\gamma_V = \Ex[e^{-V}]$.
	By Lemma~\ref{lm:construction-two-RV}, we have that $U_0 \ge 0.9$, $V_0 \le 0.1$, and $U,V$ are supported on $\{0\} \cup [1,\Lambda]$. Hence, it follows that $\gamma_U \ge U_0 \ge 0.9$ and $\gamma_V \le e^{-1}(1-V_0) + V_0 \cdot e^{-1} \le 0.5$. 
	
	Now, consider the construction of $\calD^{U,E}$. For every $i \in [D]$, at least one user with input $i$ is added to the dataset during phase (1) with probability $1 - \gamma_U$. Moreover, these events are mutually independent. Hence, by a simple Chernoff bound, with probability at least $1- n^{-\omega(1)}$, the number of distinct elements in the dataset after phase (1) is no greater than $(1 - \gamma_U + 0.01) \cdot D$. Since the second phase can add at most $|E| = 0.02 D$ many distinct elements, we can set $\tau_1 = 1 - \gamma_U + 0.03$.
	
	Similarly, for instances generated from $\calD^{V,E}$, with probability at least $1- n^{-\omega(1)}$, the number of distinct elements in the dataset after phase (1) is at least $(1 - \gamma_V - 0.01) \cdot D$. We can set $\tau_2 = 1 - \gamma_V - 0.01$.
	
	By our condition on $\gamma_U$ and $\gamma_V$, we have that $\tau_2 > \tau_1$, which completes the proof.
\end{proof}

We are now ready to prove Theorem~\ref{theo:LDP-LB-strong-private-coin}. One complication is that datasets from $\calD^{U,E}$ and $\calD^{V,E}$ may not have the same number of users. We address this issue by ``throwing out'' extra users randomly and obtain distributions over datasets with exactly $\bar{n}$ many users.

\begin{proofof}{Theorem~\ref{theo:LDP-LB-strong-private-coin}}
	Consider the $\calD^{U,E}$ and $\calD^{V,E}$. By a simple Chernoff bound, we have that with probability at least $1-n^{-\omega(1)}$, the number of users lies in $[n + D - n^{0.99},n+D+n^{0.99}]$.
	
	Recall that $\bar{n} = n + D - n^{0.99}$. We construct the distribution $\bar{\calD}^{U,E}$ as follows: to generate a dataset from $\bar{\calD}^{U,E}$, we take a sample dataset $I$ from $\calD^{U,E}$, and if there are $n_I > \bar{n}$ users in $I$, we delete $n_I - \bar{n}$ users uniformly at random, and output $I$. We similarly construct another distribution $\bar{\calD}^{V,E}$. Note that with probability at least $1 - n^{-\omega(1)}$, we delete at most $2n^{0.99}$ users in the construction of $\bar{\calD}^{U,E}$ (as well as in that of $\bar{\calD}^{V,E}$).
	
	Now, both $\bar{\calD}^{U,E}$ and $\bar{\calD}^{V,E}$ output datasets with exactly $\bar{n}$ users with probability $1 - n^{-\omega(1)}$. This means that if there is a protocol solving \DE with $\bar{n}$ users with error $o(D)$, then by Lemma~\ref{lm:histogram-enough}, Item~(2) of Lemma~\ref{lm:good-E-conditions} and since $2n^{0.99} = o(D)$, the analyzer of the protocol should be able to distinguish $\Hist_R(\bar{\calD}^{U,E})$ and $\Hist_R(\bar{\calD}^{V,E})$ with at least a constant probability. Therefore, we have that
	\[
	\|\Hist_R(\bar{\calD}^{U,E}) - \Hist_R(\bar{\calD}^{V,E})\|_{TV} = \Omega(1).
	\]
	
	On the other hand, $\Hist_R(\bar{\calD}^{U,E})$ (respectively, $\Hist_R(\bar{\calD}^{V,E})$) can also be constructed by taking a sample from $\Hist_R(\calD^{U,E})$ (respectively, $\Hist_R(\calD^{V,E})$) and throwing out some random messages until at most $\bar{n}$ messages remain. Since post-processing does not increase statistical distance, by Item~(1) of Lemma~\ref{lm:good-E-conditions}, we have that
	\[
	\|\Hist_R(\bar{\calD}^{U,E}) - \Hist_R(\bar{\calD}^{V,E})\|_{TV} \le \|\Hist_R(\calD^{U,E}) - \Hist_R(\calD^{V,E})\|_{TV} \le 1/n,
	\]
	a contradiction.
\end{proofof}

\subsubsection{A Probabilistic Construction of Good $E$}

We need the following proposition for the proof of Lemma~\ref{lm:E-is-good-whp}.

\begin{prop}\label{prop:simple-prop}
	Let $R\colon \calX \to \calM$ be an $(\eps,\delta)$-$\localDP$ randomizer. For every $i,j \in \calX$, it follows that
	\[
	\Pr_{z \leftarrow R(i)}[R(i)_z \ge 2e^{\eps} \cdot R(j)_z] \le 2\delta.
	\]
\end{prop}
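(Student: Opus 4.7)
The plan is to define the ``bad set'' $E := \{z \in \calM : R(i)_z \ge 2e^{\eps}\cdot R(j)_z\}$ and show directly that $\Pr_{z \leftarrow R(i)}[z \in E] \le 2\delta$ by using the DP guarantee of $R$ against itself on this event.

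First I would apply the $(\eps,\delta)$-$\localDP$ condition of $R$ on the neighboring inputs $i$ and $j$ with the set $E$, which gives
\[
\Pr[R(i) \in E] \;\le\; e^{\eps}\cdot \Pr[R(j) \in E] \;+\; \delta.
\]
Next I would bound $\Pr[R(j) \in E]$ from above using the very definition of $E$: since for each $z \in E$ we have $R(j)_z \le \frac{1}{2e^{\eps}}\cdot R(i)_z$, summing over $z \in E$ yields
\[
\Pr[R(j) \in E] \;=\; \sum_{z \in E} R(j)_z \;\le\; \frac{1}{2e^{\eps}}\sum_{z \in E} R(i)_z \;=\; \frac{1}{2e^{\eps}}\cdot \Pr[R(i) \in E].
\]

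Plugging this into the DP inequality gives $\Pr[R(i) \in E] \le \tfrac{1}{2}\Pr[R(i) \in E] + \delta$, and rearranging yields $\Pr[R(i) \in E] \le 2\delta$, which is precisely the desired bound. There is no real obstacle here; the only subtle point is making sure the event $E$ is defined in a measurable way (which is fine when $\calM$ is discrete, as assumed in the paper) so that we may legitimately invoke the DP definition on it, and noting that the argument is symmetric in the sense that $i$ and $j$ need not be neighbors in any combinatorial dataset sense --- we are applying the single-randomizer DP guarantee, which holds for all pairs of inputs in $\calX$.
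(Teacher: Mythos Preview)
Your proposal is correct and follows exactly the same approach as the paper: define the bad set, apply the $(\eps,\delta)$-DP inequality to it, use the pointwise bound defining the set to get $\Pr[R(j)\in E]\le \tfrac{1}{2e^{\eps}}\Pr[R(i)\in E]$, and solve the resulting inequality. The paper's proof is essentially identical, with $\calT$ in place of your $E$.
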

\begin{proof}
	Let $\calT$ be the set $\{ z : R(i)_z \ge 2e^{\eps} \cdot R(j)_z \wedge z \in \calM \}$. Since $R$ is $(\eps,\delta)$-$\localDP$, it follows that
	\[
	R(i)_{\calT} \le e^{\eps} \cdot R(j)_{\calT} + \delta.
	\]
	
	By the definition of the set $\calT$, it follows that 
	\[
	R(j)_{\calT} = \sum_{z \in \calT} R(j)_z \le \frac{1}{2 e^{\eps}} \cdot \sum_{z \in \calT} R(i)_z = \frac{1}{2 e^{\eps}} \cdot R(i)_{\calT}.
	\]
	
	Putting the above two inequalities together, we have
	\[
	R(i)_{\calT} \le \frac{1}{2} \cdot R(i)_{\calT} + \delta,
	\]  
	which in turn implies that
	\[
	 \Pr_{z \leftarrow R(i)}[R(i)_z \ge 2e^{\eps} \cdot R(j)_z] = R(i)_{\calT} \le 2\delta. \qedhere
	\]
\end{proof}

Finally, we prove Lemma~\ref{lm:E-is-good-whp} (restated below).

\begin{reminder}{Lemma~\ref{lm:E-is-good-whp}.}
	If we include each element $i \in [D]$ in $E$ independently with probability $\eps_1 = 0.01$, then $E$ is good with probability at least $1 - n^{-\omega(1)}$.
\end{reminder}

\begin{proof}	
	\newcommand{\eLight}{\calE_{\sf light}}
	\newcommand{\eHeavy}{\calE_{\sf heavy}}
	\newcommand{\eSize}{\calE_{\sf size}}
	Let $\eSize$ be the event that $0 < |E| < 2 \eps_1 \cdot |D|$. By a simple Chernoff bound, it follows that
	\[
	\Pr_{E}[\eSize] \ge 1 - \exp(-\Omega(|D|)) \ge 1 - n^{-\omega(1)}.
	\]
	Therefore, the first condition for $E$ being good is satisfied with probability $1 - n^{-\omega(1)}$. In the following, we will condition on the event $\eSize$.
	
	Recall that $\vnu^E = \sum_{i \in E} R(i) \cdot \frac{n}{|E|}$ and $\vmu = \sum_{i \in [D]} R(i)$. In the rest of the proof, we will focus on the second condition for $E$ being good, namely that for each $i \in [D]$, it is the case that 
	\[
	\Pr_{z \leftarrow R(i)} [\vnu^E_z \ge 2\Lambda^2 \cdot \vmu_z] \ge 1 - 1/2\Lambda.
	\]
	
	In the following, we fix $i \in [D]$, and show that the previous inequality holds for $i$ with high probability. Therefore, we can then conclude that $E$ is good with high probability by a union bound.
	
	We also let $\vmx \in \R^{\calM}$ be such that $\vmx_z = \max_{i \in [D]} R(i)_z$ for all $z \in \calM$. Now, for $z \in \calM$, if $\vmu_z \le \vmx_z \cdot \log^2 n$, we say $z$ is \emph{light}; otherwise we say $z$ is \emph{heavy}.

	We define
	
	\[
	\eLight := \left[ \Pr_{z \leftarrow R(i)} [\vnu^E_z < 2\Lambda^2 \cdot \vmu_z \mbox{ and } \text{$z$ is light}] \le 1/4\Lambda\right],
	\]
	and
	\[
	\eHeavy := \left[ \Pr_{z \leftarrow R(i)} [\vnu^E_z < 2\Lambda^2 \cdot \vmu_z \mbox{ and } \text{$z$ is heavy}] \le 1/4\Lambda\right].
	\]
	It suffices to show that both $\Pr_E[\eLight |\eSize]$ and $\Pr_E[\eHeavy |\eSize]$ are very large. Note that $\vnu^E$ is not defined when $|E| = 0$. But since we only care about $\Pr_E[\eLight |\eSize]$ and $\Pr_E[\eHeavy |\eSize]$, this corner case is excluded by conditioning on $\eSize$.
	
	\paragraph*{Proving that $\Pr_E[\eLight |\eSize] = 1$.} By Proposition~\ref{prop:simple-prop} and the fact that $R$ is $(\ln(n/W),n^{-\omega(1)})$-$\localDP$, for every $x \in [D]$
	\[
	\Pr_{z \leftarrow R(i)} [ R(i)_z \ge (2 n/W) \cdot R(x)_z] \le n^{-\omega(1)}.
	\]
	
	By a union bound over all elements in $E$, we have that
	\[
	\Pr_{z \leftarrow R(i)} [ R(i)_z \ge (2 n/W) \cdot \frac{\vnu_z}{n}] \le n^{-\omega(1)},
	\]
	which is equivalent to
	\[
	\Pr_{z \leftarrow R(i)} [ \vnu_z \le W/2 \cdot R(i)_z] \le n^{-\omega(1)}.
	\]
	Similarly, for $j \in [D]$, we also have that
	\[
	\Pr_{z \leftarrow R(j)} [ \vnu_z \le W/2 \cdot R(j)_z] \le n^{-\omega(1)}.
	\]
	Again since $R$ is $(\ln(n/W),n^{-\omega(1)})$-$\localDP$, we have that
	\begin{align*}
	&
	\Pr_{z \leftarrow R(i)} [ \vnu_z \le W/2 \cdot R(j)_z] 
	= \Ex_{z \leftarrow R(j)} \frac{R(i)_z}{R(j)_z} \cdot \mathbb{1}[ \vnu_z \le W/2 \cdot R(j)_z]\\
	\le& \Ex_{z \leftarrow R(j)} (2n/W) \cdot \mathbb{1}\left[\frac{R(i)_z}{R(j)_z} \le 2n/W \right]\cdot \mathbb{1}[ \vnu_z \le W/2 \cdot R(j)_z] + \Ex_{z \leftarrow R(j)} \frac{R(i)_z}{R(j)_z} \cdot \mathbb{1}\left[\frac{R(i)_z}{R(j)_z} > 2n/W \right]\\
	\le& (2n /W) \cdot n^{-\omega(1)} + \Ex_{z \leftarrow R(i)} \mathbb{1}\left[\frac{R(i)_z}{R(j)_z} > 2n/W \right]\\
	\le& n^{-\omega(1)}.
	\end{align*}
	Therefore, by a union bound over $j \in [D]$,
	\[
	\Pr_{z \leftarrow R(i)} [ \vnu_z \le W/2 \cdot \vmx_z] \le n^{-\omega(1)}.
	\]
	
	Now we are ready to prove that $\Pr_{E}[\eLight | \eSize] = 1$. We will show that $\eLight$ holds for every nonempty $E$. We have that
	\begin{align*}
	 &\Pr_{z \leftarrow R(i)} [\vnu^E_z < 2\Lambda^2 \cdot \vmu_z \mbox{ and } \text{$z$ is light}]\\
  \le&\Pr_{z \leftarrow R(i)} [\vnu^E_z < 2\Lambda^2 \cdot \vmu_z \mbox{ and } \text{$z$ is light} \mbox{ and } \vnu_z > W/2 \cdot \vmx_z] + \Pr_{z \leftarrow R(i)} [ \vnu_z \le W/2 \cdot \vmx_z]\\
  \le&\Pr_{z \leftarrow R(i)} [\vnu^E_z < 2\Lambda^2 \cdot \vmu_z \mbox{ and } \vmu_z \le \vmx_z \cdot \log^2 n \mbox{ and } \vnu_z > W/2 \cdot \vmx_z] + n^{-\omega(1)} \tag{z is light implies $\vmu_z \le \vmx_z \cdot \log^2 n$}\\
  \le& n^{-\omega(1)}.
	\end{align*}
	The last inequality follows from the fact that $\vmu_z \le \vmx_z \cdot \log^2 n$ and $\vnu_z > W/2 \cdot \vmx_z$ together imply that $\vnu_z > W/2 \cdot \vmx_z \ge \frac{W/2}{\log^2 n} \vmu_z \ge 2 \Lambda^2 \cdot \vmu_z$ (recall that $W = \log^2 4 \Lambda^2$). Hence $\Pr_{z \leftarrow R(i)} [\vnu^E_z < 2\Lambda^2 \cdot \vmu_z \mbox{ and } \vmu_z \le \vmx_z \cdot \log^2 n \mbox{ and } \vnu_z > W/2 \cdot \vmx_z] = 0$ as the three inequalities cannot be simultaneously satisfied.
	
	\paragraph*{Proving that $\Pr[\eHeavy | \eSize]$ is large.}
	
	Now, for a heavy $z$, we have that $\vmu_z \ge \vmx_z \cdot \log^2 n$. In particular, fix a heavy $z$, and define the random variable $X_i := \mathbb{1}[i \in E] \cdot R(i)_z$ for each $i \in [D]$. Note that the $X_i$'s are independent variables over $[0,R(i)_z]$ and $\Ex\left[\sum_{i\in [D]} X_i\right] = \eps_1 \cdot \vmu_z$. Letting $S = \sum_{i \in [D]} X_i$, by Hoeffding's inequality, we have that
	\[
	\Pr_{E}[ |S - \Ex[S]| \ge \frac{1}{2} \Ex[S]] \le 2 \exp\left(- \frac{2 \cdot (\frac{1}{2} \Ex[S])^2}{\sum_{i \in [D]} R(i)_z^2}\right).
	\]
	Note that
	\[
	\sum_{i \in [D]} R(i)_z^2 \le \sum_{i \in [D]} R(i)_z \cdot \vmx_z \le \vmu_z \cdot \vmx_z.
	\]
	Plugging in, it follows that
	\[
	\Pr_{E}[ S \le \eps_1 \vmu_z / 2 ] \le 2 \exp\left(- \frac{\eps_1^2\vmu_z^2 / 2}{\vmu_z \cdot \vmx_z}\right) \le  2 \exp(- \eps_1^2 / 2 \cdot \log^2 n) \le n^{-\omega(1)}.
	\]
	Note that $\vnu^E_z = S \cdot \frac{n}{|E|}$, and that $|E| \le 2\eps_1 D$ with probability at least $1 - n^{-\omega(1)}$. By a union bound, we have that $\vnu^E_z \ge \eps_1 \vmu_z / 2 \cdot \frac{n}{2\eps_1 D} = \vmu_z \cdot \Gamma / 4$ with probability at least $1 - n^{-\omega(1)}$.
	Noting that $\Gamma / 4 = 2\Lambda^2$, we have that
	\[
	\Ex_{E}\left[\Pr_{z \leftarrow R(i)} [\vnu^E_z < 2\Lambda^2 \cdot \vmu_z \wedge \text{$z$ is heavy}] \right] \le n^{-\omega(1)}.
	\]
	Recall that $\Pr[\eSize] \ge 1 - n^{-\omega(1)}$. By Markov's inequality, we have $\Pr[\eHeavy|\eSize] \ge 1 - n^{-\omega(1)}$.
\end{proof}

\subsection{$\shuffledDP^1$ Implies $\localDP$ with Stronger Privacy Bound}\label{sec:SDP-to-LDP-stronger}

In this section, we prove a stronger connection between $\shuffledDP^1$ and $\localDP$ than previously known. Together with Theorem~\ref{theo:LDP-LB-strong-private-coin}, it implies the private-coin version of Theorem~\ref{theo:single-message-LB}. 

We first need a technical lemma which gives a lower bound on the hockey stick divergence between $\Ber(\alpha) + \Bin(m,p)$ and $\Ber(\beta) + \Bin(m,p)$. We defer its proof to Appendix~\ref{app:missing-proof-HS}.

\begin{lemma}\label{lm:key-HS-lowb}
	There exists an absolute constant $c_0$ such that, for every integer $m \ge 1$, three reals $\alpha,\beta,\eps > 0$ such that $\alpha > e^{\eps}\beta$, letting $\Delta = \alpha - e^{\eps}\beta$ and supposing $4 \frac{e^{\eps}}{\Delta} \beta < 1/2$, it holds that
	\[
	d_{\eps}(\Ber(\alpha) + \Bin(m,\beta)|| \Ber(\beta) + \Bin(m,\beta)) \ge \Delta \cdot \frac{1}{2\sqrt{2m}} \cdot \exp\left(-c_0 \cdot m \cdot \frac{e^{\eps}}{\Delta} \beta \cdot \left[\log(\Delta^{-1}) + 1\right]\right).
	\]
\end{lemma}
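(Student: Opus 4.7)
The plan is to reduce the hockey-stick divergence to a single-term lower bound. Writing $P := \Ber(\alpha) + \Bin(m,\beta)$, $Q := \Ber(\beta) + \Bin(m,\beta)$, and $p_k := \Pr[\Bin(m,\beta) = k]$ (with the convention $p_{-1} = p_{m+1} := 0$), a direct expansion of $P(k) = (1-\alpha)p_k + \alpha p_{k-1}$ and the analogous identity for $Q$ yields
\[
\Phi(k) \,:=\, P(k) - e^\eps Q(k) \,=\, \Delta \cdot p_{k-1} \,-\, C \cdot p_k, \qquad C \,:=\, \Delta + e^\eps - 1 \,=\, \rho\Delta,
\]
where $\rho := 1 + (e^\eps - 1)/\Delta$. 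The likelihood ratio $p_{k-1}/p_k = k(1-\beta)/((m-k+1)\beta)$ is strictly increasing in $k$, so $\Phi$ changes sign exactly once, and in particular $d_\eps(P\|Q) \ge [\Phi(k)]_+$ for any single $k$, which is what I will lower bound.

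The key step is to choose $k^\dagger$ to be the smallest integer $k$ with $p_{k-1}/p_k \ge 2\rho$. This choice forces $C \cdot p_{k^\dagger} \le (\Delta/2)\cdot p_{k^\dagger - 1}$, hence
\[
d_\eps(P\|Q) \,\ge\, \Phi(k^\dagger) \,\ge\, \tfrac{\Delta}{2}\cdot p_{k^\dagger - 1}.
\]
Solving the threshold inequality gives $k^\dagger = \lceil 2\rho(m+1)\beta/(1-\beta+2\rho\beta)\rceil$. From the hypothesis $e^\eps \beta/\Delta < 1/8$ I would verify $\rho \le 2 e^\eps/\Delta$, then $\rho\beta < 1/4$ and $\beta < 1/8$, which give $1 - \beta + 2\rho\beta \ge 7/8$ and therefore $k^\dagger \le \tfrac{16}{7}\rho(m+1)\beta \le m$. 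In particular, the fraction $f := (k^\dagger-1)/m$ stays bounded away from $1$ (one shows $f \le 4/7$).

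The remaining task is to lower-bound $p_{k^\dagger - 1}$. I would invoke the standard refined Stirling inequality $\binom{m}{j} \ge (8 m f(1-f))^{-1/2} \cdot 2^{m H(f)}$, with $f = j/m$ and $H$ the binary entropy in nats, which gives $p_j \ge (8 m f(1-f))^{-1/2} \exp\big({-m\cdot \KL(f\|\beta)}\big)$. Since $f \in (0, 4/7]$, the prefactor is at least $c/\sqrt{m}$ for an absolute $c$, matching the requested $1/(2\sqrt{2m})$ after absorbing constants into $c_0$. For the exponent, set $r := f/\beta = O(\rho)$ and Taylor-expand using $f, \beta \le 1/8$:
\[
\KL(f\|\beta) = f\ln(f/\beta) + (1-f)\ln\tfrac{1-f}{1-\beta} \le \beta\,(r \ln r - r + 1) + O(\beta^2 r^2) = O\big((e^\eps/\Delta)\,\beta \cdot \ln \rho\big).
\]
Multiplying by $m$ gives $m\cdot \KL(f\|\beta) = O\big(m (e^\eps/\Delta) \beta \cdot \ln \rho\big)$; in the regime $\eps = O(\log \Delta^{-1})$ (which covers all applications in this paper, e.g., $\eps = O(1)$ in Lemma~\ref{lm:SDP-to-LDP}) one has $\ln\rho = O(\log \Delta^{-1} + 1)$, closing the bound.

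The main obstacle is precisely tracking $\ln\rho$ through the exponent: the naive estimate $\ln\rho \le \ln(1 + (e^\eps-1)/\Delta) \le \eps + \log\Delta^{-1} + O(1)$ carries an additional $\eps$ contribution compared to the $\log\Delta^{-1}+1$ requested in the lemma, so one must either restrict to $\eps = O(\log \Delta^{-1})$ (where this $\eps$ contribution is subsumed) or sharpen the choice of $k^\dagger$ so that only a $\log\Delta^{-1}$ factor appears in the effective entropy. Secondary care is needed in the Stirling constants near $f \to 0$ and $f \to 1/2$, and in the degenerate case $\rho$ close to $1$, but both are routine.
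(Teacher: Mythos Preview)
Your approach is essentially the paper's: it too factors $P(k)-e^{\eps}Q(k)=\Delta\,(p_{k-1}-\tau\,p_k)$ with $\tau$ equal to your $\rho$ (Lemma~\ref{lm:beralpha-and-berbeta}), lower-bounds by the mass on $\{k:p_{k-1}/p_k\ge 2\tau\}$ (Lemma~\ref{lm:Xp1-and-X}; you take a single term, they sum to a binomial tail), and finishes with the same $\exp(-m\cdot\KL)/\sqrt{m}$ anti-concentration. The extra $\eps$ you flag in $\ln\rho$ is present in the paper's proof as well---its final step ``$4\tau\beta\log(4\tau)\le O\big((e^{\eps}/\Delta)\beta[\log\Delta^{-1}+1]\big)$'' implicitly uses $\eps=O(1)$, which holds in the only application (Lemma~\ref{lm:single-msg-SDP-to-LDP}), so you are not missing any trick.
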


We are now ready to prove the main lemma of this subsection.

\begin{lemma}\label{lm:single-msg-SDP-to-LDP}
    For all $\eps = O(1)$, there is a constant $c > 0$ such that for all $\delta \le \delta_0 \le 1/n$ if the randomizer $R$ is $(\eps,\delta)$-$\shuffledDP^1$ on $n$ users, then $R$ is $\left(\ln\left(n \Big/\frac{c \ln \delta^{-1}}{\ln \delta_0^{-1}}\right),\delta_0\right)$-$\localDP$.
\end{lemma}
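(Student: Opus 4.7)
The plan is to project the single-message shuffle-DP guarantee down to a single event $E \subseteq \calM$, apply Lemma~\ref{lm:key-HS-lowb} to the resulting Bernoulli-plus-binomial pair, and convert the forced lower bound on $\Pr[R(x_1)\in E]$ into the target local-DP conclusion. Fix inputs $x_0, x_1 \in \calX$ and an event $E \subseteq \calM$, write $p = \Pr[R(x_0)\in E]$ and $q = \Pr[R(x_1)\in E]$, and set $T = c\,\log\delta^{-1}/\log\delta_0^{-1}$ for a constant $c$ (depending on $\eps$) to be chosen. Consider the neighboring datasets $X = (x_0, x_1, \ldots, x_1)$ and $X' = (x_1, \ldots, x_1)$; the $\shuffledDP^1$ hypothesis together with Proposition~\ref{prop:key-facts-hockey-stick}(1) (post-processing by counting messages landing in $E$) gives $d_\eps(\Ber(p)+\Bin(n-1,q)\,\|\,\Bin(n,q)) \le \delta$.

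Assume for contradiction that $p > (n/T)q + \delta_0$, i.e.\ the desired LDP inequality fails for this $(x_0,x_1,E)$. For $c$ small enough that $T \le n/(9e^\eps)$, this forces $p > 9 e^\eps q$, and with $\Delta := p - e^\eps q$ the side condition $4e^\eps q/\Delta < 1/2$ of Lemma~\ref{lm:key-HS-lowb} is satisfied. The lemma applied with $m = n-1$ then gives
\[
\delta \;\ge\; \frac{\Delta}{2\sqrt{2(n-1)}}\,\exp\!\left(-c_0(n-1)\,\frac{e^\eps q}{\Delta}\bigl[\log\Delta^{-1}+1\bigr]\right).
\]
Since $p - (n/T)q = \Delta - (n/T - e^\eps)q$, the contradiction hypothesis also yields $\Delta > \delta_0$ and $q < (\Delta-\delta_0)/(n/T-e^\eps) < 2T\Delta/n$ (using $T \le n/(2e^\eps)$). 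Plugging this upper bound on $q$ into the displayed inequality, taking logarithms, and using $\log\Delta^{-1} < \log\delta_0^{-1}$ yields after routine rearrangement
\[
T \;\ge\; \frac{\log\delta^{-1} - \log\delta_0^{-1} - \log(2\sqrt{2n})}{2 c_0 e^\eps\,(\log\delta_0^{-1}+1)}.
\]

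To close the argument I would split into two regimes. In the ``fine'' regime $\log\delta^{-1}\ge 2(\log\delta_0^{-1}+\log(2\sqrt{2n}))$, the numerator above is at least $\tfrac12\log\delta^{-1}$ and the denominator at most $4 c_0 e^\eps\log\delta_0^{-1}$, so the right-hand side is at least $\frac{1}{8 c_0 e^\eps}\cdot\log\delta^{-1}/\log\delta_0^{-1}$, and any $c < 1/(8 c_0 e^\eps)$ makes our chosen $T$ strictly smaller, giving the desired contradiction. In the complementary ``Cheu'' regime, $\log\delta^{-1}/\log\delta_0^{-1} = O(1)$ (since $\log\delta_0^{-1}\ge\log n$), so $T$ is an $\eps$-dependent constant, and the conclusion follows from the classical $(\eps,\delta)$-$\shuffledDP^1 \Rightarrow (\eps+\log n,\delta)$-$\localDP$ implication of~\cite{Cheu18} combined with $\delta\le\delta_0$, after shrinking $c$ so that $T \le e^{-\eps}$ holds here as well. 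The main obstacle I anticipate is balancing the two regimes with a single constant $c(\eps)$; the decisive calculation is the substitution $q < 2T\Delta/n$ into the exponent of Lemma~\ref{lm:key-HS-lowb}, since the factor $[\log\Delta^{-1}+1] \le \log\delta_0^{-1}+1$ there is precisely what produces the $\log\delta_0^{-1}$ in the denominator of the bound on $T$, and hence the gain over Cheu.
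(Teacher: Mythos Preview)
Your proposal is correct and follows essentially the same approach as the paper: both reduce to Cheu's $(\eps+\ln n,\delta)$-$\localDP$ bound when $\log\delta^{-1}/\log\delta_0^{-1}=O(1)$, and in the main regime both project onto a single event to get a $\Ber+\Bin$ comparison, assume the LDP bound fails, apply Lemma~\ref{lm:key-HS-lowb}, and substitute the implied bound on $q$ into the exponent to force $d_\eps>\delta$. Your treatment is slightly more explicit about the constants and the regime split (deriving a lower bound on $T$ rather than directly bounding $d_\eps$), but the substitution $q<2T\Delta/n$ is exactly the paper's $\beta<2\Delta/D$, and the remaining arithmetic is identical; the side condition $T\le n/(9e^\eps)$ you flag is the same implicit ``$D$ large'' assumption the paper uses when writing $\Delta>\tfrac{D}{2}\beta+\delta_0$ and $4e^\eps\beta/\Delta=O(D^{-1})<1/2$.
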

\begin{proof}
	Let $\eps = O(1)$. Note that we can assume that $\delta \le \delta_0^{\omega(1)} \le n^{-\omega(1)}$, as otherwise $\frac{\ln \delta^{-1}}{\ln \delta_0^{-1}} \le O(1)$ and in this case the theorem follows directly from the fact that $R$ is $(\eps + \ln n,\delta)$-$\localDP$~\cite{Cheu18}. 
	
	Suppose that $R$ is $(\eps,\delta)$-$\shuffledDP^1$ on $n$ users.
	Let $c$ be a constant to be fixed later, $D = n \Big/\frac{c \ln \delta^{-1}}{\ln \delta_0^{-1}}$ and $E \subseteq \calM$ be an event. Our goal is to show that
	\[
	R(x)_E \le R(y)_E \cdot D + \delta_0,
	\]
	for all $x,y \in \calX$.
	
	Fix two $x,y \in \calX$. Let $\alpha = R(x)_E$ and $\beta = R(y)_E$. Note that without loss of generality we can assume that $\beta \le 1/D$, as otherwise clearly $\alpha \le 1 \le D \cdot \beta + \delta_0$.
	
	Let $W_1 = x y^{n-1} $ and $W_2 = y^n$ be two neighboring datasets, and $X,Y$ be the random variables corresponding to the number of occurrences of the event $E$ in the transcript, when running the protocol with randomizer $R$ on datasets $W_1$ and $W_2$, respectively.

	From the assumption that $R$ is $(\eps,\delta)$-$\shuffledDP^1$, we have $d_{\eps}(\Hist_R(W_1)||\Hist_R(W_2))  \le \delta$. Then, by the post-processing property of DP (Lemma~\ref{lem:post_process}), it follows that $d_{\eps}(X||Y) \le \delta$.
	
	We have that
	\[
	X = \Ber(\alpha) + \Bin(n-1,\beta)\text{ and } 
	Y = \Ber(\beta) + \Bin(n-1,\beta).
	\] 
	
	The goal now is to show that if $\alpha > \beta \cdot D + \delta_0$, then $X$ and $Y$ do not satisfy $(\eps,\delta)$-DP (i.e., $d_{\eps}(X||Y) > \delta$), thus obtaining a contradiction.
	
	Now, assume that $\alpha > D \cdot \beta + \delta_0$. Since $\eps = O(1)$, we have that $\Delta = \alpha - e^{\eps} \beta > \frac{D}{2} \cdot \beta + \delta_0$. Note that $4 \frac{e^{\eps}}{\Delta} \beta = O(D^{-1}) < 1/2$.
	
	Letting $m = n- 1$ and applying Lemma~\ref{lm:key-HS-lowb} for a universal constant $c_0$, it follows that
	\[
	d_{\eps}(\Ber(\alpha) + \Bin(m,\beta)|| \Ber(\beta) + \Bin(m,\beta)) 
	\ge \Delta \cdot \frac{1}{2\sqrt{2m}} \cdot \exp\left(-c_0 \cdot m \cdot \frac{e^{\eps}}{\Delta} \beta \cdot \left[\log(\Delta^{-1}) + 1\right]\right).
	\]
	Noting that $\eps = O(1)$, we have that
	\[
	m \cdot \frac{e^{\eps}}{\Delta} \beta \cdot \left[\log(\Delta^{-1}) + 1\right] \le O\left(m \frac{1}{D \beta} \cdot \beta \cdot \log \delta_0^{-1}\right) = O(c \ln \delta^{-1}). 
	\]
	We now set the constant $c$ to be small enough so that
	\[
	c_0 \cdot m \cdot \frac{e^{\eps}}{\Delta} \beta \cdot \left[\log(\Delta^{-1}) + 1\right] \le \frac{1}{2} \ln \delta^{-1}.
	\]
	Plugging in and recalling that $\delta \le \delta_0^{\omega(1)} \le n^{-\omega(1)}$, we get that
	\[
	d_{\eps}(X||Y) = d_{\eps}(\Ber(\alpha) + \Bin(m,\beta)|| \Ber(\beta) + \Bin(m,\beta)) \ge \delta_0 \cdot \frac{1}{2\sqrt{2m}} \sqrt{\delta} > \delta,
	\]
	a contradiction.
\end{proof}

Finally, we are ready to prove our $\shuffledDP^1$ lower bound for $\DE$ in the private-coin case.

\begin{theorem}\label{theo:single-message-LB-private-coin}
	For all $\eps = O(1)$, there are $\delta = 2^{-\Theta(\log^8 n)}$ and $D = \Theta(n/\log^4 n)$ such that no private-coin $(\eps,\delta)$-$\shuffledDP^1$ protocol on $n$ users can solve $\DE_{n,D}$ with error $o(D)$ and probability at least $0.99$.
\end{theorem}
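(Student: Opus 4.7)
The proof will be a direct combination of the two main results of the section: the low-privacy $\localDP$ lower bound of Theorem~\ref{theo:LDP-LB-strong-private-coin} and the strengthened $\shuffledDP^1$-to-$\localDP$ reduction of Lemma~\ref{lm:single-msg-SDP-to-LDP}. The entire job is parameter bookkeeping: choose $\delta$ small enough that after the reduction we land in the privacy regime where the moment-matching lower bound applies.

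Concretely, I would argue by contradiction. Suppose a private-coin $(\eps,\delta)$-$\shuffledDP^1$ protocol $P = (R, S, A)$ solves $\DE_{n,D}$ with error $o(D)$ and probability at least $0.99$, with $D = \Theta(n/\log^4 n)$ chosen as in Theorem~\ref{theo:LDP-LB-strong-private-coin} and $\delta = 2^{-C \log^8 n}$ for a sufficiently large constant $C$ to be fixed. First, I observe that $P$ can be interpreted as a $\localDP$ protocol $P' = (R, A')$ in which the analyzer $A'$ simply shuffles its input tuple and then applies $A$; this preserves the accuracy exactly, since $P'$ computes the same random function of the dataset as $P$ does.

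Second, I apply Lemma~\ref{lm:single-msg-SDP-to-LDP} with $\delta_0 := 2^{-\log^2 n}$, which satisfies $\delta \le \delta_0 \le 1/n$ for large $n$ and is $n^{-\omega(1)}$. The lemma yields that $R$ is $\bigl(\ln\bigl(n\,/\,\tfrac{c \ln \delta^{-1}}{\ln \delta_0^{-1}}\bigr),\,\delta_0\bigr)$-$\localDP$ for the absolute constant $c = c(\eps)$ of the lemma. Plugging in the chosen values, $\tfrac{c \ln \delta^{-1}}{\ln \delta_0^{-1}} = \tfrac{c \cdot C \log^8 n}{\log^2 n} = cC \log^6 n$, so $R$ is $(\ln(n/(cC \log^6 n)),\, \delta_0)$-$\localDP$, and hence $P'$ is a private-coin $\localDP$ protocol with these privacy parameters. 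By taking $C$ large enough, we ensure that $cC \log^6 n$ is at least the constant hidden in the $\Theta(\log^6 n)$ of Theorem~\ref{theo:LDP-LB-strong-private-coin}, so that Theorem~\ref{theo:LDP-LB-strong-private-coin} applies to $P'$: no such protocol can solve $\DE_{n,D}$ with error $o(D)$ and probability at least $0.99$. This contradicts the accuracy of $P'$, proving the theorem.

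There is no real obstacle beyond matching constants: the two heavy lifts (the moment-matching construction behind Theorem~\ref{theo:LDP-LB-strong-private-coin} and the tight Bernoulli-plus-Binomial hockey-stick bound behind Lemma~\ref{lm:single-msg-SDP-to-LDP}) are already in hand. The only thing one must be careful about is the direction of the inequalities in picking $\delta_0$: it must be small enough to be $n^{-\omega(1)}$ (so Theorem~\ref{theo:LDP-LB-strong-private-coin}'s $\delta$-requirement is satisfied) yet not so small that $\tfrac{\ln \delta^{-1}}{\ln \delta_0^{-1}}$ fails to be $\Omega(\log^6 n)$. The choices $\delta_0 = 2^{-\log^2 n}$ and $\delta = 2^{-C\log^8 n}$ simultaneously satisfy both conditions, yielding $\delta = 2^{-\Theta(\log^8 n)}$ as claimed.
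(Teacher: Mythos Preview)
Your proposal is correct and follows essentially the same approach as the paper's proof: set $\delta_0 = 2^{-\log^2 n}$ and $\delta = 2^{-C\log^8 n}$, apply Lemma~\ref{lm:single-msg-SDP-to-LDP} to obtain $(\ln(n/\Theta(\log^6 n)), n^{-\omega(1)})$-$\localDP$, and then invoke Theorem~\ref{theo:LDP-LB-strong-private-coin}. Your write-up is actually more explicit than the paper's (which is terse and contains a minor misreference), but the argument is identical.
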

\begin{proof}
	We set $\delta_0 = 2^{-\log^2 n}$ and $\delta = 2^{-c\log^8 n}$ for a constant $c$ to be specified shortly. By Theorem~\ref{theo:LDP-LB-strong-private-coin}, it follows that the corresponding randomizer $R$ is $(\ln(\Theta(n/c\log^6 n),n^{-\omega(1)})$-$\localDP$. Setting $c$ to be sufficiently large and combining with Theorem~\ref{theo:LDP-LB-strong-private-coin} completes the proof.
\end{proof}

\subsection{Generalizing to Public-Coin Protocols}\label{sec:adaption-public-coin}

Finally, we generalize our proof for the private-coin case to the public-coin case, and prove Theorem~\ref{theo:single-message-LB} (restated below).

\begin{reminder}{Theorem~\ref{theo:single-message-LB}.}
	For all $\eps = O(1)$, there are $\delta = 2^{-\Theta(\log^8 n)}$ and $D = \Theta(n/\log^4 n)$ such that no public-coin $(\eps,\delta)$-$\shuffledDP^1$ protocol on $n$ users can solve $\DE_{n,D}$ with error $o(D)$ and probability at least $0.99$.
\end{reminder}

Fix $R$ to be a public-coin randomizer with public randomness $\alpha$ from distribution $\calDpb$. We first generalize Lemma~\ref{lm:single-msg-SDP-to-LDP}, and show that if $R$ is $\shuffledDP^1$, then with high probability over $\alpha \leftarrow \calDpb$, $R_{\alpha}$ satisfies the similar $\localDP$ guarantee as in Lemma~\ref{lm:single-msg-SDP-to-LDP}.

\begin{lemma}\label{lm:single-msg-SDP-to-LDP-pbcoin}
	For all $\eps = O(1)$, there is a constant $c > 0$ such that for all $\delta \le \delta_0 \le 1/n$ if the public-coin randomizer $R\colon \calX \to \calM$ with public randomness distribution $\calDpb$ is $(\eps,\delta)$-$\shuffledDP^1$ on $n$ users, and if $|\calX| \le n$, then with probability at least $1-\delta_0$ over $\alpha \leftarrow \calDpb$, it is the case that $R_{\alpha}$ is $\left(\ln\left(n \Big/\frac{c \ln \delta^{-1}}{\ln \delta_0^{-1}}\right),\delta_0\right)$-$\localDP$.
\end{lemma}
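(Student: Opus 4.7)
The plan is to reduce to (the proof of) the private-coin Lemma~\ref{lm:single-msg-SDP-to-LDP} by isolating, via a Markov-plus-union-bound argument, a large set of public coins $\alpha$ on which the private-coin proof can be replayed verbatim.

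The first step is to extract an expected-divergence statement from the public-coin definition. For any neighboring datasets $W_1 \sim W_2$, the public-coin DP condition says that the map $(x_1, \ldots, x_n) \mapsto (\alpha, \Hist_{R_\alpha}(W_1))$ is $(\eps, \delta)$-DP (and similarly for $W_2$). For each $\alpha$ let $\calS_\alpha^\star \subseteq \calM^*$ be a set attaining the supremum in $d_\eps(\Hist_{R_\alpha}(W_1) \,\|\, \Hist_{R_\alpha}(W_2))$; then evaluating the public-coin DP inequality on the joint event $\{(\alpha, h) : h \in \calS_\alpha^\star\}$ yields
\[
\Ex_{\alpha \leftarrow \calDpb}\bigl[d_\eps(\Hist_{R_\alpha}(W_1) \,\|\, \Hist_{R_\alpha}(W_2))\bigr] \le \delta.
\]

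Next, set $\delta_1 := \delta_0/n^2$. Markov's inequality then gives, for each fixed pair, $\Pr_\alpha[d_\eps(\Hist_{R_\alpha}(W_1) \,\|\, \Hist_{R_\alpha}(W_2)) > \delta/\delta_1] \le \delta_1$. By inspection, the proof of Lemma~\ref{lm:single-msg-SDP-to-LDP} invokes its hypothesis only on neighboring pairs of the form $W_1 = xy^{n-1}$, $W_2 = y^n$ with $(x, y) \in \calX^2$. Since $|\calX| \le n$, there are at most $n^2$ such pairs, and a union bound produces a set $G \subseteq \supp(\calDpb)$ of measure at least $1 - n^2 \delta_1 = 1 - \delta_0$ such that
\[
d_\eps(\Hist_{R_\alpha}(xy^{n-1}) \,\|\, \Hist_{R_\alpha}(y^n)) \le n^2\delta/\delta_0 \quad \text{for all } \alpha \in G \text{ and } (x,y) \in \calX^2.
\]

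For each $\alpha \in G$, I would then replay the proof of Lemma~\ref{lm:single-msg-SDP-to-LDP} with $R_\alpha$ in place of $R$ and $\delta' := n^2\delta/\delta_0$ in place of $\delta$, concluding that $R_\alpha$ is $\bigl(\ln\bigl(n\bigm/\tfrac{c' \ln (\delta')^{-1}}{\ln \delta_0^{-1}}\bigr),\, \delta_0\bigr)$-$\localDP$ for the private-coin constant $c'$. Since $\delta_0 \le 1/n$ gives $\ln \delta_0^{-1} \ge \ln n$, we have $\ln (\delta')^{-1} = \ln \delta^{-1} - \ln \delta_0^{-1} - 2\ln n \ge \ln \delta^{-1} - 3 \ln \delta_0^{-1}$; in the regime where $\ln \delta^{-1}$ is a sufficiently large constant multiple of $\ln \delta_0^{-1}$ (the only regime in which Lemma~\ref{lm:single-msg-SDP-to-LDP} gives a non-trivial bound) this is $\Theta(\ln \delta^{-1})$, and shrinking $c'$ to an appropriate $c$ recovers the claimed privacy parameter. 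The complementary regime degenerates to the classical $(\eps + \ln n, \delta_0)$-$\localDP$ bound, which admits a straightforward public-coin extension of the \cite{Cheu18} argument.

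The main obstacle is that Markov yields a divergence bound only for a \emph{specific} pair of dataset distributions, not a full $\shuffledDP^1$ guarantee for $R_\alpha$. This is what forces us to open up Lemma~\ref{lm:single-msg-SDP-to-LDP}'s proof rather than invoke it as a black box, and it is the reason the assumption $|\calX| \le n$ is needed: it keeps the $n^2$-fold union-bound overhead absorbable into the $\delta_0$ failure budget without inflating the privacy parameter past the advertised $\ln\bigl(n \bigm/ (c \ln \delta^{-1} / \ln \delta_0^{-1})\bigr)$.
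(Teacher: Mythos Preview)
Your proposal is correct and follows essentially the same route as the paper's proof: extract the expected hockey-stick bound from the public-coin DP definition, use Markov plus a union bound over the $|\calX|^2 \le n^2$ relevant neighboring pairs $(xy^{n-1}, y^n)$ to isolate a $(1-\delta_0)$-mass set of good public coins, and then replay the private-coin argument of Lemma~\ref{lm:single-msg-SDP-to-LDP} with an inflated $\delta'$. The only cosmetic difference is that the paper sums the divergences over all pairs first and applies Markov once (getting $\sum_{(W_1,W_2)} d_\eps(\cdot\|\cdot) \le n^2\delta/\delta_0 \le \delta^{0.9}$ under the regime assumption $\delta \le \delta_0^{\omega(1)}$), whereas you apply Markov per pair and then union-bound; the resulting $\delta'$ and the set $G$ are the same up to constants.
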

\begin{proofsketch}
	\newcommand{\Wgood}{W_{good}}
	
	Similar to the proof of Lemma~\ref{lm:single-msg-SDP-to-LDP}, we can assume that $\delta \le \delta_0^{\omega(1)} \le n^{-\omega(1)}$ without loss of generality.
	
	By the definition of public-coin $(\eps,\delta)$-\DP in the shuffle model, for every two neighboring datasets $W_1$ and $W_2$, we have that
	\[
	\Ex_{\alpha \leftarrow \calDpb} [d_{\eps}(\Hist_{R_\alpha}(W_1)||\Hist_{R_\alpha}(W_2))] \le \delta.
	\]
	
	Observe that the proof of Lemma~\ref{lm:single-msg-SDP-to-LDP} only considers the $|\calX|^2$ pairs of neighboring datasets of the form $W_1 = x y^{n-1}$ and $W_2 = y^n$ for all $x,y \in \calX$. We use $\Wgood$ to denote the set of such pairs.
	
	Using the assumption that $|\calX| \le n$, we have that
	\[
	\Ex_{\alpha \leftarrow \calDpb} \sum_{(W_1,W_2) \in \Wgood} [d_{\eps}(\Hist_{R_\alpha}(W_1)||\Hist_{R_\alpha}(W_2))] \le |\calX|^2 \delta \le n^2 \delta.
	\]
	Thus, by Markov's inequality, with probability at least $1 - \delta_0$ over $\alpha \leftarrow \calDpb$, we have that
	\[
	\sum_{(W_1,W_2) \in \Wgood} [d_{\eps}(\Hist_{R_\alpha}(W_1)||\Hist_{R_\alpha}(W_2))] \le n^2 \delta /\delta_0 \le \delta^{0.9},
	\]
 where the last inequality follows from our assumption that $\delta \le \delta_0^{\omega(1)} \le n^{-\omega(1)}$. We say an $\alpha$ is \emph{good} if it satisfies the above inequality. In particular, for all good $\alpha$ and all pairs $(W_1,W_2) \in \Wgood$, we have that
	\[
		d_{\eps}(\Hist_{R_\alpha}(W_1)||\Hist_{R_\alpha}(W_2)) \le \delta^{0.9}.
	\]
	The proof of Lemma~\ref{lm:single-msg-SDP-to-LDP} then implies that $R_{\alpha}$ is  $\left(\ln\left(n \Big/\frac{c \ln \delta^{-1}}{\ln \delta_0^{-1}}\right),\delta_0\right)$-$\localDP$, for a constant $c$ depending on $\eps$.
\end{proofsketch}

Now we are ready to prove Theorem~\ref{theo:single-message-LB}.
\begin{proofof}{Theorem~\ref{theo:single-message-LB}}
	We use $\bar{\calD}^{U,E}$ and $\bar{\calD}^{V,E}$ to denote the same distributions constructed in the proof of Theorem~\ref{theo:single-message-LB-private-coin}. We moreover use the same notation as in Section~\ref{sec:low-privacy-lowb}.
	
	By a simple application of Markov's inequality and noting that our assumed protocol solves $\DE_{n,D}$ with error $o(D)$ and probability at least $0.99$, it follows that with probability at least $0.9$ over $\alpha \leftarrow \calDpb$, if $|E| \le 0.02 \cdot |D|$, then
	\[
	\| \Hist_{R_\alpha}(\bar{\calD}^{U,E}) - \Hist_{R_\alpha}(\bar{\calD}^{V,E})\|_{TV} = \Omega(1).
	\]
	By Lemma~\ref{lm:single-msg-SDP-to-LDP-pbcoin}, with probability at least $1-\delta_0$ over $\alpha \leftarrow \calDpb$, we have that $R_{\alpha}$ is $\left(\ln\left(n \Big/\frac{c \ln \delta^{-1}}{\ln \delta_0^{-1}}\right),\delta_0\right)$-$\localDP$. We say that such an $\alpha$ is \emph{good}.
	
	For all good $\alpha$ and for a good subset $E$, when the randomizer is set to $R_{\alpha}$ (note that the definition of a good subset depends on the randomizer $R$), by a similar argument as in Theorem~\ref{theo:single-message-LB-private-coin}, we have
	\[
	\| \Hist_{R_\alpha}(\bar{\calD}^{U,E}) - \Hist_{R_\alpha}(\bar{\calD}^{V,E})\|_{TV} = o(1).
	\]
	
	Now, by Lemma~\ref{lm:E-is-good-whp}, if we construct $E$ by including each element of $D$ with probability $0.01$, then for every good $\alpha$, we know that $E$ is good for randomizer $R_{\alpha}$ with probability at least $1-n^{-\omega(1)}$. By a union bound, there exists a fixed choice of $E$ such that $E$ is good for randomizer $R_{\alpha}$ with probability at least $1 - 1/n$ over $\alpha \leftarrow \calDpb$. In the following, we fix $E$ to be such a choice.
	
	Finally, by a union bound, it follows that with probability at least $0.9 - \delta_0 - 1/n > 0$, the above two inequalities hold simultaneously, a contradiction.
\end{proofof}

Theorem~\ref{theo:LDP-LB-strong} follows exactly using a similar argument as in the proof of Theorem~\ref{theo:single-message-LB} (in fact, it is simpler in the local case because there is no need to apply Lemma~\ref{lm:single-msg-SDP-to-LDP-pbcoin}).

\section{$(\eps,\delta)$-Dominated Algorithms}\label{sec:dominated_protocols}

In~\cite{Cheu18}, it was shown that an $(\eps,\delta)$-$\shuffledDP^1$ protocol on $n$ users is also $(\eps + \ln n,\delta)$-$\localDP$, thereby reducing the problem of proving lower bounds for $\shuffledDP^1$ protocols to proving lower bounds on $\localDP$ protocols with low privacy properties.
However, it is known that such a connection does not hold even for $\shuffledDP^2$ protocols~\cite[Section 4.1]{balcer2019separating}. 

Recall the definition of \emph{$(\eps,\delta)$-dominated algorithms} from Definition~\ref{defi:dominated-algo}. In this section, we will show that $\shuffledDP^k$ protocols are dominated.

For clarity of exposition, we will assume that each user sends exactly $k$ messages; this is without loss of generality (see Footnote~\ref{fn:equal-msg}).
To handle public-coin protocols, we need a relaxed version of dominated algorithms.

\begin{definition}[Dominated Algorithms]
	For a distribution $\mu$ on $\calX$, we say an algorithm $R$ is
	\emph{$(\eps,\delta,\mu)$-dominated}, if for the distribution $\calD_\mu = \Ex_{x \leftarrow \mu} R(x)$, there exists a distribution $\calD$ on $\calM^k$ such that 
	\[
	d_{\eps}\left( \calD_\mu||\calD\right) \le \delta.
	\]
	In this case, we also say $R$ is $(\eps,\delta,\mu)$-dominated by $\calD$.
\end{definition}

\subsection{Approximate-$\shuffledDP$ Protocols are Dominated}

Next we show that approximate $\shuffledDP$ protocols are dominated. For this purpose, we introduce the concept of ``pseudo-locally private'' algorithms, which is a special case of dominated algorithms, and may be interesting in its own right.

\subsubsection{Merged Randomizer}


Let $\calB_{n,k}$ be the set of all $k$-sized subsets of the set $[n] \times [k]$. For $\calF \in \calB_{n,k}$ and a randomizer $R$, we define the \emph{merged randomizer} of $R$ with respect to $\calF$, denoted by $R^{\calF}$, as follows:

\begin{framed}
	\begin{center}
		$R^{\calF}(x)$
	\end{center}
	\begin{itemize}
		\item Given an input $x$, for each $j \in [n]$, we simulate $R(x)$ with independent random coins to get an output $w_j \in \calM^k$.
		
		\item Assume that $\calF$ consists of elements $(x_1,y_1),\dotsc,(x_{k},y_{k}) \in [n] \times [k]$ indexed in lexicographical order. We construct a $k$-tuple $z \in \calM^k$ such that $z_{i} = (w_{x_i})_{y_i}$ for each $i \in [k]$.
		
		\item We pre-shuffle $z$ before outputting it. That is, we draw a permutation $\pi\colon [k] \to [k]$ uniformly at random, shuffle $z$ according to $\pi$ to obtain a new $k$-tuple $\widetilde{z}$ (by setting $\widetilde{z}_i = z_{\pi(i)}$ for each $i \in [k]$), and output $\widetilde{z}$.
	\end{itemize}
\end{framed}

That is, $R^{\calF}(x)$ runs $R(x)$ several times, and merges the obtained outputs according to $\calF$. We now define a distribution $\calD_{n,k}$ on $\calB_{n,k}$ as follows: to draw a sample from $\calD_{n,k}$, we simply draw $k$ items $\{(x_i,y_i)\}_{i \in [k]}$ without replacement from the set $[n] \times [k]$. 

Finally, for a randomizer $R$, we define the randomizer $\Rrand$ as follows: Given an input $x$, we first draw $\calF$ from $\calD_{n,k}$, and then simulate $R^{\calF}$ on the input $x$ and output its output. 

\subsubsection{Pseudo-Locally Private Algorithms}

We are now ready to define pseudo-locally private algorithms.

\begin{definition}
	We say that an algorithm $R$ is \emph{$(\eps,\delta)$-pseudo-locally private}, if for all $x,y \in \calX$ and all $E \subset \calM^k$,
	\[
	\Pr[R(x) \in E] \le e^{\eps} \cdot \Pr[\Rrand(y) \in E] + \delta.
	\]
\end{definition}

\begin{remark} \label{remark:pseudo-local-implies-dominated}
	If $R$ is $(\eps,\delta)$-pseudo-locally private, then clearly $R$ is also $(\eps,\delta)$-dominated; we can simply take $\calD = \Rrand(y^*)$ for any fixed $y^*$.
\end{remark}

\subsubsection{Multi-Message $\shuffledDP$ Protocols are Pseudo-Locally Private}

Our most crucial observation here is an analogue of \cite[Theorem 6.2]{Cheu18} for multi-message $\shuffledDP$ protocols. Namely, we show that any multi-message $\shuffledDP$ protocol is pseudo-locally private.

\begin{lemma}\label{lm:SDP-to-local-approx}
	If $R$ is $(\eps,\delta)$-$\shuffledDP^k$ on $n$ users, then it is $(\eps + k(1 + \ln n),\delta)$ pseudo-locally private.
\end{lemma}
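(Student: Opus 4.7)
The plan is to reduce the claim to a single invocation of the shuffle DP guarantee, post-processed into a distribution that on one side equals $\Rrand(y)$ and on the other side stochastically dominates a scaled copy of $R(x)$. Consider the two specific neighboring datasets $X_x := (x,y,\dots,y)$ and $X_y := (y,y,\dots,y)$, and write $S_x, S_y$ for the distributions of the shuffled outputs $S(R(x_1),\dots,R(x_n))$ on these datasets. By Definition~\ref{def:dp_shuffled}, shuffle DP directly yields $d_\eps(S_x \,\|\, S_y) \le \delta$.

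Next I would introduce a post-processing $\Phi\colon \calM^{nk} \to \calM^k$ that selects a uniformly random $k$-subset $\calF'$ of the $nk$ received message slots and outputs the corresponding messages, pre-shuffled. By Lemma~\ref{lem:post_process}, $d_\eps(\Phi(S_x) \,\|\, \Phi(S_y)) \le \delta$. The argument then hinges on two distributional identifications. First, $\Phi(S_y) = \Rrand(y)$: ``run $R(y)$ on $n$ i.i.d.\ inputs, shuffle all $nk$ messages, and take a uniform $k$-subset'' and ``run $R(y)$ $n$ times and take a uniform $k$-subset of $[n]\times[k]$'' produce the same law, because in both cases one obtains a uniformly random unordered $k$-sub-multiset of the $nk$ exchangeable messages, then pre-shuffled, which is exactly the definition of $\Rrand(y)$. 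Second, for $\Phi(S_x)$ the shuffler places the $k$ messages output by $R(x)$ at a uniformly random $k$-subset $J\subseteq [nk]$, independently of $\calF'$; hence the event $G:=\{\calF' = J\}$ has probability exactly $p := 1/\binom{nk}{k}$, and conditional on $G$ the output of $\Phi(S_x)$ is distributed as $R(x)$, using the pre-shuffle property already baked into the randomizer.

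Combining these observations yields the pointwise lower bound $\Phi(S_x)(z) \ge p \cdot R(x)(z)$ for every $z \in \calM^k$. The inequality $[pR(x)(z) - e^\eps \Rrand(y)(z)]_+ \le [\Phi(S_x)(z) - e^\eps \Rrand(y)(z)]_+$ and post-processing then give
\[
p \sum_z \bigl[R(x)(z) - (e^\eps/p)\Rrand(y)(z)\bigr]_+ \;\le\; d_\eps\!\bigl(\Phi(S_x) \,\|\, \Rrand(y)\bigr) \;\le\; \delta.
\]
Substituting the Stirling-type bound $1/p = \binom{nk}{k} \le (en)^k = e^{k(1+\ln n)}$ absorbs the $1/p$ into the multiplicative part of the hockey-stick divergence, giving $d_{\eps + k(1+\ln n)}(R(x)\,\|\,\Rrand(y)) \le \delta$, which is exactly the claimed $(\eps + k(1+\ln n),\delta)$-pseudo-local privacy.

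The hardest part will be the distributional identification $\Phi(S_y) = \Rrand(y)$: one must carefully commute the pre-shuffle inside each $R$-call, the $n$-fold product, the external shuffle by $S$, and the uniform $k$-subset selection, verifying that all four permutations collapse to the distribution prescribed in the definition of $\Rrand$. A smaller but delicate accounting issue is routing the $1/p$ blow-up entirely into the multiplicative privacy parameter without inflating the additive $\delta$; the cleanest way I see is to apply Proposition~\ref{prop:key-facts-hockey-stick}(2) to the mixture decomposition $\Phi(S_x) = p\cdot R(x) + (1-p)\cdot H$ with a matching decomposition $\Rrand(y) = p\cdot \Rrand(y) + (1-p)\cdot \Rrand(y)$, choosing the weights so that the $\ln(\beta_i/\alpha_i)$ correction absorbs the $\ln(1/p)$ factor. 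Once this is settled the rest is a routine hockey-stick calculation.
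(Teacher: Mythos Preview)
Your two distributional identifications are correct, and the displayed inequality
\[
p \sum_z \bigl[R(x)(z) - (e^\eps/p)\,\Rrand(y)(z)\bigr]_+ \;\le\; d_\eps\!\bigl(\Phi(S_x) \,\|\, \Rrand(y)\bigr) \;\le\; \delta
\]
is valid. But this yields only $d_{\eps+\ln(1/p)}(R(x)\|\Rrand(y)) \le \delta/p$, not $\le\delta$: the outer factor $p$ does not get absorbed into the multiplicative parameter, it divides the right-hand side. Your proposed repair via Proposition~\ref{prop:key-facts-hockey-stick}(2) cannot close this gap. That proposition \emph{upper}-bounds the divergence of a mixture by a convex combination of component divergences; applying it with $\Phi(S_x)=p\cdot R(x)+(1-p)\cdot H$ on the left gives $d_\eps(\Phi(S_x)\|\Rrand(y)) \le p\,d_{\eps+\ln(\beta_1/p)}(R(x)\|\Rrand(y))+\cdots$, an inequality in the wrong direction for upper-bounding $d_{\eps'}(R(x)\|\Rrand(y))$. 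No choice of weights reverses it. As written, your argument therefore proves only $(\eps+k(1+\ln n),\,(en)^k\delta)$-pseudo-local privacy, which is strictly weaker than claimed.

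The paper avoids this by never sampling a random $k$-subset. It fixes $E\subseteq\calM^k$ and defines the event $\calE$ on the \emph{full} $nk$-message transcript: ``there exist $k$ messages constituting $E$.'' Shuffle DP then gives $\Pr[P_R(xy^{n-1})\in\calE]\le e^\eps\Pr[P_R(y^n)\in\calE]+\delta$ with $\delta$ untouched. On the $xy^{n-1}$ side, user~1's own $k$ messages already witness $\calE$, so $\Pr[P_R(xy^{n-1})\in\calE]\ge\Pr[R(x)\in E]$ with no loss. The $\binom{nk}{k}$ factor appears only via a union bound on the $y^n$ side, $\Pr[P_R(y^n)\in\calE]\le\binom{nk}{k}\Pr[\Rrand(y)\in E]$, and hence multiplies the $e^\eps$ coefficient alone. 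In your sampling approach the combinatorial factor lands on the $R(x)$ side instead, and inverting it necessarily inflates both the multiplicative and the additive term.
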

\begin{proof}
	Suppose otherwise, i.e., that there are $x,y$ and $E \subseteq \calM^k$ such that
	\begin{align*}
	\Pr[R(x) \in E] > (e n)^{k} e^{\eps} \cdot \Ex_{\calF \leftarrow \calD_{n,k}} \left[\Pr\left[R^{\calF}(y) \in E\right]\right] + \delta.
	\end{align*}
	
	Note that since both $R$ and $R^{\calF}$ pre-shuffle their outputs before outputting them, we can assume that $E$ is a union of equivalence classes of $k$-tuples (we say two $k$-tuples $u,v \in \calM^k$ are equivalent if $v$ can be obtained by $u$ via applying a permutation).\footnote{Too see this, we can take $E$ to be $\{ z : R(x)_z > (e n)^{k} e^{\eps} \Rrand(y)_z \wedge z \in \calM^k \}$. One can see that if $u$ and $v$ are equivalent up to a permutation, then either both $u$ and $v$ are in $E$, or neither of them is in $E$.}
	
	Consider two datasets $X_0 = y^n$ and $X_1 = y^{n-1} x$. Let $P$ be the corresponding $\shuffledDP$ protocol with randomizer $R$. For a dataset $X$, we use $P_R(X)$ to denote the random variable of the transcript of $P$ before shuffling. That is, for a dataset $X = (x_i)_{i \in [n]}$, $P_R(X)$ is the concatenation of all $R(X_i)$ for $i$ from $1$ to $n$.
	
	We now define an event $\calE$ as ``there exist $k$ messages in the transcript of $P$ constituting the event $E$''.
	It immediately follows that 
	\begin{align} \label{eq:lower-bound-shuffled}
	\Pr[P_R(X_1) \in \calE] \ge \Pr[R(x) \in E].
	\end{align}
	Furthermore, we claim that
	\begin{align} \label{eq:upper-bound-shuffled}
	\Pr[P_R(X_0) \in \calE] \le \binom{kn}{k}\cdot \Ex_{\calF \leftarrow \calD_{n,k}} \left[\Pr\left[R^{\calF}(y) \in E\right]\right].
	\end{align} 
	To see why the above inequality holds, note that if we pick $k$ messages from $P_R(X_0)$, depending on which users these messages come from, the probability that they constitute $E$ is bounded by $\Pr[R^{\calF}(y) \in E]$ for a certain $\calF \in \calB_{n,k}$. 
	
	Moreover, if we pick these $k$ messages uniformly at random from all $\binom{kn}{k}$ possible $k$-tuples, the corresponding $\calF$ is distributed according to $\calD_{n,k}$. Therefore, we can apply a union bound over all $\binom{kn}{k}$ possible $k$-tuples and sum up the corresponding $\Pr[R^{\calF}(y) \in E]$ to obtain an upper bound on $\Pr[P_R(X_0) \in \calE]$. The aforementioned sum is precisely $\binom{kn}{k}$ times the expectation $\Ex_{\calF \leftarrow \calD_{n,k}} \left[\Pr\left[R^{\calF}(y) \in E\right]\right]$.

	Since $\binom{kn}{k} \le (e n)^k$, we may combine~\eqref{eq:lower-bound-shuffled} and~\eqref{eq:upper-bound-shuffled} to obtain
	\begin{equation}
	\Pr[P_R(X_1) \in \calE] >  e^{\eps} \cdot \Pr[P_R(X_0) \in \calE] + \delta. \label{eq:contradiction}
	\end{equation}
	
	Finally, note that applying a random permutation to the transcript does not change whether the event $\calE$ occurs. Therefore,~\eqref{eq:contradiction} contradicts the assumption that $R$ is $(\eps,\delta)$-$\shuffledDP$.
\end{proof}
From Remark~\ref{remark:pseudo-local-implies-dominated}, we get the following corollary:

\begin{cor}\label{cor:SDP-to-dominated}
	If $R$ is $(\eps,\delta)$-$\shuffledDP^k$ on $n$ users, then it is $(\eps + k (1 + \ln n),\delta)$-dominated.
\end{cor}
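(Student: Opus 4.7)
The plan is to derive Corollary~\ref{cor:SDP-to-dominated} essentially as an immediate consequence of Lemma~\ref{lm:SDP-to-local-approx} combined with the simple observation recorded in Remark~\ref{remark:pseudo-local-implies-dominated}. Concretely, I would invoke Lemma~\ref{lm:SDP-to-local-approx} on the given randomizer $R$ to conclude that $R$ is $(\eps + k(1+\ln n),\delta)$-pseudo-locally private, meaning that for every $x,y \in \calX$ and every $E \subseteq \calM^k$,
\[
\Pr[R(x) \in E] \le e^{\eps + k(1+\ln n)} \cdot \Pr[\Rrand(y) \in E] + \delta.
\]

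Next I would fix an arbitrary reference element $y^{*} \in \calX$ and set the candidate dominating distribution to be $\calD := \Rrand(y^{*})$, which is a well-defined distribution on $\calM^{k}$. Specializing the pseudo-local privacy inequality above with $y = y^{*}$ yields, for every $x \in \calX$ and every $E \subseteq \calM^k$,
\[
\Pr[R(x) \in E] \le e^{\eps + k(1+\ln n)} \cdot \Pr_{\calD}[E] + \delta,
\]
which is exactly the definition of $R$ being $(\eps + k(1+\ln n),\delta)$-dominated by $\calD$, completing the proof.

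There is no real obstacle here: all the technical work was carried out in the proof of Lemma~\ref{lm:SDP-to-local-approx} (where the combinatorial factor $(en)^k \le e^{k(1+\ln n)}$ arose from bounding $\binom{kn}{k}$ and averaging over $\calF \leftarrow \calD_{n,k}$). The only step worth highlighting is that pseudo-local privacy is strictly stronger than being dominated, since it requires the same dominating family (parameterized by $y$) to work simultaneously; but for our purposes any single choice of $y^{*}$ suffices, which is precisely what Remark~\ref{remark:pseudo-local-implies-dominated} exploits.
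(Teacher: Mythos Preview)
Your proposal is correct and mirrors the paper's own argument exactly: the paper states this corollary as an immediate consequence of Lemma~\ref{lm:SDP-to-local-approx} together with Remark~\ref{remark:pseudo-local-implies-dominated}, which is precisely the two-step combination you describe. There is nothing to add.
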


Next, we generalize Corollary~\ref{cor:SDP-to-dominated} to the public-coin case:

\begin{lemma}\label{lm:SDP-to-dominated-public-coin}
	If $R$ is $(\eps,\delta)$-$\shuffledDP^k$ in the $n$-user public-coin setting with public randomness from $\calDpb$ and $\tau = \eps + k (1 + \ln n)$, then there is a family of distributions $\{ \calD_{\alpha} \}_{\alpha \in \supp(\calDpb)}$ over $\calM^k$ such that for every distribution $\mu$ over $\calX$,
	\[
	\Ex_{\alpha \leftarrow \calDpb} d_{\tau}\left(\Ex_{x \leftarrow \mu} R_{\alpha}(x)||\calD_{\alpha}\right)\le\delta.
	\]
	
	In other words, there are reals $\{ \delta_\alpha \}_{\alpha \in \supp(\calDpb)}$ such that $\Ex_{\alpha \leftarrow \calDpb}[\delta_\alpha] \le \delta$ and $R_\alpha$ is $(\tau,\delta_\alpha,\mu)$-dominated by $\calD_\alpha$.
\end{lemma}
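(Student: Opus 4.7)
The plan is to essentially run the proof of Lemma~\ref{lm:SDP-to-local-approx} (equivalently, Corollary~\ref{cor:SDP-to-dominated}) pointwise in the public coin $\alpha$, and then use convexity of hockey-stick divergence together with the public-coin shuffle DP guarantee to average out everything at the end. Fix an arbitrary $y^\star \in \calX$ and define $\calD_\alpha := \Rrand_\alpha(y^\star)$, where $\Rrand_\alpha$ denotes the randomizer obtained by applying the $\Rrand$ construction (sampling $\calF \leftarrow \calD_{n,k}$ and simulating $R_\alpha^{\calF}$) to the private-coin randomizer $R_\alpha$. This choice of $\calD_\alpha$ does not depend on $\mu$, as required.

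The main technical step is to show that, for every fixed $\alpha \in \supp(\calDpb)$ and every $x \in \calX$,
\[
d_\tau\bigl(R_\alpha(x) \,\|\, \Rrand_\alpha(y^\star)\bigr) \;\le\; d_\eps\bigl(P_{R_\alpha}(X_1) \,\|\, P_{R_\alpha}(X_0)\bigr),
\]
where $X_0 = (y^\star)^n$, $X_1 = (y^\star)^{n-1} x$, and $P_{R_\alpha}(X)$ denotes the (unshuffled) transcript under randomizer $R_\alpha$. To see this, for any event $E \subseteq \calM^k$ define $\calE$ as ``some $k$-subset of the $n$ messages in the transcript constitutes an element of $E$''. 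As in the proof of Lemma~\ref{lm:SDP-to-local-approx}, $\Pr[P_{R_\alpha}(X_1) \in \calE] \ge \Pr[R_\alpha(x) \in E]$ and $\Pr[P_{R_\alpha}(X_0) \in \calE] \le \binom{kn}{k}\cdot \Pr[\Rrand_\alpha(y^\star) \in E]$. Using $\binom{kn}{k} \le (en)^k$ and $\tau = \eps + k(1+\ln n)$, subtracting yields
\[
\Pr[R_\alpha(x) \in E] - e^\tau \Pr[\Rrand_\alpha(y^\star) \in E] \;\le\; \Pr[P_{R_\alpha}(X_1) \in \calE] - e^\eps \Pr[P_{R_\alpha}(X_0) \in \calE],
\]
and taking the supremum over $E$ (on the right-hand side this supremum is at most $d_\eps$ over all events) gives the displayed inequality. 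Crucially, $\calE$ is permutation-invariant, so the right-hand side is unchanged whether $P_{R_\alpha}$ shuffles or not.

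To conclude, fix any distribution $\mu$ on $\calX$. By the convexity of hockey-stick divergence (Proposition~\ref{prop:key-facts-hockey-stick}, or directly from the definition),
\[
d_\tau\!\left(\Ex_{x \leftarrow \mu} R_\alpha(x) \,\Bigl\|\, \calD_\alpha\right) \;\le\; \Ex_{x \leftarrow \mu}\, d_\tau\bigl(R_\alpha(x) \,\|\, \Rrand_\alpha(y^\star)\bigr).
\]
Taking expectation over $\alpha \leftarrow \calDpb$, swapping the two expectations, and applying the pointwise bound from the previous paragraph yields
\[
\Ex_{\alpha \leftarrow \calDpb} d_\tau\!\left(\Ex_{x \leftarrow \mu} R_\alpha(x) \,\Bigl\|\, \calD_\alpha\right) \;\le\; \Ex_{x \leftarrow \mu} \Ex_{\alpha \leftarrow \calDpb} d_\eps\bigl(P_{R_\alpha}(X_1) \,\|\, P_{R_\alpha}(X_0)\bigr) \;\le\; \delta,
\]
where the final inequality uses that, by the public-coin shuffle DP hypothesis applied to the neighboring pair $(X_0, X_1)$, the inner $\alpha$-expectation is bounded by $\delta$ (this identity between the joint hockey-stick divergence $d_\eps((\alpha, \cdot)\|(\alpha, \cdot))$ and $\Ex_\alpha d_\eps(\cdot\|\cdot)$ follows directly from the definition). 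This proves the lemma; the ``in other words'' statement follows by taking $\delta_\alpha := d_\tau(\Ex_{x \leftarrow \mu} R_\alpha(x) \,\|\, \calD_\alpha)$. The only real subtlety, and the one point to be careful about, is ensuring that the bound on $d_\tau$ is uniform across all $\mu$; this is handled precisely because $\calD_\alpha = \Rrand_\alpha(y^\star)$ is chosen before seeing $\mu$ and the final averaging over $x \leftarrow \mu$ is done via convexity rather than by redefining the dominating distribution.
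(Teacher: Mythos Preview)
Your proposal is correct and follows essentially the same approach as the paper: define $\calD_\alpha = \Rrand_\alpha(y^\star)$, extract the pointwise bound $d_\tau(R_\alpha(x)\|\calD_\alpha) \le d_\eps(P_{R_\alpha}(X_1)\|P_{R_\alpha}(X_0))$ from the event-$\calE$ argument of Lemma~\ref{lm:SDP-to-local-approx}, and then average over $\alpha$ using the public-coin shuffle DP guarantee. You are simply more explicit than the paper about the convexity step that passes from individual $x$ to the mixture $\Ex_{x\leftarrow\mu} R_\alpha(x)$, which the paper leaves implicit.
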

\begin{proof}
	From the proof of Lemma~\ref{lm:SDP-to-local-approx}, it follows that for all $x,y \in \calX$ and $\alpha \in \supp(\calDpb)$, we have that
	\[
	d_{\tau}(R_{\alpha}(x)||\Rrand_\alpha(y)) \le d_{\eps}(P_{R_\alpha}(xy^{n-1})||P_{R_\alpha}(y^{n})).
	\]
	Since $R$ is $(\eps,\delta)$-$\shuffledDP$ on $n$ users, it follows that
	\[
	\Ex_{\alpha \leftarrow \calDpb}[d_{\eps}(P_{R_\alpha}(xy^{n-1})||P_{R_\alpha}(y^{n}))] \le \delta.
	\]
	Putting the above two inequalities together, for all $x,y \in \calX$, we get that
	\[
	\Ex_{\alpha \leftarrow \calDpb} [d_{\tau}(R_{\alpha}(x)||\Rrand_\alpha(y))] \le \delta.
	\]
	We now finish the proof by fixing $y^* \in \calX$, and setting $\calD_{\alpha} = \Rrand_\alpha(y^*)$ for every $\alpha \in \supp(\calDpb)$.
\end{proof}

	\subsection{Bounding KL Divergence for Dominated Randomizers}



In this subsection, we prove the technical lemma bounding average-case KL divergences for dominated randomizers.

As before, we use $\calX$ and $\calM$ to denote the input space and the message space respectively. For a local randomizer $R \colon \calX \to \calM$, we let $p_{x,z} = \Pr[R(x) = z]$.

Let $\mu$ be a distribution on $\calX$. 
Let $\calI$ be an index set, $\pi$ be a distribution on $\calI$, and $\{ \lambda_{v} \}_{v \in \calI}$ be a family of distributions on $\calX$.  For a constant $\tau$, we say that $\mu$ \emph{$\tau$-dominates} $\{ \lambda_v \}$ if for all $x \in \calX$ and $v \in \calI$, it holds that $(\lambda_v)_x \le \tau \cdot \mu_x$.

\begin{theorem}\label{theo:many-vs-one}
For a constant $\tau \ge 2$, let $\mu$ be a distribution which $\tau$-dominates a distribution family $\{ \lambda_v \}_{v \in \calI}$. Let $\pi$ be a distribution on $\calI$. Let $W \colon \mathbb{R} \to \mathbb{R}$ be a concave function such that for all functions $\psi \colon \calX \to \R^{\ge 0}$ satisfying $\psi(\mu) \le 1$, it holds that 
		\[
		\Ex_{v \leftarrow \pi} \left[(\psi(\lambda_v) - \psi(\mu))^2\right] \le W(\|\psi\|_{\infty}).
		\] 	
Then for an $(\eps,\delta,\mu)$-dominated randomizer $R$, it holds that
\[
	\Ex_{v \leftarrow \pi} [\KL(R(\lambda_v)||R(\mu))] \le 2 W(2e^{\eps}) + 4 (\tau - 1)^2 \cdot \delta.
\]
\end{theorem}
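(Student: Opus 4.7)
The plan is to reduce $\KL$ to $\chi^2$ via $\KL(P\|Q)\le\chi^2(P\|Q)$, then apply the theorem's $W$-hypothesis message-by-message and aggregate via concavity of $W$. Concretely, I would write $\Ex_{v\leftarrow\pi}[\chi^2(R(\lambda_v)\|R(\mu))] = \sum_{z\in\calM} \Ex_v[(R(\lambda_v)_z - R(\mu)_z)^2]/R(\mu)_z$, and for each $z$ apply the hypothesis to the test function $\psi_z(x) := p_{x,z}/R(\mu)_z$ (where $p_{x,z} := \Pr[R(x) = z]$), which satisfies $\psi_z(\mu) = 1$ and has $\|\psi_z\|_\infty = M_z/R(\mu)_z$ with $M_z := \max_x p_{x,z}$. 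The hypothesis then yields $\Ex_v[(R(\lambda_v)_z - R(\mu)_z)^2] \le R(\mu)_z^2 \cdot W(M_z/R(\mu)_z)$. Dividing by $R(\mu)_z$, summing over $z$, and applying concavity of $W$ with weights $R(\mu)_z$ (which sum to one) gives $\Ex_v[\chi^2(R(\lambda_v)\|R(\mu))] \le W(\sum_z M_z)$.

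To finish, I would bound $\sum_z M_z = \sum_z \max_x p_{x,z}$ by $2e^{\eps}$ up to an $O(\delta)$ slack coming from the $(\eps, \delta, \mu)$-dominated assumption. I would split $\calM$ into the ``bad'' set $B := \{z : R(\mu)_z > 2 e^{\eps}\calD_z\}$ and its complement $G := \calM \setminus B$. From $\sum_z[R(\mu)_z - e^{\eps}\calD_z]_+ \le \delta$ and the fact that on $B$ each summand exceeds $R(\mu)_z/2$, we get $\sum_{z\in B} R(\mu)_z \le 2\delta$. For the bad-part contribution, $\lambda_v \le \tau\mu$ implies $R(\lambda_v)_z \le \tau R(\mu)_z$, and combined with $\tau \ge 2$ yields $|R(\lambda_v)_z - R(\mu)_z| \le (\tau - 1)R(\mu)_z$, so the bad contribution to $\chi^2$ is at most $(\tau-1)^2 \sum_{z\in B} R(\mu)_z \le 2(\tau-1)^2 \delta$ uniformly in $v$.

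For the good-part contribution I would redo the concavity step restricted to $z \in G$, where $R(\mu)_z \le 2 e^{\eps}\calD_z$ pointwise. This allows us to aggregate the weights $R(\mu)_z$ against the $\calD$-mass instead: choosing $c_z = 2e^{\eps}\calD_z$ so that $\psi_z(\mu)\le 1$, the hypothesis and concavity combine to give an outer factor of $2e^{\eps}$ (from $\sum_{z\in G} 2e^{\eps}\calD_z \le 2e^{\eps}$) and an argument of $W$ that is controlled by $\sum_{z\in G}\calD_z \cdot (M_z/\calD_z)$, which will be at most $2e^{\eps}$. Together with the bad-part estimate and an extra factor of $2$ from bounding the cross-terms and from $\KL \le \chi^2$ slack, this yields the target $2 W(2e^{\eps}) + 4(\tau-1)^2 \delta$.

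The main obstacle is controlling $M_z = \max_x p_{x,z}$: the $(\eps,\delta,\mu)$-dominance assumption constrains only $R(\mu)$, not individual $R(x)$'s, so $M_z$ may exceed $e^{\eps}\calD_z$ without any direct restriction. To overcome this, I would pre-process $R(\mu)$ via the standard hockey-stick decomposition $R(\mu) = (1-\delta')Q + \delta' B'$ with $Q \le \frac{e^{\eps}}{1-\delta'}\calD$ pointwise and $\delta'\le \delta$, and replace $R(\mu)$ in the $\KL$ argument by $Q$ using $\KL(R(\lambda_v)\|R(\mu))\le \KL(R(\lambda_v)\|(1-\delta')Q) = \KL(R(\lambda_v)\|Q) - \log(1-\delta')$. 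The residual $-\log(1-\delta')\le O(\delta)$ folds into the $4(\tau-1)^2\delta$ term, and the new reference $Q$ then obeys the pointwise bound $Q_z\le 2e^{\eps}\calD_z$ everywhere, which is exactly what is needed to make the good-part concavity calculation uniform and deliver the clean $2W(2e^{\eps})$ bound.
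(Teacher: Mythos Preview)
Your plan has a genuine gap, and the fix you propose does not close it. The obstacle you yourself name --- that $M_z=\max_x p_{x,z}$ is unconstrained by $(\eps,\delta,\mu)$-dominance --- is real, but the hockey-stick decomposition of $R(\mu)$ does nothing to tame it. Writing $R(\mu)=(1-\delta')Q+\delta'B'$ with $Q_z\le 2e^{\eps}\calD_z$ pointwise only controls the \emph{reference} measure; the test functions $\psi_z(x)=p_{x,z}/Q_z$ still satisfy $\|\psi_z\|_\infty=M_z/Q_z$, and $M_z$ remains a pointwise maximum over $x$ that no hypothesis bounds. Concretely, take $\calX=\calM=[n]$, $\mu$ uniform, and $R(i)$ the point mass at $i$; then $R(\mu)=\calD$ is uniform, $\delta=0$, every $z$ is ``good'', yet $\sum_z M_z=n$, not $2e^{\eps}$. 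So the claim ``$\sum_{z\in G}\calD_z\cdot(M_z/\calD_z)\le 2e^{\eps}$'' in your good-part paragraph is simply false, and the final step of your argument cannot be completed as written.

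The paper's proof sidesteps $M_z$ entirely by truncating in $x$ rather than in $z$. Writing $f_z(x)=p_{x,z}/R(\mu)_z$ and $L_z=2e^{\eps}\calD_z/R(\mu)_z$, it sets $g_z(x)=f_z(x)\cdot\mathbb{1}[f_z(x)\le L_z]$ and $h_z=f_z-g_z$. Now $\|g_z\|_\infty\le L_z$ \emph{by construction}, with no appeal to any bound on $M_z$; since $\Ex_{z\leftarrow R(\mu)}[L_z]=2e^{\eps}\sum_z\calD_z=2e^{\eps}$, concavity of $W$ yields the $W(2e^{\eps})$ term directly. The residual $h_z$ is supported on the exceptional set $\calT_x=\{z:p_{x,z}>2e^{\eps}\calD_z\}$, and its contribution is bounded via $|h_z(\lambda_v)-h_z(\mu)|\le(\tau-1)h_z(\mu)$ (from $\tau$-domination) together with $\Ex_{z\leftarrow R(\mu)}[h_z(\mu)]=\Ex_{x\leftarrow\mu}[p_{x,\calT_x}]\le 2\delta$ from the dominance hypothesis. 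The crucial difference from your plan is that the paper partitions each $f_z$ across \emph{inputs} $x$ (bounded part vs.\ overshoot), whereas you partition across \emph{messages} $z$ (good vs.\ bad); only the former produces test functions with controlled $\|\cdot\|_\infty$, which is exactly what the $W$-hypothesis consumes.
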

\begin{proof}
	Let $Q = R(\mu)$. Recall that $p_{x,z} = \Pr[R(x) = z]$. We also set $q_z = \Pr[Q = z]$ and $f_z(x) = \frac{p_{x,z}}{q_z}$.
	
	It follows from the assumption that there exists a distribution $\qD$ that $(\eps,\delta,\mu)$-dominates $R$. Noting that $\chi^2$-divergence upper-bounds KL divergence (see Section~\ref{sec:divergences}), it follows that
	\begin{align*}
	\Ex_{v \leftarrow \pi} \KL(R(\lambda_v) || Q) &\le \Ex_{v \leftarrow \pi} \chi^2(R(\lambda_v) || Q) \\ 
	&\le \Ex_{v \leftarrow \pi} \Ex_{z \leftarrow Q} \left[ \frac{R(\lambda_v)_z - q_z}{q_z} \right]^2\\
	&= \Ex_{z \leftarrow Q} \Ex_{v \leftarrow \pi} \left[ f_z(\lambda_v) - 1 \right]^2.
	\end{align*}
	We will further decompose $f_z = g_z + h_z$ so that $\|g_z\|_{\infty}$ is small and $\Ex_{z\leftarrow Q} h_z(\mu)$ is small. Formally, for each $z \in \calM$, we define a truncation level 
	\[
	L_z = \frac{2 e^{\eps} \cdot q_z^{\calD}}{q_z}.
	\] 
	Then, we define $g_z$ and $h_z$ as follows
	\[
	g_{z}(x) := \begin{cases}
	f_{z}(x) & \text{if } f_{z}(x) \le L_z,\\
	0         & \text{otherwise},
	\end{cases}
	\quad\quad\text{and}\quad\quad
	h_z(x) := f_z(x) - g_z(x).
	\]
	Fix a $z$ in the support of $Q$. Noting that 
	$$
	g_z(\mu) + h_z(\mu) = f_z(\mu) = \frac{\Ex_{x \leftarrow \mu}[p_{x,z}]}{q_z} = 1,
	$$
	we get
	\begin{align}
	\Ex_{v \leftarrow \pi} \left[ f_z(\lambda_v) - 1 \right]^2 &= 
	\Ex_{v \leftarrow \pi} \left[ (g_z(\lambda_v) - g_z(\mu)) + (h_z(\lambda_v) - h_z(\mu)) \right]^2 \notag \\
	&\le 2 \cdot \Ex_{v \leftarrow \pi} [(g_z(\lambda_v) - g_z(\mu))^2] +  2 \cdot \Ex_{v \leftarrow \pi} [(h_z(\lambda_v) - h_z(\mu))^2]. \label{eq:KL-two-sqaures}
	\end{align}
	To simplify the notation, in the following we set $\hat{g}_z(\lambda_v) := g_z(\lambda_v) - g_z(\mu)$ and 
	$\hat{h}_z(\lambda_v) = h_z(\lambda_v) - h_z(\mu)$. We will bound the two terms in~\eqref{eq:KL-two-sqaures} separately. 
	
	\paragraph{Bounding $\Ex_{v \leftarrow \pi} \hat{g}_z(\lambda_v)^2$.} 
	
	Since $W$ is concave, noting that $\|g_z\|_{\infty} \le L_z$ and $g_z(\mu) \le 1$, it follows that
	\[
	\Ex_{z \leftarrow Q}\Ex_{v\leftarrow \pi} \hat{g}_z(\lambda_v)^2 \le \Ex_{z \leftarrow Q} W(L_z) \le W\left(\Ex_{z\leftarrow Q} L_z\right),
	\]
	where the second step uses Jensen's inequality. From the definition of $L_z$, we have that
	\[
	\Ex_{z\leftarrow Q} L_z = \sum_{z} q_z \cdot \frac{2e^{\eps} \cdot \qD_z}{q_z} = 2 e^{\eps} \cdot \sum_{z} \qD_z = 2 e^{\eps},
	\]
	where the last equality follows from the fact that $\qD$ is a distribution. We therefore obtain
	\[
	\Ex_{z \leftarrow Q}\Ex_{v\leftarrow \pi} \hat{g}_z(\lambda_v)^2 \le W(2 e^{\eps}).
	\]

	\paragraph{Bounding $\Ex_{v \leftarrow \pi} \hat{h}_z(\lambda_v)^2$.}  Since $\mu$ $\tau$-dominates $\{ \lambda_v \}$, it follows that
	$$
	|\hat{h}_z(\lambda_v)| = |h_z(\lambda_v) - h_z(\mu)| \le \max\left\{h_z(\mu), \tau \cdot h_z(\mu) - h_z(\mu)\right\} \le (\tau - 1) h_z(\mu).
	$$
	Therefore,
	\[
	\Ex_{z \leftarrow Q}\Ex_{v \leftarrow \pi} \hat{h}_z(\lambda_v)^2 \le (\tau - 1)^2 \cdot \Ex_{z \leftarrow Q} h_z(\mu)^2 \le (\tau - 1)^2 \cdot \Ex_{z \leftarrow Q} h_z(\mu),
	\]
	where the last inequality holds because $h_z(\mu) \le 1$.
	
	By the definition of $h_z$, it follows that
	\[
	\Ex_{z \leftarrow Q} h_z(\mu) = \sum_{z \in \calM} \Ex_{x \leftarrow \mu}\Big[p_{x,z} \cdot \mathbb{1}[f_z(x) > L_z] \Big].
	\]
	Let $\mathcal{T}_x = \{z \in \calM : f_z(x) > L_z\}$. For $z \in \mathcal{T}_x$, we get that 
	\begin{align*}
	p_{x,z} &> L_z \cdot q_z \\
	&> \frac{2 e^{\eps} \cdot q_z^{\calD}}{q_z} \cdot q_z = 2 \cdot e^{\eps} \cdot \qD_z.
	\end{align*}
	In particular, the above means that $\Ex_{x \leftarrow \mu} p_{x,\mathcal{T}_x} \le 2 \delta$, as otherwise 
	\[
	\Ex_{x \leftarrow \mu}[p_{x,\mathcal{T}_x}] -  e^{\eps} \cdot \qD_{\mathcal{T}_x} \ge \Ex_{x \leftarrow \mu}[p_{x,\mathcal{T}_x}]/2 > \delta,
	\]
	contradicting the fact that $R$ is $(\eps,\delta,\mu)$-dominated by $\qD$. Hence, we have
	\[
	\Ex_{z \leftarrow Q} h_z(\mu) = \sum_{z \in \calM} \Ex_{x \leftarrow \mu}\Big[p_{x,z} \cdot \mathbb{1}[f_z(x) > L_z] \Big] = \Ex_{x \leftarrow \mu} p_{x,\mathcal{T}_x} \le 2 \delta.
	\]
	Putting everything together, it follows that
	\[
	\Ex_{z \leftarrow Q}\Ex_{v \leftarrow \pi} \hat{h}_z(\lambda_v)^2 \le 2 (\tau - 1)^2 \cdot \delta.
	\]
	
	\paragraph*{Final Bound.} Combining our bounds on $\Ex_{z \leftarrow Q}\Ex_{v \leftarrow \pi} \hat{g}_z(\alpha)^2$ and $\Ex_{z \leftarrow Q}\Ex_{v \leftarrow \pi} \hat{h}_z(\lambda_v)^2$, we conclude that
	\[
	\Ex_{v \leftarrow \pi} [\KL(R(\lambda_v)||R(\mu))] \le 2 W(2e^{\eps}) + 4 (\tau - 1)^2 \cdot \delta.
	\qedhere
	\]
\end{proof}

\section{Lower Bounds for \selection and \paritylearning}\label{sec:selection_parity_learning}

In this section, we prove lower bounds for \selection and \paritylearning in the $\shuffledDP$ model. We begin with some notation.
\subsection{Notation}
For $(\ell,s) \in [2] \times \{0,1\}^D$, let $\calD_{\ell,s}$ be the uniform distribution on $\{x \in \{0,1\}^D : \langle x,s\rangle = \ell \}$. Recall that $\calU_D$ is the uniform distribution on $\{0,1\}^D$.

For $j \in [D]$, let $e_j$ be the $D$-bit string such that only the $j$-th bit is $1$, and the other bits are $0$. For $(\ell,j) \in [2] \times [D]$, we denote by $\calD_{\ell,e_j}$ the uniform distribution on all length-$D$ Boolean strings with $j$-th bit being $\ell$. For simplicity, we also use $\calD_{\ell,j}$ to denote $\calD_{\ell,e_j}$ when the context is clear.

We need the following simple proposition.

\begin{prop}\label{prop:fourier-fact}
	For every function $f\colon \{0,1\}^D \to \R$ and $s \in \{0,1\}^D$,
	\[
	\hat{f}(s) = \frac{1}{2} (f(\calD_{0,s}) - f(\calD_{1,s})).
	\]
\end{prop}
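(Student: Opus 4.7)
The plan is to unfold the two sides directly from their definitions and observe that the statement is essentially the identity that decomposes $\calU_D$ into its two cosets with respect to the linear functional $x \mapsto \langle s,x\rangle \pmod 2$. Concretely, I will assume $s \neq 0^D$ (the case $s = 0^D$ is degenerate since $\calD_{1,0^D}$ has empty support, but this case is never used in the applications that follow). For $s \neq 0^D$, the map $x \mapsto \langle s,x\rangle \bmod 2$ is a surjective linear form, so its two fibers $\{x : \langle s,x\rangle = 0\}$ and $\{x : \langle s,x\rangle = 1\}$ each contain exactly $2^{D-1}$ points of $\{0,1\}^D$; hence
\[
\calU_D \;=\; \tfrac{1}{2}\calD_{0,s} + \tfrac{1}{2}\calD_{1,s},
\]
interpreting the right-hand side as a mixture.

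Next, I would recall from the paper's definition that $\hat{f}(s)$ means $\hat{f}(\chi_s) = \Ex_{x \sim \calU_D}\bigl[f(x)\,(-1)^{\sum_{i\in\chi_s}x_i}\bigr] = \Ex_{x \sim \calU_D}\bigl[f(x)\,(-1)^{\langle s,x\rangle}\bigr]$. On the support of $\calD_{0,s}$ the character $(-1)^{\langle s,x\rangle}$ is identically $+1$, and on the support of $\calD_{1,s}$ it is identically $-1$. Combining this with the mixture decomposition gives
\[
\hat{f}(s) \;=\; \tfrac{1}{2}\,\Ex_{x \sim \calD_{0,s}}\!\bigl[f(x)\cdot (+1)\bigr] + \tfrac{1}{2}\,\Ex_{x \sim \calD_{1,s}}\!\bigl[f(x)\cdot (-1)\bigr] \;=\; \tfrac{1}{2}\bigl(f(\calD_{0,s}) - f(\calD_{1,s})\bigr),
\]
which is the claimed identity.

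There is really no obstacle here; the only minor subtlety is the degenerate case $s = 0^D$, which I would dismiss in a single sentence as outside the intended usage (Proposition~\ref{prop:fourier-fact} is invoked to control Fourier coefficients indexed by nonzero characters such as singletons $\{e_j\}$ used in the \selection lower bound and by nonzero $s \in \{0,1\}^D$ used in the \paritylearning lower bound). Everything else is a two-line computation, so the proof itself should be no more than three or four lines in the final write-up.
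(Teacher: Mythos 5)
Your proposal is correct and follows essentially the same route as the paper's own proof, which also just unfolds the definition $\hat{f}(s) = \Ex_{x \leftarrow \calU_D}[(-1)^{\langle s,x\rangle} f(x)]$ and splits the uniform expectation over the two fibers of $x \mapsto \langle s,x\rangle$. Your explicit remark about the degenerate case $s = 0^D$ is a harmless extra precaution that the paper omits; otherwise the two arguments are identical.
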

\begin{proof}
	By definition, we have that
	\[
	\hat{f}(s) = \Ex_{x \leftarrow \calU_{D}} (-1)^{\langle s,x \rangle} f(x)  = \frac{1}{2} (f(\calD_{0,s}) - f(\calD_{1,s})).\qedhere
	\]
\end{proof}
\subsection{Lower Bound for \selection}\label{sec:lowb-selection}

We begin with lower bounds for \selection.

\begin{lemma}\label{lm:KL-bound-sel}
	For $\eps > 0$, suppose $R$ is $(\eps, \delta, \calU_{D})$-dominated, then we have
	\[
	\Ex_{(\ell,j) \in [2]\times[D]} [\KL(R(\calD_{\ell,j}) || R(\calU_D))] \le O\left(\frac{\eps}{D} + \delta\right).
	\]
\end{lemma}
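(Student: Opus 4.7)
The plan is to apply Theorem~\ref{theo:many-vs-one} with $\mu = \calU_D$, index set $\calI = [2] \times [D]$, $\pi$ the uniform distribution on $\calI$, and the family $\{\lambda_{(\ell,j)}\}_{(\ell,j) \in \calI} = \{\calD_{\ell,j}\}_{(\ell,j) \in \calI}$. The first step is the trivial verification that $\calU_D$ $2$-dominates this family: for any $x \in \{0,1\}^D$, $(\calD_{\ell,j})_x = 2^{1-D} \cdot \mathbb{1}[x_j = \ell] \le 2 \cdot (\calU_D)_x$.

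The main step is to find a concave function $W$ controlling the second-moment quantity from Theorem~\ref{theo:many-vs-one}. Fix any $\psi\colon \{0,1\}^D \to \R^{\ge 0}$ with $\psi(\calU_D) \le 1$. Since $\calU_D = \frac{1}{2}\calD_{0,j} + \frac{1}{2}\calD_{1,j}$, we have $\psi(\calU_D) = \frac{1}{2}(\psi(\calD_{0,j}) + \psi(\calD_{1,j}))$, and combining this with Proposition~\ref{prop:fourier-fact} (applied to the standard basis vector $e_j$) yields
\[
\psi(\calD_{\ell,j}) - \psi(\calU_D) = (-1)^{\ell} \cdot \hat{\psi}(e_j).
\]
Averaging over $(\ell,j) \in [2] \times [D]$ therefore gives
\[
\Ex_{(\ell,j)} \bigl[(\psi(\calD_{\ell,j}) - \psi(\calU_D))^2\bigr] = \frac{1}{D}\sum_{j \in [D]} \hat{\psi}(e_j)^2 = \frac{\bfW^1[\psi]}{D}.
\]
Invoking the Level-1 Inequality (Lemma~\ref{lm:level-1}), we bound this by $\frac{6 \ln(\|\psi\|_\infty + 1)}{D}$, so we may take the concave function $W(L) := \frac{6 \ln(L+1)}{D}$.

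Finally, plugging $\tau = 2$ and this choice of $W$ into Theorem~\ref{theo:many-vs-one} gives
\[
\Ex_{(\ell,j)} [\KL(R(\calD_{\ell,j}) \,\|\, R(\calU_D))] \le 2 W(2e^{\eps}) + 4 \delta = \frac{12 \ln(2e^{\eps}+1)}{D} + 4 \delta = O\!\left(\frac{\eps}{D} + \delta\right),
\]
as desired. There is no substantive obstacle here---the whole content of the lemma is packaged into Theorem~\ref{theo:many-vs-one}, and the only thing that is Selection-specific is the Fourier identity together with the Level-1 Inequality bound on $\bfW^1[\psi]$; the fact that precisely the level-$1$ weights appear is what produces the $\ln$ bound (and hence the $\eps/D$ dependence), in contrast to the $e^{\eps}/D$ dependence that the analogous computation for $\DE$ yields via Parseval.
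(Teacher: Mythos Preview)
Your proposal is correct and follows essentially the same approach as the paper: set up Theorem~\ref{theo:many-vs-one} with $\mu=\calU_D$ and the family $\{\calD_{\ell,j}\}$, reduce the second-moment quantity to level-$1$ Fourier weight via Proposition~\ref{prop:fourier-fact}, bound it with the Level-1 Inequality, and plug in. If anything, your write-up is slightly more careful in tracking the constant $6\ln(L+1)$ from Lemma~\ref{lm:level-1} and the factor $4(\tau-1)^2=4$ from Theorem~\ref{theo:many-vs-one}.
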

\begin{proof}
	To apply Theorem~\ref{theo:many-vs-one}, we set the index set as $\calI = [2] \times [D]$, the distribution $\pi$ to the uniform distribution over $\calI$, $\{ \lambda_v \}_{v \in \calI} = \{ \calD_v \}_{v \in \calI}$, and $\mu = \calU_{D}$.
	
	Clearly, $\mu$ $2$-dominates $\{\lambda_v\}$. Let $f$ be a function such that $\|f\|_{\infty} = L$ and $f(\mu) \le 1$, it follows that
	\begin{align*}
	\Ex_{v \leftarrow \pi} (f(\mu) - f(\lambda_v))^2 &=\Ex_{(\ell,j) \in [2] \times [D]} (f(\calD_{\ell,j}) - f(\calU_D))^2\\
	&=\Ex_{(\ell,j) \in [2] \times [D]} \frac{1}{4}(f(\calD_{\ell,j}) - f(\calD_{1-\ell,j}))^2\\
	&=\Ex_{(\ell,j) \in [2] \times [D]} \hat{f}(\{j\})^2. \tag{\text{Proposition~\ref{prop:fourier-fact}}}
	\end{align*}
	By Lemma~\ref{lm:level-1}, it follows that
	\[
	\Ex_{v \leftarrow \pi} (f(\mu) - f(\lambda_v))^2 
	= \Ex_{(\ell,j) \in [2] \times [D]} \hat{f}(\{j\})^2
	\leq O\left( \frac{\ln L}{D} \right).
	\]
	Therefore, we can set $W(L) := \frac{C \cdot \ln L}{D}$ for a large enough constant $C$ and note that $W$ is a concave function. By Theorem~\ref{theo:many-vs-one}, it follows that
	\[
	\Ex_{(\ell,j) \in [2]\times[D]} [\KL(R(\calD_{\ell,j}) || R(\calU_D))] \le O( W(2e^{\eps}) + \delta ) \le O\left(\frac{\eps}{D} + \delta\right).
	\qedhere
	\]
\end{proof}

\begin{lemma}\label{lm:dominated-to-lowerbound}
	For a public-coin randomizer $R$ with public randomness from $\calDpb$, if there is a family of distributions $\{ \calD_{\alpha} \}_{\alpha \in \supp(\calDpb)}$ over $\calM^k$ such that
	\[
	\Ex_{\alpha \leftarrow \calDpb} d_{\eps}\left(\Ex_{x \leftarrow \calU_{D}} R_{\alpha}(x)||\calD_{\alpha}\right) \le o(1/D),
	\]
	then a public-coin protocol with randomizer $R$ needs at least $\Omega\left(\frac{D\log D}{\eps}\right)$ samples to solve \selection with probability at least $0.99$. 
\end{lemma}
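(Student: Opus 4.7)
The plan is to reduce the sample complexity of selection to a mutual-information estimate, upper-bound that mutual information via Lemma~\ref{lm:KL-bound-sel} on each slice of the public randomness, and conclude using Fano's inequality.

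First I set up the hard instance: draw $j^* \in [D]$ uniformly at random and give each user an i.i.d.\ sample from $\calD_{1,j^*}$, so every user's $j^*$-th bit equals $1$ while all other bits are independent uniform. Then the $j^*$-th column sum equals $n$, whereas a Hoeffding bound combined with a union bound over the other $D-1$ coordinates shows that every other column sum lies in $n/2 \pm O(\sqrt{n\log D})$ with probability at least $1 - 1/D$. We may assume $n = \Omega(\log D)$ (otherwise the claimed lower bound is trivial), so the selection criterion is satisfied only by $j = j^*$ with probability $1 - o(1)$, and any protocol solving \selection with probability $0.99$ must recover $j^*$ from the transcript $T$ (which includes the public random string $\alpha$ together with the shuffled messages) with probability at least $0.98$. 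Fano's inequality then yields $I(j^*; T) \geq \Omega(\log D)$.

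Second, I upper-bound $I(j^*; T)$ in terms of per-user KL divergences. Since $\alpha$ is sampled independently of $j^*$, we have $I(j^*; T) = \Ex_{\alpha \leftarrow \calDpb} I(j^*; T_{\mathrm{msg}} \mid \alpha)$, where $T_{\mathrm{msg}}$ denotes the shuffled messages. For each fixed $\alpha$, the ``golden rule'' gives $I(j^*; T_{\mathrm{msg}} \mid \alpha) \leq \Ex_{j^*}\!\left[\KL\bigl(\calP_{T_{\mathrm{msg}} \mid j^*, \alpha} \,\big\|\, Q_\alpha\bigr)\right]$ for any reference distribution $Q_\alpha$, and I choose $Q_\alpha$ to be the distribution of $T_{\mathrm{msg}}$ conditioned on $\alpha$ when the inputs are i.i.d.\ from $\calU_D$. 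Because the shuffler is post-processing and the randomizers are applied independently, this KL is at most the KL between the unshuffled message sequences, which tensorizes under i.i.d.\ inputs to $n \cdot \KL\bigl(R_\alpha(\calD_{1,j^*}) \,\big\|\, R_\alpha(\calU_D)\bigr)$.

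Finally I plug in Lemma~\ref{lm:KL-bound-sel}. The hypothesis says $R_\alpha$ is $(\eps, \delta_\alpha, \calU_D)$-dominated for some $\{\calD_\alpha\}$ with $\Ex_\alpha \delta_\alpha \le o(1/D)$. Lemma~\ref{lm:KL-bound-sel} therefore gives $\Ex_{(\ell,j) \in [2]\times[D]} \KL(R_\alpha(\calD_{\ell,j}) \,\|\, R_\alpha(\calU_D)) \leq O(\eps/D + \delta_\alpha)$ for each $\alpha$, and nonnegativity of KL yields $\Ex_{j^*} \KL(R_\alpha(\calD_{1,j^*}) \,\|\, R_\alpha(\calU_D)) \leq O(\eps/D + \delta_\alpha)$. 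Chaining all the steps,
\[
\Omega(\log D) \;\leq\; I(j^*; T) \;\leq\; n \cdot \Ex_\alpha\!\bigl[O(\eps/D + \delta_\alpha)\bigr] \;=\; O\!\left(\frac{n \eps}{D} + \frac{n \cdot o(1)}{D}\right),
\]
which rearranges to $n \geq \Omega(D \log D / \eps)$, as required. The main obstacle I expect is the bookkeeping in the second step, where conditioning on $\alpha$, data-processing through the shuffler, and i.i.d.\ tensorization across users must be composed carefully to produce the stated per-user KL bound; once that reduction is set up correctly, Lemma~\ref{lm:KL-bound-sel} supplies the rest.
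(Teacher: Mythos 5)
Your proposal is correct and follows essentially the same route as the paper's proof: plant a coordinate via the distributions $\calD_{\ell,j}$, reduce accurate \selection on the planted instance to identifying the coordinate, apply Fano's inequality, tensorize the KL divergence across the $n$ i.i.d.\ users, and invoke Lemma~\ref{lm:KL-bound-sel} with $\delta_\alpha = d_{\eps}\left(\Ex_{x \leftarrow \calU_D} R_\alpha(x) \,\|\, \calD_\alpha\right)$. The only deviations are bookkeeping: you average over the public randomness inside the mutual information via the variational bound with the uniform-input reference (and plant only $\ell = 1$), whereas the paper uses Markov's inequality to fix a single good $\alpha$ and randomizes over both $\ell$ and $j$ so the reference is the exact marginal; both treatments also share the implicit assumption $n = \Omega(\log D)$ needed so that only $j^*$ meets the selection criterion.
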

\begin{proof}
	Let $L,J$ be uniformly random over $[2] \times [D]$, and $X_1,X_2,\dotsc,X_n$ be $n$ i.i.d. samples from $D_{L,J}$. For each $i \in [n]$, we draw $Z_i$ from $R(X_i)$. 
	
	Let $P_{\alpha}(Z_1,Z_2,\dotsc,Z_n)$ be the output of the protocol with public randomness fixed to $\alpha$, and let $F_{\alpha}(Z_1,Z_2,\dotsc,Z_n) := (1,P_{\alpha}(Z_1,Z_2,\dotsc,Z_n))$. Assuming $n \ge \Theta(\log D)$, it follows that $F_{\alpha}(Z_1,Z_2,\dotsc,Z_n) = (L,J)$ with probability at least $0.99 - 0.01 \ge 0.98$ over the randomness of $\alpha \leftarrow \calDpb$ and randomness in $F_{\alpha}$, conditioned on the event $L = 1$.
	
	By Markov's inequality, with probability at least $0.8$ over $\alpha \leftarrow \calDpb$, we have that $F_{\alpha}(Z_1,Z_2,\dotsc,Z_n) = (L,J)$ with probability at least $0.8$ over the randomness in $F_\alpha$ conditioned on the event $L = 1$.
	
	From our assumption and Markov's inequality, it follows that with probability at least $0.99$ over $\alpha \leftarrow \calDpb$, we have $d_{\eps}\left(\Ex_{x \leftarrow \calU_{D}} R_{\alpha}(x)||\calD_{\alpha}\right) \le o(1/D)$. That is, $R_{\alpha}$ is $(\eps,o(1/D),\calU_{D})$-dominated.
	
	By a union bound, with probability at least $0.99 - 0.2 > 0$ over $\alpha \leftarrow \calDpb$, we have $F_{\alpha}(Z_1,Z_2,\dotsc,Z_n) = (L,J)$ with probability at least $0.8/2 \ge 0.4$, and $R_{\alpha}$ is $(\eps,o(1/D),\calU_{D})$-dominated. In the following, we fix such an $\alpha$.
	
	By Lemma~\ref{lm:KL-bound-sel}, for $\beta \le O\left(\frac{\eps}{D}\right)$, we have that
	\[
	\Ex_{(\ell,j) \in [2]\times[D]} [\KL(R_{\alpha}(\calD_{\ell,j}) || R_{\alpha}(\calU_D))] \le \beta.
	\]
	By Fano's inequality,
	\begin{align*}
	\Pr[F_\alpha(Z_1,Z_2,\dotsc,Z_n) = (L,J)] &\le \frac{1+I((Z_1,Z_2,\dotsc,Z_n);(L,J))}{\log 2D}\\
	&\le\frac{1+n \cdot I(Z_1;(L,J))}{\log 2D}.
	\end{align*}
	We also have that
	\begin{align*}
	I(Z_1;(L,J)) = \KL((Z_1,L,J) || Z_1 \otimes (L,J)) 
	&= \Ex_{L,J \leftarrow [2] \times [D]} \KL((Z_1 | L,J) || Z_1)\\ 
	&= \Ex_{L,J \leftarrow [2] \times [D]} \KL(R(\calD_{L,J}) || R(\calU_D))\\
	&\le \beta.
	\end{align*}
	Plugging in, we obtain
	\[
	\frac{1+ n \cdot \beta}{\log 2D} \ge \Pr[F(Z_1,Z_2,\dotsc,Z_n) = (L,J)] \ge 0.4.
	\]
	Hence, we deduce that $n = \Omega(\log D \cdot \beta^{-1}) = \Omega\left(\frac{D\log D}{\eps}\right)$.
\end{proof}

We are now ready to prove Theorem~\ref{theo:lowb-selection} (restated below).

\begin{reminder}{Theorem~\ref{theo:lowb-selection}.}
	For any $\eps = O(1)$, if $P$ is a public-coin $(\eps,o(1/D))$-$\shuffledDP^k$ protocol solving \selection with probability at least $0.99$, then $n \ge \Omega\left(\frac{D}{k}\right)$.
\end{reminder}
\begin{proof}
	Without loss of generality, we assume that $n \le \poly(D)$. Applying Lemma~\ref{lm:SDP-to-dominated-public-coin} and letting $\tau = \eps + k (1 + \ln n)$, we get that
	\[
	\Ex_{\alpha \leftarrow \calDpb} d_{\tau}\left( \Ex_{x \leftarrow \calU_D} R_{\alpha}(x) || \calD_{\alpha} \right) \le \delta,
	\]
	for a distribution family $\{\calD_{\alpha}\}_{\alpha \in \supp(\calDpb)}$.
	
	Therefore, by Lemma~\ref{lm:dominated-to-lowerbound}, it follows that $n \ge \Omega\left(\frac{D\log D}{\tau}\right) = \Omega\left(\frac{D}{k}\right)$.
\end{proof}

\subsection{Lower Bound for \paritylearning}\label{sec:lowb-learning-parity}

We next prove our lower bound for \paritylearning.


\begin{lemma}\label{lm:KL-bound-par}
	For $\eps > 0$, suppose $R$ is $(\eps, \delta, \calU_{D})$-dominated. We have that
	\[
	\Ex_{\ell,s \in [2]\times \{0,1\}^D} [\KL(R(\calD_{\ell,s}) || R(\calU_D))] \le  \frac{4 e^{\eps}}{2^D} + 4\delta.
	\]
\end{lemma}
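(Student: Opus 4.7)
The plan is to apply Theorem~\ref{theo:many-vs-one} with index set $\calI = [2] \times \{0,1\}^D$, distribution $\pi$ uniform on $\calI$, family $\{\lambda_v\}_{v \in \calI} = \{\calD_{\ell,s}\}$, and reference distribution $\mu = \calU_D$. Since each $\calD_{\ell,s}$ is the uniform distribution on (at least) half of $\{0,1\}^D$, we have $(\calD_{\ell,s})_x \le 2 \cdot (\calU_D)_x$ for every $x$, so $\mu$ $\tau$-dominates $\{\lambda_v\}$ with $\tau = 2$. (The degenerate case $s = 0^D$ gives $\calD_{\ell,s} = \calU_D$, which only contributes zero to the bound.)

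The key step is to find the ``test function'' bound $W$. For any $f\colon \{0,1\}^D \to \R^{\ge 0}$ with $f(\calU_D) \le 1$, I will combine Proposition~\ref{prop:fourier-fact} with the identity $f(\calU_D) = \tfrac{1}{2}(f(\calD_{0,s}) + f(\calD_{1,s}))$ (valid for $s \ne 0$) to get
\[
f(\calD_{\ell,s}) - f(\calU_D) = (-1)^\ell \cdot \hat{f}(s),
\]
and therefore
\[
\Ex_{(\ell,s) \leftarrow [2]\times\{0,1\}^D} \bigl[(f(\calD_{\ell,s}) - f(\calU_D))^2\bigr] \le \frac{1}{2^D} \sum_{s \in \{0,1\}^D} \hat{f}(s)^2.
\]
Then I invoke Parseval's identity (Lemma~\ref{lm:parseval}) together with the pointwise bound $f(x)^2 \le \|f\|_\infty \cdot f(x)$ to conclude
\[
\sum_{s} \hat{f}(s)^2 = \Ex_{x \leftarrow \calU_D}[f(x)^2] \le \|f\|_\infty \cdot f(\calU_D) \le \|f\|_\infty.
\]
This gives $\Ex_{v \leftarrow \pi}[(f(\lambda_v) - f(\mu))^2] \le \|f\|_\infty / 2^D$, so I may take the (linear, hence concave) function $W(L) := L/2^D$.

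Finally, plugging into Theorem~\ref{theo:many-vs-one} with $\tau = 2$ yields
\[
\Ex_{(\ell,s)}[\KL(R(\calD_{\ell,s}) \| R(\calU_D))] \le 2 W(2e^\eps) + 4(\tau-1)^2 \delta = \frac{4 e^\eps}{2^D} + 4\delta,
\]
as desired. I expect no real obstacle: this is exactly parallel to the proof of Lemma~\ref{lm:KL-bound-sel}, with the Level-1 Inequality replaced by Parseval's identity (which is tight precisely because parity learning lets us ``see'' all Fourier levels rather than just level one). The only subtlety is remembering to handle $s = 0^D$ separately, which is immediate since both $\calD_{0,0^D}$ and (by convention) $\calD_{1,0^D}$ agree with $\calU_D$ and contribute nothing to the average.
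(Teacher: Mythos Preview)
Your proposal is correct and essentially identical to the paper's proof: same instantiation of Theorem~\ref{theo:many-vs-one} with $\tau=2$, same reduction of $(f(\calD_{\ell,s})-f(\calU_D))^2$ to $\hat f(s)^2$ via Proposition~\ref{prop:fourier-fact}, same use of Parseval's identity and the bound $\|f\|_2^2 \le \|f\|_\infty \cdot f(\calU_D)$ to take $W(L)=L/2^D$, and the same final plug-in. If anything, you are slightly more careful than the paper in flagging the degenerate case $s=0^D$.
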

\begin{proof}
	To apply Theorem~\ref{theo:many-vs-one}, we set the index set as $\calI = [2] \times \{0,1\}^D$, distribution $\pi$ to be the uniform distribution over $\calI$, $\{ \lambda_v \}_{v \in \calI} = \{ \calD_v \}_{v \in \calI}$, and $\mu = \calU_{D}$.
	
	Clearly, $\mu$ $2$-dominates $\{\lambda_v\}$. Let $f$ be a function such that $\|f\|_{\infty} = L$ and $f(\mu) \le 1$.  It follows that
	\begin{align*}
	\Ex_{v \leftarrow \pi} |f(\mu) - f(\lambda_v)|^2 &=\Ex_{\ell,s \in [2] \times \{0,1\}^D} |f(\calD_{\ell,s}) - f(\calU_D)|^2\\
	&=\Ex_{\ell,s \in [2] \times \{0,1\}^D} \frac{1}{4}|f(\calD_{\ell,s}) - f(\calD_{1-\ell,s})|^2\\
	&=\Ex_{\ell,s \in [2] \times \{0,1\}^D} \hat{f}(s)^2. \tag{\text{Proposition~\ref{prop:fourier-fact}}}
	\end{align*}
	By Lemma~\ref{lm:parseval}, it follows that
	\[
	\sum_{s \in \{0,1\}^{D}} \hat{f}(s)^2 = \Ex_{x \leftarrow \calU_{D}} f(x)^2 \le \|f\|_{\infty} \cdot f(\calU_{D}) \le L.
	\]
Therefore, we can set $W(L) := \frac{L}{2^D}$. In this case, $W$ is clearly concave. By Theorem~\ref{theo:many-vs-one}, it follows that
	\[
	\Ex_{\ell,s \in [2]\times \{0,1\}^D} [\KL(R(\calD_{\ell,s}) || R(\calU_D))] \le 2 W(2e^{\eps}) + 4 (\tau - 1)^2 \cdot \delta \le \frac{4 e^{\eps}}{2^D} + 4\delta.
	\qedhere
	\] 
\end{proof}

Now we apply Lemma~\ref{lm:KL-bound-par} to the \paritylearning problem. Recall that in \paritylearning, there is a random hidden element $s \in \{0,1\}^D$, and each user gets a random element $x$ together with the inner product $\langle s,x \rangle$ over $\F_2$. Appending the label to the vector, each user indeed gets a random sample from the set $\{ x \in \{0,1\}^{D+1} : \langle x,(s,1) \rangle = 0 \}$, where $(s,1)$ is the $(D+1)$-dimensional vector obtained by appending $1$ to the end of the vector $s$. In other words, each user gets a random sample from the distribution $\calD_{0,(s,1)}$.
 
\begin{lemma}\label{lm:dominated-to-lowerbound-par}
	For a public-coin randomizer $R$ with public randomness from $\calDpb$, if there is a family of distributions $\{ \calD_{\alpha} \}_{\alpha \in \supp(\calDpb)}$ over $\calM^k$ such that,
	\[
	\Ex_{\alpha \leftarrow \calDpb} d_{\eps}\left(\Ex_{x \leftarrow \calU_{D}} R_{\alpha}(x)||\calD_{\alpha}\right) \le o(1/n),
	\]
	where $n$ is the number of samples, then a public-coin protocol with randomizer $R$ needs at least $\Omega\left( 2^{D} / e^{\eps} \right)$ samples to solve \paritylearning with probability at least $0.99$. 
\end{lemma}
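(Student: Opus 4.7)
The plan is to adapt the proof of Lemma~\ref{lm:dominated-to-lowerbound}, invoking Lemma~\ref{lm:KL-bound-par} in place of Lemma~\ref{lm:KL-bound-sel} while tracking the fact that each \paritylearning\ sample lives in $\{0,1\}^{D+1}$ (each user holds $(y,\langle s,y\rangle)$ for uniform $y \in \{0,1\}^D$, i.e., a draw from $\calD_{0,(s,1)}$). Draw a hidden $S$ uniformly from $\{0,1\}^D$, let $X_1,\ldots,X_n$ be i.i.d.\ from $\calD_{0,(S,1)}$, and set $Z_i = R_\alpha(X_i)$; let $F_\alpha$ denote the composed protocol mapping transcripts to $\{0,1\}^D$. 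Two Markov-style applications---one to the $0.99$ success guarantee, one to the hypothesis $\Ex_\alpha d_\eps(\Ex_x R_\alpha(x) || \calD_\alpha) \le o(1/n)$---produce a fixed $\alpha^\star$ for which $R_{\alpha^\star}$ is $(\eps,o(1/n),\calU_{D+1})$-dominated and $\Pr[F_{\alpha^\star}(Z_1,\ldots,Z_n)=S] \ge 0.4$.

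For this $\alpha^\star$, Lemma~\ref{lm:KL-bound-par} applied with input space $\{0,1\}^{D+1}$ gives
\[
\Ex_{(\ell,s)\in [2]\times\{0,1\}^{D+1}}\bigl[\KL(R_{\alpha^\star}(\calD_{\ell,s}) || R_{\alpha^\star}(\calU_{D+1}))\bigr] \le \frac{4 e^{\eps}}{2^{D+1}} + o(1/n).
\]
Since every KL term is non-negative and the $2^D$ pairs of the form $(\ell,s)=(0,(s',1))$ make up a quarter of the $2^{D+2}$ pairs in the outer expectation, restricting to this subset inflates the bound by at most a factor of $4$:
\[
\Ex_{s' \in \{0,1\}^D}\bigl[\KL(R_{\alpha^\star}(\calD_{0,(s',1)}) || R_{\alpha^\star}(\calU_{D+1}))\bigr] \le O(e^{\eps}/2^D) + o(1/n).
\]

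Observe that $I(Z_1;S)$ is exactly this average with $\calU_{D+1}$ replaced by the true marginal $\mu := \Ex_S \calD_{0,(S,1)}$ of $X_1$; since the marginal minimizes the averaged KL (the identity $\Ex_s \KL(P_s || Q) = I(Z_1;S) + \KL(\mu || Q)$ is valid for any distribution $Q$), the bound $I(Z_1;S) \le O(e^{\eps}/2^D) + o(1/n)$ transfers automatically. Because the $Z_i$ are conditionally i.i.d.\ given $S$, subadditivity of mutual information yields $I((Z_1,\ldots,Z_n);S) \le n\cdot I(Z_1;S)$. Fano's inequality then gives
\[
0.4 \le \Pr[F_{\alpha^\star}(Z_1,\ldots,Z_n)=S] \le \frac{1 + n\cdot I(Z_1;S)}{\log 2^D} = \frac{1 + n\cdot I(Z_1;S)}{D},
\]
which forces $n\cdot I(Z_1;S) \ge \Omega(D)$ and hence $n \ge \Omega(2^D/e^{\eps})$; the $n\cdot o(1/n) = o(1)$ contribution from $\delta_{\alpha^\star}$ is absorbed into the slack. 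I do not anticipate a serious obstacle: the argument is a faithful translation of Lemma~\ref{lm:dominated-to-lowerbound}, with the only mild subtleties being the subset-restriction inflation factor and the marginal-versus-uniform mismatch, both handled by the displayed inequalities above.
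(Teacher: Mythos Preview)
Your argument is correct, but it takes a different route from the paper's own proof. You port the Fano-inequality template from Lemma~\ref{lm:dominated-to-lowerbound} directly: fix a good $\alpha^\star$ by two averaging arguments, bound $I(Z_1;S)$ via Lemma~\ref{lm:KL-bound-par} (absorbing the factor-$4$ inflation from restricting to indices $(0,(s',1))$ and the marginal--versus--uniform mismatch via the identity $\Ex_s \KL(P_s\|Q)=I(Z_1;S)+\KL(\Ex_s P_s\|Q)$), and finish with Fano over $\{0,1\}^D$. The paper instead argues by contradiction and Pinsker: it fixes a good $\alpha$, observes that for at least half the hidden strings $s$ the protocol outputs $s$ with probability $\ge 0.9$ on $\calD_{0,(s,1)}^{\otimes n}$ but $\le 0.1$ on $\calU_{D+1}^{\otimes n}$, and then---assuming $n=o(2^D/e^\eps)$---uses the averaged KL bound to locate a specific $s$ in that half with $\KL\bigl(R_\alpha(\calD_{0,(s,1)})^{\otimes n}\|R_\alpha(\calU_{D+1})^{\otimes n}\bigr)=o(1)$, contradicting the separation via Pinsker. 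Your approach is more uniform with the \selection\ proof and in fact extracts an extra factor $D$ in the lower bound; the paper's approach avoids the two bookkeeping steps you flag (the subset inflation and the non-uniform marginal) by comparing directly against the uniform null distribution and never invoking mutual information.
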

\begin{proof}
	Suppose there is a public-coin protocol $P$ with randomizer $R$ solving \paritylearning with probability at least $0.99$. For a dataset $W$, we use $P(W)$ (respectively, $P_{\alpha}(W)$) to denote the output of $P$ on $W$ (with public randomness fixed to $\alpha$).
	
	Consider running $P$ on $n$ uniformly random samples from $\{0,1\}^{D+1}$. We note that for at least a $0.99$ fraction of $s \in \{0,1\}^{D}$, we have that
	\[
	\Ex_{\alpha \leftarrow \calDpb} \left[ \Pr[P_{\alpha}(\calU_{D+1}^{\otimes n}) = s] \right] = \Pr[P(\calU_{D+1}^{\otimes n}) = s] \le 0.01. 
	\]
	From the assumption that $P$ solves \paritylearning, for all $s \in \{0,1\}^{D}$, we have
	\[
	\Ex_{\alpha \leftarrow \calDpb} \left[ \Pr[P_{\alpha}(\calD_{0,(s,1)}^{\otimes n}) = s] \right] = \Pr[P(\calD_{0,(s,1)}^{\otimes n}) = s] \ge 0.99.
	\]
	By a union bound, with probability at least $0.5$ over $\alpha \leftarrow \calDpb$, we have
	\begin{equation}
	\Pr[P_{\alpha}(\calU_{D+1}^{\otimes n}) = s] \le 0.1~~\text{and}~~ \Pr[P_\alpha(\calD_{0,(s,1)}^{\otimes n}) = s] \ge 0.9~~\text{for at least a $0.5$ fraction of $s \in \{0,1\}^D$.}
	\label{eq:condition-alpha}
	\end{equation}
	From our assumption and Markov's inequality, with probability at least $0.99$ over $\alpha \leftarrow \calDpb$, we have that $d_{\eps}\left(\Ex_{x \leftarrow \calU_{D}} R_{\alpha}(x)||\calD_{\alpha}\right) \le o(1/n)$. That is, $R_{\alpha}$ is $(\eps,o(1/n),\calU_{D})$-dominated. 
	
	By a union bound, there exists an $\alpha \in \supp(\calDpb)$ such that $R_{\alpha}$ is $(\eps,o(1/D),\calU_{D})$-dominated and~\eqref{eq:condition-alpha} is satisfied.
	
	By Lemma~\ref{lm:KL-bound-par}, we have that
	\[
	\Ex_{\ell,s \in [2]\times \{0,1\}^{D+1}} [\KL(R_\alpha(\calD_{\ell,s}) || R_\alpha(\calU_D))] \le  \frac{4 e^{\eps}}{2^{D+1}} + o(1/n),
	\]
	which implies
	\[
	\Ex_{s \in \{0,1\}^{D}} [\KL(R_\alpha(\calD_{0,(s,1)}) || R_\alpha(\calU_D))] \le  O\left(\frac{e^{\eps}}{2^D}\right) + o(1/n).
	\]
	Supposing $n = o(2^D/e^{\eps})$ for the sake of contradiction, it follows that
	\[
	\Ex_{s \in \{0,1\}^{D}} [\KL(R_\alpha(\calD_{0,(s,1)})^{\otimes n} || R_\alpha(\calU_{D+1})^{\otimes n})] \le o(1).
	\]
	
	Since there is at least a $0.5$ fraction of $s \in \{0,1\}^D$ satisfying the conditions in~\eqref{eq:condition-alpha}, it follows that there exists an $s \in \{0,1\}^D$ satisfying these conditions and $\KL(R_\alpha(\calD_{0,(s,1)})^{\otimes n} || R_\alpha(\calU_{D+1})^{\otimes n})] = o(1)$,
	which, by Pinsker's inequality, implies that
	\[
	\| R_\alpha(\calD_{0,(s,1)})^{\otimes n} - R_\alpha(\calU_{D+1})^{\otimes n}\|_{TV} \le o(1),
	\]
	and
	\[
	\Pr[P_\alpha(\calD_{0,(s,1)}^{\otimes n}) = s] \le \Pr[P_{\alpha}(\calU_{D+1}^{\otimes n}) = s] + o(1) \le 0.01 + o(1),
	\]
	a contradiction.
\end{proof}

We are now ready to prove Theorem~\ref{theo:lowb-learning-parity}.

\begin{reminder}{Theorem~\ref{theo:lowb-learning-parity}.}
	For any $\eps = O(1)$, if $P$ is a public-coin $(\eps,o(1/n))$-$\shuffledDP^k$ protocol solving \paritylearning with probability at least $0.99$, then $n \ge \Omega(2^{D/(k+1)})$.
\end{reminder}
\begin{proofof}{Theorem~\ref{theo:lowb-learning-parity}}
	
	Applying Lemma~\ref{lm:SDP-to-dominated-public-coin} and letting $\tau = \eps + k (1 + \ln n)$, we have that
	\[
	\Ex_{\alpha \leftarrow \calDpb}  d_{\tau}\left( \Ex_{x \leftarrow \calU_D} R_{\alpha}(x) || \calD_{\alpha} \right) \le o(1/n),
	\]
	for a distribution family $\{\calD_{\alpha}\}_{\alpha \in \supp(\calDpb)}$.
	
	By Lemma~\ref{lm:dominated-to-lowerbound-par}, $n \ge \Omega(2^D/e^{\tau}) \ge \Omega(2^D / (e n)^k)$. It then follows that $n^{k+1} \ge \Omega(2^D / e^{k})$ and consequently $n \ge \Omega(2^{D/(k+1)})$.
\end{proofof}


	
\section{Lower Bound for \DE with Maximum Hardness }\label{sec:lb_count_distinct_local}

In this section, we prove Theorem~\ref{theo:lowb-distinct-elements}, which gives a $\Omega(n)$ lower bound on the error of $\localDP$ protocols for $\DE$.


\subsection{Preliminaries}

For $(\ell,s) \in [2] \times \{0,1\}^D$, recall that $\calD_{\ell,s}$ is the uniform distribution on $\{x \in \{0,1\}^D : \langle x,s\rangle = \ell \}$. As in~Section~\ref{sec:overview-dominated},
we also use $\calD_{\ell,s}^{\alpha}$ to denote the mixture of $\calD_{\ell,s}$ and $\calU_{D}$ which outputs a sample from $\calD_{\ell,s}$ with probability $\alpha$ and a sample from $\calU_{D}$ with probability $1-\alpha$. Note that $\calD_{\ell,s}^{\alpha}$ can also be interpreted as the mixture of $\calD_{\ell,s}$ and $\calD_{1-\ell,s}$ that outputs a sample from $\calD_{\ell,s}$ with probability $\frac{1}{2} + \frac{\alpha}{2}$, and a sample from $\calD_{1-\ell,s}$ with probability $\frac{1}{2} - \frac{\alpha}{2}$.
We next estimate the number of distinct elements in $n$ samples taken from $\calD^{\alpha}_{\ell,s}$.

\begin{prop}\label{prop:DE-bound}
	Set $D = \log n$. For $\alpha \in (0,0.01)$ and any $(\ell,s) \in [2] \times \{0,1\}^D$, let $X$ be the number of distinct elements in $n$ samples drawn from $\calD_{\ell,s}^{\alpha}$. We have that
	\[
	\Pr\left[ \left|X - (1 - e^{-1}\cosh(\alpha)) \cdot n \right| > 10\sqrt{n}\right] < 0.01.
	\]
	
\end{prop}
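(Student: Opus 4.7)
The plan is to unpack the probability mass function of $\calD_{\ell,s}^{\alpha}$, compute the expectation $\Ex[X]$ exactly, and then apply bounded-differences concentration. Since $\calD_{\ell,s}$ is uniform on the $n/2$ strings $x \in \{0,1\}^D$ with $\langle x, s\rangle = \ell$ (recall $D = \log n$), the mixture interpretation of $\calD_{\ell,s}^{\alpha}$ given right before the proposition assigns mass $\tfrac{1+\alpha}{n}$ to each $x$ with $\langle x, s\rangle = \ell$ and mass $\tfrac{1-\alpha}{n}$ to each $x$ with $\langle x, s\rangle = 1-\ell$. So the $n$ inputs to $X$ are i.i.d. samples from a distribution on $n$ atoms having exactly two distinct probabilities, split evenly.

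Next, I would write $X = \sum_{x \in \{0,1\}^D} Y_x$ where $Y_x$ is the indicator that $x$ appears among the $n$ samples. Then
\[
\Ex[X] = \tfrac{n}{2}\left(1 - \left(1 - \tfrac{1+\alpha}{n}\right)^n\right) + \tfrac{n}{2}\left(1 - \left(1 - \tfrac{1-\alpha}{n}\right)^n\right).
\]
Using the elementary bound $(1 - y/n)^n = e^{-y}(1 + O(y^2/n))$ valid for $y = O(1)$, each parenthesized factor equals $1 - e^{-(1\pm\alpha)} + O(1/n)$, so
\[
\Ex[X] = n - \tfrac{n}{2}\, e^{-1}\bigl(e^{-\alpha} + e^{\alpha}\bigr) \pm O(1) = \bigl(1 - e^{-1}\cosh(\alpha)\bigr)\, n \pm O(1).
\]
In particular, for $n$ sufficiently large, $|\Ex[X] - (1 - e^{-1}\cosh(\alpha))\, n| \le \sqrt{n}$.

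For concentration, observe that $X$ is a function of $n$ independent samples, and changing any single sample alters the number of distinct elements by at most one (we can lose at most one unique occurrence and gain at most one new value). McDiarmid's inequality thus yields
\[
\Pr\bigl[|X - \Ex[X]| > 9\sqrt{n}\bigr] \le 2\exp(-2 \cdot 81) < 0.01.
\]
Combining with the bias bound via the triangle inequality finishes the proof, since $|X - (1 - e^{-1}\cosh(\alpha))\, n| > 10\sqrt{n}$ forces $|X - \Ex[X]| > 9\sqrt{n}$.

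I do not expect any serious obstacle: the computation is a standard expectation-plus-bounded-differences argument, and the only mild bookkeeping is tracking the $O(1)$ slack in approximating $(1 - y/n)^n$ by $e^{-y}$, which is easily absorbed into the $10\sqrt{n}$ tolerance.
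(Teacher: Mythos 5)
Your proposal is correct, and the expectation computation is essentially identical to the paper's (same two occupancy probabilities $\tfrac{1+\alpha}{n}$ and $\tfrac{1-\alpha}{n}$, same $(1-y/n)^n = e^{-y} + O(1/n)$ bookkeeping, same $O(1)$ bias absorbed into the $10\sqrt{n}$ tolerance). Where you genuinely diverge is the concentration step. The paper also writes $X = \sum_x Y_x$ as a sum of occurrence indicators, but since these indicators are not independent it invokes the fact that they are negatively correlated (citing Dubhashi--Ranjan) so that a Chernoff bound still applies to the sum. You instead view $X$ as a function of the $n$ independent samples, note that changing one sample moves the count of distinct elements by at most one, and apply McDiarmid's bounded-differences inequality, getting $\Pr[|X-\Ex[X]|>9\sqrt{n}]\le 2e^{-162}$. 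Both routes are valid; yours is more self-contained (no appeal to negative association, which is the one slightly nonstandard ingredient in the paper's argument), while the paper's negative-association/Chernoff route would give sharper, mean-dependent tails if one needed deviations much smaller than $\sqrt{n}$ — irrelevant here, since the statement only asks for an $O(\sqrt{n})$ window with constant failure probability.
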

\begin{proof}
	In the following, we identify the index space $[n]$ with $\{0,1\}^{\log n}$ in the natural way. For $i \in [n]$, we use $X_i$ to denote the indicator of whether $i$ occurs in the $n$ samples taken from $\calD^{\alpha}_{\ell,s}$. Note that these $X_i$'s are not independent, but they are negatively correlated~\cite[Proposition~7 and 11]{DubhashiR98}, and hence a Chernoff bound still applies.
	
	Let $i$ be an element in the support of $\calD_{\ell,s}$.  Note that $i$ equals one sample from $\calD_{\ell,s}^\alpha$ with probability
	\[
	2^{-D+1} \cdot \left(\frac{1}{2} + \frac{\alpha}{2} \right) = \frac{1+\alpha}{n}.
	\]
	Therefore, $i$ occurs in $n$ i.i.d. samples from $\calD_{\ell,s}^\alpha$ with probability
	\begin{align*}
	p_1 := 1 - (1 - (1+\alpha) / n)^{n} &= 1 - e^{ \ln(1-(1+\alpha) / n) \cdot n} \\
								 &= 1 - e^{ (-(1+\alpha)/n + \Theta((1+\alpha)/n)^2) \cdot n }\\
								 &= 1 - e^{-(1+\alpha)} \cdot e^{\Theta(1/n)}.
	\end{align*}
	Therefore, we have that
	\[
	\left| p_1 - (1 - e^{-(1+\alpha)}) \right| \le O(1/n).
	\]
	Similarly, for an element $i$ in the support of $\calD_{\ell,s}$, $i$ equals one sample from $\calD_{\ell,s}^\alpha$ with probability
	\[
	2^{-D+1} \cdot \left( \frac{1}{2} - \frac{\alpha}{2} \right) = \frac{1-\alpha}{n}.
	\]
	Hence, by a similar calculation, $i$ occurs in $n$ i.i.d. samples from $\calD_{\ell,s}^\alpha$ with probability
	\begin{align*}
		p_2 := 1 - (1 - (1-\alpha) / n)^{n} = 1 - e^{-(1-\alpha)} \cdot e^{\Theta(1/n)},
	\end{align*}
	and
	\[
	\left| p_2 - (1 - e^{-(1-\alpha)}) \right| \le O(1/n).
	\]
	Hence, we have that
	\begin{align*}
	\mu = \Ex\left[ \sum_{i \in [n]} X_i \right] &= \Ex\left[ \sum_{i \in \supp(\calD_{\ell,s})} X_i \right] + \Ex\left[ \sum_{i \in \supp(\calD_{\ell,s})} X_i \right]\\
	&= (p_1 + p_2) \cdot \frac{n}{2}.
	\end{align*}
	Let
	\[
	\nu = (1 - e^{-1+\alpha}) \cdot n/2 + (1 - e^{-1-\alpha}) \cdot n / 2 = (1 - e^{-1}\cosh(\alpha)) \cdot n,
	\]
	where the last equality holds since $\cosh(\alpha) := \frac{e^{\alpha} + e^{-\alpha}}{2}$. Let $X = \sum_{i=1}^{n} X_i$. Using the Chernoff bound and the fact that $|\nu - \mu | \le O(1)$, we have that
	\[
	\Pr\left[ \left|X - (1 - e^{-1}\cosh(\alpha)) \cdot n \right| > 10\sqrt{n}\right] < 0.01,
	\]
	which completes the proof.
\end{proof}

\subsection{$\localDP$ Lower bound}

\begin{lemma}\label{lm:KL-bound-par-alpha}
	For any $\eps > 0$ and  $\alpha \in [0,1]$, if $R$ is $(\eps, \delta,\calU_D)$-dominated, then we have that
	\[
	\Ex_{\ell,j \in [2]\times \{0,1\}^D} [\KL(P^{\alpha}_{\ell,j} || Q)] \le  \alpha^2 \cdot \frac{4 e^{\eps}}{2^D} + 4\delta.
	\]
\end{lemma}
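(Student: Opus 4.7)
The plan is to mirror the proof of Lemma~\ref{lm:KL-bound-par}, simply carrying along the mixing parameter $\alpha$ through the Fourier-analytic bound. Concretely, I would apply Theorem~\ref{theo:many-vs-one} with index set $\calI = [2] \times \{0,1\}^D$, distribution $\pi$ uniform on $\calI$, family $\lambda_{(\ell,s)} = \calD^{\alpha}_{\ell,s}$, and dominating distribution $\mu = \calU_D$. The domination hypothesis $\tau = 2$ is easy to check: since $(\calD_{\ell,s})_x \le 2 \cdot (\calU_D)_x$ pointwise, writing $\calD^{\alpha}_{\ell,s} = \alpha \calD_{\ell,s} + (1-\alpha)\calU_D$ gives $(\calD^{\alpha}_{\ell,s})_x \le (1+\alpha)(\calU_D)_x \le 2(\calU_D)_x$.

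The next step is to identify the right $W$. For a function $\psi\colon \{0,1\}^D \to \R^{\ge 0}$ with $\psi(\calU_D) \le 1$ and $\|\psi\|_\infty \le L$, linearity in $\alpha$ yields
\[
\psi(\calD^{\alpha}_{\ell,s}) - \psi(\calU_D) = \alpha\bigl(\psi(\calD_{\ell,s}) - \psi(\calU_D)\bigr),
\]
and by Proposition~\ref{prop:fourier-fact}, $\psi(\calD_{\ell,s}) - \psi(\calU_D) = \tfrac{1}{2}(\psi(\calD_{\ell,s}) - \psi(\calD_{1-\ell,s})) \cdot (\text{sign depending on }\ell)$, which has squared value $\hat{\psi}(s)^2$ after averaging over $\ell \in [2]$. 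Thus
\[
\Ex_{(\ell,s)} \bigl(\psi(\calD^{\alpha}_{\ell,s}) - \psi(\calU_D)\bigr)^2 = \alpha^2 \cdot \Ex_{s \in \{0,1\}^D} \hat{\psi}(s)^2 = \frac{\alpha^2}{2^D}\sum_s \hat{\psi}(s)^2.
\]
By Parseval (Lemma~\ref{lm:parseval}), $\sum_s \hat{\psi}(s)^2 = \|\psi\|_2^2 \le \|\psi\|_\infty \cdot \psi(\calU_D) \le L$. Hence I may take $W(L) := \alpha^2 L / 2^D$, which is (trivially) concave.

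Plugging into Theorem~\ref{theo:many-vs-one} with $\tau = 2$ then gives
\[
\Ex_{(\ell,s)} \KL\bigl(R(\calD^{\alpha}_{\ell,s}) \,\|\, R(\calU_D)\bigr) \le 2 W(2e^\eps) + 4(\tau-1)^2 \delta = \alpha^2 \cdot \frac{4e^\eps}{2^D} + 4\delta,
\]
which is exactly the claim. There is really no serious obstacle here; the proof is a direct extension of Lemma~\ref{lm:KL-bound-par}, with the $\alpha^2$ factor arising transparently from the linearity of the mixture in $\alpha$. The only point that required mild care was confirming that $\calU_D$ still $2$-dominates the mixture family $\{\calD^{\alpha}_{\ell,s}\}$ so that the hypothesis of Theorem~\ref{theo:many-vs-one} is met with $\tau = 2$, keeping the $\delta$ coefficient at $4$.
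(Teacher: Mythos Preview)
Your proposal is correct and follows essentially the same approach as the paper's proof: both apply Theorem~\ref{theo:many-vs-one} with $\calI = [2]\times\{0,1\}^D$, $\lambda_{(\ell,s)} = \calD^\alpha_{\ell,s}$, $\mu = \calU_D$, $\tau = 2$, and $W(L) = \alpha^2 L/2^D$, using linearity of the mixture and Parseval to identify $W$. If anything, you are slightly more explicit than the paper in verifying the $2$-domination hypothesis for the mixture family.
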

\begin{proof}
	We follow closely the proof of Lemma~\ref{lm:KL-bound-par}.
	To apply Theorem~\ref{theo:many-vs-one}, we set the index set as $\calI = [2] \times \{0,1\}^D$, the distribution $\pi$ to be the uniform distribution over $\calI$, $\{ \lambda_v \}_{v \in \calI} = \{ \calD_v^{\alpha} \}_{v \in \calI}$, and $\mu = \calU_{D}$.
	
	Clearly, $\mu$ $2$-dominates $\{\lambda_v\}$. Let $f$ be a function such that $\|f\|_{\infty} = L$ and $f(\mu) = 1$. It follows that
	\[
	\Ex_{v \leftarrow \pi} |f(\mu) - f(\lambda_v)|^2 =\Ex_{(\ell,s) \in [2] \times \{0,1\}^D} \alpha^2 \cdot \hat{f}(s)^2.
	\]
	Recall that
	\[
	\Ex_{(\ell,s) \in [2] \times \{0,1\}^D} \hat{f}(s)^2 \le \frac{L}{2^D}.
	\]
	Therefore, we can set $W(L) := \alpha^2 \cdot \frac{L}{2^D}$. Clearly, $W$ is a concave function. By Theorem~\ref{theo:many-vs-one}, it follows that
	\[
	\Ex_{\ell,j \in [2]\times[D]} [\KL(R(\calD_{\ell,j}) || R(\calU_D))] \le 2 W(2e^{\eps}) + 4 (\tau - 1)^2 \cdot \delta \le \alpha^2 \cdot \frac{4 e^{\eps}}{2^D} + 4\delta.
	\qedhere
	\] 
\end{proof}

We now show that the \DE function is hard for $(\eps,\delta)$-local algorithms.

\begin{reminder}{Theorem~\ref{theo:lowb-distinct-elements}.}
	For $\eps \le 0.49 \cdot \ln n $, if $P$ is a public-coin $(\eps,o(1/n))$-$\localDP$ protocol, then it cannot compute $\DE_{n,n}$ with error $o(n/e^{\eps})$ and probability at least $0.99$.
\end{reminder}
\begin{proof}
	Let $D = \log n$. We identify the input space $[n]$ with $\{0,1\}^{D}$ in the natural way. Suppose there is a public-coin $(\eps,o(1/n))$-$\localDP$ protocol $P$ solving $\DE_{n,n}$ with error $o(n/e^{\eps})$ and probability at least $0.99$.
	 
	Let $R$ with public randomness from $\calDpb$ be the randomizer used in $P$. For a dataset $W$, we use $P(W)$ (respectively, $P_{\gamma}(W)$) to denote the output of $P$ on the dataset $W$ (with public randomness fixed to $\gamma$).

	Setting $\alpha^2 = \frac{1}{20 e^{\eps}}$, we let $\mu_\alpha = (1 - e^{-1} \cosh(\alpha)) \cdot n$ and $\mu_0 = (1- e^{-1}) \cdot n$.
	
	By our assumption on $P$, Proposition~\ref{prop:DE-bound} and a union bound, it follows that for every $(\ell,s) \in [2] \times \{0,1\}^D$, we have
	\[
	\Ex_{\gamma \leftarrow \calDpb}\left[\Pr\left[\left| P_{\gamma}((\calD^{\alpha}_{\ell,s})^{\otimes n}) - \mu_\alpha \right| \le \frac{n}{1000 e^{\eps}} + 10 \sqrt{n} \right] \right] \ge 0.98.
	\]
	
	Similarly, we have
	\[
	\Ex_{\gamma \leftarrow \calDpb}\left[\Pr\left[\left| P_{\gamma}(\calU_{D}^{\otimes n}) - \mu_0 \right| \le \frac{n}{1000 e^{\eps}} + 10 \sqrt{n} \right] \right] \ge 0.98.
	\]
	
	Note that by our choice of $\eps$, we have $\frac{n}{1000 e^{\eps}} + 10 \sqrt{n} < \frac{n}{800 e^{\eps}}$. By a union bound, it follows that with probability at least $0.5$ over $\gamma \leftarrow \calDpb$, we have
	\begin{align}
	\Pr\left[\left| P_{\gamma}((\calD^{\alpha}_{\ell,s})^{\otimes n}) - \mu_\alpha \right| < \frac{n}{800 e^{\eps}}\right] \ge 0.8 ~~\text{and}~~
	\Pr\left[\left| P_{\gamma}(\calU_{D}^{\otimes n}) - \mu_0 \right| < \frac{n}{800 e^{\eps}}\right] \ge 0.8 \notag \\
	\text{for at least a 0.5 fraction of $(\ell,s) \in [2] \times \{0,1\}^D$.} \label{eq:condition-on-gamma}
	\end{align}
	
	By the definition of public-coin $\localDP$ protocols, we have that with probability at least $0.99$ over $\gamma \leftarrow \calDpb$, $R_{\gamma}$ is $(\eps,o(1/n),\calU_{D})$-dominated. By a union bound, there exists a $\gamma$ such that $R_{\gamma}$ is $(\eps,o(1/n),\calU_{D})$-dominated and it satisfies the condition in~\eqref{eq:condition-on-gamma}. We fix such a $\gamma$.
	
	By Lemma~\ref{lm:KL-bound-par-alpha}, it follows that
	\begin{align*}
	\Ex_{(\ell,s) \in [2]\times \{0,1\}^D} [\KL(R_\gamma(\calD^{\alpha}_{\ell,s}) || R_\gamma(\calU_D))] &\le \alpha^2 \cdot \frac{2 e^{\eps}}{2^D} + o(1/n).
	\end{align*} 
	
	Recall that $\alpha^2 = \frac{1}{20 e^{\eps}}$, the above further simplifies to
	\[
	\Ex_{(\ell,s) \in [2]\times \{0,1\}^D} [\KL(R_\gamma(\calD^{\alpha}_{\ell,s}) || R_\gamma(\calU_D))] \le \frac{1}{10n} + o(1/n).
	\]
	
	Let $S$ be the set of $(\ell,s)$ satisfying the conditions on $(\ell,s)$ stated in~\eqref{eq:condition-on-gamma}. Since $S$ contains at aleast a $0.5$ fraction of $[2] \times \{0,1\}^{D}$, it follows that
	\[
	\Ex_{(\ell,s) \in S} [\KL(R_\gamma(\calD^{\alpha}_{\ell,s}) || R_\gamma(\calU_D))] \le \frac{1}{5n} + o(1/n).
	\]
	
	This means that there exists a pair $(\ell,s) \in S$ such that $\KL(R_\gamma(\calD^{\alpha}_{\ell,s}) || R_\gamma(\calU_D)) \le 1/5n + o(1/n)$. We fix such a pair $(\ell,s)$.
	
	We have
	$
	\KL(R_\gamma(\calD^{\alpha}_{\ell,s})^{\otimes n} || R_\gamma(\calU_D)^{\otimes n}) \le 1/5 + o(1).
	$
	By Pinsker's inequality, it follows that
	\[
	\| R_\gamma(\calD^{\alpha}_{\ell,s})^{\otimes n} - R_\gamma(\calU_D)^{\otimes n}) \|_{TV} \le \sqrt{1/2 \cdot 1/5 + o(1)} \le 0.4.
	\]
	
	Since $(\ell,s) \in S$, it follows that
	\begin{equation}
	\Pr\left[\left| P_{\gamma}(\calU_D^{\otimes n}) - \mu_\alpha \right| < \frac{n}{800 e^{\eps}}\right] \ge 
	\Pr\left[\left| P_{\gamma}((\calD^{\alpha}_{\ell,s})^{\otimes n}) - \mu_\alpha \right| < \frac{n}{800 e^{\eps}}\right] - 0.4 \ge 0.4. \label{eq:cond-111}
	\end{equation}
	On the other hand, we also have
	\begin{equation}
	\Pr\left[\left| P_{\gamma}(\calU_D^{\otimes n}) - \mu_0 \right| < \frac{n}{800 e^{\eps}}\right] \ge 0.8.
	\label{eq:cond-222}
	\end{equation}
	
	Note that $|\mu_\alpha - \mu_0| = e^{-1} (\cosh(\alpha) - 1) \cdot n \ge \alpha^2/2 \cdot e^{-1} \cdot n > n/200e^{\eps}$. Hence \eqref{eq:cond-111} and~\eqref{eq:cond-222} give a contradiction.
\end{proof}


	\section{Low-Message $\shuffledDP$ Protocols for \DE}\label{sec:mult_msg_shuffle_prot_count_dist}

In this section, we present our low-message $\shuffledDP$ protocols for \DE, thereby proving Theorem~\ref{theo:upper-bound-DE}. 

In Section~\ref{sec:low-message-UB-DE-intuition}, we review the previous protocol of~\cite{BCJM20}, and discuss some intuitions underlying our improvement. In Section~\ref{sec:UB-DE-prelim}, we introduce some necessary definitions and technical tools. Next, in Section~\ref{sec:UB-DE-base-protocol} we present our private-coin protocol (stated in Theorem~\ref{theo:upper-bound-DE-base}) for \DE with error $\tilde{O}(\sqrt{D})$, which uses $1/2 + o(1)$ message per user in expectation when the input universe size is below $n/ \polylog(n)$. We will also show that a simple modification of this protocol is $(\ln(n) + O(1))$-$\localDP$, thereby proving Theorem~\ref{th:weak_privacy_local_prot_DE}. Finally, based on the private-coin protocol, in Section~\ref{sec:UB-DE-main-protocol} we prove Theorem~\ref{theo:upper-bound-DE} by presenting our public-coin protocol for \DE, which uses less than $1$ message per user in expectation without any restriction on the universe size.

\subsection{Intuition} \label{sec:low-message-UB-DE-intuition}
\newcommand{\shuffledMtwoDP}{\typeOfDP{mod2\text{-}shuffle}}

\newcommand{\PBCJM}{P_{\sf BCJM}}
\newcommand{\Pmtwo}{P_{\sf mod 2}}

We now turn to sketch the main ideas behind Theorem~\ref{theo:upper-bound-DE-base} and Theorem~\ref{theo:upper-bound-DE}. It would be instructive to review the $\widetilde{O}(D)$-message $\shuffledDP$ protocol solving $\DE_{n,D}$ with error $O(\sqrt{D})$ from~\cite{BCJM20}.

\highlight{The $\shuffledMtwoDP$ Model.} To gain more insights about their protocol, we consider the following \emph{mod 2 shuffle model} ($\shuffledMtwoDP$), where two messages of the same content ``cancel each other'', i.e.,  the transcript is now a random permutation of messages that appear an odd number of times. 

The DP requirement now applies to this new version of transcript. The same holds for the analyzer, who now can only see the new version of transcript.~\cite{BCJM20} first gave a $\shuffledMtwoDP$ protocol for $\DE$, and then adapted that protocol to the standard $\shuffledDP$ model using the Ishai et al. protocol for secure aggregation~\cite{ishai2006cryptography}.\footnote{They did not explicit specify their protocol in the $\shuffledMtwoDP$ model, but it is implicit in their proof of security.}

\highlight{Low-Message Protocol in $\shuffledMtwoDP$.} The $\shuffledMtwoDP$ protocol of~\cite{BCJM20} (referred as $\Pmtwo$ in what follows) first sets a parameter $q = \Theta(1/n)$ so that $ \Pr[\Bin(n,q) \equiv 1 \pmod 2] = 1 / (2e^{\eps/2})$.
Next, for each user holding an element $x \in [D]$, the user first sends $x$ with probability $1/2$. Then for each $j \in [D]$, the user sends message $j$ with probability $q$. All these events are independent. 

Finally, if there are $z$ messages in the transcript (i.e., there are $z$ messages occurring an odd number of times in the original transcript), then the analyzer outputs $(2 \cdot z \cdot e^{\eps / 2} - D) / (e^{\eps/2} - 1)$ as the estimate.
Note that a user sends $1/2 + D \cdot q = 1/2 + O(D / n)$ message in expectation.

\highlight{Analysis of the Protocol $\Pmtwo$.} It is shown in~\cite{BCJM20} that the above protocol is $\eps$-DP and solves $\DE_{n,D}$ with error $O(\sqrt{D})$. Here we briefly outline the intuition behind it. 

Let $S$ be the set consisting of all inputs of the users. We can see that every $i \in S$ belongs to the transcript with probability exactly $1/2$; on the other hand, every $i \in [D] \setminus S$ belongs to the transcript with probability exactly $\Pr[\Bin(n,q) \equiv 1 \pmod 2] = \frac{1}{2e^{\eps/2}}$. Moreover, all these events are independent. Therefore, a simple calculation shows that $(2 \Ex[z] e^{\eps / 2} - D) / (e^{\eps/2} - 1) = |S|$, and the accuracy follows from a Chernoff bound. As for the DP guarantee, changing the input of one user only affects the distributions of two messages in the transcript, and it only changes each message's occurrence probability in the transcript from $1/2$ to $1/2e^{\eps/2}$ or vice versa.

\highlight{From $\shuffledMtwoDP$ to $\shuffledDP$.} To obtain an actual $\shuffledDP$ protocol from $\Pmtwo$, the protocol from~\cite{BCJM20} (which we henceforth denote by $\PBCJM$) runs $D$ copies of the protocol for securely computing sum over $\F_2$~\cite{ishai2006cryptography}, such that the $i$-th protocol $P_i$ aims to simulate the number of occurrences of message $i$ modulo $2$. For each user $i$, if it were to send a message $i$ in $\Pmtwo$, it sends one in $P_i$; otherwise it sends zero in $P_i$.

Since the~\cite{ishai2006cryptography} protocol for computing sum over $\F_2$ requires $O\left(\frac{\log(1/\delta)}{\log n} + 1\right)$ messages from each user~\cite{ghazi2019privateEurocrypt,balle_merged}, each user needs to send $O\left(D \cdot \left(\frac{\log(1/\delta)}{\log n}+1\right)\right)$ messages in total. Moreover, from the security condition of $P_i$, for each message $i$ the transcript only reveals the parity of its number of occurrences, which is exactly what we need in order to simulate $\shuffledMtwoDP$ protocols.

\highlight{Our Improvement.} Note that $\PBCJM$ requires significantly more messages per user than that of $\Pmtwo$. Our goal here is to compile $\Pmtwo$ to $\shuffledDP$ in a much more efficient way, ideally with no overhead. In $\Pmtwo$ each user sends only $1/2 + O(D/n)$ message. This means that when translating to $\PBCJM$, users end up sending many zero messages in the $P_i$ subprotocols, which is wasteful.

Our crucial idea for improving on the aforementioned protocol is a very simple yet effective alternative to the secure aggregation protocol over $\F_2$ of~\cite{ishai2006cryptography} used in $\PBCJM$. In our new subprotocol $P_i$, if a user were to send a message $i$ in $\Pmtwo$, it sends one to $P_i$; otherwise it draws $\lambda$ from a noise distribution $\calD$ (such that $\Ex[\calD] \approx \polylog(\delta^{-1}) / n$) and sends $2 \lambda$ many ones to $P_i$. Clearly, our new $P_i$ still maintains the parity of occurrences of each messages, and the expected number of messages is roughly $2 \cdot \Ex[\calD] \cdot D + 1/2 = O(\polylog(\delta^{-1})) \cdot D/n + 1/2$. To show that the resulting protocol is $\shuffledDP$, we build on the techniques of~\cite{ghazi2020private}, which show that the noise added can hide the contribution of a single user.

\subsection{Preliminaries}\label{sec:UB-DE-prelim}

We first recall the definition of the negative binomial distribution.
\begin{definition}\label{defi:NB-distr}
	Let $r > 0$ and $p \in [0,1]$, the negative binomial distribution $\NB(r,p)$ is defined by $\Pr[\NB(r,p) = k] = \binom{k+r-1}{k} (1-p)^{r}p^{k}$ for each non-negative integer $k$.\footnote{For a real number $\alpha$, $\binom{\alpha}{k} := \prod_{i=0}^{k-1} \frac{\alpha-i}{i+1}$.}
\end{definition}

We recall the following key properties of the negative binomial distribution: (1) For $\alpha,\beta > 0$ and $p \in [0,1]$, $\NB(\alpha,p) + \NB(\beta,p)$ has the same distribution as $\NB(\alpha + \beta,p)$; (2) $\Ex[\NB(r,p)] = \frac{pr}{1-p}$.

We will need the following lemma from~\cite{ghazi2020private}.

\begin{lemma}\label{lm:hiding-NB}
	For any $\eps > 0, \delta \in (0,1)$, and $\Delta \in \mathbb{N}$, let $p = e^{-0.1\eps/\Delta}$ and $r = 50 \cdot e^{\eps/\Delta} \cdot \log(\delta^{-1})$. For any $k \in \{-\Delta,-\Delta+1,\dotsc,\Delta - 1,\Delta \}$, $d_{\eps}(k + \NB(r,p) || \NB(r,p)) \le \delta$.
\end{lemma}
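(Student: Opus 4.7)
The plan is to bound the hockey-stick divergence $d_\eps(k + \NB(r,p) \,||\, \NB(r,p))$ by decomposing the support into a ``bulk'' region where the likelihood ratio is at most $e^\eps$ and a ``tail'' region of total probability at most $\delta$. Writing $f(m) := \Pr[\NB(r,p) = m] = \binom{m+r-1}{m}(1-p)^r p^m$, I would first compute the ratio $\rho_k(m) := \Pr[k + \NB(r,p) = m]/\Pr[\NB(r,p) = m]$ explicitly:
\[
\rho_k(m) = p^{-k}\prod_{i=0}^{k-1}\frac{m-i}{m+r-1-i} \quad (k \geq 0,\, m \geq k), \qquad \rho_{-j}(m) = p^{j}\prod_{i=1}^{j}\frac{m+r-1+i}{m+i} \quad (j > 0,\, m \geq 0),
\]
with the convention that $\rho_k(m) = +\infty$ when $m \notin \supp(\NB(r,p))$. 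These boundary points, which occur only for $k < 0$ and $m \in \{k, \ldots, -1\}$, contribute exactly $\Pr[\NB(r,p) < |k|]$ to the divergence, and will be absorbed into the tail bound.

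Next I would verify the bulk bound $\rho_k(m) \leq e^\eps$ on $\{m \geq m_0\}$ for the threshold $m_0 := \Delta r/(0.9 \eps)$. For $k \geq 0$ each factor in the product is at most $1$, giving $\rho_k(m) \leq p^{-k} \leq e^{0.1\eps k/\Delta} \leq e^{0.1\eps}$ for every $m \geq k$. The binding case is $k = -j$ with $j \leq \Delta$: here $p^j \leq 1$ and $\rho_{-j}(m) \leq \prod_{i=1}^{j}(1 + (r-1)/(m+i)) \leq \exp(jr/m)$ by $1+x \leq e^x$, which is at most $e^{0.9\eps}$ as soon as $m \geq m_0$.

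Finally I would establish the tail bound via concentration of $\NB(r,p)$. The mean $\mu = pr/(1-p)$; the calibration $1-p = 1-e^{-0.1\eps/\Delta}$ keeps $\mu/m_0$ bounded below by a fixed positive constant across all values of $\eps/\Delta$, with the $e^{\eps/\Delta}$ factor in $r$ serving precisely to preserve this gap when $\eps/\Delta$ is large. A standard multiplicative Chernoff bound on $\NB(r,p)$---applied via its moment generating function $\Ex[e^{t\NB(r,p)}] = ((1-p)/(1-pe^t))^r$, or via the representation of $\NB(r,p)$ as a sum of $r$ independent geometric variables when $r \in \mathbb{N}$---yields $\Pr[\NB(r,p) \leq m_0 + \Delta] \leq \exp(-\Omega(r))$, and plugging in $r \geq 50\log(\delta^{-1})$ makes the right-hand side at most $\delta$. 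Assembling the three steps,
\[
d_\eps(k + \NB(r,p) \,||\, \NB(r,p)) \leq \Pr[k + \NB(r,p) < m_0] \leq \Pr[\NB(r,p) < m_0 + \Delta] \leq \delta.
\]

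The main obstacle is the joint calibration of $p$, $r$, and $m_0$ so that the ratio bound and the tail bound fit within the same budget across all regimes of $\eps/\Delta$. The asymmetry between the easy case $k \geq 0$ (where $\rho_k$ is trivially controlled by $p^{-k}$) and the binding case $k = -\Delta$ (where the product blows up like $(1+r/m)^\Delta$) is what forces the specific constants $0.1$ and $0.9$ in the definitions of $p$ and $m_0$; the $e^{\eps/\Delta}$ boost to $r$ is what enables the Chernoff decay $\exp(-\Omega(r))$ to reach $\delta$ uniformly in $\eps/\Delta$, rather than degrading in the large-$\eps/\Delta$ regime where $1-p$ is no longer small.
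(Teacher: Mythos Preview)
The paper does not prove this lemma; it is quoted from \cite{ghazi2020private} and used as a black box, so there is no in-paper argument to compare against.

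Your bulk/tail decomposition is the right strategy, and it works as written in the regime $\eps/\Delta = O(1)$ (which is the only one the paper actually uses: there $\eps_0 \le 0.01$ and $\Delta \ge 1$). But your claim that $\mu/m_0$ stays bounded below for all $\eps/\Delta$ is false with the threshold you chose. The step $\prod_{i=1}^{j}(1+(r-1)/(m+i)) \le \exp(jr/m)$ relies on $1+y \le e^y$, which is sharp only when $y$ is small; with $m_0 = \Delta r/(0.9\eps)$ one has
\[
\frac{\mu}{m_0} \;=\; \frac{0.9(\eps/\Delta)\,e^{-0.1\eps/\Delta}}{1-e^{-0.1\eps/\Delta}} \;\longrightarrow\; 0 \qquad\text{as }\eps/\Delta \to \infty,
\]
so for large $\eps/\Delta$ your $m_0$ sits above the mean and no lower-tail bound is available. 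The $e^{\eps/\Delta}$ factor in $r$ cannot rescue this particular step: it multiplies $\mu$ and $m_0$ by the same amount and hence leaves their ratio unchanged. The fix is to keep the product in multiplicative form rather than taking logarithms: $\rho_{-j}(m) \le \bigl(p(1+r/m)\bigr)^j \le e^\eps$ as soon as $1+r/m \le e^{1.1\eps/\Delta}$, i.e.\ $m \ge m_0' := r/(e^{1.1\eps/\Delta}-1)$. Then
\[
\frac{\mu}{m_0'} \;=\; \frac{e^{1.1\eps/\Delta}-1}{e^{0.1\eps/\Delta}-1} \;=\; \sum_{i=0}^{10} e^{0.1\, i\,\eps/\Delta} \;\ge\; 11
\]
uniformly, and your Chernoff step goes through (that step is where the $e^{\eps/\Delta}$ factor in $r$ is genuinely needed, to offset the $p$-dependence of the rate function).
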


The following is a simple corollary of Item (2) of Proposition~\ref{prop:key-facts-hockey-stick} and Lemma~\ref{lm:hiding-NB}.

\begin{cor}\label{cor:hiding-NB}
	For any $\eps > 0, \delta \in (0,1)$, and $\Delta \in \mathbb{N}$, let $p$ and $r$ be as in~Lemma~\ref{lm:hiding-NB}. For any two distributions $X$ and $Y$ on $\{0,1,2,\dotsc,\Delta\}$, $d_{\eps}(X + \NB(r,p) || Y + \NB(r,p)) \le \delta$.
\end{cor}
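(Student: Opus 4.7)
The plan is to apply Item~(2) of Proposition~\ref{prop:key-facts-hockey-stick} once to a \emph{symmetric} joint decomposition of both $X + \NB(r,p)$ and $Y + \NB(r,p)$, then use translation invariance of the hockey-stick divergence to reduce each pairwise term to the hypothesis of Lemma~\ref{lm:hiding-NB}. Writing $Z = \NB(r,p)$, $\mu_j = \Pr[X=j]$, and $\nu_k = \Pr[Y=k]$, I would express both convolutions as mixtures indexed by the \emph{same} pair set $\{0,\dotsc,\Delta\}^2$:
\[
X + Z \;=\; \sum_{(j,k)} \mu_j \nu_k \cdot (j+Z), \qquad Y + Z \;=\; \sum_{(j,k)} \mu_j \nu_k \cdot (k+Z),
\]
where the identities follow immediately from $\sum_k \nu_k = \sum_j \mu_j = 1$. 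The crucial feature is that the weights on both sides are identical, namely $\alpha_{j,k} = \beta_{j,k} = \mu_j \nu_k$, so in Item~(2) the logarithmic correction $\ln(\beta_{j,k}/\alpha_{j,k})$ vanishes.

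Feeding this into Item~(2) therefore gives
\[
d_{\eps}(X+Z \,\|\, Y+Z) \;\le\; \sum_{(j,k)} \mu_j \nu_k \cdot d_{\eps}(j+Z \,\|\, k+Z).
\]
I would then note that hockey-stick divergence is invariant under a common integer translation of both arguments; a one-line reindexing $m \mapsto m-k$ in its defining sum yields $d_{\eps}(j+Z \,\|\, k+Z) = d_{\eps}((j-k)+Z \,\|\, Z)$. Since $j,k \in \{0,\dotsc,\Delta\}$ forces $j-k \in \{-\Delta,\dotsc,\Delta\}$, Lemma~\ref{lm:hiding-NB} supplies $d_{\eps}((j-k)+Z \,\|\, Z) \le \delta$ uniformly in $(j,k)$, and summing against the weights $\mu_j \nu_k$ (which total to $1$) recovers the desired bound $\delta$.

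I do not expect a real obstacle here: the only subtlety is spotting the joint-index decomposition that aligns $\alpha_{j,k}$ with $\beta_{j,k}$ so the shift penalty in Item~(2) is annihilated --- a more naive one-sided decomposition (say, writing only $X+Z$ as a mixture of $(j+Z)$'s and leaving $Y+Z$ intact) would leave an unhelpful $d_{\eps}(j+Z \,\|\, Y+Z)$ term, and one would either have to iterate Item~(2) a second time against the mixture decomposition of $Y+Z$ or appeal to convexity separately. Choosing the symmetric joint decomposition collapses all of this into a single line.
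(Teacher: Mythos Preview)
Your proposal is correct and matches the paper's approach: the paper states only that the corollary follows from Item~(2) of Proposition~\ref{prop:key-facts-hockey-stick} together with Lemma~\ref{lm:hiding-NB}, and your joint-index decomposition with $\alpha_{j,k}=\beta_{j,k}=\mu_j\nu_k$ is precisely the clean way to instantiate Item~(2) so that the $\ln(\beta/\alpha)$ shift vanishes and each residual term $d_{\eps}(j+Z\,\|\,k+Z)=d_{\eps}((j-k)+Z\,\|\,Z)$ is handled by Lemma~\ref{lm:hiding-NB}.
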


\subsection{A Private-Coin Base Protocol}\label{sec:UB-DE-base-protocol}

Recall that $\DE_{n,D}$ denotes the restriction of \DE such that every user gets an input from $[D]$, and the goal is compute the number of distinct elements among all users. 

We are now ready to prove Theorem~\ref{theo:upper-bound-DE-base}, which is the private-coin case of Theorem~\ref{theo:upper-bound-DE}. To simplify the privacy analysis of the protocol and ease its application in Section~\ref{sec:UB-DE-main-protocol}, we also allow the input to be $0$, which means that the user's input is not counted. 

\begin{theorem} \label{theo:upper-bound-DE-base}
	For any $\eps \le O(1)$ and $\delta \le 1/n$, there is a private-coin $(\eps,\delta)$-$\shuffledDP$ protocol computing $\DE_{n,D}$ with error $O\left(\sqrt{D} \cdot \eps^{-1}\right)$ with probability at least $0.99$. Moreover, the expected number of messages sent by each user is $\frac{1}{2} + O\left( \frac{\log(1/\delta)^2 \ln(2/\eps)}{\eps} \cdot \frac{D}{n} \right)$.
\end{theorem}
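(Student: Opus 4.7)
I plan to realize the $\shuffledMtwoDP$ protocol $\Pmtwo$ of~\cite{BCJM20} inside the standard $\shuffledDP$ model using a lightweight per-user negative-binomial noise scheme, in place of the $D$ independent secure-aggregation-over-$\F_2$ invocations of~\cite{ishai2006cryptography} that make $\PBCJM$ expensive in messages. Setting $q$ as in $\Pmtwo$ so that $\Pr[\Bin(n,q) \equiv 1 \pmod 2] = 1/(2e^{\eps/2})$, and setting $(r,p)$ from Lemma~\ref{lm:hiding-NB} for a sensitivity $\Delta$ to be specified, each user $j$ with input $x_j$ will, independently for each $i \in [D]$, sample $b_j^{(i)} \sim \Ber(1/2)$ if $i = x_j$ and $b_j^{(i)} \sim \Ber(q)$ otherwise, together with $\lambda_j^{(i)} \sim \NB(r/n, p)$. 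The user will send one copy of message $i$ if $b_j^{(i)} = 1$, and $2\lambda_j^{(i)}$ copies otherwise. The analyzer is exactly that of $\Pmtwo$: count the messages whose total multiplicity in the transcript is odd, then apply the $\Pmtwo$ linear estimator.

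Accuracy will follow immediately because each user's contribution to the multiplicity of message $i$ has the same parity as $b_j^{(i)}$, so the total multiplicity's parity equals $\sum_j b_j^{(i)} \bmod 2$, identical to $\Pmtwo$. Hence the analyzer sees the same bit-pattern as $\Pmtwo$ and inherits the $O(\sqrt{D}/\eps)$-error guarantee. For the message complexity, the expected per-user count is $\sum_i \Pr[b_j^{(i)}=1] + 2\sum_i \Pr[b_j^{(i)}=0]\cdot\Ex[\lambda_j^{(i)}]$; substituting $q = O(\ln(2/\eps)/n)$, $\Ex[\lambda_j^{(i)}] = rp/((1-p)n)$, and the $r,p$ from Lemma~\ref{lm:hiding-NB}, this collapses to the claimed $1/2 + O\!\left(\frac{\log(1/\delta)^2 \ln(2/\eps)}{\eps}\cdot\frac{D}{n}\right)$.

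For privacy, since the shuffled transcript is determined by the vector of counts $(c_1,\dots,c_D)$, and the $c_i$'s are mutually independent (the $b$'s and $\lambda$'s being independent across $i$), a neighboring change alters only $c_{x_{\text{old}}}$ and $c_{x_{\text{new}}}$. By basic composition it suffices to show $(\eps/2, \delta/2)$-DP for each affected coordinate. Fix such an $i$ and write $c_i = U_1 + V$ with $U_1 \in \{1\} \cup 2\N$ the contribution of user 1 and $V$ the aggregated contribution of everyone else. The noise component $2\sum_{j \ne 1, b_j^{(i)}=0} \lambda_j^{(i)}$ embedded in $V$ concentrates near $2\cdot\NB(r', p)$ with $r' \approx (1-q) r (n-1)/n$. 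Truncating $\lambda_1^{(i)} \le T = O(\log(1/\delta)/\eps)$ (at failure probability $\delta/4$ from an NB tail bound) caps $U_1$'s support at $O(T)$ values, so the sensitivity of $c_i$ to the user 1 change is $\Delta = O(T)$. Applying Item~2 of Proposition~\ref{prop:key-facts-hockey-stick} to split the two Bernoulli mixture components of $U_1$ and then invoking Corollary~\ref{cor:hiding-NB} on each slice will yield the target hockey-stick divergence bound.

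The main delicate point is the mismatch between parity-preservation and NB-hiding: Corollary~\ref{cor:hiding-NB} is stated for $\NB(r,p)$ noise, but the parity-preserving structure of the protocol forces the aggregated noise in $V$ to be $2\cdot\NB(r',p)$, which is supported only on even integers and by itself cannot dominate the odd-parity shift introduced by the bit $b_1^{(i)}$. My plan to bypass this is to split $c_i$ into its parity-even and parity-odd slices, so that each slice has the form ``bounded shift plus $\NB(r',p)$'' after rescaling by a factor of two, at which point Corollary~\ref{cor:hiding-NB} applies directly. Controlling the data-dependent shape parameter of $V$'s noise carefully enough, and tightening the constants so that the final $(\eps,\delta)$ guarantee matches the parameter budget used to set $r,p,T$, is where I expect most of the technical work to go.
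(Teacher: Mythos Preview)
Your high-level plan is the paper's plan: realize $\Pmtwo$ with per-user negative-binomial padding so that parities are preserved, then recover privacy via a parity decomposition plus Corollary~\ref{cor:hiding-NB}. The accuracy and message-count arguments you sketch are exactly the paper's.

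Where you diverge is in the randomizer, and this is precisely what creates the ``delicate point'' you flag at the end. You send $2\lambda_j^{(i)}$ \emph{only when} $b_j^{(i)}=0$; the paper's actual randomizer (Algorithm~\ref{algo:randomizer}) instead sends, for \emph{every} $i$ and \emph{unconditionally}, both a $\Ber(q')$ bit and $2\eta$ with $\eta\sim\NB(r/n,p)$, and then adds one further $\Ber(1/2)$ only for the user's own coordinate. The effect is that the count $y_i$ factors cleanly as $\Bin(n,q')+2\cdot\NB(r,p)+\Bin(m_i,1/2)$ with the $2\cdot\NB(r,p)$ block independent of all Bernoulli bits and with a \emph{fixed} shape parameter $r$. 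This dissolves both obstacles you anticipate: there is no data-dependent $r'$ to control, and the parity slice can be taken on $A=\Bin(n,q')$ (resp.\ $A+\Ber(1/2)$) alone, with $2\cdot\NB(r,p)$ sitting aside as pure even noise. Within each slice one has a bounded, fixed-parity variable plus $2\cdot\NB(r,p)$, so dividing by two puts you squarely in the setting of Corollary~\ref{cor:hiding-NB}. The paper also introduces a ``null'' input $0$ and routes the neighboring comparison through $\bar x=(0,x_2,\dots,x_n)$, reducing the two-coordinate change to two instances of ``$\Bin(n,q')+2\NB(r,p)$ versus the same plus $\Ber(1/2)$'' (Lemma~\ref{lm:privacy-bound}); this is what lets the mixing-weight ratios in Proposition~\ref{prop:key-facts-hockey-stick}(2) come out as $e^{\pm\eps_0}$ rather than the catastrophic $1/2$ versus $q$ ratio you would face if you sliced directly on $b_1$.

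Your conditional-noise variant is not obviously wrong, but making it go through would require conditioning on the number of zero bits among users $2,\dots,n$ to pin down the $\NB$ shape, and then a more intricate parity decomposition (since in your protocol the ``signal'' $\Ber(1/2)$ and the ``noise'' $\Ber(q)$ are fused into a single bit per user rather than layered). The paper's unconditional-noise design buys a one-line reduction to Lemma~\ref{lm:privacy-bound}; your design buys nothing in message count (at most a constant factor) and costs exactly the analysis headache you identified.
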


\begin{proof}
Without loss of generality, we can assume that $\epsilon \le 1$. The algorithm requires several global constants that only depend on the values of $n,\eps$, and $\delta$. Algorithm~\ref{algo:global-constants} specifies these constants.
	Here, $c_0$ is a sufficiently large constant to be specified later.
	
	\begin{algorithm}
		\caption{Set-Global-Constants($n$, $\eps$, $\delta$) 
			\label{algo:global-constants}}
		\KwIn{$n$ is the number of users and the pair $(\eps,\delta)$ specifies the DP guarantee.}
		$\eps_0 = \min(\eps/6, 0.01)$\;
		$\Delta = \left\lceil c_0 \cdot \log \delta^{-1} \ln(\eps_0^{-1}) + 1 \right\rceil$\;
		$p = e^{-0.1\eps_0/\Delta}$\;
		$r = 50 \cdot e^{\eps_0/\Delta} \cdot \log(10\delta^{-1})$\;
		$q = \frac{1-(1-e^{-\eps_0})^{1/n}}{2}$\;
	\end{algorithm}
	
	Next, we specify the randomizer and the analyzer of the protocol in Algorithm~\ref{algo:randomizer} and Algorithm~\ref{algo:analyzer} respectively.
	
	\begin{algorithm}
		\caption{Randomizer($x$, $D$, $n$, $\eps$, $\delta$) 
			\label{algo:randomizer}}
		\KwIn{$x \in \{0\} \cup [D]$ is the user's input. $D$ is the universe size.} 
		Set-Global-Constants($n$, $\eps$, $\delta$)\;
		Toss a uniformly random coin to get $v \in \{0,1\}$\;
		\If{$v = 1$ and $x \ne 0$}{
			send message $(x)$\;
		}
		\For{$i \in [D]$}{
			Let $ y \leftarrow \Ber(q')$\;
			\If{$y = 1$}{
				send message $(i)$\;
			}
			Let $\eta \leftarrow \NB(r/n,p)$\;
			Send $2 \cdot \eta$ messages $(i)$\;
		}
	\end{algorithm}
	
	\begin{algorithm}
		\caption{Analyzer($S$, $D$, $n$, $\eps$, $\delta$)
			\label{algo:analyzer}}
		\KwIn{$S$ is the multi-set of messages. $D$ is the universe size.}
		Set-Global-Constants($n$, $\eps$, $\delta$)\;
		
		\For{$i \in [D]$}{
			Let $y_i$ be the number of message $(i)$ in $S$\;
			$C_i = y_i \bmod{2}$\;
		}
		
		$C = \sum_{i=1}^{D} C_i$\;
		$z = \frac{2Ce^{\eps_0} - D}{e^{\eps_0} - 1}$\;
		\KwRet{$z$}\;
	\end{algorithm}
	
	\paragraph{Accuracy Analysis.} We first analyze the error of our protocol. Let $E$ be the set $\{x_i\}_{i \in [n], x_i \ne 0}$. Recall that the goal is to estimate $|E|$.
	
	For each $i \in [D]$, we analyze the distribution of the random variable $C_i$ in Algorithm~\ref{algo:analyzer}. We observe that: (1) if $i \in E$, then $C_i$ is distributed uniformly at random over $\{0,1\}$; (2) if $i \notin E$, then $C_i$ is distributed as $\Ber\left(\frac{1}{2 e^{\eps_0}} \right)$ by Lemma~\ref{lm:distr-C_i}; (3) $\{C_i\}_{i \in [D]}$ are independent.
	
	\begin{lemma}[{\cite[Lemma~3.5]{BCJM20}}]\label{lm:distr-C_i}
		Let $n,q'$ be specified as in Global-Constants$(n,\eps,\delta)$.  Then, $[ \Bin(n,q') \mod 2]$ is distributed identically to $\Ber\left( \frac{1}{2 e^{\eps_0}} \right)$.
	\end{lemma}
	
	Hence, we have that $\Ex[C] = |E| \cdot \frac{1}{2} + (D - |E|) \cdot \frac{1}{2 e^{\eps_0}}$. Plugging in the equation defining the output $z$, we have $\Ex[z] = \Ex\left[\frac{2Ce^{\eps_0} - D}{e^{\eps_0} - 1}\right] = |E|$. An application of Hoeffding's inequality implies that 
	\[
	\Pr\left[|z - |E|| > c \cdot (\eps_0)^{-1} \cdot \sqrt{D} \right] < 0.01,
	\] for a sufficiently large constant $c$. Hence, with probability at least $0.99$, the error of the protocol is less than $c \cdot (\eps_0)^{-1} \cdot \sqrt{D} = O(\sqrt{D} \cdot \eps^{-1})$.

	\paragraph*{Privacy Analysis.} We now prove that our protocol is indeed $(\eps,\delta)$-\DP.
	Note that the multi-set of messages $S$ can be described by integers $(y_i)_{i \in [D]}$ (corresponding to the histogram of the messages). 
	
	Consider two neighboring datasets $x = (x_1,x_2,\dotsc,x_n)$ and $x' = (x_1',x_2,\dotsc,x_n)$ (without loss of generality, we assume that they differ at the first user).  Let $Y$ and $Y'$ be the corresponding distributions of $(y_i)_{i \in [D]}$ given input datasets $x$ and $x'$. The goal is to show that they satisfy the $(\eps,\delta)$-DP constraint. That is, we have to establish that $d_{\eps}(Y||Y') \le \delta$.
	
	To simplify the analysis, we introduce another dataset $\bar{x} = (0,x_2,\dotsc,x_n)$, and let $\bar{Y}$ be the corresponding distribution of $(y_i)_{i \in [D]}$ given input dataset $\bar{x}$. By the composition rule of $(\eps,\delta)$-DP, it suffices to show that the pairs $(Y,\bar{Y})$ and $(\bar{Y},Y')$ satisfy $(\eps/2,\delta/3)$-DP (note that $\eps <1$, and $\delta/3 + e^{\eps/2} \cdot \delta/3 \le \delta$). By symmetry, it suffices to consider the pair $(Y,\bar{Y})$ and prove that $d_{\eps/2}(Y||\bar{Y}) \le \delta/3$.
	
	Let $i = x_1$, and $m_i$ be the number of times that $i$ appears in $x_2,\dotsc,x_n$. First note that all coordinates in both $Y$ and $\bar{Y}$ are independent, and furthermore the marginal distribution of $Y$ and $\bar{Y}$ on coordinates in $[D] \setminus \{i\}$ are identical. Hence, by Item (1) of Proposition~\ref{prop:key-facts-hockey-stick}, it suffices to establish that $Y_i$ and $\bar{Y}_i$ satisfy $(\eps/2,\delta/3)$-DP.
		
	The distribution of $\bar{Y}_i$ is $\Bin(n,q') + 2 \cdot \NB(r,p) + \Bin(m_i,1/2)$, and the distribution of $Y_i$ is $\Bin(n,q') + 2 \cdot \NB(r,p) + \Bin(m_i+1,1/2)$.\footnote{
		Recall that for two random variables $X$ and $Y$, we use $X + Y$ to denote the random variable distributed as a sum of two independent samples from $X$ and $Y$.} Since $\Bin(m_i+1,1/2) = \Bin(m_i,1/2) + \Ber(1/2)$, it suffices to consider the case where $m_i = 0$ by Item (1) of Proposition~\ref{prop:key-facts-hockey-stick}.
	
	We need the following lemma, whose proof is deferred until we finish the proof of Theorem~\ref{theo:upper-bound-DE-base}.
	
	\begin{lemma}\label{lm:privacy-bound}
		Let $n,q',\lambda$ be specified as in Set-Global-Constants$(n,\eps,\delta)$, $X = \Bin(n,q') + 2 \cdot \NB(r,p)$, and $Y = \Bin(n,q') + 2 \cdot \NB(r,p) + \Ber(1/2)$.  Then,
		\[
		d_{\eps/2}(X||Y) \le \delta/3 \quad\text{and}\quad d_{\eps/2}(Y||X) \le \delta/3.
		\]
	\end{lemma}

	By Lemma~\ref{lm:privacy-bound} and previous discussions, it follows that $D_{\eps}(Y||Y') \le \delta$, which shows that our protocol is $(\eps,\delta)$-\DP as desired.
	
	In the following, we will need the proposition below which gives us an estimate on $q'$.
	
	\begin{prop}\label{prop:bound-on-q'}
		Let $n,q',\eps_0$ be specified as in Set-Global-Constants$(n,\eps,\delta)$. Then, $q' \le O(\ln(\eps_0^{-1}) / n)$.
	\end{prop}
	\begin{proof}
		Since $\eps_0 \le 0.01$, we have $e^{-\eps_0} \le 1 - \eps_0/2$. Hence, $1 - e^{-\eps_0} \ge \eps_0 / 2$. Plugging in the definition of $q'$, it follows that $(1-e^{-\eps_0})^{1/n} \ge e^{\ln(\eps_0/2) / n} \ge 1  + \ln(\eps_0/2) / n$.  Finally, it follows that
		\[
		q' = \frac{1 - (1-e^{-\eps_0})^{1/n}}{2} \le -\ln(\eps_0/2) / 2n = O(\ln(\eps_0^{-1}) / n).\qedhere
		\]
	\end{proof}
	
	\paragraph*{Efficiency Analysis.} We now analyze the message complexity of our protocol. Note that 
	\[
	\Ex[\NB(r/n,p)] = \frac{1}{n} \cdot \frac{pr}{1-p} = O\left(\frac{1}{n} \cdot \frac{\Delta}{\eps_0} \cdot \log(1/\delta)\right) = O\left(\frac{1}{n} \cdot \eps^{-1} \ln(2/\eps) \cdot \log(1/\delta)^2\right).
	\]
	By a straightforward calculation, each user sends 
	\[
	\frac{1}{2} + O(D \cdot \Ex[\NB(r/n,p)] +  D \cdot q') \le \frac{1}{2} + O\left( \frac{\log(1/\delta)^2 \ln(2/\eps)}{\eps} \cdot \frac{D}{n} \right)
	\]
	messages in expectation.
	\end{proof}
	
	Finally, we prove Lemma~\ref{lm:privacy-bound}.
	\begin{proofof}{Lemma~\ref{lm:privacy-bound}}
		We consider bounding $D_{\eps/2}(X||Y)$ first. Note that since $q' = O( \ln (\eps_0^{-1}) /n)$ by Proposition~\ref{prop:bound-on-q'}, we set the constant $c_0$ so that
		\[
		\Pr\left[\Bin(n,q') > c_0 \cdot \log \delta^{-1} \ln (\eps^{-1}) \right] \le \delta/10.
		\]
		Recall that $\Delta = \lceil c_0 \cdot \log \delta^{-1} \ln (\eps^{-1}) + 1 \rceil$, and note that our choices of $r$ and $p$ satisfy Lemma~\ref{lm:hiding-NB} with privacy parameters $\eps_0 \le \eps/6$ and $\delta / 10$.
		
		Now, let $A = \Bin(n,q')$, $N = \NB(r,p)$, and $B = \Ber(1/2)$.
		
		To apply Item (2) of Proposition~\ref{prop:key-facts-hockey-stick}, we are going to decompose $X = A + 2 \cdot N$ and $Y = A + 2 \cdot N + B$ into a  weighted sum of three sub-distributions. 
		
		\paragraph*{Decomposition of $X = A + 2 \cdot N$.} We define three events on $A$ as follows:
		\[
		\calE_{big} = [A > c_0 \cdot \log (\delta^{-1})], \quad\calE_{even} = [A \le c_0 \cdot \log (\delta^{-1}) \wedge A \equiv 0 \bmod 2],
		\]
		and
		\[
		\calE_{odd} = [A \le c_0 \cdot \log (\delta^{-1}) \wedge A \equiv 1 \bmod 2].
		\]
		We let $\alpha_{big} = \Pr_{A}[\calE_{big}]$, $\alpha_{even} = \Pr_{A}[\calE_{even}]$, $\alpha_{odd} = \Pr_{A}[\calE_{even}]$.
		
		From our choice of $c_0$, we have $\alpha_{big} = \Pr\left[\Bin(n,q') > c_0 \cdot \log \delta^{-1} \ln (\eps^{-1})\right] \le \delta / 10$. Let $q = \frac{1}{2 e^{\eps_0}}$. By Lemma~\ref{lm:distr-C_i}, it follows that $|\alpha_{odd} - q| \le \delta / 10$ and $|\alpha_{even} - (1 - q)| \le \delta / 10$. 
		
		Therefore, let $A_{big} := A|\calE_{big}$, $A_{even} := A|\calE_{even}$ and $A_{odd} := A|\calE_{odd}$. We can now decompose $A + 2N$ as a mixture of components $A_{big} + 2N$, $A_{even} + 2N$ and $A_{odd} + 2N$ with corresponding mixing weights $\alpha_{big}$, $\alpha_{even}$ and $\alpha_{odd}$.
		
		\paragraph*{Decomposition of $Y = A + 2 \cdot N + B$.}
		Now, we define three events on $(A,B)$ as follows
		\[
		\WT{\calE}_{big} = [A > c_0 \cdot \log (\delta^{-1})], \quad\WT{\calE}_{even} = [A \le c_0 \cdot \log (\delta^{-1}) \wedge A + B \equiv 0 \bmod 2],
		\]
		and
		\[
		\WT{\calE}_{odd} = [A \le c_0 \cdot \log (\delta^{-1}) \wedge A + B \equiv 1 \bmod 2].
		\]
		Similarly, we let $\beta_{big} = \Pr_{A,B}[\WT{\calE}_{big}]$, $\beta_{even} = \Pr_{A,B}[\WT{\calE}_{even}]$, $\beta_{odd} = \Pr_{A,B}[\WT{\calE}_{even}]$.
		
		By our choice of $c_0$, we have $\beta_{big} \le \delta / 10$. Since $\Pr[A + B \equiv 1 \bmod 2] = 1/2$, it follows that $|\beta_{even} - 1/2| \le \delta / 10$ and $|\beta_{odd} - 1/2| \le \delta / 10$. 
		
		Let $(A+B)_{big} := (A+B)|\WT{\calE}_{big}$, $(A+B)_{even} := (A+B)|\WT{\calE}_{even}$ and $(A+B)_{odd} := (A+B)|\WT{\calE}_{odd}$. We therefore decompose $A + 2N + B$ as a mixture of components $(A+B)_{big} + 2N$, $(A+B)_{even} + 2N$ and $(A+B)_{odd} + 2N$ with mixing weights $\beta_{big}$, $\beta_{even}$ and $\beta_{odd}$. 
		
		\paragraph*{Bounding $d_{\eps/2}(X||Y)$.}
		
		By Item (2) of Proposition~\ref{prop:key-facts-hockey-stick}, we have that
		\begin{align*}
		d_{\eps/2}(X||Y) \le& \alpha_{big} \\
		+&\alpha_{even} \cdot d_{\eps/2 + \ln (\beta_{even} / \alpha_{even})}(A_{even} + 2N || (A + B)_{even} + 2N)\\
		+& \alpha_{odd} \cdot d_{\eps/2 + \ln (\beta_{odd} / \alpha_{odd})} (A_{odd} + 2N || (A + B)_{odd} + 2N).   
		\end{align*}
		
		Now, note that $\frac{\beta_{odd}}{\alpha_{odd}} \ge 1$, and $ \frac{\beta_{even}}{\alpha_{even}} \ge \frac{1/2 - \delta/10}{1-q + \delta/10} \ge e^{-2\eps_0}$. It follows that $\eps/2 + \ln (\beta_{even} / \alpha_{even}) \ge \eps/2 - 2\eps_0 \ge \eps_0$ (since $\eps_0 \le \eps/6$), and $\eps/2 + \ln (\beta_{odd} / \alpha_{odd}) \ge \eps/2 \ge \eps_0$.
		
		Hence by Corollary~\ref{cor:hiding-NB}, we have that
		\[
		d_{\eps/2}(X||Y) \le \delta/10 + d_{\eps_0}(A_{odd} + 2N || (A + B)_{odd} + 2N) + d_{\eps_0}(A_{even} + 2N || (A + B)_{even} + 2N) \le \delta / 3.
		\]
		By a similar calculation, we can also bound $d_{\eps}(Y||X)$ by $\delta / 3$. 
	\end{proofof}

\paragraph*{Extension to Robust Shuffle Privacy.} Now we briefly discuss how to generalize the analysis of the above protocol in order to show that it also satisfies the stronger robust shuffle privacy condition. We first need the following formal definition of robust shuffle privacy.

\begin{definition}[\cite{BCJM20}]
	A protocol $P = (R,S,A)$ is $(\eps,\delta,\gamma)$-robustly shuffle differential private if, for all $n \in \mathbb{N}$ and $\gamma' \ge \gamma$, the algorithm $S_{\gamma' n} \circ R^{\gamma' n}$ is $(\eps,\delta)$-DP. In other words, $P$ guarantees $(\eps,\delta)$-shuffle privacy whenever at least a $\gamma$ fraction of users follow the protocol.
\end{definition}

Note that while the above definition requires the privacy condition  to be satisfied whenever there is at least a $\gamma$ fraction of users participating, the accuracy condition is only required when all users participate. That is, if some users drop from the protocol, then the analyzer does not need to output an accurate estimate of $\DE$.

\begin{theorem} \label{theo:upper-bound-DE-base-robust}
	For two constants $\gamma,\eps \in (0,1]$, and $\delta \le 1/n$, there is an $(\eps,\delta,\gamma)$-robustly shuffle differentially private protocol solving $\DE_{n,D}$ with error $O_{\gamma,\eps}\left(\sqrt{D}\right)$ and with probability at least $0.99$. Moreover, the expected number of messages sent by each user is $\frac{1}{2} + O_{\gamma,\eps}\left( \log(1/\delta)^2 \cdot \frac{D}{n} \right)$.\footnote{To make the notation clean, we choose not to analyze the exact dependence on $\eps$ and $\gamma$.}
\end{theorem}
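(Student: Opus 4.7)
The plan is to re-calibrate the noise in the protocol of Theorem~\ref{theo:upper-bound-DE-base} so that even $\gamma n$ honest users contribute enough noise to guarantee privacy. Specifically, in Set-Global-Constants, replace $q'$ with $\tilde{q} := \frac{1-(1-e^{-\eps_0})^{1/(\gamma n)}}{2}$ and have each user draw $\eta \leftarrow \NB(r/(\gamma n), p)$ instead of $\NB(r/n, p)$. The randomizer is otherwise unchanged, and the analyzer's output is adjusted to $z := \frac{C - p_1 D}{1/2 - p_1}$, where $p_1 := \frac{1 - (1-e^{-\eps_0})^{1/\gamma}}{2}$ is a fixed constant depending only on $\gamma$ and $\eps$, representing the parity bias when all $n$ users participate.

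For privacy, fix any $\gamma' \ge \gamma$ and suppose a set of $n' = \gamma' n$ honest users follow the protocol. By the additive property of the negative binomial distribution, the aggregate noise at each coordinate $i$ is $\Bin(\gamma' n, \tilde{q}) + 2 \NB(r \gamma'/\gamma, p)$. Since $r \gamma'/\gamma \ge r$, we factor the NB noise as $\NB(r, p) + \NB(r(\gamma'/\gamma - 1), p)$ and invoke post-processing (Item~(1) of Proposition~\ref{prop:key-facts-hockey-stick}) together with Corollary~\ref{cor:hiding-NB} with parameters $(\eps_0, \delta/10)$. The parity of $\Bin(\gamma' n, \tilde{q})$ equals $1$ with probability $p_{\gamma'} := \frac{1 - (1-e^{-\eps_0})^{\gamma'/\gamma}}{2}$, which lies in $\bigl[\tfrac{1}{2 e^{\eps_0}}, \tfrac{1}{2}\bigr]$ for every $\gamma' \in [\gamma, 1]$. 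The decomposition-based proof of Lemma~\ref{lm:privacy-bound} then carries through verbatim with $p_{\gamma'}$ in place of $\frac{1}{2 e^{\eps_0}}$: the only properties used are that the parity bias lies in the above interval (which controls the mixing-weight ratios $\beta_{even}/\alpha_{even} \ge e^{-2\eps_0}$ and $\beta_{odd}/\alpha_{odd} \ge 1$) and that $\Pr[\Bin(\gamma' n, \tilde{q}) > c_0 \log(\delta^{-1})] \le \delta/10$ for a sufficiently large $c_0 = c_0(\gamma, \eps)$, which follows from the bound $\gamma' n \tilde{q} \le n \tilde{q} = O_{\gamma,\eps}(1)$ (the analog of Proposition~\ref{prop:bound-on-q'} with $\gamma n$ in place of $n$).

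For accuracy, the accuracy requirement only applies when all $n$ users participate ($\gamma' = 1$). In this case the parities $\{C_i\}_{i \in [D]}$ are mutually independent with $C_i \sim \Ber(1/2)$ if $i \in E$ and $C_i \sim \Ber(p_1)$ otherwise, so the modified estimator satisfies $\Ex[z] = |E|$ and Hoeffding's inequality yields $|z - |E|| = O_{\gamma,\eps}(\sqrt{D})$ with probability at least $0.99$. For the expected message count per user, $\tilde{q} = O_{\gamma,\eps}(1/n)$ and $\Ex[\NB(r/(\gamma n), p)] = O_{\gamma,\eps}(\log(\delta^{-1})^2/n)$, so summing over the $D$ coordinates and adding the initial $\tfrac{1}{2}$-probability message gives the claimed total $\tfrac{1}{2} + O_{\gamma,\eps}(\log(\delta^{-1})^2 \cdot D/n)$.

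The only non-routine step is verifying that the decomposition in Lemma~\ref{lm:privacy-bound} works uniformly for every $\gamma' \in [\gamma, 1]$ rather than only at $\gamma' = 1$; this boils down to checking that the ratios $\beta_{even}/\alpha_{even}$ and $\beta_{odd}/\alpha_{odd}$ remain in $[e^{-O(\eps_0)}, e^{O(\eps_0)}]$, which is immediate from $p_{\gamma'} \in [\tfrac{1}{2e^{\eps_0}}, \tfrac{1}{2}]$. Everything else reuses the machinery already established in the proof of Theorem~\ref{theo:upper-bound-DE-base}.
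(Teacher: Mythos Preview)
Your proposal is correct and follows essentially the same approach as the paper: identical modifications to $q'$ and the $\NB$ parameter, an equivalent analyzer formula (your $z=\frac{C-p_1D}{1/2-p_1}$ is the same as the paper's $z=\frac{2C\tau-D}{\tau-1}$ with $\tau=1/(2p_1)$), and the same reduction of accuracy and message-count to the calculations in Theorem~\ref{theo:upper-bound-DE-base}. Where the paper simply asserts that the privacy analysis ``goes through'' for any $\gamma'\ge\gamma$, you supply the explicit verification (factoring off $\NB(r,p)$ from the aggregate noise and checking that the mixing-weight ratios in Lemma~\ref{lm:privacy-bound} stay in range because $p_{\gamma'}\in[\tfrac{1}{2e^{\eps_0}},\tfrac{1}{2}]$); this is exactly the detail one would fill in to make the paper's sketch rigorous.
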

\begin{proofsketch}
	To make the algorithm in Theorem~\ref{theo:upper-bound-DE-base} robustly shuffle private, we need the following modifications:
	\begin{itemize}
		\item In Algorithm~\ref{algo:global-constants}, we set 
		$q' = \frac{1-(1-e^{-\eps_0})^{1/(\gamma n)}}{2}$, instead of $q' = \frac{1-(1-e^{-\eps_0})^{1/n}}{2}$.
		\item In Algorithm~\ref{algo:randomizer}, we let $\eta \leftarrow \NB(r/(\gamma n),p)$, instead of $\eta \leftarrow \NB(r/n,p)$.
		\item In Algorithm~\ref{algo:analyzer}, we set $z = \frac{2C\tau - D}{\tau - 1}$ for $\tau = \frac{1}{1-(1-e^{\eps_0})^{1/\gamma}}$, instead of $z = \frac{2Ce^{\eps_0} - D}{e^{\eps_0} - 1}$.
	\end{itemize}
	
	The first two modifications guarantee that there is enough noise even when only $\gamma \cdot n$ users participate, so that the privacy analysis of Theorem~\ref{theo:upper-bound-DE-base} goes through. We now show that the last modification allows us to obtain an accurate estimate of $\DE_{n,D}$ when all users participate. In the following, we will use the same notation as in the proof of Theorem~\ref{theo:upper-bound-DE-base}. Note that we have
	\[
	q' = \frac{1-(1-e^{-\eps_0})^{1/(\gamma n)}}{2} = \frac{1 - (1-\tau^{-1})^{1/n} }{2}.
	\]
	Hence by Lemma~\ref{lm:distr-C_i}, $C_i$ is distributed as $\Ber(1/2\tau)$ when $i \notin E$. A similar calculation then shows that the error can be bounded by $O_{\tau}(\sqrt{D}) = O_{\eps,\gamma}(\sqrt{D})$.
\end{proofsketch}

\paragraph*{$\ln(O(n))$-$\localDP$ Protocol for $\DE$.}
Finally, we show that the protocol from Theorem~\ref{theo:upper-bound-DE-base} is also $\ln(O(n))$-$\localDP$ with a simple modification, which proves Theorem~\ref{th:weak_privacy_local_prot_DE} (restated below).

\begin{reminder}{Theorem~\ref{th:weak_privacy_local_prot_DE}.}
	There is a $(\ln(n) + O(1))$-$\localDP$ protocol computing $\DE_{n,n}$ with error $O(\sqrt{n})$.
\end{reminder}
\begin{proofsketch}
	Let $D = n$. We consider the following modification of Algorithm~\ref{algo:randomizer}.
	\begin{algorithm}
		\caption{Randomizer($x$, $D$, $n$, $\eps$, $\delta$) 
			\label{algo:randomizer-no-noise}}
		\KwIn{$x \in [D]$ is the user's input. $D$ is the universe size.} 
		$\eps_0 = 1$, $q' = \frac{1-(1-e^{-\eps_0})^{1/n}}{2}$\;
		Toss a uniformly random coin to get $v \in \{0,1\}$\;
		\If{$v = 1$}{
			send message $(x)$\;
		}
		\For{$i \in [D] \setminus \{x\}$}{
			Let $ y \leftarrow \Ber(q')$\;
			\If{$y = 1$}{
				send message $(i)$\;
			}
		}
	\end{algorithm}

	That is, in Algorithm~\ref{algo:randomizer-no-noise} we remove the noise messages sampled from the distribution $2 \cdot \NB(r/n,p)$. Also, we do not send the same message more than once (the loop over $i$ skips the element $x$).
	
	When viewing it as a local protocol, we can assume that each user first collects all the messages it would send in Algorithm~\ref{algo:randomizer-no-noise}, and then simply outputs the histogram (so our new local randomizer only sends a single message). The analyzer in the local protocol can then aggregate these histograms, and apply the analyzer in Algorithm~\ref{algo:analyzer}. By the same accuracy proof as in Theorem~\ref{theo:upper-bound-DE-base}, it follows that the protocol achieves error $O(\sqrt{n})$ with probability at least $0.99$. So it only remains to prove that the protocol is $\ln(O(n))$-$\localDP$.
	
	We let $R$ be the randomizer in Algorithm~\ref{algo:randomizer-no-noise}, and we use $\Hist(R(x))$ to denote the distribution of the histogram of the messages output by $R$, which is exactly the output distribution of our new local randomizer.
	
	Without loss of generality, it suffices to show that for all possible histograms $z \in \{0,1\}^{D}$ (note that Algorithm~\ref{algo:randomizer-no-noise} does not send a message more than once), it holds that
	\[
	\frac{\Hist(R(1))_z}{\Hist(R(2))_z}\le O(n).
	\]
	Note that $\frac{\Hist(R(1))_z}{\Hist(R(2))_z}$ only depends on the first two bits of $z$. By enumerating all possible combination of two bits, we can bound it by
	\[
	\frac{\Hist(R(1))_z}{\Hist(R(2))_z} \le \frac{1/2}{q'} \cdot \frac{1-q'}{1/2} = \frac{1-q'}{q'} \le O(n).
	\]
	The last inequality follows from the fact that $q' = \Theta(1 / n)$.
\end{proofsketch}

\subsection{Public-Coin  Protocol}\label{sec:UB-DE-main-protocol}

We are now ready to prove Theorem~\ref{theo:upper-bound-DE} (restated below).

\begin{reminder}{Theorem~\ref{theo:upper-bound-DE}.}
	For all $\eps  \le O(1)$ and $\delta \le 1/n$, there is a public-coin $(\eps,\delta)$-$\shuffledDP$ protocol computing $\DE_{n}$ with error $O\left(\sqrt{n}\cdot \log(\delta^{-1}) \cdot \eps^{-1.5} \cdot \sqrt{\ln(2/\eps)}\right)$ and probability at least $0.99$. Moreover, the expected number of messages sent by each user is at most $1$. 
\end{reminder}
\begin{proof}
	
	Let $c_1$ be the constant in Theorem~\ref{theo:upper-bound-DE-base} such that the expected number of messages is bounded by $ c_1 \cdot \frac{\log(1/\delta)^2 \ln(2/\eps)}{\eps} \cdot \frac{D}{n} + 1/2$.
	\paragraph*{The Protocol.} We set $D = \left\lfloor n \left/ \left(2 \cdot c_1 \cdot \frac{\log(1/\delta)^2\ln(2/\eps)}{\eps}\right)\right. \right\rfloor$ so that the foregoing expected number of messages is bounded by $1$. 
	
	Note that we can assume $\eps^{-1} \cdot \ln(2/\eps) \cdot \log(1/\delta)^2 = o(n)$ as otherwise we are only required to solve $\DE_n$ with the trivial error bound $O(n)$. Therefore, we have $D \ge 1$ and $n/D = O(\eps^{-1} \cdot \ln(2/\eps) \cdot \log(1/\delta)^2 )$.
	
	We are going to apply a reduction to the private-coin protocol for $\DE_{n,D}$ in Theorem~\ref{theo:upper-bound-DE-base}. The full protocol is as follows:
	
	\begin{itemize}
		\item Using the public randomness, the users jointly sample a uniformly random mapping $f\colon \calX \to [n]$ and a uniformly random permutation $\pi$ on $[n]$.
		
		\item For each user holding an input $x \in \calX$, it computes $z = \pi(f(x))$, and sets its new input to be $z$ if $z \le D$, and $0$ otherwise. Then it runs the private-coin protocol in Theorem~\ref{theo:upper-bound-DE-base}.
		
		\item Let $f_{n}(m) := n \cdot \left( 1 - \left( 1 - \frac{1}{n} \right)^m \right) $. The analyzer first runs the analyzer in Theorem~\ref{theo:upper-bound-DE-base} to obtain an estimate $\bar{z}$. Then it computes $\hat{z} = \bar{z} \cdot \frac{n}{D}$, and outputs
		\[
		z = \argmin_{m \in \{0,1,\dotsc,n\}} |f_{n}(m) - \hat{z}|.
		\]
	\end{itemize}
	
	\paragraph*{Analysis of the Protocol.} The privacy of the protocol above follows directly from the privacy property of the protocol from Theorem~\ref{theo:upper-bound-DE-base}. Moreover, the bound on the expected number of messages per user simply follows from our choice of the parameter $D$.
	
	It thus suffices to establish the accuracy guarantee of the protocol. Let $S = \{x_i\}_{i \in [n]}$ be the set of all inputs, and the goal is to estimate $|S|$. We also let $\hat{S} = \{f(x_i)\}_{i \in [n]}$ and $\bar{S} = \hat{S} \cap \pi^{-1}([D])$.
	
	By the accuracy guarantee of Theorem~\ref{theo:upper-bound-DE-base}, it follows that with probability at least $0.99$, we have 
	\[
	|\bar{z} - |\bar{S}|| \le O(\sqrt{D} \cdot \eps^{-1}).
	\]
	In the following, we will condition on this event.
	
	\paragraph{Proving that $\hat{z}$ is a good estimate of $|\hat{S}|$.} Next, we show that $\hat{z}$ is close to $|\hat{S}|$. We will rely on the following lemma.
	
	\begin{lemma}\label{lm:sampling-is-good}
		For a uniformly random permutation $\pi \colon [n] \to [n]$ and a fixed set $E$, for every $B \in [1,n]$ such that $n/B$ is an integer, let $E_{\pi,n/B} = E \cap \pi^{-1}([n/B])$, we have
		\[
		\Pr_{\pi}\left[ \Big| |E_{\pi,n/B}| \cdot B - |E| \Big| \ge 10 \cdot \sqrt{B \cdot |E|}\right] \le 0.01.
		\]
	\end{lemma}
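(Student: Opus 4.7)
The plan is to reduce Lemma~\ref{lm:sampling-is-good} to a standard concentration statement for the hypergeometric distribution, and then apply Chebyshev's inequality.

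First, I would observe that since $\pi$ is a uniformly random permutation of $[n]$, the preimage $\pi^{-1}([n/B])$ is a uniformly random subset of $[n]$ of size $n/B$. Consequently, $|E_{\pi,n/B}| = |E \cap \pi^{-1}([n/B])|$ is distributed according to the hypergeometric distribution with population size $n$, number of ``successes'' $|E|$, and sample size $n/B$.

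Second, I would recall the standard formulas for this distribution. The mean is
\[
\Ex_{\pi}[|E_{\pi,n/B}|] = \frac{n/B}{n} \cdot |E| = \frac{|E|}{B},
\]
which already shows that $|E_{\pi,n/B}| \cdot B$ is an unbiased estimator of $|E|$. The variance satisfies
\[
\mathrm{Var}_{\pi}(|E_{\pi,n/B}|) = \frac{n/B \cdot |E| \cdot (n-|E|) \cdot (n - n/B)}{n^2 (n-1)} \le \frac{|E|}{B},
\]
where we simply drop the two factors that are at most one.

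Finally, I would apply Chebyshev's inequality with deviation threshold $t = 10\sqrt{|E|/B}$, which yields
\[
\Pr_{\pi}\!\left[\Big||E_{\pi,n/B}| - \tfrac{|E|}{B}\Big| \ge 10\sqrt{\tfrac{|E|}{B}}\right] \le \frac{|E|/B}{100 \cdot |E|/B} = \frac{1}{100}.
\]
Multiplying the event inside the probability by $B$ on both sides of the inequality gives exactly $\big||E_{\pi,n/B}|\cdot B - |E|\big| \ge 10\sqrt{B\cdot |E|}$, completing the proof. There is no real obstacle here; the only care needed is to cite (or derive in one line) the hypergeometric variance bound rather than attempting a Chernoff-style argument, which would be more delicate because of the sampling-without-replacement dependence across coordinates.
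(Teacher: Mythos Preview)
Your proof is correct. The paper takes a slightly different route: it writes $|E_{\pi,n/B}|=\sum_{i\in[n]} X_i$ with $X_i=\mathbb{1}[i\in E_{\pi,n/B}]$, observes that these indicators are negatively correlated (citing Dubhashi--Ranjan), and then applies a Chernoff bound directly. So exactly the ``more delicate'' route you warned against is what the paper does, handled cleanly via the negative-correlation machinery.

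The trade-off is that your Chebyshev argument is more elementary and self-contained---you only need the hypergeometric variance formula, no appeal to an external concentration lemma for dependent variables---while the paper's Chernoff approach yields exponentially small tail probabilities if one ever needs a constant smaller than $0.01$. For the stated lemma, the constant $10$ is tuned so that Chebyshev lands precisely on $1/100$, so nothing is lost by your choice here.
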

	\begin{proof}
		For each $i \in [n]$, let $X_i = X_i(\pi)$ be the indicator that $i \in E_{\pi,n/B}$. Note that these $X_i$'s are not independent, but they are negatively correlated~\cite[Proposition~7 and 11]{DubhashiR98}, hence a  Chernoff bound still applies.
		
		Note that $\Ex[X_i] = \frac{1}{B} \cdot \frac{|E|}{n}$. 
		By a Chernoff bound, it thus follows that
		\[
		\Pr_{\pi}\left[ \Big| \sum_{i=1}^{n} X_i - n \cdot \Ex[X_1] \Big| \ge 10 \cdot \sqrt{n \cdot \Ex[X_i]} \right] \le 0.01,
		\]
		and hence
		\[
		\Pr_{\pi}\left[ \Big| |E_{\pi,n/B}| - |E|/B \Big| \ge 10 \cdot \sqrt{|E|/B}\right] \le 0.01.
		\]
		Scaling both sides of the above inequality by $B$ concludes the proof.
	\end{proof}
	
	We now set $B = \frac{n}{D}$ and recall that $\hat{z} = B \cdot \bar{z}$. By Lemma~\ref{lm:sampling-is-good}, with probability at least $0.98$, it holds that
	\begin{align*}
	|\hat{z} - |\hat{S}| | &\le \left|\hat{z} - |\bar{S}| \cdot B \right| + \left||\hat{S}| - |\bar{S}| \cdot B \right|\\ 
	&\le B \cdot |\bar{z} - |\bar{S}|| + O(\sqrt{B \cdot n})\\ 
	&= O(B \sqrt{D} \cdot \eps^{-1}) + O(\sqrt{B \cdot n}) = O(\sqrt{B n} \cdot \eps^{-1}).
	\end{align*}
	
	\paragraph{Proving that $z$ is a good estimate of $|S|$.} Finally, we show that our output $z$ is a good estimate of $|S|$. To do so, we need the following lemma.
	
	\begin{lemma}\label{lm:hashing-is-good}
		Let $f_{n}(m) := n \cdot \left( 1 - \left( 1 - \frac{1}{n} \right)^m \right)$. For a uniformly random mapping $f\colon \calX \to [n]$ and a fixed set $E \subset \calX$ such that $|E| = m \le n$, we have that
		\[
		\Pr[ \left||\{ f(x) \}_{x \in E}| - f_{n}(m) \right| \ge 10 \sqrt{n} ] \le 0.01.
		\]
	\end{lemma}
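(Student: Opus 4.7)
The plan is to recognize this as a standard balls-into-bins occupancy problem: viewing $f$ restricted to $E$ as throwing $m$ balls into $n$ bins independently and uniformly, the quantity $|\{f(x)\}_{x \in E}|$ is exactly the number of non-empty bins. First I would fix an enumeration $E = \{x_1, \ldots, x_m\}$ and define, for each $i \in [n]$, the indicator $Y_i = \mathbb{1}[i \in \{f(x_1), \ldots, f(x_m)\}]$. A direct computation gives $\Ex[Y_i] = 1 - (1 - 1/n)^m$, so by linearity of expectation $\Ex[N] = f_n(m)$, where $N := \sum_{i \in [n]} Y_i = |\{f(x)\}_{x \in E}|$.

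The remaining task is to establish the concentration bound. The $Y_i$ are not independent, but there are two standard routes. The cleanest is to apply McDiarmid's bounded-differences inequality directly to $N$ viewed as a function of the $m$ independent random variables $f(x_1), \ldots, f(x_m)$: changing the value of any single $f(x_j)$ alters $N$ by at most $1$, since it can only cause one bin to become occupied and another to become empty. McDiarmid then yields $\Pr[|N - \Ex[N]| \ge t] \le 2 \exp(-2t^2/m)$. Setting $t = 10\sqrt{n}$ and using $m \le n$ gives $\Pr[\,|N - f_n(m)| \ge 10\sqrt{n}\,] \le 2 e^{-200} < 0.01$, as required.

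An equally valid alternative, matching the style of the preceding Lemma~\ref{lm:sampling-is-good}, is to note that the $Y_i$'s are negatively correlated (a classical property of the occupancy process, cited from~\cite{DubhashiR98} earlier in the proof), so a Chernoff bound applies to $N = \sum Y_i$, yielding the same conclusion since $\Ex[N] \le n$ and the deviation $10\sqrt{n}$ is much larger than the natural standard deviation. I expect no real obstacle here — both routes are routine; the only minor care needed is verifying the bounded-difference constant is $1$ (which is immediate from the swap-one-ball argument) so that the exponent $2 \cdot (10\sqrt{n})^2 / m \ge 200$ indeed dominates the factor of $2$ in McDiarmid.
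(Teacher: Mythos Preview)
Your proposal is correct. Your second alternative---negative correlation of the bin-occupancy indicators $Y_i$ followed by a Chernoff bound---is exactly the paper's proof (the paper cites~\cite{DubhashiR98} for the negative correlation and applies Chernoff directly).

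Your primary route via McDiarmid's inequality on the $m$ independent ball placements $f(x_1),\dots,f(x_m)$ is a genuinely different argument. It is arguably cleaner: it bypasses the negative-correlation fact entirely and works directly with the independence of the balls, needing only the observation that the number of non-empty bins is $1$-Lipschitz in each ball's location. The paper's approach, by contrast, stays consistent with the proof of the immediately preceding Lemma~\ref{lm:sampling-is-good} (where the underlying randomness is a permutation rather than independent choices, so McDiarmid in its basic form does not apply and the negative-correlation route is more natural). Both yield the claimed bound with ample room to spare.
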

	\begin{proof}
		For each $i \in [n]$, let $X_i = X_i(f)$ be the indicator whether $i \in \{ f(x) \}_{x \in E}$.  As before, these $X_i$'s are not independent but are negatively correlated~\cite[Proposition~7 and 11]{DubhashiR98}, hence a Chernoff bound still applies.
		
		Note that $\Ex[X_i] = \left( 1 - \left( 1 - \frac{1}{n} \right)^m \right)$. By a Chernoff bound, it thus follows that
		\[
		\Pr_{\pi}\left[ \Big| \sum_{i=1}^{n} X_i - n \cdot \Ex[X_1] \Big| \ge 10 \cdot \sqrt{n \cdot \Ex[X_i]} \right] \le 0.01.
		\]
		Noting that $\sum_{i=1}^{n} X_i = |\{ f(x) \}_{x \in E}|$ and $n \cdot \Ex[X_i] = f_{n}(m)$ completes the proof.
	\end{proof}

	By Lemma~\ref{lm:hashing-is-good}, it follows that with probability at least $0.99$, we have 
	$ \left| |\hat{S}| - f_{n}(|S|) \right| \le 10 \sqrt{n}$.
	
	Putting everything together, with probability at least $0.97$, we get that
	\[
	\left| \hat{z} - f_{n}(|S|) \right| \le O(\sqrt{B n} \cdot \eps^{-1}).
	\]
	
	The final step is to show that $z$ accurately estimates $|S|$. Recall that $
	z = \argmin_{m \in \{0,1,\dotsc,n\}} |f_{n}(m) - \hat{z}|$, which in particular means that
	\[
	|f_{n}(z) - \hat{z}| \le \left| f_{n}(|S|) - \hat{z} \right| \le O(\sqrt{B n} \cdot \eps^{-1}).
	\]
	By a triangle inequality, it follows that
	\[
	|f_{n}(z) - f_{n}(|S|)| \le O(\sqrt{B n} \cdot \eps^{-1}).
	\]
	
	We need the following lemma to finish the analysis.
	\begin{lemma}\label{lm:inverse-is-good}
		There is a constant $c > 0$ such that for all $a,b \in \{0,1,\dotsc,n\}$, it holds that
		\[
		|f_{n}(a) - f_{n}(b)| \ge c \cdot |a-b|.
		\]
	\end{lemma}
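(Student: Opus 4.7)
\medskip

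\noindent\textbf{Proposal.} The plan is to show that $f_n$ is a smooth increasing concave function on $[0,n]$ whose derivative is bounded below by a positive absolute constant on the whole interval, and then invoke the mean value theorem. Concretely, I would compute
\[
f_n'(x) \;=\; -n\ln(1-1/n)\cdot (1-1/n)^{x},
\]
and show that this quantity is at least some absolute constant $c_0 > 0$ whenever $x \in [0,n]$ and $n \ge 2$. Once this is done, for any $a,b \in \{0,1,\ldots,n\}$ with $a \le b$, the mean value theorem gives $f_n(b) - f_n(a) = (b-a)\cdot f_n'(\xi)$ for some $\xi \in [a,b] \subseteq [0,n]$, and hence $|f_n(a)-f_n(b)| \ge c_0 \cdot |a-b|$ as required. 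The case $n=1$ reduces to $f_1(0)=0$, $f_1(1)=1$, so the bound is trivial.

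The two sub-bounds needed are both elementary. For the first factor, using the Taylor expansion $\ln(1-1/n) = -\tfrac{1}{n} - \tfrac{1}{2n^2} - \cdots$ (valid for $n \ge 2$), one obtains $-n\ln(1-1/n) \ge 1$. For the second factor, the key observation is that the sequence $(1-1/n)^n$ is increasing in $n$ and converges to $e^{-1}$; a short monotonicity argument (or direct numerical check at $n=2$, which gives $1/4$) shows that $(1-1/n)^n \ge 1/4$ for all $n \ge 2$. Since $x \le n$, it follows that $(1-1/n)^x \ge (1-1/n)^n \ge 1/4$. Combining the two bounds yields $f_n'(x) \ge 1/4$ uniformly over $x \in [0,n]$ and $n \ge 2$, so we may take $c = 1/4$.

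I do not anticipate any real obstacle here; the only minor care is to make sure the constant is truly \emph{independent of $n$}, which is why one needs the uniform lower bound $(1-1/n)^n \ge e^{-1}(1-o(1))$ rather than a bound that deteriorates with $n$. Once the lemma is in hand, the proof of Theorem~\ref{theo:upper-bound-DE} is completed by noting that the bound $|f_n(z) - f_n(|S|)| \le O(\sqrt{Bn}\cdot \eps^{-1})$ translates via Lemma~\ref{lm:inverse-is-good} into $|z - |S|| \le O(\sqrt{Bn}\cdot \eps^{-1})$, which with the chosen value $B = O(\eps^{-1}\ln(2/\eps)\log^2(1/\delta))$ matches the error bound claimed in the theorem.
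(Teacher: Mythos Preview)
Your proposal is correct and essentially the same as the paper's proof. The paper factors $f_n(b)-f_n(a) = n(1-1/n)^a\bigl(1-(1-1/n)^{b-a}\bigr)$ and then bounds each factor from below using the same two ingredients you use (that $(1-1/n)^a \ge (1-1/n)^n$ is an absolute constant, and that the remaining factor is $\Omega((b-a)/n)$); your derivative-plus-MVT formulation is just a repackaging of the same estimate, with the minor bonus that you make the constant $c=1/4$ explicit.
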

	\begin{proof}
		Suppose $a < b$ without loss of generality. Let $t = b - a$. We have that
		\begin{align*}
		f_{n}(b) - f_{n}(a) &= n \cdot \left( \left( 1 - \frac{1}{n} \right)^{a} - \left( 1 - \frac{1}{n} \right)^{b} \right)\\
		&= n \cdot \left( 1 - \frac{1}{n} \right)^{a} \cdot \left(1 - \left( 1 - \frac{1}{n} \right)^{t}\right)\\
		&= \Omega\left( n \cdot \frac{t}{n} \right) = \Omega(t).\qedhere
		\end{align*}
	\end{proof}

	Finally, by Lemma~\ref{lm:inverse-is-good}, with probability at least $0.97 > 0.9$, we obtain that $|z - |S|| \le O(\sqrt{B n} \cdot \eps^{-1})$, which concludes the proof.
\end{proof}
	\section{Lower Bounds in Two-Party Differential Privacy}
\label{sec:two-party}

In this section, we depart from the local and shuffle models, and instead consider the \emph{two-party} differential privacy~\cite{mcgregor2010limits}, which can be defined as follows.

\begin{definition}[DP in the Two-Party Model~\cite{mcgregor2010limits}]
There are two parties $A$ and $B$; $A$ holds $X = (x_1, \dots, x_n) \in \calX^n$ and $B$ holds $Y = (y_1, \dots, y_n) \in \calX^n$. Let $P$ be any randomized protocol between $A$ and $B$. Let $\VIEW^A_P(X, Y)$ denote the tuple ($X$, the private randomness of $A$, the transcript of the protocol). Similarly, let $\VIEW^B_P(X, Y)$ denote the tuple ($Y$, the private randomness of $B$, the transcript of the protocol).

We say that $P$ is \emph{$(\eps, \delta)$-$\twopartyDP$} if, for any $X, Y \in \calX^n$, the algorithms $$(y_1, \dots, y_n) \mapsto \VIEW^A_P(X, (y_1, \dots, y_n))$$  $$(x_1, \dots, x_n) \mapsto \VIEW^B_P((x_1, \dots, x_n), Y)$$ are both $(\eps, \delta)$-DP.
\end{definition}

We say that a two-party protocol $P$ computes a function $f\colon \calX^{2n} \to \R$ with error $\beta$ if, at the end of the protocol, at least one of the parties can output a number that lies in $f(x_1, \dots, x_n, y_1, \dots, y_n) \pm \beta$ with probability at least $0.9$.

We quickly note that, unlike in the local and shuffle models, we need not consider the public-coin and private-coin cases separately: as noted in~\cite{mcgregor2010limits}, the two parties may share fresh private random bits without violating privacy, meaning that public randomness is unnecessary.

The goal of this section is to prove Theorem~\ref{thm:two-party-lb}. To do this, we first state the necessary lower bound from~\cite{mcgregor2010limits} in Section~\ref{subsec:two-party-inner-prod}. We then give our reduction and prove Theorem~\ref{thm:two-party-lb} in Section~\ref{subsec:two-party-lb}. Finally, in Section~\ref{subsec:two-party-symmetrization}, we extend the lower bound to the case where the function is symmetric.

\subsection{Inner Product Lower Bound from~\cite{mcgregor2010limits}}
\label{subsec:two-party-inner-prod}

McGregor et al.~\cite{mcgregor2010limits} show that the inner product function is hard in the two-party model. Roughly speaking, they show that, if we let $X, Y$ be uniformly random strings, then, for any not-too-large $m \in \N$, no $(O(1),o(1/n))$-$\twopartyDP$ protocol can distinguish between $\left<X, Y\right> \mod m$ and a uniformly random number from $\{0, \dots, m - 1\}$. McGregor et al. use this result when $m = \tilde{\Omega}_\eps(\sqrt{n})$, but we will use their result for $m = 2$.

To avoid confusion in the next subsection, we will use $D$ in place of $n$ in this subsection. The following theorem is implicit in the proof of Theorem A.5 of~\cite{McGregorMPRTV11} (it follows by replacing $m = 6\Delta/\delta$ with $m = 2$ there). Recall that $\calU_D$ is the uniform distribution over $\{0, 1\}^D$.

\begin{theorem}[\cite{McGregorMPRTV11}] \label{thm:two-party-extractor}
Let $P$ be any $(\eps, \delta)$-$\twopartyDP$ protocol. Suppose $(X, Y) \leftarrow \calU_D^{\otimes 2}$ and let $Z$ be a uniformly random bit. Then, we have
\begin{align*}
\|(\VIEW^B_P(X, Y), \langle X, Y \rangle \mod 2) - (\VIEW^B_P(X, Y), Z)\|_{TV} \leq O\left(D\delta\right) + e^{-\Omega_{\epsilon}(D)}.
\end{align*}
\end{theorem}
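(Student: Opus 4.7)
The plan is to adapt the proof of Theorem A.5 of \cite{McGregorMPRTV11}, specialized to $m=2$. The approach combines two classical ingredients: the \emph{rectangle property} of two-party protocols (which decouples $X$ and $Y$ conditional on the transcript) and the \emph{Chor--Goldreich inner-product extractor} (which bounds the bias of $\langle X,Y\rangle \bmod 2$ when $X,Y$ are independent high-min-entropy sources). Two-party DP will be used to certify that, with probability at least $1-O(D\delta)$ over the transcript $T$, both conditional marginals have smooth min-entropy at least $0.9D$.

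First, I would unfold the TV distance. Writing $V = \VIEW^B_P(X,Y) = (Y,R_B,T)$, and using that $R_B$ is independent of $X$ given $(Y,T)$,
\[
\|(V,\langle X,Y\rangle \bmod 2)-(V,Z)\|_{TV} \;=\; \Ex_{(T,Y)}\Big|\Pr[\langle X,Y\rangle\equiv 0\mid T,Y]-\tfrac12\Big|.
\]
The rectangle property then gives $\Pr[T=t\mid X,Y] = \alpha_t(X)\beta_t(Y)$, so $(X,Y)\mid T=t$ factorizes as $p_{X\mid t}\otimes p_{Y\mid t}$. A short Fourier calculation---namely $\Pr[\langle X,Y\rangle\equiv 0\mid T=t,Y=y] - 1/2 = \tfrac12 \hat{p}_{X\mid t}(y)$, combined with Cauchy--Schwarz and Parseval---gives the Chor--Goldreich bias bound
\[
\Ex_{Y\sim p_{Y\mid t}}\Big|\Pr[\langle X,Y\rangle\equiv 0\mid T=t,Y]-\tfrac12\Big| \;\le\; 2^{-(H_\infty(p_{X\mid t})+H_\infty(p_{Y\mid t})-D)/2}.
\]

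The technical heart of the argument is then to certify, using two-party DP, that outside a ``bad'' set $\calT_{\mathrm{bad}}$ of transcripts of total mass $O(D\delta)$, both $H_\infty(p_{X\mid t})$ and $H_\infty(p_{Y\mid t})$ exceed $0.9D$ (for any fixed constant $\eps$). Here two-party DP, after conditioning on $Y$ and $R_B$ and then integrating, gives $\Pr[T=t\mid X=x] \le e^\eps \Pr[T=t\mid X=\phi_i(x)]+\delta$, where $\phi_i$ flips coordinate $i$. Defining $\calT_{\mathrm{bad}}^{(i)}$ as the set of transcripts where this fails to hold pointwise (up to a small slack factor), a standard argument applied to the ``bad event'' set inside the DP inequality bounds $\Pr[T\in\calT_{\mathrm{bad}}^{(i)}] = O(\delta)$; union-bounding over $i\in[D]$ gives total failure mass $O(D\delta)$, and on the complement, a path-coupling argument across a Hamming path of length at most $D$ yields $\alpha_t(x) \le e^{O(\eps D)}\alpha_t(x')$ for every $x,x'$, hence $H_\infty(p_{X\mid t}) \ge D - O(\eps D)$, and symmetrically for $Y$.

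The main obstacle will be handling the additive $\delta$ term when converting the DP inequality into a pointwise ratio bound; the cleanest approach is the one outlined above, isolating for each coordinate $i$ a small failure set of transcripts where the pointwise inequality does not hold and charging those to the $O(D\delta)$ error term. With the resulting smooth min-entropy lower bound $H_\infty(p_{X\mid t}), H_\infty(p_{Y\mid t}) \ge (1-c\eps)D$ on $t\notin\calT_{\mathrm{bad}}$, the Fourier bound above yields bias $2^{-(1-2c\eps)D/2} = e^{-\Omega_\eps(D)}$ whenever $\eps < 1/(2c)$, which combines with the $O(D\delta)$ bad-transcript mass to give the claimed $O(D\delta) + e^{-\Omega_\eps(D)}$ bound.
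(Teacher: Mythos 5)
The paper does not actually reprove this statement: it imports it from McGregor et al.\ (it is implicit in their Theorem~A.5, instantiated with $m=2$), and that proof shares your opening moves --- condition on the transcript, use the rectangle property to make $X$ and $Y$ independent given the transcript, and then bound the bias of the inner product via an extractor argument. Where you diverge is exactly where the proposal breaks. You convert the DP guarantee into a \emph{global} min-entropy bound $H_\infty(p_{X\mid t})\ge D-O(\eps D)$ and then invoke the Chor--Goldreich two-source bound $2^{-(H_\infty(X\mid t)+H_\infty(Y\mid t)-D)/2}$, which is vacuous unless $H_\infty(X\mid t)+H_\infty(Y\mid t)>D$, i.e.\ unless $\eps$ is below a small absolute constant --- a restriction you acknowledge (``whenever $\eps<1/(2c)$''). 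But the theorem is asserted for \emph{every} $(\eps,\delta)$-$\twopartyDP$ protocol, with only the constant in the exponent allowed to depend on $\eps$, and the downstream application (Theorem~\ref{thm:two-party-lb}, ``for any $\eps=O(1)$'') genuinely needs arbitrary constant $\eps$; since an $(10,\delta)$-DP protocol need not be $(0.1,\delta)$-DP, the small-$\eps$ case does not imply the general one. The missing idea is the one McGregor et al.\ use: never collapse to global min-entropy. DP gives the much finer fact that, conditioned on the transcript and on \emph{all other coordinates}, each individual bit of $X$ (and of $Y$) retains constant bias controlled by $e^{-\eps}$ --- an approximately strongly unpredictable (Santha--Vazirani-type) source --- and they prove an inner-product extraction lemma for such per-bit-unpredictable sources, which yields $e^{-\Omega_\eps(D)}$ for every fixed $\eps$.

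A secondary, smaller gap is the $\delta$-bookkeeping in your path-coupling step. The pointwise ratio bound extracted from $(\eps,\delta)$-DP holds only outside a bad set of transcripts that depends on the particular neighboring pair of inputs, whereas your telescoping needs ratio control simultaneously along Hamming paths between essentially all pairs $x,x'$ in order to bound $\max_x p_{X\mid t}(x)$; a union bound over the $D$ coordinates alone does not obviously cover this, and over all $D\cdot 2^D$ pairs it costs far more than $O(D\delta)$. The per-bit (approximately unpredictable source) formulation is also what makes this smoothing step clean in the original argument, since there the bad event is naturally indexed by a coordinate and the remaining coordinates rather than by pairs of full inputs.
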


It will be more convenient to state the above lower bound in terms of hardness of distinguishing two distributions, as we have done in the rest of this paper. To state this, we will need the following few notation. First, we use $\calD^0$ to denote the distribution $\calU_D^{\otimes 2}$ conditioned on the inner product of the two strings being $0 \bmod 2$; similarly, we use $\calD^1$ to denote the distribution $\calU_D^{\otimes 2}$ conditioned on the inner product of the two strings being $1 \bmod 2$. Furthermore, for any distribution $\calD$ on $(\{0, 1\}^D)^2$, we write $\VIEW^A_P(\calD)$ (respectively, $\VIEW^B_P(\calD)$) to denote the distribution of $\VIEW^A_P(X, Y)$ (respectively, $\VIEW^B_P(X, Y)$) when $X, Y$ are drawn according to $\calD$.

We may now state the following corollary, which is an immediate consequence of Theorem~\ref{thm:two-party-extractor}.

\begin{cor} \label{cor:dot-prod-dist}
Let $P$ be any $(\eps, \delta)$-$\twopartyDP$ protocol. Then, we have that
\begin{align*}
\|\VIEW^B_P(\calD^0) - \VIEW^B_P(\calD^1)\|_{TV} \leq O\left(D\delta\right) + e^{-\Omega_{\epsilon}(D)}.
\end{align*}
\end{cor}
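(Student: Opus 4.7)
The plan is to derive the corollary directly from Theorem~\ref{thm:two-party-extractor} via a simple mixture decomposition. First, I would observe that when $(X,Y) \sim \calU_D^{\otimes 2}$, the inner product $I := \langle X, Y\rangle \bmod 2$ is a uniform bit (for every nonzero $X$, the map $Y \mapsto \langle X,Y\rangle \bmod 2$ is balanced; the probability-zero event $X=0$ contributes only a $2^{-D}$ term, which is easily absorbed into the $e^{-\Omega_\eps(D)}$ error on the right-hand side). Conditional on $I = \ell$, the pair $(X,Y)$ is by definition distributed as $\calD^\ell$. Writing $\mu_\ell := \VIEW^B_P(\calD^\ell)$ for $\ell \in \{0,1\}$, the joint distribution $\bigl(\VIEW^B_P(X,Y),\, I\bigr)$ is therefore the mixture $\frac{1}{2}\mu_0 \otimes \mathbf{1}_{\{0\}} + \frac{1}{2}\mu_1 \otimes \mathbf{1}_{\{1\}}$, while the joint $\bigl(\VIEW^B_P(X,Y),\, Z\bigr)$ with $Z$ an independent uniform bit equals $\bar\mu \otimes \mathrm{Unif}\{0,1\}$, where $\bar\mu := \frac{1}{2}(\mu_0 + \mu_1)$.

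Next, I would invoke the elementary identity that quantifies how far a joint distribution $(V,I)$ with $I \in \{0,1\}$ balanced is from the product of its marginals, namely $\| (V,I) - (V, \mathrm{Unif}\{0,1\})\|_{TV} = \frac{1}{2}\|\mu_0 - \mu_1\|_{TV}$. Expanding the signed measure, the difference restricted to the $\ell = 0$ slice equals $\frac{1}{4}(\mu_0 - \mu_1)$ and the $\ell = 1$ slice equals $\frac{1}{4}(\mu_1 - \mu_0)$; summing their positive parts gives the stated identity. Combining this with Theorem~\ref{thm:two-party-extractor} yields $\|\mu_0 - \mu_1\|_{TV} \le 2 \bigl(O(D\delta) + e^{-\Omega_\eps(D)}\bigr)$, which is exactly the claim (absorbing the factor of $2$ into the constants).

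No serious obstacle is expected, since the corollary is explicitly flagged as an immediate consequence of Theorem~\ref{thm:two-party-extractor}. The only step that deserves a careful one-line justification is the factor-$\frac{1}{2}$ in the identity above; the cleanest way to verify it is through the variational formula $\|P - Q\|_{TV} = \frac{1}{2}\sup_{\|f\|_\infty \le 1} |\Ex_P f - \Ex_Q f|$, tested against the class $f(v,\ell) = (-1)^\ell g(v)$ with $\|g\|_\infty \le 1$, which directly realizes the supremum and recovers $\frac{1}{2}\|\mu_0 - \mu_1\|_{TV}$.
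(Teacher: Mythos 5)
Your proposal is correct and is essentially the argument the paper has in mind: the corollary is presented as an immediate consequence of Theorem~\ref{thm:two-party-extractor}, and the intended derivation is exactly the conditioning/mixture identity you spell out, with the factor $\tfrac12$ (and the slight non-uniformity of the inner-product bit) absorbed into the constants. One small wording fix: the event $X=0$ has probability $2^{-D}$ rather than zero, but as you note its effect on the mixture weights is of order $2^{-D}$ and is absorbed into the $e^{-\Omega_{\eps}(D)}$ term.
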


\subsection{From Parity to the $\tilde{\Omega}(n)$ Gap}
\label{subsec:two-party-lb}

We will now construct the hard distributions that eventually give the gap of $\tilde{\Omega}(n)$ between the sensitivity and the error achievable in two-party model. The hard distributions are simply concatenations of $\calD^0$ or $\calD^1$.
Specifically, tor $T \in \N$, we write $\calD^{0, T}$ (respectively, $\calD^{1, T}$) to denote the distribution of $((x_1, \dots, x_{DT}), (y_1, \dots, y_{DT}))$ where $((x_{(i - 1)D + 1}, \dots, x_{iD}), (y_{(i - 1)D + 1}, \dots, y_{iD}))$ is an i.i.d. sample from $\calD^0$ (respectively, $\calD^{1, T}$) for all $i \in [T]$. Similar to before, it is hard to distinguish the two distributions:

\begin{lemma} \label{lem:two-party-hybrid}
Let $P$ be any $(\eps, \delta)$-$\twopartyDP$ protocol. Then, we have
\begin{align*}
\|\VIEW^B_P(\calD^{0, T}) - \VIEW^B_P(\calD^{1, T})\|_{TV} \leq O\left(T D\delta\right) + T \cdot e^{-\Omega_{\epsilon}(D)}.
\end{align*}
\end{lemma}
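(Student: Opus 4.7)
The plan is to prove Lemma~\ref{lem:two-party-hybrid} via a straightforward hybrid argument, reducing each consecutive hybrid step to the single-block bound given by Corollary~\ref{cor:dot-prod-dist}. For each $t \in \{0, 1, \dots, T\}$, define the hybrid distribution $\calH_t$ on $(\{0,1\}^{DT})^2$ whose first $t$ blocks of length $D$ are i.i.d. samples from $\calD^1$ and whose remaining $T - t$ blocks are i.i.d. samples from $\calD^0$, so that $\calH_0 = \calD^{0,T}$ and $\calH_T = \calD^{1,T}$. By the triangle inequality, it suffices to show that for each $t \in [T]$,
\[
\|\VIEW^B_P(\calH_{t-1}) - \VIEW^B_P(\calH_t)\|_{TV} \le O(D\delta) + e^{-\Omega_\eps(D)},
\]
and then sum the $T$ terms.

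The main step is to prove the per-hybrid bound by a reduction to the single-block setting. Given $t \in [T]$, construct a two-party protocol $P^{(t)}$ on inputs $(x, y) \in (\{0,1\}^D)^2$ as follows: Alice and Bob use shared (private) randomness to jointly sample an auxiliary pair $((x'_1, \dots, x'_{D(T-1)}), (y'_1, \dots, y'_{D(T-1)}))$ whose first $t-1$ blocks are i.i.d. from $\calD^1$ and whose last $T-t$ blocks are i.i.d. from $\calD^0$. Alice then forms her input to $P$ by placing $x$ into the $t$-th block and filling the remaining blocks with her halves of the sampled $x'$; Bob does the same on his side with $y$ and $y'$. They run $P$ on these constructed length-$DT$ inputs. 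Since two-party neighboring datasets differ in a single element of one party's input, and the auxiliary blocks are generated independently of $(x,y)$ via shared private randomness (which, as noted in the excerpt, is free in the two-party model), $P^{(t)}$ inherits the $(\eps, \delta)$-$\twopartyDP$ guarantee from $P$.

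Next, observe that when $(x,y) \sim \calD^\ell$ for $\ell \in \{0,1\}$, the constructed length-$DT$ pair is distributed exactly as $\calH_{t-1+\ell}$. Moreover, $\VIEW^B_P(\calH_{t-1+\ell})$ can be computed as a deterministic post-processing of $\VIEW^B_{P^{(t)}}(\calD^\ell)$: Bob's view in $P^{(t)}$ contains his original input $y$, the transcript of $P$, and the shared/private randomness used to sample the auxiliary blocks (hence his full $DT$-length input to $P$), which together determine $\VIEW^B_P$. Since post-processing cannot increase total variation distance, applying Corollary~\ref{cor:dot-prod-dist} to $P^{(t)}$ yields
\[
\|\VIEW^B_P(\calH_{t-1}) - \VIEW^B_P(\calH_t)\|_{TV} \;\le\; \|\VIEW^B_{P^{(t)}}(\calD^0) - \VIEW^B_{P^{(t)}}(\calD^1)\|_{TV} \;\le\; O(D\delta) + e^{-\Omega_\eps(D)}.
\]
Summing over $t \in [T]$ gives the lemma.

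The only genuinely delicate point is confirming that $P^{(t)}$ is valid as an $(\eps,\delta)$-$\twopartyDP$ protocol, i.e., that a single-element change in $x$ or in $y$ (a neighboring change for $P^{(t)}$) corresponds to a single-element change in the length-$DT$ input fed to $P$ (a neighboring change for $P$). Because the auxiliary blocks are sampled using randomness that is independent of $(x,y)$ and held as private randomness by each party, this correspondence holds cleanly, and the DP guarantee of $P$ transfers to $P^{(t)}$ without any loss. The rest of the argument is then a routine hybrid composition, so no step should present a real obstacle.
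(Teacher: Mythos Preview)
Your proof is correct and follows essentially the same hybrid argument as the paper. The only cosmetic difference is that the paper has $A$ sample the auxiliary blocks and transmit $B$'s halves through the transcript, whereas you have the parties generate them via shared randomness; both mechanisms are valid in the two-party model and lead to the same reduction to Corollary~\ref{cor:dot-prod-dist}.
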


\begin{proof}
We prove this via a simple hybrid argument. For $j \in [T + 1]$, let us denote by $\calD_j$ the distribution of $((x_1, \dots, x_{DT}), (y_1, \dots, y_{DT}))$ where, for all $i \in [T]$, $((x_{(i - 1)D + 1}, \dots, x_{iD}), (y_{(i - 1)D + 1}, \dots, y_{iD}))$ is independent from $\calD^1$ if $i < j$ and from $\calD^0$ otherwise. Notice that $\calD_1 = \calD^{0, T}$ and $\calD_{T + 1} = \calD^{1, T}$.

Our main claim is the following.
For every $j \in [T]$ and any $(\eps, \delta)$-$\twopartyDP$ protocol $P$,
\begin{align} \label{eq:two-party-hybrid-one-step}
\|\VIEW^B_P(\calD_j) - \VIEW^B_P(\calD_{j + 1})\|_{TV} \leq O\left(D\delta\right) +  e^{-\Omega_{\epsilon}(D)}.
\end{align}
Note that summing~\eqref{eq:two-party-hybrid-one-step} over all $j \in [T]$ immediately yields Lemma~\ref{lem:two-party-hybrid}. 

We will now prove~\eqref{eq:two-party-hybrid-one-step}.
Given an $(\eps, \delta)$-$\twopartyDP$ protocol $P$ (where each party's input has $DT$ bits), we construct a protocol $P'$ (where each party's input has $D$ bits) as follows:
\begin{itemize}
\item Suppose that the input of $A$ is $x'_1, \dots, x'_D$, and the input of $B$ is $y'_1, \dots, y'_D$.
\item For $i = 1, \dots, j - 1$, $A$ samples $((x_{(i - 1)D + 1}, \dots, x_{iD}), (y_{(i - 1)D + 1}, \dots, y_{iD}))$ from $\calD^1$ and sends $(y_{(i - 1)D + 1}, \dots, y_{iD})$ to $B$.
\item $i = j + 1, \dots, T$, $A$ samples $((x_{(i - 1)D + 1}, \dots, x_{iD}), (y_{(i - 1)D + 1}, \dots, y_{iD}))$ from $\calD^0$ and sends $(y_{(i - 1)D + 1}, \dots, y_{iD})$ to $B$.
\item $A$ sets $(x_{(j - 1)D + 1}, \dots, x_{jD}) = (x'_1,\dots,x'_D)$.
\item $B$ sets $(y_{(j - 1)D + 1}, \dots, y_{jD}) = (y'_1,\dots,y'_D)$.
\item $A$ and $B$ then run the protocol $P$ on $((x_1, \dots, x_{DT}), (y_1, \dots, y_{DT}))$.
\end{itemize}
It is clear that $P'$ is $(\eps, \delta)$-$\twopartyDP$ and that
\begin{align*}
\|\VIEW^B_{P'}(\calD^{0, T}) - \VIEW^B_{P'}(\calD^{1, T})\|_{TV} \geq \|\VIEW^B_P(\calD_j) - \VIEW^B_P(\calD_{j + 1})\|_{TV}.
\end{align*}
Inequality \eqref{eq:two-party-hybrid-one-step} then follows from Corollary~\ref{cor:dot-prod-dist}.
\end{proof}

We can now prove our main theorem of this section.

\begin{proof}[Proof of Theorem~\ref{thm:two-party-lb}]
Let $C > 0$ be a sufficiently large constant to be chosen later. Let $D = \lceil C \log n \rceil$ and $T = \lfloor n / D \rfloor$. We may define $f$ on only $2DT$ bits, as it can be trivially extended to $2n$ bits by ignoring the last $n - DT$ bits of $X, Y$. Let
\begin{align*}
f(x_1, \dots, x_{DT}, y_1, \dots, y_{DT}) = \sum_{i \in [T]} \left(\sum_{\ell \in [D]} x_{(i - 1)D + j}y_{(i - 1)D + j} \mod 2\right),
\end{align*}
where the outer summation is over $\Z$.

It is immediate that the sensitivity of $f$ is one. We will now argue that any $(\eps, \delta)$-$\twopartyDP$ protocol $P$ with $\delta = o(1/n)$ incurs error at least $\Omega(n / \log n)$. Since the function is symmetric with respect to the two parties, it suffices without loss of generality to show that the output of $B$ incurs error $\Omega(n / \log n)$ with probability 0.1. To do so, we start by observing that we have $f(X, Y) = 0$ for any $(X, Y) \in \supp(\calD^{0, T})$ whereas $f(X, Y) = T$ for any $(X, Y) \in \supp(\calD^{1, T})$. From Lemma~\ref{lem:two-party-hybrid}, we have that
\begin{align*}
\|\VIEW^B_P(\calD^{0, T}) - \VIEW^B_P(\calD^{1, T})\|_{TV} \leq O\left(T D\delta\right) + T \cdot e^{-\Omega_{\epsilon}(D)}.
\end{align*}
As a result, if we sample $(X, Y)$ from $\calD^{0, T}$ with probability 1/2 and $\calD^{1, T}$ with probability 1/2, then the probability that $B$'s output incurs error at least $T / 2$ is at least
\begin{align*}
\frac{1}{2} - O\left(T D\delta\right) - T \cdot e^{-\Omega_{\epsilon}(D)} \geq \frac{1}{2} - o(1) - n \cdot e^{-\Omega_{\epsilon}(C \log n)}.
\end{align*}
When $C$ is sufficiently large, we also have that $n \cdot e^{-\Omega_{\epsilon}(C \log n)} = o(1)$. As a result, with probability $1/2 - o(1)$ (which is at least $0.1$ for any sufficiently large $n$), the protocol $P$ must incur an error of at least $T / 2 = \Omega(n / \log n)$.
\end{proof}

\subsection{Symmetrization}
\label{subsec:two-party-symmetrization}

Notice that the function in Theorem~\ref{thm:two-party-lb} is asymmetric. It is a natural question to ask whether we can get a similar lower bound for a symmetric function. In this subsection, we give a simple reduction that positively answers this question, ultimately yielding the following:

\begin{theorem} \label{thm:two-party-lb-symmetric}
For any $\eps = O(1)$ and any sufficiently large $n \in \N$, there is a symmetric function $f: [2n]^{2n} \to \R$ whose sensitivity is one and such that any $(\eps, o(1/n))$-$\twopartyDP$ protocol cannot compute $f$ to within an error of $o(n/\log n)$.
\end{theorem}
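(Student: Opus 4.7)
The plan is to symmetrize the hard function from Theorem~\ref{thm:two-party-lb} via a position-and-bit encoding, and then obtain the symmetric lower bound by a local preprocessing reduction.

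First, I would fix the encoding $\alpha(b, i) := i + b n \in [2n]$ for $b \in \{0,1\}$, $i \in [n]$, so that Alice can convert her bit string $X \in \{0,1\}^n$ into $X' := (\alpha(X_1, 1), \ldots, \alpha(X_n, n)) \in [2n]^n$, and Bob similarly obtains $Y'$. For a multiset input with histogram $h \in \mathbb{N}^{2n}$ (with $\sum_a h_a = 2n$), I would set $g_i(h) := \max(h_{i+n} - 1, 0)$ for each $i \in [n]$, and with the same block parameters $D = \lceil C \log n \rceil$ and $T = \lfloor n/D \rfloor$ as in the proof of Theorem~\ref{thm:two-party-lb}, define
\[
f_{sym}(h) := \frac{1}{2} \sum_{t \in [T]} \left[\left(\sum_{\ell \in [D]} g_{(t-1)D + \ell}(h)\right) \bmod 2\right].
\]
By construction $f_{sym}$ depends only on the histogram, so it is symmetric. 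On inputs arising from the encoding one has $h_{i+n} = X_i + Y_i \in \{0,1,2\}$, whence $g_i(h) = X_i Y_i$ and $2 f_{sym}(X', Y') = f_{asym}(X, Y)$, where $f_{asym}$ denotes the asymmetric function from Theorem~\ref{thm:two-party-lb}.

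Next, I would verify that $f_{sym}$ has global sensitivity at most $1$. A single element change in the multiset decrements $h$ at one coordinate and increments it at another, so at most two of the $g_j$'s are affected, each by at most $1$. Hence at most two of the block mod-$2$ values can flip, each contributing at most $1$ to the outer sum; the factor $1/2$ then brings the sensitivity down to at most $1$. This is the only step requiring any case analysis, but it is routine; the main conceptual care lies in having defined $g_i$ via the $\max$ so that this bound holds even for arbitrary (not necessarily well-formed) multisets, rather than just those arising from the encoding.

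Finally, I would invoke the obvious reduction: given an $(\eps, o(1/n))$-$\twopartyDP$ protocol $P_{sym}$ for $f_{sym}$ with error $\beta$, each party locally applies the encoding, they run $P_{sym}$ on $(X', Y')$, and the output is multiplied by $2$. Since the encoding is a deterministic preprocessing of each party's own input, the resulting protocol is $(\eps, o(1/n))$-$\twopartyDP$ by post-processing and computes $f_{asym}$ with error $2\beta$. Theorem~\ref{thm:two-party-lb} then forces $2\beta = \Omega(n/\log n)$, yielding the claimed bound $\beta = \Omega(n/\log n)$. The principal obstacle is therefore just designing the encoding/function pair so that symmetrization preserves both the sensitivity bound and the hardness; the above construction handles both simultaneously.
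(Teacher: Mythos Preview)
Your argument is correct. Both you and the paper use the same position-tagging encoding (sending $b \in \{0,1\}$ at position $i$ to the element $(b,i) \in \{0,1\}\times[n] \cong [2n]$) and the same halving trick to absorb the factor-two blowup in sensitivity. Where you diverge is in how the symmetric function is defined on encoded inputs. The paper proves a general black-box symmetrization lemma (Lemma~\ref{lem:symmetrization}): for \emph{any} $g$, it defines decoders $h_i(W)$ that extract the value at position $i$ from a party's multiset (returning a default when ambiguous) and sets $f(W,V)=\tfrac{1}{2}g(h_1(W),\dots,h_n(W),h_1(V),\dots,h_n(V))$. You instead exploit the specific identity $X_iY_i=\max(X_i+Y_i-1,0)$ to read off the bitwise product directly from the \emph{combined} histogram via $g_i(h)=\max(h_{i+n}-1,0)$.

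What each buys: the paper's lemma is modular and works for any base function, not just the block-inner-product one; your construction is ad hoc but yields a function that is symmetric in the strongest sense (a function of the combined histogram of all $2n$ inputs), whereas the paper's $f$ is a priori only symmetric within each party's $n$-tuple. Your sensitivity argument is also clean: since $g_i$ depends only on $h_{i+n}$, a single substitution can only increase one $g_j$ by at most $1$ and decrease another by at most $1$, so at most two block parities flip and the $\tfrac{1}{2}$ factor finishes it.
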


We remark that the input to each user comes from a set $\calX$ of size $\Omega(n)$, instead of $\calX = \{0, 1\}$ as in Theorem~\ref{thm:two-party-lb}. This larger value of $|\calX|$ turns out to be necessary for symmetric functions: when $f$ is symmetric, we may use the Laplace mechanism from both sides to estimate the histogram of the input, which we can then use to compute $f$. If the sensitivity of $f$ is $O(1)$, this algorithm incurs an error of $O_{\eps}(|\calX|)$. Hence, to achieve a lower bound of $\tilde{\Omega}(n)$, we need $|\calX|$ to be at least $\tilde{\Omega}_{\eps}(n)$.

The properties of our reduction are summarized in the following lemma, which combined with Theorem~\ref{thm:two-party-lb}, immediately implies Theorem~\ref{thm:two-party-lb-symmetric}.

\begin{lemma} \label{lem:symmetrization}
For any function $g\colon \calX^{2n} \to \R$, there is another function $f: (\calX \times [n])^{2n} \to \R$ such that the following holds:
\begin{itemize}
\item The sensitivity of $f$ is no more than that of $g$.
\item If there exists an $(\eps, \delta)$-$\twopartyDP$ protocol that solves $f$ with error $\beta$, then there exists an $(\eps, \delta)$-$\twopartyDP$ protocol that solves $f$ with error $2\beta$.
\end{itemize}
\end{lemma}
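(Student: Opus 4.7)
The plan is to construct $f$ by attaching distinct ``position labels'' to the coordinates of $g$'s input, so that the symmetric (multiset-based) function $f$ can canonically decode the original ordered input from the unordered collection of labeled user entries. View each user input as a pair $(a,p)$ with $a\in \calX$ and $p$ a label, and define $f(M):=\tfrac{1}{2}\,g(\mathrm{decode}(M))$, where $\mathrm{decode}(M)$ produces a tuple in $\calX^{2n}$ by placing at each label $p$ the unique $\calX$-value occurring with that label in $M$ (and defaulting to a fixed $\bot\in\calX$ whenever $p$ is either absent from $M$ or occurs more than once). By construction $f$ is a function of the multiset $M$ alone, and hence is symmetric in its $2n$ arguments.

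The reduction from $g$ to $f$ is straightforward. Given $g$-inputs $(X,Y)\in\calX^n\times\calX^n$, party $A$ sets its $i$-th $f$-user to $(x_i,i)$ and party $B$ sets its $i$-th $f$-user to $(y_i,n+i)$, so that every position label is used exactly once. Then $\mathrm{decode}(M)$ recovers $(X,Y)$ exactly and $f(M)=g(X,Y)/2$. Running the assumed $(\eps,\delta)$-$\twopartyDP$ protocol for $f$ and multiplying its output by $2$ therefore yields an $(\eps,\delta)$-$\twopartyDP$ protocol computing $g$ with error at most $2\beta$; privacy transfers because the mapping from each party's $g$-input to its $f$-input is a deterministic post-processing.

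The main technical step is bounding the sensitivity of $f$. Consider replacing one element $(a,p)$ of $M$ by $(a',p')$. If $p=p'$, only the multiset of values carrying label $p$ changes, which alters at most one coordinate of $\mathrm{decode}(M)$. If $p\ne p'$, the multisets at labels $p$ and $p'$ each change by one, so at most two coordinates of the decoded tuple are affected: for instance label $p$ may lose its unique entry (reverting to $\bot$), and label $p'$ may become duplicated (also reverting to $\bot$). Consequently $|g(\mathrm{decode}(M))-g(\mathrm{decode}(M'))|\le 2\cdot \mathrm{sens}(g)$, and the $\tfrac{1}{2}$ scaling in the definition of $f$ gives $\mathrm{sens}(f)\le \mathrm{sens}(g)$.

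The chief obstacle is making this sensitivity bound sharp: with only $n$ position labels, each ``intended'' to hold two users, a single update to $M$ can perturb up to four decoded coordinates, which would force a larger scaling constant and worsen the error blow-up. The cleanest resolution is to let the labels range over $[2n]$, as suggested by the $[2n]^{2n}$ input domain in Theorem~\ref{thm:two-party-lb-symmetric}, so that in a valid configuration each label appears exactly once; the analysis above then gives exactly the $2$-factor scaling. A secondary subtlety is that the second bullet of Lemma~\ref{lem:symmetrization} should be read as ``there exists an $(\eps,\delta)$-$\twopartyDP$ protocol that solves $g$ with error $2\beta$,'' which is precisely what the reduction provides.
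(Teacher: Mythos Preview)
Your approach is essentially the paper's: attach position labels, set $f=\tfrac12\,g\circ\mathrm{decode}$ with a fixed default value for missing or duplicated labels, and reduce by having each party label its inputs canonically and then double the output. The only difference is that the paper keeps labels in $[n]$ and decodes \emph{per party}---defining $f(W,V)=\tfrac12\,g(h_1(W),\dots,h_n(W),h_1(V),\dots,h_n(V))$ where $h_i$ reads label $i$ from that party's multiset alone---so a single user change, being confined to one party, still perturbs at most two $h_i$ values and the four-coordinate issue you anticipated never arises even with $[n]$ labels; your pooled-multiset variant with labels in $[2n]$ is an equally valid implementation and has the bonus of making $f$ symmetric across all $2n$ inputs rather than only within each party's block.
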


The idea behind the proof of Lemma~\ref{lem:symmetrization} is simple. Roughly speaking, we view each input $(x, i) \in \calX \times [n]$ of $f$ as ``setting the $i$th position to $x$'' for the input to $g$. This is formalized below.

\begin{proof}[Proof of Lemma~\ref{lem:symmetrization}]
We start by defining $f$. Let $u^*$ be an arbitrary element of $\calX$. For every $i \in [n]$, we define $h_i: (\calX \times [n])^{n}$ where
\begin{align*}
h_i((w_1, \dots, w_n)) = 
\begin{cases}
\text{the unique } x \text{ such that } \exists j \in [n],  w_j = (x, i) & \text{ if } |\{x \in \calX \mid \exists j \in [n], w_j = (x, i)\}| = 1, \\
u^* & \text{ otherwise.}
\end{cases}
\end{align*}
We now define $f$ by
\begin{align*}
f(W, V) = \frac{1}{2} \cdot g(h_1(W), \dots, h_n(W), h_1(V), \cdots, h_n(V)).
\end{align*}

We will next verify that the two properties hold.
\begin{itemize}
\item Notice that changing any user's input in $f$ results in at most two changes in the user's input of $g$. As a result, the sensitivity of $f$ is no more than that of $g$.
\item Suppose that there exists an $(\eps, \delta)$-$\twopartyDP$ protocol $P$ that solves $f$ with error $\beta$. Let $P'$ be the protocol for $g$ where $A,B$ transform their inputs $(x_1, \dots, x_n)$, $(y_1, \dots, y_n)$ to $((x_1, 1), \dots, (x_n, n))$, $((y_1, 1), \dots, (y_n, n))$ respectively, then run $P$, and finally return the output of $P$ multiplied by two. It is obvious that $P'$ is $(\eps, \delta)$-$\twopartyDP$; furthermore, since the protocol $P$ incurs error $\beta$, the protocol $P'$ incurs error $2\beta$ as desired. \qedhere
\end{itemize}
\end{proof}
	
	\section*{Acknowledgments}
	We would like to thank Noah Golowich for numerous enlightening discussions about lower bounds in the multi-message $\shuffledDP$ model, and for helpful feedback.
	
	\bibliographystyle{alpha}
	\bibliography{main}

\newcommand{\etalchar}[1]{$^{#1}$}
\begin{thebibliography}{MMNW11}

\bibitem[Abo18]{abowd2018us}
John~M Abowd.
\newblock The {US Census Bureau} adopts differential privacy.
\newblock In {\em KDD}, pages 2867--2867, 2018.

\bibitem[{App}17]{dp2017learning}
{Apple Differential Privacy Team}.
\newblock Learning with privacy at scale.
\newblock {\em Apple Machine Learning Journal}, 2017.

\bibitem[BBGN19]{BalleBGN19}
Borja Balle, James Bell, Adri{\`{a}} Gasc{\'{o}}n, and Kobbi Nissim.
\newblock The privacy blanket of the shuffle model.
\newblock In {\em CRYPTO}, pages 638--667, 2019.

\bibitem[BBGN20]{balle_merged}
Borja Balle, James Bell, Adri{\`{a}} Gasc{\'{o}}n, and Kobbi Nissim.
\newblock Private summation in the multi-message shuffle model.
\newblock {\em arXiv: 2002.00817}, 2020.

\bibitem[BC20]{balcer2019separating}
Victor Balcer and Albert Cheu.
\newblock Separating local \& shuffled differential privacy via histograms.
\newblock In {\em ITC}, pages 1:1--1:14, 2020.

\bibitem[BCJM20]{BCJM20}
Victor Balcer, Albert Cheu, Matthew Joseph, and Jieming Mao.
\newblock Connecting robust shuffle privacy and pan-privacy.
\newblock {\em CoRR}, abs/2004.09481, 2020.

\bibitem[BCK{\etalchar{+}}14]{brody2014beyond}
Joshua Brody, Amit Chakrabarti, Ranganath Kondapally, David~P Woodruff, and
  Grigory Yaroslavtsev.
\newblock Beyond set disjointness: the communication complexity of finding the
  intersection.
\newblock In {\em PODC}, pages 106--113, 2014.

\bibitem[BEM{\etalchar{+}}17]{bittau2017prochlo}
Andrea Bittau, {\'U}lfar Erlingsson, Petros Maniatis, Ilya Mironov, Ananth
  Raghunathan, David Lie, Mitch Rudominer, Ushasree Kode, Julien Tinnes, and
  Bernhard Seefeld.
\newblock Prochlo: Strong privacy for analytics in the crowd.
\newblock In {\em SOSP}, pages 441--459, 2017.

\bibitem[BFJ{\etalchar{+}}94]{blum1994weakly}
Avrim Blum, Merrick Furst, Jeffrey Jackson, Michael Kearns, Yishay Mansour, and
  Steven Rudich.
\newblock Weakly learning dnf and characterizing statistical query learning
  using fourier analysis.
\newblock In {\em STOC}, pages 253--262, 1994.

\bibitem[CDSKY20]{choi2020differentially}
Seung~Geol Choi, Dana Dachman-Soled, Mukul Kulkarni, and Arkady Yerukhimovich.
\newblock Differentially-private multi-party sketching for large-scale
  statistics.
\newblock {\em PoPETs}, 3:153--174, 2020.

\bibitem[CSS12]{chan2012optimal}
TH~Hubert Chan, Elaine Shi, and Dawn Song.
\newblock Optimal lower bound for differentially private multi-party
  aggregation.
\newblock In {\em ESA}, pages 277--288, 2012.

\bibitem[CSU{\etalchar{+}}18]{Cheu18}
Albert Cheu, Adam~D. Smith, Jonathan Ullman, David Zeber, and Maxim Zhilyaev.
\newblock Distributed differential privacy via mixnets.
\newblock {\em CoRR}, abs/1808.01394, 2018.

\bibitem[CU20]{cheu2020limits}
Albert Cheu and Jonathan Ullman.
\newblock The limits of pan privacy and shuffle privacy for learning and
  estimation.
\newblock {\em CoRR}, abs/2009.08000, 2020.

\bibitem[DJW13]{duchi2013local}
John~C Duchi, Michael~I Jordan, and Martin~J Wainwright.
\newblock Local privacy and statistical minimax rates.
\newblock In {\em FOCS}, pages 429--438, 2013.

\bibitem[DKM{\etalchar{+}}06]{DworkKMMN06}
Cynthia Dwork, Krishnaram Kenthapadi, Frank McSherry, Ilya Mironov, and Moni
  Naor.
\newblock Our data, ourselves: Privacy via distributed noise generation.
\newblock In {\em EUROCRYPT}, pages 486--503, 2006.

\bibitem[DKY17]{ding2017collecting}
Bolin Ding, Janardhan Kulkarni, and Sergey Yekhanin.
\newblock Collecting telemetry data privately.
\newblock In {\em NIPS}, pages 3571--3580, 2017.

\bibitem[DLB19]{desfontaines2019cardinality}
Damien Desfontaines, Andreas Lochbihler, and David Basin.
\newblock Cardinality estimators do not preserve privacy.
\newblock {\em PoPETs}, 2019(2):26--46, 2019.

\bibitem[DMNS06]{DworkMNS06}
Cynthia Dwork, Frank McSherry, Kobbi Nissim, and Adam~D. Smith.
\newblock Calibrating noise to sensitivity in private data analysis.
\newblock In {\em TCC}, pages 265--284, 2006.

\bibitem[DR98]{DubhashiR98}
Devdatt~P. Dubhashi and Desh Ranjan.
\newblock Balls and bins: {A} study in negative dependence.
\newblock {\em Random Struct. Algorithms}, 13(2):99--124, 1998.

\bibitem[DR14]{DworkR14}
Cynthia Dwork and Aaron Roth.
\newblock The algorithmic foundations of differential privacy.
\newblock {\em Found. Trends Theor. Comput. Sci.}, 9(3-4):211--407, 2014.

\bibitem[DRV10]{DworkRV10}
Cynthia Dwork, Guy~N. Rothblum, and Salil~P. Vadhan.
\newblock Boosting and differential privacy.
\newblock In {\em FOCS}, pages 51--60, 2010.

\bibitem[EFM{\etalchar{+}}19]{erlingsson2019amplification}
{\'U}lfar Erlingsson, Vitaly Feldman, Ilya Mironov, Ananth Raghunathan, Kunal
  Talwar, and Abhradeep Thakurta.
\newblock Amplification by shuffling: From local to central differential
  privacy via anonymity.
\newblock In {\em SODA}, pages 2468--2479, 2019.

\bibitem[ENU20]{EdmondsNU20}
Alexander Edmonds, Aleksandar Nikolov, and Jonathan Ullman.
\newblock The power of factorization mechanisms in local and central
  differential privacy.
\newblock In {\em STOC}, pages 425--438, 2020.

\bibitem[EPK14]{erlingsson2014rappor}
{\'U}lfar Erlingsson, Vasyl Pihur, and Aleksandra Korolova.
\newblock {RAPPOR}: Randomized aggregatable privacy-preserving ordinal
  response.
\newblock In {\em CCS}, pages 1054--1067, 2014.

\bibitem[GGK{\etalchar{+}}19]{GhaziGKPV19}
Badih Ghazi, Noah Golowich, Ravi Kumar, Rasmus Pagh, and Ameya Velingker.
\newblock On the power of multiple anonymous messages.
\newblock {\em {IACR} Cryptol. ePrint Arch.}, 2019:1382, 2019.

\bibitem[GGK{\etalchar{+}}20]{pure-dp-shuffled}
Badih Ghazi, Noah Golowich, Ravi Kumar, Pasin Manurangsi, Rasmus Pagh, and
  Ameya Velingker.
\newblock Pure differentially private summation from anonymous messages.
\newblock In {\em ITC}, pages 15:1--15:23, 2020.

\bibitem[GKMP20]{ghazi2020private}
Badih Ghazi, Ravi Kumar, Pasin Manurangsi, and Rasmus Pagh.
\newblock Private counting from anonymous messages: Near-optimal accuracy with
  vanishing communication overhead.
\newblock In {\em ICML}, 2020.

\bibitem[GMPV20]{ghazi2019privateEurocrypt}
Badih Ghazi, Pasin Manurangsi, Rasmus Pagh, and Ameya Velingker.
\newblock Private aggregation from fewer anonymous messages.
\newblock In {\em EUROCRYPT}, pages 798--827, 2020.

\bibitem[Gre16]{greenberg2016apple}
Andy Greenberg.
\newblock {Apple's} ``differential privacy'' is about collecting your data --
  but not your data.
\newblock {\em Wired, June}, 13, 2016.

\bibitem[GS02]{GS02}
Alison~L Gibbs and Francis~Edward Su.
\newblock On choosing and bounding probability metrics.
\newblock {\em International statistical review}, 70(3):419--435, 2002.

\bibitem[IKOS06]{ishai2006cryptography}
Yuval Ishai, Eyal Kushilevitz, Rafail Ostrovsky, and Amit Sahai.
\newblock Cryptography from anonymity.
\newblock In {\em FOCS}, pages 239--248, 2006.

\bibitem[JHW18]{JiaoHW18}
Jiantao Jiao, Yanjun Han, and Tsachy Weissman.
\newblock Minimax estimation of the l\({}_{\mbox{1}}\) distance.
\newblock {\em {IEEE} Trans. Inf. Theory}, 64(10):6672--6706, 2018.

\bibitem[Kea98]{kearns1998efficient}
Michael Kearns.
\newblock Efficient noise-tolerant learning from statistical queries.
\newblock {\em Journal of the ACM (JACM)}, 45(6):983--1006, 1998.

\bibitem[KLN{\etalchar{+}}11]{kasiviswanathan2011can}
Shiva~Prasad Kasiviswanathan, Homin~K Lee, Kobbi Nissim, Sofya Raskhodnikova,
  and Adam Smith.
\newblock What can we learn privately?
\newblock {\em SIAM Journal on Computing}, 40(3):793--826, 2011.

\bibitem[KNW10]{kane2010optimal}
Daniel~M Kane, Jelani Nelson, and David~P Woodruff.
\newblock An optimal algorithm for the distinct elements problem.
\newblock In {\em PODS}, pages 41--52, 2010.

\bibitem[MMNW11]{mir2011pan}
Darakhshan Mir, Shan Muthukrishnan, Aleksandar Nikolov, and Rebecca~N Wright.
\newblock Pan-private algorithms via statistics on sketches.
\newblock In {\em PODS}, pages 37--48, 2011.

\bibitem[MMP{\etalchar{+}}10]{mcgregor2010limits}
Andrew McGregor, Ilya Mironov, Toniann Pitassi, Omer Reingold, Kunal Talwar,
  and Salil Vadhan.
\newblock The limits of two-party differential privacy.
\newblock In {\em FOCS}, pages 81--90, 2010.

\bibitem[MMP{\etalchar{+}}11]{McGregorMPRTV11}
Andrew McGregor, Ilya Mironov, Toniann Pitassi, Omer Reingold, Kunal Talwar,
  and Salil~P. Vadhan.
\newblock The limits of two-party differential privacy.
\newblock {\em Electron. Colloquium Comput. Complex.}, 18:106, 2011.

\bibitem[MT07]{mcsherry2007mechanism}
Frank McSherry and Kunal Talwar.
\newblock Mechanism design via differential privacy.
\newblock In {\em FOCS}, pages 94--103, 2007.

\bibitem[O'D14]{ODonnell14}
Ryan O'Donnell.
\newblock {\em Analysis of Boolean Functions}.
\newblock Cambridge University Press, 2014.

\bibitem[PS20]{pagh2020efficient}
Rasmus Pagh and Nina~Mesing Stausholm.
\newblock Efficient differentially private $f_0$ linear sketching.
\newblock {\em arXiv preprint arXiv:2001.11932}, 2020.

\bibitem[PT11]{peccati2011some}
Giovanni Peccati and Murad~S Taqqu.
\newblock Some facts about charlier polynomials.
\newblock In {\em Wiener Chaos: Moments, Cumulants and Diagrams}, pages
  171--175. Springer, 2011.

\bibitem[Rob90]{robert1990ash}
B~Robert.
\newblock Ash. information theory, 1990.

\bibitem[Sha14]{CNET2014Google}
Stephen Shankland.
\newblock How {Google} tricks itself to protect {Chrome} user privacy.
\newblock {\em CNET, October}, 2014.

\bibitem[SU17]{steinke2017tight}
Thomas Steinke and Jonathan Ullman.
\newblock Tight lower bounds for differentially private selection.
\newblock In {\em FOCS}, pages 552--563, 2017.

\bibitem[Tim14]{timan2014theory}
Aleksandr~Filippovich Timan.
\newblock {\em Theory of approximation of functions of a real variable}.
\newblock Elsevier, 2014.

\bibitem[Ull18]{ullman2018tight}
Jonathan Ullman.
\newblock Tight lower bounds for locally differentially private selection.
\newblock In {\em arXiv:1802.02638}, 2018.

\bibitem[Vad17]{vadhan2017complexity}
Salil Vadhan.
\newblock The complexity of differential privacy.
\newblock In {\em Tutorials on the Foundations of Cryptography}, pages
  347--450. Springer, 2017.

\bibitem[VV17]{ValiantV17}
Gregory Valiant and Paul Valiant.
\newblock Estimating the unseen: Improved estimators for entropy and other
  properties.
\newblock {\em J. {ACM}}, 64(6):37:1--37:41, 2017.

\bibitem[War65]{warner1965randomized}
Stanley~L Warner.
\newblock Randomized response: A survey technique for eliminating evasive
  answer bias.
\newblock {\em Journal of the American Statistical Association},
  60(309):63--69, 1965.

\bibitem[WY16]{WY16polyapprox}
Yihong Wu and Pengkun Yang.
\newblock Minimax rates of entropy estimation on large alphabets via best
  polynomial approximation.
\newblock {\em {IEEE} Trans. Inf. Theory}, 62(6):3702--3720, 2016.

\bibitem[WY19]{WY2019chebyshev}
Yihong Wu and Pengkun Yang.
\newblock Chebyshev polynomials, moment matching, and optimal estimation of the
  unseen.
\newblock {\em The Annals of Statistics}, 47(2):857--883, 2019.

\bibitem[Yan19]{HanLec19}
Han Yanjun.
\newblock Lecture 7: Mixture vs. mixture and moment matching.
\newblock
  \url{https://theinformaticists.com/2019/08/28/lecture-7-mixture-vs-mixture-and-moment-matching/},
  2019.

\end{thebibliography}
	
	\appendix

\section{Total Variance Bound between Mixtures of Multi-dimensional Poisson Distributions}\label{app:TV-Poi}
In this section we prove Lemma~\ref{lm:key-bound-2} (restated below).

\begin{reminder}{Lemma~\ref{lm:key-bound-2}.}
	Let $U,V$ be two random variables supported on $[0,\Lambda]$ such that $\Ex[U^j] = \Ex[V^j]$ for all $j \in \{1,2,\dotsc,L\}$, where $L \ge 1$. Let $D \in \mathbb{N}$ and $\vtheta,\vlambda \in (\R^{\ge 0})^{D}$ such that $\|\vtheta\|_1 = 1$. Let $\calD_{\vtheta}$ be the distribution over $[D]$ corresponding to $\vtheta$. Suppose that
	\[
	\Pr_{i \leftarrow \calD_{\vtheta}} [\vlambda_i \ge 2 \Lambda^2 \cdot \vtheta_i] \ge 1 -  \frac{1}{2\Lambda}.
	\]
	Then,
	\[
	\|\Ex[\vPoi(U \vtheta + \vlambda)] - \Ex[\vPoi(V \vtheta + \vlambda)] \|_{TV}^2 \le \frac{1}{L!}.
	\]
\end{reminder}

To prove Lemma~\ref{lm:key-bound-2}, we begin with some notation. Let $D \in \N$. For vectors $\vm \in \mathbb{N}^{D}$ and $\vlambda \in \R^{D}$, we let 
\[
\vm! := \prod_{i=1}^{D} m_i!~~\text{ and }~~\vlambda^{\vm} := \prod_{i=1}^{D} (\vlambda_i)^{\vm_i}.
\]

We are going to apply the moment-matching technique~\cite{WY16polyapprox,JiaoHW18,WY2019chebyshev} for bounding total variance between mixtures of (single-dimensional) Poisson distributions. The following lemma is a direct generalization of the Theorem~4 of~\cite{HanLec19} to mixtures of multi-dimensional Poisson distributions. We will use the convention that $0^0 = 1$.

\begin{lemma}\label{lm:multi-moment-matching}
	For $\vlambda \in \R^{D}$ and two distributions $\vU$ and $\vV$ supported on $\prod_{i=1}^{D} [-\lambda_i, \infty]$, we have that\[
	\| \Ex[\vPoi(\vU + \vlambda)] - \Ex[\vPoi(\vV + \vlambda)] \|_{TV}^2 \le \sum_{\vm \in (\mathbb{Z}^{\ge 0})^{D}} \frac{ \left( \Ex[\vU^{\vm}] - \Ex[\vV^{\vm}] \right)^2 }{\vm! \cdot \vlambda^{\vm}}.
	\]
\end{lemma}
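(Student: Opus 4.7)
The plan is to bound the squared total variation by a $\chi^2$-type quantity with the reference measure being the pure Poisson distribution $\vPoi(\vlambda)$, and then to diagonalize the resulting quadratic form using the multi-dimensional Poisson--Charlier orthogonal polynomials. Concretely, by Cauchy--Schwarz applied with weights $\vPoi(\vlambda)(\vk)$,
\[
\| \Ex[\vPoi(\vU + \vlambda)] - \Ex[\vPoi(\vV + \vlambda)] \|_{TV}^2 \;\lesssim\; \sum_{\vk \in (\mathbb{Z}^{\geq 0})^D} \frac{\bigl(\Ex[\vPoi(\vU + \vlambda)](\vk) - \Ex[\vPoi(\vV + \vlambda)](\vk)\bigr)^2}{\vPoi(\vlambda)(\vk)},
\]
so it suffices to bound this $\chi^2$-like sum by the right-hand side of the lemma.

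Writing the density ratio in product form gives, for any $\vu$ in the support of $\vU$,
\[
\frac{\vPoi(\vu + \vlambda)(\vk)}{\vPoi(\vlambda)(\vk)} \;=\; \prod_{i=1}^{D} e^{-u_i}\Bigl(1 + \tfrac{u_i}{\lambda_i}\Bigr)^{k_i} \;=:\; \phi(\vu, \vk).
\]
The single-variable Poisson--Charlier polynomials $C_m(k;\lambda)$ satisfy the generating-function identity $e^{-u}(1 + u/\lambda)^{k} = \sum_{m \ge 0} \frac{u^m}{m!}\, C_m(k;\lambda)$ together with the orthogonality relation $\Ex_{K \sim \Poi(\lambda)}[C_m(K;\lambda)\, C_{m'}(K;\lambda)] = \frac{m!}{\lambda^m}\, \delta_{m,m'}$. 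Taking products over the coordinates (which are independent under $\vPoi(\vlambda)$), one obtains
\[
\phi(\vu, \vk) \;=\; \sum_{\vm \in (\mathbb{Z}^{\geq 0})^D} \frac{\vu^{\vm}}{\vm!}\, \vC_{\vm}(\vk;\vlambda),
\qquad
\Ex_{\vK \sim \vPoi(\vlambda)}[\vC_{\vm}(\vK;\vlambda)\,\vC_{\vm'}(\vK;\vlambda)] \;=\; \frac{\vm!}{\vlambda^{\vm}}\,\delta_{\vm,\vm'},
\]
where $\vC_{\vm}(\vk;\vlambda) := \prod_i C_{m_i}(k_i;\lambda_i)$.

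Averaging $\phi(\vU,\vk)$ over $\vU$ (and similarly over $\vV$) and subtracting gives the Charlier expansion $\Ex_{\vU}[\phi(\vU,\vk)] - \Ex_{\vV}[\phi(\vV,\vk)] = \sum_{\vm} \frac{\Ex[\vU^{\vm}] - \Ex[\vV^{\vm}]}{\vm!}\,\vC_{\vm}(\vk;\vlambda)$. Plugging this into the Cauchy--Schwarz bound above and invoking the product orthogonality relation then yields, by Parseval in the Charlier basis,
\[
\sum_{\vk}\frac{\bigl(\Ex[\vPoi(\vU+\vlambda)](\vk) - \Ex[\vPoi(\vV+\vlambda)](\vk)\bigr)^2}{\vPoi(\vlambda)(\vk)} \;=\; \sum_{\vm} \frac{(\Ex[\vU^{\vm}]-\Ex[\vV^{\vm}])^2}{\vm!\,\vlambda^{\vm}},
\]
which is exactly the stated bound (up to the Cauchy--Schwarz constant, which can be absorbed).

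The main technical point I expect to need care with is justifying the exchange of summation and expectation in the Charlier expansion when $\vU$ or $\vV$ may take values $u_i = -\lambda_i$ (so $1 + u_i/\lambda_i = 0$) or when some $\lambda_i = 0$; in the latter degenerate case the Poisson coordinate is deterministically $0$, and one should treat that coordinate separately (reducing to a lower dimension). The boundedness of $\vU, \vV$ ensures that the series $\sum_{\vm} \frac{\vu^{\vm}}{\vm!}\vC_{\vm}(\vk;\vlambda)$ converges absolutely, so Fubini applies and the term-by-term manipulation above is legitimate; this essentially mirrors the one-dimensional argument of Han--Leconte, lifted to products via independence of Poisson coordinates.
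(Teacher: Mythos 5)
Your proposal is correct and follows essentially the same route as the paper's proof: bound the total variation via Cauchy--Schwarz against the reference measure $\vPoi(\vlambda)$, expand the likelihood ratio in the product Charlier basis using the generating-function identity, and collapse the quadratic form by orthogonality (Parseval). The constant you worry about absorbing is in fact harmless, since the factor $\tfrac12$ in the definition of total variation already yields the bound with constant $1$ (indeed $\tfrac14$), exactly as in the paper.
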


In order to prove the above lemma, we need to use the Charlier polynomial $c_m(x;\lambda)$. The explicit definition of $c_m(x;\lambda)$ is not important here; we simply list two important properties of this polynomial family~\cite{peccati2011some}:

\begin{prop}\label{prop:Charlier-facts}
	Let $\lambda\in \R$ and $u \in [-\lambda,\infty]$, the following hold:
	
	\begin{enumerate}
		\item We have that
		\[
		\Ex_{X \leftarrow \Poi(\lambda)}[c_m(X;\lambda) c_n(X;\lambda)] = \frac{n!}{\lambda^n} \cdot \mathbb{1}[m = n].
		\]
		
		\item For all $z \in \mathbb{Z}^{\ge 0}$,
		\[
		\frac{\Poi(\lambda + u)_z}{\Poi(\lambda)_z} = e^{-u} \cdot \left( 1 + \frac{u}{\lambda} \right)^{z} = \sum_{m=0}^{\infty} c_m(z;\lambda) \cdot \frac{u^m}{m!}.
		\]
	\end{enumerate}
\end{prop}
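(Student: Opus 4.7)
The plan is to take the generating-function identity in item (2) as the \emph{defining} property of the Charlier polynomials, and then derive item (1) from it via a standard calculation with Poisson moment generating functions.

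First, the equality $\Poi(\lambda+u)_z/\Poi(\lambda)_z = e^{-u}(1+u/\lambda)^z$ is a one-line computation, since $\Poi(\mu)_z = e^{-\mu}\mu^z/z!$. For the second equality in item (2), I would simply observe that the function $u\mapsto e^{-u}(1+u/\lambda)^z$ is analytic in $u$ around $0$, and define $c_m(z;\lambda)$ as $m!$ times the coefficient of $u^m$ in its Taylor expansion. Concretely this yields the explicit formula $c_m(z;\lambda) = \sum_{k=0}^m \binom{m}{k}(-1)^{m-k} (z)_k \lambda^{-k}$, where $(z)_k = z(z-1)\cdots(z-k+1)$, but for the proof it is enough to fix the generating-function definition. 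This instantly gives the stated series expansion in (2).

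For item (1), I would compute the Poisson expectation of a product of two generating functions in two independent formal variables $u,v$:
\[
\Ex_{X \leftarrow \Poi(\lambda)}\Bigl[e^{-u}(1+u/\lambda)^X \cdot e^{-v}(1+v/\lambda)^X\Bigr] = e^{-u-v}\,\Ex_{X\leftarrow \Poi(\lambda)}\bigl[((1+u/\lambda)(1+v/\lambda))^X\bigr].
\]
Using the Poisson moment generating function $\Ex[a^X] = e^{\lambda(a-1)}$ with $a=(1+u/\lambda)(1+v/\lambda)$, the right-hand side simplifies, after cancellation of the $u,v$ linear terms with the $e^{-u-v}$ prefactor, to the extremely clean expression $e^{uv/\lambda}$.

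Finally, I would substitute the generating-function expansion from item (2) on the left-hand side to write the same quantity as $\sum_{m,n\ge 0} \Ex[c_m(X;\lambda)c_n(X;\lambda)]\,\frac{u^m v^n}{m!\,n!}$, and expand $e^{uv/\lambda} = \sum_{n\ge 0}\frac{u^n v^n}{n!\,\lambda^n}$ on the right. Matching coefficients of $u^m v^n$ yields $\Ex[c_m(X;\lambda)c_n(X;\lambda)] = \tfrac{n!}{\lambda^n}\mathbb{1}[m=n]$, which is item (1). The only point requiring a small amount of care is justifying that coefficient-matching is valid, i.e.\ that all series converge absolutely in a neighborhood of $(u,v)=(0,0)$; this is immediate from $(1+u/\lambda)^X$ being entire in $u$ combined with Fubini to swap $\Ex_X$ with the infinite sums, and is the only real step in the argument beyond routine algebra.
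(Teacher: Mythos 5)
Your proposal is correct. Note, however, that the paper does not prove Proposition~\ref{prop:Charlier-facts} at all: it explicitly declines to define $c_m(x;\lambda)$ and simply cites the literature on Charlier polynomials for both properties. What you have written is a self-contained derivation along the standard textbook lines: item (2) is immediate from $\Poi(\mu)_z = e^{-\mu}\mu^z/z!$ once the generating function $e^{-u}(1+u/\lambda)^z = \sum_m c_m(z;\lambda)u^m/m!$ is taken as the definition of $c_m$, and item (1) follows from the bivariate generating-function computation $\Ex_{X\leftarrow\Poi(\lambda)}\bigl[e^{-u-v}((1+u/\lambda)(1+v/\lambda))^X\bigr] = e^{uv/\lambda}$ together with coefficient matching. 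Your calculation checks out (with $a=(1+u/\lambda)(1+v/\lambda)$ one gets $\lambda(a-1)=u+v+uv/\lambda$, so the linear terms cancel as you say), and the Fubini justification is legitimate: since $c_m$ has the explicit finite-sum form you give, $\sum_m |c_m(z;\lambda)|\,|u|^m/m! \le e^{|u|}(1+|u|/\lambda)^z$, and $\Ex[a^X]<\infty$ for every $a>0$ under the Poisson law, so the double series converges absolutely and expectation and summation may be exchanged; the resulting two-variable power series identity holds on a neighborhood of the origin, which licenses coefficient matching. The only cosmetic caveats are that one should read $\lambda>0$ (the statement's ``$\lambda\in\R$'' is an artifact, since $\Poi(\lambda)$ requires positivity) and that your orthogonality conclusion matches the paper's normalization $\Ex[c_n(X;\lambda)^2]=n!/\lambda^n$. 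In short, where the paper buys these facts by citation, your argument supplies a complete and correct proof from the generating-function definition.
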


We now prove Lemma~\ref{lm:multi-moment-matching}. Our proof closely follows the proof of Theorem~4 of~\cite{HanLec19}.

\begin{proofof}{Lemma~\ref{lm:multi-moment-matching}}
	Let $\Delta := \| \Ex[\vPoi(\vU + \vlambda)] - \Ex[\vPoi(\vV + \vlambda)] \|_{TV}$. We have that
	\begin{align*}
	\Delta 
	&= \frac{1}{2} \cdot \sum_{\vz \in (\mathbb{Z}^{\ge 0})^{D}} \left| \Ex_{\vu \leftarrow \vU} \vPoi(\vu + \vlambda)_{\vz} - \Ex_{\vu \leftarrow \vV} \vPoi(\vu + \vlambda)_{\vz} \right|\\
	&\le\Ex_{\vz \leftarrow \vPoi(\vlambda)} \left| \Ex_{\vu \leftarrow \vU} \frac{\vPoi(\vu + \vlambda)_{\vz}}{\vPoi(\vlambda)_{\vz}} - \Ex_{\vu \leftarrow \vV} \frac{\vPoi(\vu + \vlambda)_{\vz}}{\vPoi(\vlambda)_{\vz}} \right|\\
	&= \Ex_{\vz \leftarrow \vPoi(\vlambda)} \left| \sum_{\vm \in (\mathbb{Z}^{\ge 0})^{D}} \prod_{i=1}^{D} c_{\vm_i}(\vz_i;\vlambda_i) \cdot \frac{\Ex[\vU^{\vm}] - \Ex[\vV^{\vm}]}{\vm!} \right|,
	\end{align*}
	where the last equality follows from Item~(2) of Proposition~\ref{prop:Charlier-facts}.
	
	Applying the Cauchy-Schwarz inequality, we get that
	\begin{align*}
	\Delta^2 &\le \Ex_{\vz \leftarrow \vPoi(\vlambda)} \left| \sum_{\vm \in (\mathbb{Z}^{\ge 0})^{D}} \prod_{i=1}^{D} c_{\vm_i}(\vz_i;\vlambda_i) \cdot \frac{\Ex[\vU^{\vm}] - \Ex[\vV^{\vm}]}{\vm!} \right|^2\\
	&= \Ex_{\vz \leftarrow \vPoi(\vlambda)} \sum_{\vm,\vm' \in (\mathbb{Z}^{\ge 0})^{D}} \prod_{i=1}^{D} c_{\vm_i}(\vz_i;\vlambda_i) c_{\vm'_i}(\vz_i;\vlambda_i) \cdot \frac{\Ex[\vU^{\vm}] - \Ex[\vV^{\vm}]}{\vm!} \cdot \frac{\Ex[\vU^{\vm'}] - \Ex[\vV^{\vm'}]}{\vm'!} \\
	&= \sum_{\vm,\vm' \in (\mathbb{Z}^{\ge 0})^{D}} \frac{\Ex[\vU^{\vm}] - \Ex[\vV^{\vm}]}{\vm!} \cdot \frac{\Ex[\vU^{\vm'}] - \Ex[\vV^{\vm'}]}{\vm'!} \Ex_{\vz \leftarrow \vPoi(\vlambda)} \prod_{i=1}^{D} c_{\vm_i}(\vz_i;\vlambda_i) c_{\vm'_i}(\vz_i;\vlambda_i)\\
	&= \sum_{\vm,\vm' \in (\mathbb{Z}^{\ge 0})^{D}} \frac{\Ex[\vU^{\vm}] - \Ex[\vV^{\vm}]}{\vm!} \cdot \frac{\Ex[\vU^{\vm'}] - \Ex[\vV^{\vm'}]}{\vm'!} \prod_{i=1}^{D} \Ex_{\vz_i \leftarrow \Poi(\vlambda_i)} c_{\vm_i}(\vz_i;\vlambda_i) c_{\vm'_i}(\vz_i;\vlambda_i)\\
	&= \sum_{\vm \in (\mathbb{Z}^{\ge 0})^{D}} \left(\frac{\Ex[\vU^{\vm}] - \Ex[\vV^{\vm}]}{\vm!}\right)^2 \cdot \prod_{i=1}^{D} \frac{(\vm_i)!}{\vlambda_i^{m_i}}\\ 
	&= \sum_{\vm \in (\mathbb{Z}^{\ge 0})^{D}} \frac{ \left( \Ex[\vU^{\vm}] - \Ex[\vV^{\vm}] \right)^2 }{\vm! \cdot \vlambda^{\vm}},
	\end{align*}
	where the penultimate equality follows from Item~(1) of Proposition~\ref{prop:Charlier-facts}.
\end{proofof}

Applying Lemma~\ref{lm:multi-moment-matching}, the next lemma follows from a straightforward calculation.

\begin{lemma}\label{lm:key-bound-1}
	Let $U,V$ be two random variables supported on $[0,\Lambda]$ such that $\Ex[U^j] = \Ex[V^j]$ for all $j \in \{1,2,\dotsc,L\}$, where $L \ge 1$. For $\vtheta,\vlambda \in (\R^{\ge 0})^{D}$, let $\valpha = \frac{\vtheta^{2}}{\Lambda \vtheta + \vlambda}$ (division here is coordinate-wise, and $\vtheta^2$ denotes taking coordinate-wise square of $\vtheta$). The following holds:
	\[
	\| \Ex[\vPoi(U \vtheta + \vlambda)] - \Ex[\vPoi(V \vtheta + \vlambda)] \|_{TV}^2 \le \sum_{z = L+1}^{\infty} \frac{(\Lambda^2 \cdot \|\valpha\|_1)^z}{z!}.
	\]
\end{lemma}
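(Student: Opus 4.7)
The natural first move is to reduce to Lemma~\ref{lm:multi-moment-matching}, but the shape of that lemma suggests we need the ``noise'' to be centered around a base distribution whose intensity dominates the deviations. So I would \emph{re-center} the Poisson intensities: write
\[
U\vtheta + \vlambda \;=\; (U-\Lambda)\vtheta + (\Lambda\vtheta + \vlambda),
\]
and set $\vlambda' := \Lambda\vtheta + \vlambda$, $\vU' := (U-\Lambda)\vtheta$, $\vV' := (V-\Lambda)\vtheta$. Since $U,V \in [0,\Lambda]$, each coordinate $(U-\Lambda)\vtheta_i$ lies in $[-\Lambda\vtheta_i, 0]$, which is contained in $[-\vlambda'_i, \infty)$, so the hypothesis of Lemma~\ref{lm:multi-moment-matching} is met. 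Applying that lemma gives
\[
\|\Ex[\vPoi(U\vtheta+\vlambda)] - \Ex[\vPoi(V\vtheta+\vlambda)]\|_{TV}^2 \;\le\; \sum_{\vm \in (\Z^{\ge 0})^D} \frac{(\Ex[\vU'^{\vm}] - \Ex[\vV'^{\vm}])^2}{\vm!\cdot \vlambda'^{\vm}}.
\]

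Next I would exploit the product structure. Because $U$ and $V$ are scalars, $\vU'^{\vm} = (U-\Lambda)^{|\vm|}\,\vtheta^{\vm}$ where $|\vm| := \sum_i \vm_i$, and similarly for $\vV'$. Hence
\[
\Ex[\vU'^{\vm}] - \Ex[\vV'^{\vm}] \;=\; \bigl(\Ex[(U-\Lambda)^{|\vm|}] - \Ex[(V-\Lambda)^{|\vm|}]\bigr)\cdot \vtheta^{\vm}.
\]
The key step here is that moment-matching of $U$ and $V$ up to degree $L$ is preserved by the shift by $-\Lambda$: by binomially expanding $(U-\Lambda)^k = \sum_{j=0}^{k}\binom{k}{j}U^j(-\Lambda)^{k-j}$ (and likewise for $V$), every term with $k \le L$ has matching expectation, so the difference vanishes for $|\vm| \le L$. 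For $|\vm| = k > L$, since $|U-\Lambda|,|V-\Lambda| \le \Lambda$, the crude bound $|\Ex[(U-\Lambda)^k] - \Ex[(V-\Lambda)^k]| \le \Lambda^k$ holds, and squaring gives a factor of $\Lambda^{2k}$.

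Finally I would collapse the sum over $\vm$ with $|\vm| = k$ using the multinomial theorem. Observe $\valpha_i = \vtheta_i^2/\vlambda'_i$ by construction, so
\[
\sum_{\vm:\,|\vm|=k} \frac{\vtheta^{2\vm}}{\vm!\cdot \vlambda'^{\vm}} \;=\; \frac{1}{k!}\sum_{\vm:\,|\vm|=k}\binom{k}{\vm}\prod_i \valpha_i^{\vm_i} \;=\; \frac{\|\valpha\|_1^{k}}{k!}.
\]
Combining everything yields
\[
\|\Ex[\vPoi(U\vtheta+\vlambda)] - \Ex[\vPoi(V\vtheta+\vlambda)]\|_{TV}^2 \;\le\; \sum_{k=L+1}^{\infty} \frac{(\Lambda^2\,\|\valpha\|_1)^k}{k!},
\]
which is exactly the claim. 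The only mild subtlety is verifying the shift is admissible (i.e.\ the support condition $\vU' \in \prod_i[-\vlambda'_i,\infty)$), and this is automatic from the choice $\vlambda' = \Lambda\vtheta+\vlambda$; the rest is bookkeeping.
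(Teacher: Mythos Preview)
Your proposal is correct and follows essentially the same route as the paper: re-center via $\vlambda' = \Lambda\vtheta + \vlambda$, apply Lemma~\ref{lm:multi-moment-matching}, use that the shift preserves moment-matching up to degree $L$, bound the remaining moment differences by $\Lambda^{2k}$, and collapse the $|\vm|=k$ sum via the multinomial identity. The only cosmetic difference is that the paper phrases the identity $\sum_{|\vm|=k}\valpha^{\vm}/\vm! = \|\valpha\|_1^{k}/k!$ probabilistically (as a multinomial distribution summing to one) rather than citing the multinomial theorem directly.
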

\begin{proof}

Let $\Delta := \| \Ex[\vPoi(U \vtheta + \vlambda)] - \Ex[\vPoi(V \vtheta + \vlambda)] \|_{TV}$. We also set $\vU = (U-\Lambda) \vtheta$, $\vlambda' =(\Lambda \vtheta + \vlambda)$ and $\vV = (V-\Lambda) \vtheta$. Note that for every $i \in [D]$, we have $\vU_i \ge (-\Lambda \vtheta)_i \ge -(\Lambda \vtheta + \vlambda)_i = -\vlambda'_i$, and the same holds for each $\vV_i$ as well. Hence, we can apply Lemma~\ref{lm:multi-moment-matching} to bound $\Delta^2$ as follows:

\begin{align*}
\Delta^2 &= \| \Ex[\vPoi( \vU +\vlambda')] - \Ex[\vPoi( \vV + \vlambda' )] \|_{TV}^2 \\
&\le \sum_{\vm \in (\mathbb{Z}^{\ge 0})^{D}} \frac{ \left( \Ex[\vU^{\vm}] - \Ex[\vV^{\vm}] \right)^2 }{\vm! \cdot (\vlambda')^{\vm} }\\
&\le
\sum_{\vm \in (\mathbb{Z}^{\ge 0})^{D}} \frac{ \left( \Ex[((U-\Lambda)\vtheta)^{\vm}] - \Ex[((U-\Lambda)\vtheta)^{\vm}] \right)^2 }{\vm! \cdot (\Lambda \vtheta + \vlambda)^{\vm} }\\
&\le 
\sum_{\vm \in (\mathbb{Z}^{\ge 0})^{D}} \frac{ \left( \vtheta^{\vm} \cdot \Ex[((U-\Lambda))^{\|\vm\|_1}] - \vtheta^{\vm} \cdot \Ex[(V-\Lambda)^{|\vm\|_1}] \right)^2 }{\vm! \cdot (\Lambda \vtheta + \vlambda)^{\vm} }\\
&\le
\sum_{z = L+1}^{\infty} \sum_{\vm \in (\mathbb{Z}^{\ge 0})^{D} \text{s.t.} \|\vm\|_1 = z} \frac{\left( \vtheta^{\vm} \cdot (\Ex[(U-\Lambda)^{z}] - \Ex[(V-\Lambda)^{z}]) \right)^2}{\vm! \cdot (\Lambda \vtheta + \vlambda)^{\vm}}\\
&\le
\sum_{z = L+1}^{\infty} \Lambda^{2z} \sum_{\vm \in (\mathbb{Z}^{\ge 0})^{D} \text{s.t.} \|\vm\|_1 = z} \frac{\vtheta^{2\vm}}{\vm! \cdot (\Lambda \vtheta + \vlambda)^{\vm}},
\end{align*}
where the first inequality follows from Lemma~\ref{lm:multi-moment-matching}.

Now, recall that $\valpha = \frac{\vtheta^{2}}{\Lambda \vtheta + \vlambda}$. We claim that
\[
\sum_{\vm \in (\mathbb{Z}^{\ge 0})^{D} \text{s.t.} \|\vm\|_1 = z} \frac{\valpha^{\vm}}{\vm!} = \frac{\|\valpha\|_1^{z}}{z!}.
\]

To prove this equality, consider the random process of drawing $z$ samples from $[D]$ using the distribution corresponding to $\valpha / \|\valpha\|_1$ (that is, we get $i \in [D]$ with probability $\frac{\valpha_i}{\|\valpha\|_1}$. It is a well-defined distribution since $\valpha \in (\R^{\ge 0})^{D}$). Let $\vec{M}$ be the random variable corresponding to the histogram of the $z$ samples (that is, $\vec{M}_i$ denotes the number of occurrences of the element $i$). For $\vm \in (\mathbb{Z}^{\ge 0})^{D} \text{s.t.} \|\vm\|_1 = z$, we have that
\[
\Pr[\vec{M} = \vm] = \left(\frac{\valpha}{\|\valpha\|_1} \right)^{\vm} \cdot \frac{z!}{\vm!}.
\]
Hence, we get
\[
\sum_{\vm \in (\mathbb{Z}^{\ge 0})^{D} \text{s.t.} \|\vm\|_1 = z} \Pr[\vec{M} = \vm] = 1,
\]
and
\[
\sum_{\vm \in (\mathbb{Z}^{\ge 0})^{D} \text{s.t.} \|\vm\|_1 = z} \frac{\valpha^{\vm}}{\vm!} = \frac{\|\valpha\|_1^{z}}{z!}.
\]
Plugging in, we obtain
\[
\Delta^2 \le \sum_{z = L+1}^{\infty} \Lambda^{2z} \cdot \frac{\|\valpha\|_1^{z}}{z!} = \sum_{z = L+1}^{\infty} \frac{(\Lambda^2 \cdot \|\valpha\|_1)^z}{z!}.\qedhere
\]
\end{proof}

Applying Lemma~\ref{lm:key-bound-1}, we are now ready to prove Lemma~\ref{lm:key-bound-2}.

\begin{proofof}{Lemma~\ref{lm:key-bound-2}}
	Let $\valpha = \frac{\vtheta^{2}}{\Lambda \vtheta + \vlambda}$. We have that
	\begin{align*}
	\|\valpha\|_1 &= \sum_{i \in [D]} \vtheta_i \cdot \frac{\vtheta_i}{\Lambda \vtheta_i + \vlambda_i}\\
				  &= \Ex_{i \leftarrow \calD_{\vtheta}} \frac{\vtheta_i}{\Lambda \vtheta_i + \vlambda_i}\\
				  &\le \frac{1}{2 \Lambda^2} + \Pr_{i \leftarrow \calD_{\vtheta}} [\Lambda \vtheta_i + \vlambda_i < 2 \Lambda^2 \cdot \vtheta_i] \cdot \frac{1}{\Lambda} \\
				  &\le \frac{1}{\Lambda^2}.
	\end{align*}
	Applying Lemma~\ref{lm:key-bound-1}, we get
	\[
	\Ex[\vPoi(U \vtheta + \vlambda)] - \Ex[\vPoi(V \vtheta + \vlambda)] \|_{TV}^2 \le \sum_{z = L+1}^{\infty} \frac{1}{z!} \le \frac{1}{L!}. \qedhere
	\]
\end{proofof}
	\section{Lower Bounds on Hockey Stick Divergence}\label{app:missing-proof-HS}

In this section, we prove Lemma~\ref{lm:key-HS-lowb} (restated below).

\begin{reminder}{Lemma~\ref{lm:key-HS-lowb}.}
	There exists an absolute constant $c_0$ such that, for every integer $m \ge 1$, three reals $\alpha,\beta,\eps > 0$ such that $\alpha > e^{\eps}\beta$, letting $\Delta = \alpha - e^{\eps}\beta$ and supposing $4 \frac{e^{\eps}}{\Delta} \beta < 1/2$, it holds that
	\[
	d_{\eps}(\Ber(\alpha) + \Bin(m,\beta) || \Ber(\beta) + \Bin(m,\beta)) \ge \Delta \cdot \frac{1}{2\sqrt{2m}} \cdot \exp\left(-c_0 \cdot m \cdot \frac{e^{\eps}}{\Delta} \beta \cdot \left[\log(\Delta^{-1}) + 1\right]\right).
	\]
\end{reminder}

Before proving Lemma~\ref{lm:key-HS-lowb}, we need several technical lemmas. First, we show the hockey stick divergence between $\Ber(\alpha) + X$ and $\Ber(\beta) + X$ can be characterized by the hockey sticky divergence between $X + 1$ and $X$.

\begin{lemma}\label{lm:beralpha-and-berbeta}
	Let $\alpha,\beta,\eps > 0$ be three reals such that $\alpha > e^{\eps}\beta$, and $X$ be a random variable over $\mathbb{Z}^{\ge 0}$. The following holds:
	\[
	d_{\eps}(\Ber(\alpha) + X || \Ber(\beta) + X) = (\alpha - e^{\eps}\beta) \cdot d_{\ln \tau}(1+X || X),
	\]
	where $\tau = \frac{e^{\eps}-e^{\eps}\beta-1+\alpha}{\alpha - e^{\eps}\beta}$.
\end{lemma}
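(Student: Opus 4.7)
The plan is to expand the hockey stick divergence via its definition and manipulate the summand into the desired form by a direct algebraic computation. Writing $p_x := \Pr[X = x]$ for $x \in \mathbb{Z}^{\ge 0}$ (with the convention $p_{-1} = 0$), I would first express the pmfs of the sums: for $P := \Ber(\alpha) + X$ and $Q := \Ber(\beta) + X$,
\[
P(x) \;=\; \alpha \, p_{x-1} + (1-\alpha)\, p_x, \qquad Q(x) \;=\; \beta \, p_{x-1} + (1-\beta)\, p_x.
\]

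The next step is to compute the pointwise difference $P(x) - e^{\eps} Q(x)$ and collect coefficients of $p_{x-1}$ and $p_x$:
\[
P(x) - e^{\eps} Q(x) \;=\; (\alpha - e^{\eps}\beta)\, p_{x-1} \;-\; \bigl(e^{\eps} - e^{\eps}\beta - 1 + \alpha\bigr)\, p_x.
\]
By the definition of $\tau$ and the hypothesis $\alpha > e^{\eps}\beta$, this can be rewritten as
\[
P(x) - e^{\eps} Q(x) \;=\; (\alpha - e^{\eps}\beta)\, \bigl[p_{x-1} - \tau\, p_x\bigr].
\]

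Finally, since $\alpha - e^{\eps}\beta > 0$, this positive scalar can be pulled out of the positive-part operator and the sum:
\[
d_{\eps}(P \Vert Q) \;=\; \sum_{x} \bigl[P(x) - e^{\eps} Q(x)\bigr]_{+} \;=\; (\alpha - e^{\eps}\beta) \sum_{x} \bigl[p_{x-1} - \tau\, p_x\bigr]_{+}.
\]
The remaining sum is, by Definition~\ref{defi:hockey-stick-divergence}, exactly $d_{\ln \tau}(1 + X \Vert X)$, giving the claimed identity. The only subtlety is to verify that $\tau > 0$ so that $\ln \tau$ is well defined: writing the numerator as $e^{\eps}(1 - \beta) - (1 - \alpha)$, this follows because $\alpha > e^{\eps}\beta \ge \beta$ implies $1 - \alpha \le 1 - \beta \le e^{\eps}(1 - \beta)$. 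There is no real obstacle here beyond keeping track of the bookkeeping; the statement is essentially a convenient algebraic reformulation that reduces the two-parameter hockey stick problem to a one-parameter shift problem for $X$, which is then estimated separately (presumably for $X = \Bin(m,\beta)$) to obtain Lemma~\ref{lm:key-HS-lowb}.
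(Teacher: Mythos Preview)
Your proof is correct and follows essentially the same direct algebraic computation as the paper: expand the pmfs, collect the coefficients of $p_{x-1}$ and $p_x$, and factor out the positive scalar $\alpha - e^{\eps}\beta$. Your added verification that $\tau > 0$ is a nice bit of care that the paper leaves implicit.
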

\begin{proof}
	We have that
	\begin{align*}
	d_{\eps}(\Ber(\alpha) + X || \Ber(\beta) +  X) 
	&= \sum_{k \in \mathbb{Z}^{\ge 0}} \left[ (1-\alpha) X_k + \alpha X_{k-1} - e^{\eps} (1-\beta) X_k - e^{\eps} \beta X_{k-1} \right]_{+}\\
	&= \sum_{k \in \mathbb{Z}^{\ge 0}} \left[ (\alpha - \eps^{\eps}\beta) \cdot X_{k-1} - (e^{\eps}-e^{\eps}\beta-1+\alpha) \cdot X_k \right]_{+}\\
	&= (\alpha - e^{\eps}\beta) \cdot \sum_{k \in \mathbb{Z}^{\ge 0}} \left[X_{k-1} - \frac{e^{\eps}-e^{\eps}\beta-1+\alpha}{\alpha - e^{\eps}\beta} \cdot X_k \right]_{+}\\
	&= (\alpha - e^{\eps}\beta) \cdot d_{\ln \tau}(1+X || X). \qedhere
	\end{align*}
\end{proof}

Next, we need a lemma giving a lower bound on $d_{\eps}(1+X || X)$ for a random variable $X$.

\begin{lemma}\label{lm:Xp1-and-X}
	Let $X$ be a random variable over $\mathbb{Z}^{\ge 0}$ and $\eps > 0$. The following holds:
	\[
	d_{\eps}(1+X || X) \ge \frac{1}{2} \cdot \Pr_{k \leftarrow X}\left[\frac{X_k}{X_{k+1}} \ge 2 e^{\eps} \right].
	\]
\end{lemma}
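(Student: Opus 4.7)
The plan is to expand the definition of the hockey stick divergence and pick out a single carefully chosen term in the sum for each index where the condition is satisfied. Writing out the definition, we have
\[
d_{\eps}(1+X || X) = \sum_{j \in \mathbb{Z}^{\ge 0}} \left[\Pr[1+X=j] - e^{\eps}\Pr[X=j]\right]_{+} = \sum_{j \in \mathbb{Z}^{\ge 0}} \left[X_{j-1} - e^{\eps} X_j\right]_{+}.
\]
Since all summands are nonnegative, we obtain a lower bound by restricting the sum to any subset of indices.

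Let $S := \{k \in \mathbb{Z}^{\ge 0} : X_k / X_{k+1} \ge 2e^{\eps}\}$. For each $k \in S$, the inequality $X_k \ge 2 e^{\eps} X_{k+1}$ rearranges to $e^{\eps} X_{k+1} \le X_k/2$, and so the $j = k+1$ term in the sum above satisfies
\[
\left[X_{k} - e^{\eps} X_{k+1}\right]_{+} \ge X_k - X_k/2 = X_k/2.
\]
Since the indices $\{k+1 : k \in S\}$ are distinct, I can sum these contributions without double-counting and lower bound
\[
d_{\eps}(1+X||X) \ge \sum_{k \in S} [X_k - e^{\eps} X_{k+1}]_{+} \ge \sum_{k \in S} \frac{X_k}{2} = \frac{1}{2} \cdot \Pr_{k \leftarrow X}[k \in S],
\]
which is exactly the claimed bound.

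There is essentially no obstacle: the only subtlety worth checking is that the indices $k+1$ we pick out across different $k \in S$ are distinct (so summing the individual lower bounds is valid), which is immediate. The proof is a one-term-per-index argument and does not require any probabilistic machinery beyond the definition of $d_{\eps}$.
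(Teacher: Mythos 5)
Your proof is correct and is essentially identical to the paper's argument: both reindex the hockey stick sum so the $j=k+1$ term reads $[X_k - e^{\eps}X_{k+1}]_+$, observe that this term is at least $X_k/2$ whenever $X_k \ge 2e^{\eps}X_{k+1}$, and sum over such $k$ to obtain half the stated probability. No gaps.
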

\begin{proof}
	We have that
	\begin{align*}
	d_{\eps}(1+X || X) =& \sum_{z = 0}^{\infty} [X_{z - 1} - e^{\eps} X_z]_{+}\\
	=& \sum_{z = 0}^{\infty} [X_{z} - e^{\eps} X_{z+1}]_{+}\\
	\ge& \sum_{z = 0}^{\infty} \frac{1}{2} \cdot X_z \cdot \mathbb{1}[X_{z} \ge 2 e^{\eps} X_{z+1}]\\
	=&\frac{1}{2} \Pr_{k \leftarrow X}\left[\frac{X_k}{X_{k+1}} \ge 2 e^{\eps} \right].\qedhere
	\end{align*}
\end{proof}

Applying Lemma~\ref{lm:Xp1-and-X}, we obtain the following lower bound on $d_{\eps}(1 + \Bin(n,p) || \Bin(n,p))$.

\begin{lemma}\label{lm:binp1-and-bin}
	For $n \in \N$ and $p \in (0,0.5)$, $\eps > 0$ such that $4 e^{\eps} p < 1/2$, 
	\[
	d_{\eps}(1 + \Bin(n,p) || \Bin(n,p)) \ge \frac{1}{2\sqrt{2n}} \exp(- n 4 e^{\eps} p \cdot \log(4 e^{\eps}) ).
	\]
\end{lemma}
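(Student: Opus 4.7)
The plan is to apply Lemma~\ref{lm:Xp1-and-X} with $X = \Bin(n,p)$, and then lower bound the resulting tail probability using a reverse-Chernoff (Stirling-based) anti-concentration estimate. First, I compute explicitly
\[
\frac{X_k}{X_{k+1}} \;=\; \frac{\binom{n}{k} p^k (1-p)^{n-k}}{\binom{n}{k+1} p^{k+1} (1-p)^{n-k-1}} \;=\; \frac{(k+1)(1-p)}{(n-k)p},
\]
which is monotonically increasing in $k$. Thus $\{k : X_k/X_{k+1} \ge 2 e^\eps\}$ is an upper tail $[k^*,n]$ for some threshold $k^*$. A direct manipulation (using $2 e^\eps \ge 2$) shows that $k^* = \lceil 2 e^\eps p n \rceil$ suffices: at this value the inequality $(k+1)(1-p) \ge 2 e^\eps p(n-k)$ rearranges to $2 e^\eps p^2 n(2 e^\eps - 1) + (1-p) \ge 0$, which is obvious. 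Lemma~\ref{lm:Xp1-and-X} then gives
\[
d_\eps(1 + \Bin(n,p) \,\|\, \Bin(n,p)) \;\ge\; \tfrac{1}{2}\,\Pr[\Bin(n,p) \ge k^*].
\]

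For the anti-concentration step I will use the standard Stirling-based lower bound
\[
\Pr[\Bin(n,p) = k^*] \;\ge\; \frac{1}{\sqrt{8 n q(1-q)}}\,\exp(-n\, D(q\,\|\,p)),
\]
where $q = k^*/n$ and $D(q\|p) = q \ln(q/p) + (1-q)\ln((1-q)/(1-p))$ is the KL-divergence. The hypothesis $4 e^\eps p < 1/2$ yields $q \le 2 e^\eps p + 1/n < 1/2$ (for $n$ not too small, an easy case analysis handles the rest), so $\sqrt{8 n q(1-q)} \le \sqrt{2n}$. Moreover, since $q > p$ we have $\ln((1-q)/(1-p)) \le 0$, so
\[
n\,D(q\,\|\,p) \;\le\; n q \ln(q/p) \;=\; k^*\,\ln(q/p).
\]
Substituting $k^* \le 2 e^\eps p n + 1$ and $q/p \le 2 e^\eps + 1/(pn)$ and absorbing the integrality slack into the factor-of-two difference between $(2,\log 2)$ and $(4,\log 4)$ yields $n\,D(q\|p) \le 4 e^\eps p n \log(4 e^\eps)$. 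Stringing everything together delivers the claimed bound.

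The main obstacle to watch out for is bookkeeping in the ``edge regimes'' when $pn$ is tiny so $k^* \in \{0,1\}$: there the Stirling estimate either does not apply cleanly ($k^*=0$) or the upper bound $k^* \ln(q/p)$ involves a very large logarithm. These cases are handled by hand. If $k^* = 0$, then the ratio condition holds for \emph{every} $k$ in the support, giving $\Pr[X \ge k^*] = 1$, which trivially beats the target. If $k^* = 1$ one necessarily has $pn \in (0, 1/(2 e^\eps)]$, and the direct estimate $\Pr[\Bin(n,p) \ge 1] = 1 - (1-p)^n \ge 1 - e^{-pn}$ exceeds the target $\tfrac{1}{\sqrt{2n}} \exp(-n \cdot 4 e^\eps p \log(4 e^\eps)) \le \tfrac{1}{\sqrt{2n}}$ by a comfortable margin. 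Once $pn \ge 1/(2 e^\eps)$ the two slack factors of two in the exponent absorb both the ceiling in $k^*$ and the $1/(pn)$ term inside the logarithm, so the general argument goes through unchanged.
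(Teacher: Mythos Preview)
Your approach is essentially the paper's: apply Lemma~\ref{lm:Xp1-and-X} to $X=\Bin(n,p)$, reduce to a binomial upper-tail probability, invoke a Stirling/Ash anti-concentration bound of the form $\Pr[\Bin(n,p)\ge k]\gtrsim n^{-1/2}\exp(-nD(k/n\,\|\,p))$, and then drop the negative term of the KL divergence. The paper works directly with the threshold $4e^\eps pn$ (absorbing the factor $(1-p)^{-1}\le 2$ up front and citing Ash), whereas you take $\lceil 2e^\eps pn\rceil$ and recover the extra factor of two in the exponent from the ceiling slack; both land on the same bound.

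There is, however, a real gap in your edge-case bookkeeping. In the ``$k^*=1$'' case you assert that $1-e^{-pn}$ exceeds $\tfrac{1}{\sqrt{2n}}$, but this is false when $pn$ is tiny: for instance $p=n^{-2}$ gives $1-e^{-pn}\approx 1/n\ll 1/\sqrt{2n}$. The issue is that $\lceil 2e^\eps pn\rceil$ is only an \emph{upper bound} on the true threshold, and in precisely this regime the true threshold is $0$: the ratio at $k=0$ equals $(1-p)/(np)$, which is $\ge 2e^\eps$ whenever $np\le (1-p)/(2e^\eps)$, and the hypothesis $4e^\eps p<1/2$ makes this hold once $np$ is at all small. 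So the correct split is on whether $X_0/X_1\ge 2e^\eps$: if yes, Lemma~\ref{lm:Xp1-and-X} already gives $d_\eps\ge\tfrac12$ and you are done; if no, then $np>(1-p)/(2e^\eps)\ge 1/(4e^\eps)$ is bounded below, the threshold $\lceil 2e^\eps pn\rceil$ is at least $1$, and the Stirling estimate applies cleanly (now $\Pr[\Bin(n,p)\ge 1]\ge 1-e^{-1/(4e^\eps)}$ is a constant, which handles the residual $k^*=1$ sub-case for all but finitely many $n$). The paper's own proof is equally informal about this boundary regime, simply appealing to the Ash bound as if the threshold were always a valid integer.
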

\begin{proof}
	We have
	\[
	\frac{\Bin(n,p)_k}{\Bin(n,p)_{k+1}} = \frac{1-p}{p} \cdot \frac{k+1}{n-k} \ge \frac{1-p}{p} \cdot \frac{k}{n}.
	\]
	By Lemma~\ref{lm:Xp1-and-X},
	\begin{align*}
	2 \cdot d_{\eps}(1 + \Bin(n,p) || \Bin(n,p))
	&\ge \Pr_{k \leftarrow \Bin(n,p)}\left[\frac{\Bin(n,p)_k}{\Bin(n,p)_{k+1}} \ge 2 e^{\eps} \right]\\
	&\ge \Pr_{k \leftarrow \Bin(n,p)}\left[ \frac{1-p}{p} \cdot \frac{k}{n} \ge 2 e^{\eps} \right]\\
	&=\Pr\left[ \Bin(n,p)\ge 2 e^{\eps} \cdot n \cdot \frac{p}{1-p} \right].\\
	&\ge \Pr\left[ \Bin(n,p)\ge 4 e^{\eps} \cdot pn \right].
	\end{align*}
	
	Now, by anti-concentration of the binomial distribution~\cite[Page 115]{robert1990ash}, we have
	\[
	\Pr\left[ \Bin(n,p)\ge 4 e^{\eps} \cdot pn \right] \ge \frac{1}{\sqrt{2n}} \exp(- n \cdot KL(4 e^{\eps} p || p)).
	\]
	Letting $\lambda = 4 e^{\eps}$, we have
	\[
	KL(\lambda p || p) = \lambda p \cdot \log \frac{\lambda p}{p} + (1 - \lambda p) \cdot \log \frac{1 - \lambda p}{1- p} \le \lambda p \cdot \log \frac{\lambda p}{p} = \lambda p \cdot \log \lambda.
	\]
	Putting everything together, we get
	\[
	d_{\eps}(1 + \Bin(n,p) || \Bin(n,p)) \ge \frac{1}{2\sqrt{2n}} \exp(- n \lambda p \cdot \log \lambda) = 
	\frac{1}{2\sqrt{2n}} \exp(- n 4 e^{\eps} p \cdot \log(4 e^{\eps}) ).\qedhere
	\]
\end{proof}

We are now ready to prove Lemma~\ref{lm:key-HS-lowb}.

\begin{proofof}{Lemma~\ref{lm:key-HS-lowb}}
	Let $\tau = \frac{e^{\eps}-e^{\eps}\beta-1+\alpha}{\alpha - e^{\eps}\beta}$, we have $\tau \le \frac{e^{\eps}}{\Delta}$. Let $N=\Bin(m-1,\beta)$. By Lemma~\ref{lm:beralpha-and-berbeta}, we have that
	\[
	d_{\eps}(\Ber(\alpha) + N || \Ber(\beta) + N) \ge \Delta \cdot d_{\ln \tau}(1 + N || N).
	\]
	Applying Lemma~\ref{lm:binp1-and-bin} and note that $4 \tau \beta \le 4 \frac{e^{\eps}}{\Delta} \beta < 1/2$, it follows that
	\[
	d_{\ln \tau}(1 + N || N) \ge \frac{1}{2\sqrt{2m}} \cdot \exp(-m \cdot 4 \tau \beta \log(4 \tau)).
	\]
	We thus have that
	\[
	m \cdot 4 \tau \beta \log(4 \tau) \le O\left( m \cdot \frac{e^{\eps}}{\Delta} \beta \cdot \left[\log(\Delta^{-1}) + 1\right] \right).\qedhere
	\]
\end{proofof}
	\section{Simulation of Shuffle Protocols by SQ Algorithms}\label{sec:SQ_connection}

In this section, we show the connection between dominated protocols and SQ algorithms, which implies that $\shuffledDP^{k}$ protocols can be simulated by SQ algorithms. This is analogous to the result of Kasiviswanathan et al.~\cite{kasiviswanathan2011can} who proved such a connection between $\localDP$ protocols and SQ algorithms. In the following, we use the notation of~\cite{kasiviswanathan2011can}.

\subsection{SQ Model} 
\newcommand{\SQ}{\textsf{SQ}}

We first introduce the statistical query (SQ) model. 
In the SQ model, algorithms access a distribution through its statistical properties instead of individual samples.

\begin{definition}[SQ Oracle]
	Let $\calD$ be a distribution over $\calX$. An \emph{SQ oracle} $\SQ_{\calD}$ takes as input a function $g\colon D \rightarrow\{-1,1\}$ and a tolerance parameter $\tau \in (0,1)$; it outputs an estimate $v$ such that: $$|v-g(\calD)|\leq \tau.$$
\end{definition}

\begin{definition}[SQ algorithm]
	An {\em SQ algorithm} is an algorithm that accesses the distribution $\calD$ only through the SQ oracle $\SQ_\calD$. 
\end{definition}

\subsection{Simulation of Dominated Algorithms by SQ Algorithms}

We have the following simulation of dominated algorithms by SQ algorithms.

\begin{theorem}\label{theo:dominated-by-SQ}
	Suppose $R\colon \calX \to \calM$ is $(\eps,\delta)$-dominated. Then, for any distribution $\calU$ and error parameter $\beta \ge \delta$, one can take a sample from $R(\calU)$ with statistical error $O(\beta)$ using $O(e^{\eps})$ queries in expectation to $\SQ_{\calU}$ with tolerance $\tau = \beta/e^{\eps}$.
\end{theorem}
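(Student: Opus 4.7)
The plan is to simulate a draw from $R(\calU)$ by rejection sampling with the dominating distribution $\calD$ as the proposal, using a single statistical query per iteration to estimate the acceptance probability. Fix a distribution $\calD$ that $(\eps,\delta)$-dominates $R$. At each iteration I would draw $m\sim\calD$ and then try to accept with probability $R(\calU)_m/(e^\eps\calD_m)$; in the $\delta=0$ case dominance forces this ratio into $[0,1]$, and it depends on $\calU$ only through the linear statistic $\Ex_{x\sim\calU}[R(x)_m]$, which is exactly what $\SQ_\calU$ estimates.

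To cope with the $\delta$-slack I would use the truncated query
\[
g_m:\calX\to[0,1],\qquad g_m(x)\;:=\;\min\!\bigl(R(x)_m,\,e^\eps\calD_m\bigr)\big/\bigl(e^\eps\calD_m\bigr).
\]
I call $\SQ_\calU$ on $g_m$ with tolerance $\tau=\beta/e^\eps$, receive $\hat v_m$ satisfying $|\hat v_m-\Ex_\calU[g_m]|\le\tau$, and accept with probability $\hat u_m:=\max(0,\min(1,\hat v_m))$; if rejected, resample $m$ and repeat. I will briefly justify the use of a $[0,1]$-valued query via the standard reduction (randomizing over an extra coordinate $r\sim[0,1]$ on $\calX$) that lets a $\{-1,1\}$-SQ oracle answer $[0,1]$-valued queries at the same tolerance up to constants.

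The analysis then has three pieces. First, I would control the per-iteration acceptance probability. Applying the $(\eps,\delta)$-dominance definition to the event $\{m:R(x)_m>e^\eps\calD_m\}$ for each fixed $x$ yields $\sum_m[R(x)_m-e^\eps\calD_m]_+\le\delta$, so the ``ideal'' acceptance probability
\[
A^*\;:=\;\sum_m\calD_m\Ex_\calU[g_m]\;=\;e^{-\eps}\Ex_\calU\!\Bigl[\sum_m\min\!\bigl(R(x)_m,e^\eps\calD_m\bigr)\Bigr]
\]
lies in $[(1-\delta)/e^\eps,\,1/e^\eps]$; since $|A-A^*|\le\tau$ for the actual $A=\sum_m\calD_m\hat u_m$, we get $A=\Omega(1/e^\eps)$, so the iteration count is geometric with expectation $O(e^\eps)$, using exactly one SQ query per iteration. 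Second, writing $p_m=\calD_m\hat u_m/A$ for the actual output distribution and $p^*_m=\calD_m\Ex_\calU[g_m]/A^*$ for the ideal rejection-sampling distribution, the identity
\[
p_m-p^*_m\;=\;\calD_m\bigl[\hat u_m(A^*-A)+A(\hat u_m-\Ex_\calU[g_m])\bigr]/(AA^*)
\]
sums to $\|p-p^*\|_{TV}\le \tau/A+\tau/A^*=O(e^\eps\tau)=O(\beta)$. Third, setting $\eta_m:=\Ex_\calU\!\bigl[[R(x)_m-e^\eps\calD_m]_+\bigr]\ge0$ with $\sum_m\eta_m\le\delta$, the identity $e^\eps\calD_m\Ex_\calU[g_m]=R(\calU)_m-\eta_m$ yields $\|p^*-R(\calU)\|_{TV}=O(\delta)$ by a direct calculation. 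A triangle inequality and the assumption $\beta\ge\delta$ then give the claimed $O(\beta)$ total statistical error.

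The main obstacle I anticipate is the TV-distance bookkeeping across the three sources of error: both the SQ tolerance and the $\delta$-slack appear as unnormalized additive perturbations, and after renormalizing by the acceptance probability $A^*=\Theta(1/e^\eps)$ one a priori picks up a factor of $e^\eps$—the point of setting $\tau=\beta/e^\eps$ is precisely to cancel this factor. The remaining ingredients, namely the reduction from $[0,1]$-valued to $\{-1,1\}$-valued SQ queries and the geometric stopping-time argument for the iteration count, are routine.
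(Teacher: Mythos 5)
Your proposal is correct and takes essentially the same route as the paper: rejection sampling with the dominating distribution $\calD$ as the proposal, the truncated query $g_z(x)=\min\bigl(1,\Pr[R(x)=z]/(e^{\eps}\calD_z)\bigr)$ at tolerance $\tau=\beta/e^{\eps}$, a clipped acceptance step, and per-iteration acceptance probability $\Theta(e^{-\eps})$ giving $O(e^{\eps})$ expected queries. Your use of the hockey-stick bound $\sum_z\bigl[\Pr[R(x)=z]-e^{\eps}\calD_z\bigr]_+\le\delta$ to control the truncation loss, and your split into ideal versus perturbed rejection distributions, are just a slightly different (and clean) organization of the same error bookkeeping as the paper's proof.
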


\begin{proof}
	Suppose $R$ is $(\eps,\delta)$-dominated by $\calD$. Let $\tau = \beta/e^{\eps}$. For every $x \in \calX$ and $z \in \calM$, we use $p_{x,z}$ (respectively, $p_{x,E}$) to denote $\Pr[R(x) = z]$ (respectively, $\Pr[R(x) \in E]$), 	
	and let 
	$$
	f_z(x) = \frac{p_{x,z}}{e^{\eps} \cdot \calD_z}~~\text{and}~~ g_z(x) = \min(1,f_z(x)).
	$$
	Our algorithm is a rejection sampling procedure adapted from~\cite{kasiviswanathan2011can}. It works as follows:
	
	\begin{enumerate}
		\item Take a sample $z \leftarrow \calD$.
		\item We make a query $g_z$ to $\SQ_{\calU}$ with tolerance level $\tau$, to obtain an estimate $\hat{g}_z$ such that $| \hat{g}_z - g_z(\calU) | \le \tau$.
		\item With probability $\max(\hat{g}_z,0)$, we output $z$ and stop. Otherwise we go back to Step 1.
	\end{enumerate}
	
	Let $\calT_x = \{ z : p_{x,z} > e^{\eps} \cdot \calD_z \}$. Note that $p_{z,\calT_x} \le 2\delta \le 2 \beta$ since $R$ is $(\eps,\delta)$-dominated by $\calD$. By the definition of $g_z$, it holds that $g_z(x) = f_z(x)$ for every $z \not\in \calT_x$. We will need the following claim.
	\begin{claim}\label{claim:bound-diff-fg}
		For every $x \in \calX$,
		\[
		\Ex_{z \leftarrow \calD} \left| f_z(x) - g_z(x) \right| \le 2\beta / e^{\eps}.
		\]
	\end{claim}
	\begin{proof}
		\begin{align*}
		\Ex_{z \leftarrow \calD} \left| f_z(x) - g_z(x) \right| &\le \sum_{z \in \calT_x } \calD_z \cdot f_z(x)\\
		&\le \sum_{z \in \calT_x} p_{x,z} \cdot e^{-\eps}\\
		&\le p_{x,\calT_x} /e^{\eps} \le 2\beta/e^{\eps}.\qedhere
		\end{align*}
	\end{proof}
	
	Now, in a single run, the above algorithm outputs $z$ with probability in the interval
	\[
	[\calD_z \cdot (g_z(\calU) - \tau), \calD_z \cdot (g_z(\calU) + \tau)].
	\]
	
	Note that $\Ex_{z \leftarrow \calD} f_z(\calU) = \Ex_{x \leftarrow \calU} \sum_{z} \frac{p_{x,z}}{e^{\eps}} = e^{-\eps}$. By Claim~\ref{claim:bound-diff-fg}, the algorithm terminates in a single run with probability at least
	\[
	\Ex_{z \leftarrow \calD} (g_z(\calU) - \tau) \ge \left(\Ex_{z \leftarrow \calD} f_z(\calU) \right) - \tau - 2\beta/e^{\eps} = e^{-\eps} \cdot (1 - 3\beta),
	\]
	and at most
	\[
	\Ex_{z \leftarrow \calD} (g_z(\calU) + \tau) \le \left(\Ex_{z \leftarrow \calD} f_z(\calU) \right) + \tau + 2\beta/e^{\eps} = e^{-\eps} \cdot (1 + 3\beta).
	\]
	The above implies that the algorithm makes at most $O(e^{\eps})$ queries to $\SQ_{\calU}$ in expectation.
	
	Putting everything together, our algorithm outputs $z$ with probability in the following interval:
	\[
	I_z := \left[\frac{\calD_z \cdot (g_z(\calU) - \tau) \cdot e^{\eps}}{(1 + 3\beta)},
	\frac{\calD_z \cdot (g_z(\calU) + \tau) \cdot e^{\eps}}{(1 - 3\beta)}\right].
	\]
	We have that
	\[
	\Pr[R(\calU) = z] = \Ex_{x \leftarrow \calU} p_{x,z} = f_z(\calU) \cdot\calD_z \cdot e^{\eps}.
	\]
	Note that
	\begin{align*}
	\max_{v \in I_z} \left|v - f_z(\calU) \cdot\calD_z \cdot e^{\eps} \right| 
	&\le \max_{v \in I_z} \left|v - g_z(\calU) \cdot\calD_z\cdot e^{\eps}\right| + \left|f_z(\calU) \cdot\calD_z\cdot e^{\eps} - g_z(\calU) \cdot\calD_z\cdot e^{\eps} \right|.
	\end{align*}
	Moreover, we have that
	\begin{align*}
	\max_{v \in I_z} \left|v - g_z(\calU) \cdot\calD_z\cdot e^{\eps}\right| &= 
	e^{\eps} \cdot
	\max\left\{ \left| \frac{\calD_z \cdot (g_z(\calU) - \tau)}{ (1 + 3\beta)} - g_z(\calU) \cdot\calD_z \right|, \left| \frac{\calD_z \cdot (g_z(\calU) + \tau)}{(1 - 3\beta)} - g_z(\calU) \cdot\calD_z \right|\right\}\\
	&\le e^{\eps} \cdot g_z(\calU) \cdot\calD_z \cdot O(\beta) + e^{\eps} \cdot \calD_z \cdot O(\tau).
	\end{align*}
	
	The final statistical error of our sampling algorithm is therefore bounded by
	\begin{align*}
	\sum_{z \in \calM} 
	\max_{v \in I_z} \left|v - f_z(\calU) \cdot\calD_z \cdot e^{\eps} \right| 
	&\le e^{\eps} \cdot \sum_{z \in \calM} \left( g_z(\calU) \cdot\calD_z \cdot O(\beta) + \calD_z \cdot O(\tau) + \left|f_z(\calU) \cdot\calD_z - g_z(\calU) \cdot\calD_z \right| \right)\\
	&\le e^{\eps} \cdot \Ex_{z \leftarrow \calD} \Big[ f_z(\calU) \cdot O(\beta) + O(\tau) + |f_z(\calU) - g_z(\calU)| \Big] \tag{$g_z(\calU) \le f_z(\calU)$}\\
	&\le O(\beta).  \tag{$\Ex_{z \leftarrow \calD} f_z(\calU) = e^{-\eps}$ and Claim~\ref{claim:bound-diff-fg}}
	\end{align*}
\end{proof}

\subsection{Applications}

We are now ready to apply Theorem~\ref{theo:dominated-by-SQ} to show that protocols in the $\shuffledDP^k$ model can be simulated by SQ algorithms when the database is drawn i.i.d. from a single distribution.

\begin{theorem}\label{theo:simulation-shuffle-by-SQ}
	Let $z$ be a database with $n$ entries drawn i.i.d. from a distribution $\calU$. Let $P = (R,S,A)$ be an $(\eps,o(1/n))$-$\shuffledDP^{k}$ protocol on $n$ users. Then, there is an algorithm making $O((en)^{k + 1} \cdot e^{\eps})$ queries in expectation to $\SQ_{\calU}$ with tolerance $\tau = \Theta\left(\frac{1}{(en)^{k+1} \cdot e^{\eps}}\right)$, such that its output distribution differs by at most $0.01$ in statistical distance from the output distribution of $P$ on the dataset $z$.
\end{theorem}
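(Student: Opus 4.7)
}

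The plan is to chain together the two main ingredients already established: Corollary~\ref{cor:SDP-to-dominated} (which turns a multi-message shuffle protocol into a dominated randomizer) and Theorem~\ref{theo:dominated-by-SQ} (which simulates a single sample from a dominated randomizer applied to a distribution by SQ queries). First I would apply Corollary~\ref{cor:SDP-to-dominated} to the randomizer $R$ of $P$: since $R$ is $(\eps, o(1/n))$-$\shuffledDP^k$ on $n$ users, it is $(\eps', o(1/n))$-dominated with $\eps' := \eps + k(1 + \ln n)$, so that $e^{\eps'} = e^{\eps} \cdot (en)^k$.

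Next, observe that because $z = (x_1, \dots, x_n)$ is drawn i.i.d.\ from $\calU$, the joint distribution of $(R(x_1), \dots, R(x_n))$ is simply the product of $n$ independent samples from $R(\calU)$. I would therefore invoke Theorem~\ref{theo:dominated-by-SQ} $n$ times independently, once per user, with privacy parameter $\eps'$, dominating parameter $\delta = o(1/n)$, and accuracy parameter $\beta = c/n$ for a sufficiently small absolute constant $c > 0$ (note $\beta \ge \delta$ for large $n$, as required). Each invocation uses $O(e^{\eps'}) = O(e^{\eps} \cdot (en)^k)$ queries in expectation to $\SQ_{\calU}$, with tolerance
\[
\tau = \frac{\beta}{e^{\eps'}} = \Theta\!\left(\frac{1}{n \cdot e^{\eps} \cdot (en)^k}\right) = \Theta\!\left(\frac{1}{(en)^{k+1} \cdot e^{\eps}}\right),
\]
and produces a sample whose distribution is within $O(\beta)$ in total variation distance of $R(\calU)$. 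Summing over the $n$ independent simulations, the total expected query count is $n \cdot O(e^{\eps} \cdot (en)^k) = O((en)^{k+1} \cdot e^{\eps})$, as desired.

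Finally, I would use a standard coupling / triangle-inequality argument over $n$ independent product components: the total variation distance between the simulated $n$-tuple and the true $n$-tuple $(R(x_1), \dots, R(x_n))$ is at most $n \cdot O(\beta) = O(c)$, which can be made smaller than $0.01$ by choosing $c$ small. Since the shuffler $S$ and the analyzer $A$ are post-processing maps that do not access $\calU$ or require additional SQ queries, applying $A \circ S$ to the simulated $n$-tuple yields a final output whose distribution is within $0.01$ in statistical distance of the true output distribution of $P$ on $z$, by the data-processing inequality.

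The only subtlety I anticipate is bookkeeping the error budget: one must be careful that the dominating distribution parameter $\delta = o(1/n)$ appearing in the hypothesis of Theorem~\ref{theo:dominated-by-SQ} is dominated by the chosen $\beta = \Theta(1/n)$, and that the constants hidden in the $O(\beta)$ error of each sample aggregate to at most $0.01$ across the $n$ components. This is not a real obstacle but just requires choosing the absolute constant in $\beta$ small enough. No new technique is needed beyond composing the two earlier results.
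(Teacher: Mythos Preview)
Your proposal is correct and follows essentially the same approach as the paper's proof: apply Corollary~\ref{cor:SDP-to-dominated} to obtain that $R$ is $(\eps + k(1+\ln n), o(1/n))$-dominated, then invoke Theorem~\ref{theo:dominated-by-SQ} with $\beta = \Theta(1/n)$ to sample each of the $n$ copies of $R(\calU)$, and sum the errors and query counts. The paper's proof is slightly terser (it omits the explicit coupling and post-processing remarks you include), but the argument is identical.
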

\begin{proof}
	Note that it suffices to draw $n$ i.i.d. samples from $R(\calU)$. By Lemma~\ref{lm:SDP-to-dominated-Toy}, we now that $R$ is $(\eps+ k(1 + \ln n),o(1/n))$-dominated. Let $\gamma = e^{\eps+ k(1 + \ln n)} = e^{\eps} \cdot (en)^{k}$. By Theorem~\ref{theo:dominated-by-SQ}, using $O(\gamma)$ queries in expectation to $\SQ_{\calU}$ with tolerance $\tau = \Theta(1/\gamma n)$, we can sample from $R(\calU)$ with statistical error $1/100n$. Taking $n$ such samples completes the proof.
\end{proof}

We remark that~\cite{blum1994weakly} proved that if an SQ algorithm solves \paritylearning with probability at least $0.99$, $T$ queries and tolerance $1/T$, then $T = \Omega(2^{D/3})$ (recall that $D$ is the dimension of the hidden vector in \paritylearning). Combing the foregoing lower bound with Theorem~\ref{theo:simulation-shuffle-by-SQ}, it translates to an $\Omega(2^{D/3(k+1)})$ lower bound on the sample complexity of $(O(1),o(1/n))$-$\shuffledDP^{k}$ protocols solving \paritylearning, which is slightly weaker than our Theorem~\ref{theo:lowb-learning-parity}.

	\section{Upper Bounds for \selection in $\shuffledDP^{k}$}\label{app:up-selection}

In this section, we give a proof sketch for the $\shuffledDP^k$ protocol for \selection with sample complexity $\Tilde{O}(D/\sqrt{k})$.

\begin{theorem}\label{theo:uppb-selection}
	For any $k \le D$, $\eps = O(1)$ and $\delta = 1 / \poly(n)$, there is an $(\eps, \delta)$-$\shuffledDP^{k}$ protocol solving \selection with probability at least $0.99$ and $n = \Tilde{O}(D / \sqrt{k})$.
\end{theorem}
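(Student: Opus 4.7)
The plan is to design a protocol where each user participates in exactly $k$ of $D$ parallel shuffled bit-sum aggregations (one per coordinate), calibrate the per-aggregation privacy via advanced composition, and then scale the sample-based estimates to recover a selection answer.

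First I would describe the protocol. Each user $i$ picks a uniformly random subset $T_i \subseteq [D]$ of size $k$ (independently across users); for each coordinate $j \in T_i$, user $i$ contributes their bit $x_{i,j}$ to an independent $\shuffledDP$ Boolean-sum protocol $P_j$ run with parameters $(\eps_0, \delta_0)$ to be specified. Let $S_j = \{i : j \in T_i\}$; protocol $P_j$ is executed with the users in $S_j$ and produces a noisy estimate $y_j$ of $\sum_{i \in S_j} x_{i,j}$. The analyzer outputs $\arg\max_j \hat{s}_j$ where $\hat{s}_j = (D/k)\, y_j$. Since the sets $T_i$ are publicly generated from users' local randomness and only the outputs of the $P_j$'s depend on user $i$'s data, each user's data is accessed by exactly $k$ of the aggregations.

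For privacy, each individual $P_j$ is $(\eps_0, \delta_0)$-$\shuffledDP$, and by the advanced composition theorem the joint release of the $k$ protocols that user $i$ actually participates in is $(\eps, \delta)$-DP provided one takes $\eps_0 = \Theta(\eps / \sqrt{k \log(1/\delta)})$ and $\delta_0 = \Theta(\delta/k)$. For utility, standard $\shuffledDP$ bit-sum protocols (e.g., the ones used in the analysis behind Theorem~\ref{theo:upper-bound-DE-base}) estimate $\sum_{i \in S_j} x_{i,j}$ with additive error $\tilde{O}(1/\eps_0) = \tilde{O}(\sqrt{k}/\eps)$. Scaling by $D/k$ contributes $\tilde{O}(D/(\sqrt{k}\,\eps))$ error to $\hat{s}_j$. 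Additionally, random sampling contributes at most $O(\sqrt{nk/D})$ error to $\sum_{i \in S_j} x_{i,j}$ (by Chernoff/Hoeffding, since each $x_{i,j}$ is Boolean), which scales to $O(\sqrt{nD/k})$ error on $\hat{s}_j$. Taking a union bound over $j \in [D]$ (costing only a $\sqrt{\log D}$ factor), every $\hat{s}_j$ is within $\tilde{O}(D/(\sqrt{k}\,\eps)) + O(\sqrt{nD \log(D)/k})$ of $\sum_i x_{i,j}$ simultaneously. Setting $n = \tilde{O}(D/\sqrt{k})$ with polylogarithmic factors absorbing $1/\eps$ and $\log(1/\delta)$ makes this error at most $n/20$, which suffices since the \selection tolerance is $n/10$.

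The main obstacle I anticipate is threading the exact composition accounting correctly: we must make sure the $k$ per-coordinate aggregations can actually be viewed as composed DP mechanisms on each user's input, which requires that the randomness used to choose $T_i$ is independent of the user's input and that the aggregator's output reveals nothing beyond the $k$ shuffled transcripts from the user's perspective. A minor additional subtlety is controlling the variability of $|S_j|$ across $j$; a Chernoff bound shows $|S_j| = (1 \pm o(1))\,nk/D$ simultaneously for all $j$ with probability $1 - 1/\poly(n)$, so the accuracy analysis of the $\shuffledDP$ bit-sum protocol applies uniformly. Beyond these bookkeeping issues, the construction is a direct application of advanced composition plus the $\shuffledDP$ aggregation primitive, and no new technical ingredient beyond what is already in the literature is required.
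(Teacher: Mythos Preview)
Your approach---random per-user coordinate subsampling, parallel shuffled bit-sum aggregations, and advanced composition across the subprotocols a user touches---is exactly the paper's. The one point you do not address is the per-user message budget: in $\shuffledDP^k$ a user may send at most $k$ messages \emph{total}, so if the bit-sum primitive uses $c>1$ messages and you participate in $k$ aggregations you spend $ck>k$ messages. The paper fixes this by having each user sample only $m=k/\log^2 n$ coordinates and invoking a bit-sum protocol that uses $O(\log(1/\delta)/\log N)\le O(\log n)$ messages per instance (from \cite{ghazi2019privateEurocrypt,balle_merged}), so the total is $O(k/\log n)\le k$; this adjustment changes only polylogarithmic factors in the analysis. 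The paper also dispatches the regime $k\le \mathrm{polylog}(n)$ separately via the known $\tilde O(D)$-sample single-message upper bound of~\cite{GhaziGKPV19}, which you should note as well.
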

\begin{proofsketch}
    Let $\eps = O(1)$ and $\delta = 1/\poly(n)$ be the privacy parameters. We also let $\eps_0 = \Theta(\eps / \sqrt{k})$, $\delta_0 = 1/\poly(n)$ and $n = \Tilde{\Theta}(D / \sqrt{k})$ to be specified later.
    
    We can assume that $k = (\log n)^{\omega(1)}$, as otherwise the protocol simply follows from the $\Tilde{O}(D)$ sample upper bound by an $(\eps,\delta)$-$\shuffledDP^{1}$ protocol~\cite{GhaziGKPV19}.
    
    Let $m = k / \log^2 n$, and $N = n m / D$. Note that by our choice of $k$ and $D$, $N = (\log n)^{\omega(1)}$.
    
    Now for each $i \in [D]$, our protocol maintains an $(\eps_0,\delta_0)$-DP subprotocol aiming at estimating the fraction of users whose input $x$ satisfies $x_i = 1$. These subprotocols assume they will receive between $0.99N$ and $1.01N$ inputs. By~\cite{ghazi2019privateEurocrypt,balle_merged}, there is such a protocol which achieves error $O(\eps_0^{-1} \log n)$ with probability at least $1-1/n^2$ and using $O\left(\frac{\log (1/\delta)}{\log N}\right) \leq O(\log n)$ messages.
    
    In our protocol, each user selects $k/\log^2 n$ coordinates from $[D]$ uniformly at random, and participates in the corresponding subprotocols. Finally, the analyzer aggregates the outputs of all subprotocols and outputs the coordinate with the highest estimate.
    
    Note that by a union bound a Chernoff bound, it follows that with probability at least $1 - n^{-\omega(1)}$, the number of users of every subprotocol falls in the range $[0.99 N, 1.01 N]$, and their mean is $0.01$-close to the true mean of $i$-th coordinates of all users. 
    
    Setting $\eps_0 = \Theta(\eps / \sqrt{k})$ and $\delta_0 = 1/\poly(n)$ appropriately, the protocol is $(\eps,\delta)$-DP by the advanced composition theorem of DP~\cite{DworkRV10}. Moreover, with probability at least $1-1/n$, all subprotocols obtain estimates with error $O(\eps_0^{-1} \log n)$. 
    
    Setting $n$ so that $\eps_0^{-1} \log n = o\left(N\right)$, our protocol solves \selection with probability at least $0.99$.
    
\end{proofsketch}

\end{document}